\newtheorem{theorem}{Theorem}
\newtheorem{corollary}[theorem]{Corollary}
\newtheorem{lemma}[theorem]{Lemma}
\newtheorem{definition}[theorem]{Definition}
\newtheorem{proposition}[theorem]{Proposition}
\numberwithin{equation}{section}
\numberwithin{theorem}{section}
\newcommand{\rr}{{\mathbb{R}}}
\newcommand{\cc}{{\mathbb{C}}}
\newcommand{\zz}{{\mathbb{Z}}}
\newcommand{\nn}{{\mathbb{N}}}
\newcommand{\Ee}{{\mathbbm{E}}}
\newcommand{\pp}{{\mathbb{P}}}
\newcommand{\im}{{\operatorname{Im}\,}}
\newcommand{\supp}{{\operatorname{supp}\,}}
\newcommand{\tr}{{\operatorname{Tr}\,}}
\newcommand{\beq}[1]{\begin{equation} \label{#1}}
	\newcommand{\eeq}{\end{equation}}
\newcommand{\arcosh}{\operatorname{arcosh}}
\renewcommand{\epsilon}{\varepsilon}
\newcommand{\vv}{\vert}
\let\Re\undefined
\DeclareMathOperator{\Re}{Re}
\DeclareMathOperator{\Prob}{Prob}
\DeclareMathOperator{\PPP}{PPP}
\def\be{\begin{equation}}
	\def\ee{\end{equation}} 
\DeclareMathOperator{\spec}{spec}
\DeclareMathOperator{\dist}{dist}
\begin{document}
	\addtokomafont{author}{\raggedright}
	
	\title{ \raggedright  Spectral Analysis of the\\ Quantum Random Energy Model}
	
	\author{\hspace{-.075in} Chokri Manai and Simone Warzel}
	\date{\vspace{-.3in}}
	
	\maketitle

	\minisec{Abstract}
	The Quantum Random Energy Model (QREM) is a random matrix of Anderson-type which describes effects of a transversal magnetic field on Derrida's spin glass. The model exhibits a glass phase as well as a classical and a quantum paramagnetic phase. 
    We analyze in detail the low-energy spectrum and establish a localization-delocalization transition for the corresponding eigenvectors of the QREM. Based on a combination of random matrix and operator techniques as well as insights in the random geometry, we derive next-to-leading order asymptotics for the ground-state energy and eigenvectors in all regimes of the parameter space. Based on this, we also deduce the next-to-leading order of the free energy, which turns out to be deterministic and on order one in the system size in all phases of the QREM. As a result, we determine the nature of the fluctuations of the free energy in the spin glass regime. 
	
	\bigskip
	
	\tableofcontents

	\section{Introduction and main results}\label{sec:intro}

\subsection{Quantum random energy model}

In the theory of disordered systems the random energy model (REM) is a simple, yet ubiquitous toy model.  
It assigns to every $ N $-bit or Ising string  $ \pmb{\sigma} = (\sigma_1 , \dots , \sigma_N) \in   \{ -1, 1\}^N  \eqqcolon \mathcal{Q}_N $  a  
rescaled Gaussian random variable 
$$
U(\pmb{\sigma}) := \sqrt{N} \, \omega(\pmb{\sigma}) 
$$
with $ \left( \omega(\pmb{\sigma}) \right) $ forming $ 2^N $ canonically realized independent and identically distributed (i.i.d.) random variables with standard normal law $ \mathbb{P} $.  
The Hamming cube $  \mathcal{Q}_N  $ is rendered a graph by declaring two bit strings connected by an edge if they differ by a single bit flip:  introducing the flip operators 
$ F_j\pmb{\sigma} := (\sigma_1, \dots , - \sigma_j  , \dots , \sigma_N ) $ on components $ j \in \{ 1, \dots , N \} $, the edges of the Hamming cube are formed by all pairs  of the form $ ( \pmb{\sigma} , F_j\pmb{\sigma} )$.  
The graph's negative adjacency matrix 
$$ \left( T\psi\right)(\pmb{\sigma}) := -  \sum_{j=1}^N \psi( F_j \pmb{\sigma} ) 
$$
is defined on $ \psi \in  \ell^2( \mathcal{Q}_N)  $, the $ 2^N $-dimensional Hilbert space of complex-valued functions on $ N $-bit strings. Since every vertex  in $  \mathcal{Q}_N  $ has a constant degree $ N $, the negative graph Laplacian, $ T + N \mathbbm{1} $, just differs by $ N $ times the identity matrix. We study the quantum random energy model (QREM) which is the random matrix 
\begin{equation}\label{eq:Ham}
	H := \Gamma \ T + U ,
\end{equation}
where $ \Gamma \geq 0 $ is a parameter, and $ U $ is diagonal in the canonical configuration basis $ (\delta_{\pmb{\sigma}} ) $ of $ \ell^2(\mathcal{Q}_N) $, i.e.,  $U \delta_{\pmb{\sigma}}  = U(\pmb{\sigma}) \delta_{\pmb{\sigma}}  $ and $ \psi(\pmb{\sigma}) = \langle \delta_{\pmb{\sigma}}  \vert\psi \rangle $. As usual in mathematical physics, we choose the scalar product  $ \langle \cdot \vert \cdot \rangle $ on $  \ell^2( \mathcal{Q}_N) $ to be linear in its second component. 

The QREM is a random matrix of Anderson type -- albeit on a quite unconventional graph whose connectivity grows to infinity with the system size $ N $, and with a scaling of the random potential $ U $ which enforces the operator norm of both, $ T $ and $ U $, to be of the same order $ N $ (cf.~\eqref{eq:specT} and \eqref{eq:2ndspin}). It is thus natural to investigate the localization properties of its eigenfunctions. 
The interest in the QREM is however many-faceted. In mathematical biology,  the model has received attention under the name REM House-of-Cards model \cite{SS+13} as an element of a simplistic probabilistic model of population genetics, in which $ \mathcal{Q}_N $ is the space of gene types and $ U $ encodes their fitness~\cite{Ki78,BBW97,BW01,HRWB02}. In this interpretation, the operator $ T $ implements mutations of the gene type, and one is interested  in the long-time limit of the semigroup generated by $ H $ (cf.~\cite{AGH20}, in which the parameter regime $  \Gamma =  \kappa/N$ with fixed $ \kappa > 0 $ corresponding to the normalized Laplacian is considered). 

The Anderson-perspective has also attracted attention in discussions of many-body or Fock-space localization, where the QREM occasionally serves as an analytically more approach\-able toy to test ideas about more realistic disordered spin systems~\cite{LPS14,BLPS16,BFSTV21,TKPW21}. We will comment on some of the conjectures in the physics literature concerning the localization properties of the eigenfunctions  after presenting our results on this topic.

In statistical mechanics, the QREM was introduced~\cite{Gold90} as a simplified model to investigate the quantum effects caused by a transversal magnetic field on classical mean-field spin-glass models~\cite{BM80,FS86,YI87,RCC89,SIC12}. In this context, 
the Hilbert space $\ell^2( \mathcal{Q}_N )$ is unitarily identified with the tensor-product Hilbert space  $ \otimes_{j=1}^N \mathbbm{C}^2$  of $ N $ spin-$\tfrac{1}{2}$ quantum objects. A corresponding unitary 
maps the canonical basis $ (\delta_{\pmb{\sigma}} ) $ to the tensor-product basis in which the Pauli-$ z $-matrix is diagonal on each tensor component. 
For completeness, we recall the form of the Pauli matrices in this basis:
$$
\sigma^{x} = \left( \begin{matrix} 0 & 1 \\  1 & 0 \end{matrix}  \right) , \quad  \sigma^{y} = \left( \begin{matrix} 0 & - i \\  i & 0 \end{matrix}  \right)  , \quad  \sigma^{z} = \left( \begin{matrix} 1 & 0 \\  0 & -1 \end{matrix}  \right)  . 
$$
The Pauli matrices are naturally lifted to  $ \otimes_{j=1}^N \mathbbm{C}^2$ by their action on the $ j$th tensor component, $ \sigma^{x}_j := \mathbbm{1} \otimes \dots \otimes \ \sigma^{x}  \ \otimes \dots \otimes   \mathbbm{1}  $. 
Upon the above unitary equivalence,  $ T $ corresponds to $- \sum_{j=1}^N  \sigma^{x}_j  $, i.e., a constant field in the negative $ x $-direction exerted on  all $ N $ spin-$\tfrac{1}{2}$ (cf.~\cite{MW20c}). 
In this interpretation, the random potential $ U $ is the energy operator of the  spin-$\tfrac{1}{2}$-objects, which interact disorderly only through their $ z $-components. 
Derrida~\cite{Der80,Der81} originally invented the classical REM $ U $ as a simplification to other mean-field spin glasses such as the Sherrington-Kirkpatrick model. 

The phenomenon common to such classical spin glass models is a glass freezing transition  into a low temperature phase which, due to lack of translation invariance, is described by an order parameter (due to Parisi) more complicated than a global magnetization \cite{Par80,MPV86,Tal11,Pan13}. In the absence of external fields the latter typically vanishes.   These thermodynamic properties are encoded in 
the (normalized) partition function 
$$ Z(\beta, \Gamma) := 2^{-N}\,  \tr e^{-\beta H} $$
at inverse temperature $ \beta \in [0, \infty) $, or, equivalently,  its pressure 
\begin{equation}
	\Phi_N(\beta, \Gamma) :=  \ln Z(\beta, \Gamma) .
\end{equation}
Up to a factor of $ - \beta^{-1} $, the latter coincides with the  free energy.
In the thermodynamic limit $ N \to \infty $, the specific pressure of the REM  converges almost surely~\cite{Der80,Der81,OP84, Bov06},
\begin{equation}\label{eq:REMc}
	N^{-1} \Phi_N(\beta, 0) \to  p^{\mathrm{REM}}(\beta) = \left\{ \begin{array}{l@{\quad}r} \tfrac{1}{2} \beta^2 & \mbox{if } \; \beta \leq \beta_c , \\[1ex]   \tfrac{1}{2} \beta_c^2 + (\beta - \beta_c) \beta_c & \mbox{if } \;   \beta > \beta_c .\end{array} \right.
\end{equation}
It exhibits a freezing transition into a low-temperature phase characterized by the vanishing of the specific entropy above 
$$ \beta_c := \sqrt{2 \ln 2 } . $$
\begin{figure}[ht]
	\vspace*{-.5cm}
	\begin{center}
		\includegraphics[width=.75\textwidth] {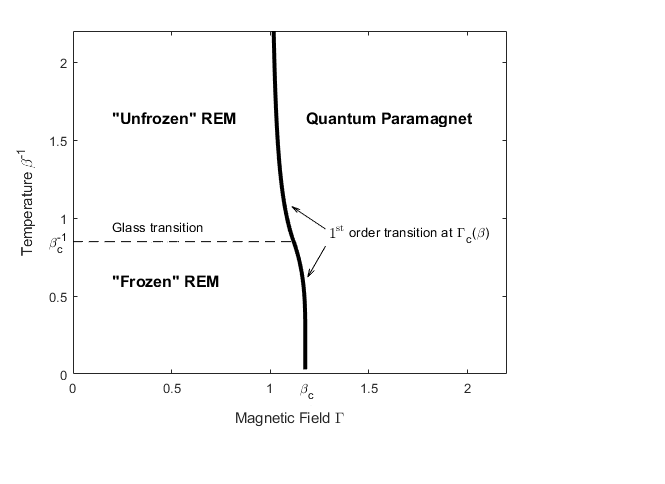}
	\end{center}
	\vspace*{-1cm}
	\caption{Phase diagram of the QREM as a function of the transversal magnetic field $ \Gamma  $ and the temperature $ \beta^{-1}$ \cite{Gold90, MW20}. The first-order transition occurs at fixed $ \beta $ and $ \Gamma_c(\beta) $. The freezing transition is found at temperature $ \beta_c^{-1} $, which is unchanged in the presence of a small magnetic field. } \label{fig:phase}
\end{figure}

In the presence of the transversal field,  the spin-glass phase of the REM disappears for large $ \Gamma > 0 $ and a first-order phase transition into a quantum paramagnetic phase described by
$$ p^{\mathrm{PAR}}(\beta \Gamma)  := \ln \cosh\left(\beta \Gamma\right) $$
occurs at the critical magnetic field strength
$$ \Gamma_c(\beta) := \beta^{-1} \arcosh\left( \exp\left( p^{\mathrm{REM}}(\beta)\right)  \right) . $$ 
In particular, $ \Gamma_c(0) = 1  $ and $ \Gamma_c(\beta_c) = \beta_c^{-1} \arcosh(2) $.
The precise location of this first-order transition and the shape of the phase diagram of the QREM, which we sketch in Figure~\ref{fig:phase}, had been predicted by Goldschmidt~\cite{Gold90} in the 1990s and was rigorously established  in \cite{MW20}. 

\begin{proposition}[\cite{MW20}]\label{thm:old}
	For any $ \Gamma , \beta \geq 0 $ we have the almost sure convergence as $ N \to \infty $:
	$$
	N^{-1} \Phi_N(\beta, \Gamma) \to  \max\{ p^{\mathrm{REM}}(\beta)  , p^{\mathrm{PAR}}(\beta \Gamma) \} . 
	$$
\end{proposition}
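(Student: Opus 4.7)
My approach is to sandwich $N^{-1}\Phi_N$ between matching lower and upper bounds, treating the classical REM convergence \eqref{eq:REMc} as a black box.

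\emph{Lower bound.} I would apply the Peierls--Bogoliubov inequality $\tr e^{-\beta H} \ge \sum_i e^{-\beta\langle e_i|H|e_i\rangle}$ in two competing orthonormal bases of $\ell^2(\mathcal{Q}_N)$. In the canonical basis $(\delta_{\pmb{\sigma}})$, which diagonalizes $U$ and satisfies $\langle\delta_{\pmb{\sigma}}|T|\delta_{\pmb{\sigma}}\rangle=0$, this yields $Z(\beta,\Gamma)\ge 2^{-N}\sum_{\pmb{\sigma}} e^{-\beta U(\pmb{\sigma})}$, whose exponential rate is $p^{\mathrm{REM}}(\beta)$ by Derrida's theorem. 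In the $x$-eigenbasis $\{\phi_{\pmb{\tau}}\}$ of $T$ (tensor products of $|\pm\rangle$), each basis vector is uniform in the canonical basis, so $\langle\phi_{\pmb{\tau}}|H|\phi_{\pmb{\tau}}\rangle = -\Gamma\sum_j\tau_j + \overline{U}_N$ with $\overline{U}_N := 2^{-N}\sum_{\pmb{\sigma}}U(\pmb{\sigma})$, and summing gives $Z(\beta,\Gamma)\ge e^{-\beta\overline{U}_N}(\cosh\beta\Gamma)^N$. Since $N^{-1}\overline{U}_N\to 0$ almost surely by the strong law of large numbers, combining the two bounds yields $\liminf_N N^{-1}\Phi_N \ge \max\{p^{\mathrm{REM}}(\beta),\, p^{\mathrm{PAR}}(\beta\Gamma)\}$ almost surely.

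\emph{Upper bound.} This is the delicate direction, because a naive Golden--Thompson bound $\tr e^{-\beta H}\le (2\cosh\beta\Gamma)^N\cdot 2^{-N}\sum_{\pmb{\sigma}}e^{-\beta U(\pmb{\sigma})}$ yields the \emph{sum} rather than the maximum of the two pressures. My plan is to partition $\mathcal{Q}_N$ by potential depth: fix a threshold $c>0$ and set $\mathcal{B} = \{\pmb{\sigma} : U(\pmb{\sigma})\le -cN\}$ with coordinate projection $P = \sum_{\pmb{\sigma}\in\mathcal{B}}|\delta_{\pmb{\sigma}}\rangle\langle\delta_{\pmb{\sigma}}|$ and $Q=\mathbbm{1}-P$. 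The orthogonal decomposition $\tr e^{-\beta H} = \tr(Pe^{-\beta H}P) + \tr(Qe^{-\beta H}Q)$ reduces matters to two estimates. On $\operatorname{range}(Q)$ the potential satisfies $U\ge -cN$, and one expects $\tr(Qe^{-\beta H}Q)\lesssim e^{\beta cN}(2\cosh\beta\Gamma)^N$, giving the paramagnetic contribution up to an affine correction in $c$. On $\operatorname{range}(P)$, with high probability the trap set $\mathcal{B}$ is sparse and its members lie at mutual Hamming distance $\ge 2$, so $\Gamma T$ couples them only perturbatively; this yields $\tr(Pe^{-\beta H}P) \lesssim \sum_{\pmb{\sigma}\in\mathcal{B}}e^{-\beta U(\pmb{\sigma})}(1+o(1))$, which is controlled by the full REM partition function. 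Optimizing $c$ and using $N^{-1}\ln(a_N+b_N)\to\max$ of the exponential rates concludes the upper bound.

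\emph{Main obstacle.} The principal difficulty is justifying the paramagnetic estimate on $\operatorname{range}(Q)$: the block $QHQ$ contains not only $\Gamma QTQ$ and $QUQ\ge -cN$, but is also coupled to the trap sector through the off-diagonal entries $QHP$ and $PHQ$, hidden inside the exponential. Controlling these --- for instance through a Schur complement/Feshbach reduction on the resolvent of $\Gamma T$ restricted to $\operatorname{range}(Q)$, or through direct perturbative estimates exploiting the geometric isolation of deep traps guaranteed by Gaussian extreme-value statistics on the Hamming cube --- is the technical heart of the argument. All remaining error terms are $o(N)$ and hence negligible at leading exponential order.
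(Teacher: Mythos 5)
Your lower bound is correct and standard: the Jensen--Peierls (Peierls--Bogoliubov) inequality in the $z$-basis gives the REM rate, and in the $x$-basis it gives the paramagnetic rate, with the law of large numbers killing $\overline{U}_N$. The paper cites Proposition~\ref{thm:old} as a result of~\cite{MW20} and offers no proof of its own, so there is nothing in the present text to compare against directly; but the route taken in~\cite{MW20} (and in the refined Theorem~\ref{thm:free} here) differs in how the upper bound is closed.

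The genuine gap in your upper bound is that the identity $\tr e^{-\beta H} = \tr(Pe^{-\beta H}P) + \tr(Qe^{-\beta H}Q)$, while exact, does not immediately "reduce matters to two estimates": each trace still contains the full $e^{-\beta H}$, and the Berezin/Jensen--Peierls inequality gives $\tr\bigl(Qe^{-\beta H}Q\bigr)\ge\tr_2 e^{-\beta QHQ}$, which is the \emph{wrong} direction for an upper bound. What you actually need is the operator inequality $H\ge PHP+QHQ-\|PHQ+QHP\|$ (equivalently $H \ge PHP \oplus QHQ - \Gamma\|PTQ+QTP\|$, since $PUQ=0$), from which $\tr e^{-\beta H}\le e^{\beta\Gamma\|PTQ+QTP\|}\bigl(\tr_1 e^{-\beta PHP}+\tr_2 e^{-\beta QHQ}\bigr)$ follows. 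To make this prefactor $e^{o(N)}$ you must bound $\|PTQ\|$, which requires that the connected components of the trap set $\mathcal{B}$ in the Hamming graph be of bounded size; this is a nontrivial combinatorial input (precisely the role of Lemma~\ref{lem:clus} in Section~\ref{sec:free}). Your phrase "its members lie at mutual Hamming distance $\ge 2$" fails outright for thresholds $c<\sqrt{\ln 2}$, so you need the bounded-cluster statement, not pairwise isolation. With that in place, $\|T(\mathcal B)\|=\mathcal{O}(1)$ and $\|PTQ\|=\mathcal{O}(\sqrt N)$, the $Q$-block is controlled by Golden--Thompson and Berezin applied to $QHQ$ alone, and the $P$-block by Golden--Thompson plus the REM rate, so the scheme closes. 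In contrast, the original argument in \cite{MW20,MW20c} packages this as a "peeling principle" for QREM-type models with a large-deviation potential, and the refined Theorem~\ref{thm:free} in this paper uses spectral projections onto low-lying cluster eigenstates (rather than coordinate projections) together with a convexity/interpolation trick in the coupling $\lambda$ to handle the off-diagonal terms; both buy sharper, order-one control that your coordinate-projection scheme cannot reach, but your scheme is adequate at leading exponential order once the two gaps above are filled.
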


\subsection{Low energy states}
Through the low-temperature limit 
$ \beta \to \infty$, Proposition~\ref{thm:old} contains also information on the ground state energy  of the QREM,
\[
N^{-1} \inf\spec H  \to  \begin{cases} - \beta_c & \text{ if } \Gamma < \beta_c, \\
	- \Gamma & \text{ if } \Gamma > \beta_c .
\end{cases}
\]
The critical coupling for this quantum phase transition is at the endpoint  $ \lim_{\beta \to \infty} \Gamma_c(\beta) = \beta_c $ of the first order phase transition. 
As will be demonstrated below, this ground-state phase transition at $ \Gamma = \beta_c $ is manifested by a change of the nature of the corresponding eigenvector from sharply localized to (almost) uniformly delocalized. 
To provide some heuristics, it is useful to compare the ground-state energy and eigenvectors of the two operators entering $ H = \Gamma T + U  $:
\begin{enumerate}
	\item The spectrum of $T$ consists of $ N + 1$ eigenvalues, 
	\be\label{eq:specT}
	\spec T =  \{ 2n - N \, \vert \, n \in \mathbb{N}_0 , \; n \leq N \} ,
	\ee
	with degeneracy given by the binomials $  \binom{N}{n} $. The corresponding $ \ell^2 $-normalized eigenvectors are the natural orthonormal basis for the Hadamard transformation, which diagonalizes $ T $. They are indexed by  subsets $ A \subset \{ 1, \dots , N \} $:
	\be\label{eq:efs}
	\Phi_A(\pmb{\sigma}) := \frac{1}{\sqrt{2^N}} \prod_{j \in A} \sigma_j .
	\ee
	The eigenvalue to $ \Phi_A $ is  $ 2\vert A \vert -N $ with $ \vert A \vert $ the cardinality of the set. 
	In particular, 
	the unique ground-state of $ \Gamma T $ is $ \Phi_\emptyset $  with energy $ - N \Gamma $. All eigenvectors $ \Phi_A $  are maximally uniformly delocalized over the Hamming cube. 	
	\item
	In contrast, all eigenvectors $ \delta_{\pmb{\sigma} }$ of $ U $ are maximally localized.  
	The REM's minimum energy, $ \min U $, is roughly at $ - N \beta_c $.
	For $\eta > 0$ the event that $ \| U \|_{\infty} \coloneqq \max_{\pmb{\sigma}\in \mathcal{Q}_N} \vert U(\pmb{\sigma}) \vert > (\beta_c + \eta) N$ has exponentially small probability, i.e,
	\begin{align}\label{eq:REMevent}
		\Omega_{N,\eta}^\textup{REM}  \coloneqq & \ \{ \| U \|_{\infty} \leq (\beta_c + \eta) N  \}, \\
		& \pp(\Omega_{N,\eta}^\textup{REM}) \geq 1 - 2^{N+1} e^{-\frac12(\eta + \beta_c)^2 N} = 1 - 2 \ e^{-N( \eta \beta_c + \frac{\eta^2}{2}) } , \notag 
	\end{align}
	where the inequality follows from the union bound and a Markov-Chernoff estimate.
	A more precise description of the extremal value statistics of $\min U$ is~\cite{LLR83,Bov06}
	\be\label{eq:Uinfty} 
	\pp\left(  \min  U  \geq   s_N(x) \right) = \left( 1- 2^{-N} e^{-x + o(1)} \right)^{2^N} 
	\ee 
	for any $ x $ in terms of the function $ s_N $ given by
	\be\label{eq:rescaling}
	s_N(x) \coloneqq -\beta_c N+\frac{\ln(N \ln2) + \ln(4\pi)}{2 \beta_c} - \frac{x}{\beta_c} .   \ee
	By symmetry of the distribution, a  similar expression applies to the maximum.
\end{enumerate}

These limiting cases suggest the following heuristic, perturbative description of the ground-state  of $ H = \Gamma T + U $ in the regimes of small and large $ \Gamma $. To our knowledge, it goes back to~\cite{JKrKuMa08}:
\begin{enumerate}
	\item
	For small $ \Gamma $, second-order perturbation theory starting from the vector $  \delta_{{\pmb{\sigma}}_{\min} } $, which is localized at  ${\pmb{\sigma}}_{\min} \coloneqq \arg\min U $, reads:
	\begin{align}\label{eq:2ndspin}
		\inf\spec H  & \approx \min U + \Gamma \  \langle \delta_{{\pmb{\sigma}}_{\min} } \vert T  \delta_{{\pmb{\sigma}}_{\min} } \rangle + \Gamma^2  \sum_{ \pmb{\sigma} \neq {\pmb{\sigma}}_{\min} } \frac{\left\vert  \langle \delta_{{\pmb{\sigma}}} \vert \ T  \delta_{{\pmb{\sigma}}_{\min} } \rangle \right\vert^2  }{U(\pmb{\sigma}_{\min}  )- U(\pmb{\sigma})  } \approx - N \beta_c  - \frac{\Gamma^2}{\beta_c}  . 
	\end{align}
	The first-order term vanishes. The sum in the second-order term is restricted to the neighbors of the minimum, whose potential term typically is only of the order $ \sqrt{N} $. 
	\item
	For large $ \Gamma $, second-order perturbation theory starting from the ground state $  \Phi_\emptyset  $ of $ T $ reads:
	\begin{align}\label{eq:2ndpara}
		\inf\spec H  & \approx - N \Gamma  +  \langle \Phi_\emptyset   \vert  U  \Phi_\emptyset  \rangle - \sum_{ A \neq \emptyset } \frac{\left\vert  \langle \Phi_\emptyset  \vert \ U  \Phi_A\rangle\right\vert^2  }{2 \vert A \vert \ \Gamma } \notag \\ &  \approx - N \Gamma  -  \frac{ \langle \Phi_\emptyset  \vert \ U^2  \Phi_\emptyset \rangle }{N \ \Gamma }  \approx - N \Gamma  -  \frac{1 }{ \Gamma } .
	\end{align}
	The next-to-leading term, $ \langle \Phi_\emptyset   \vert  U  \Phi_\emptyset  \rangle = 2^{-N} \sum_{ \pmb{\sigma} \in \mathcal{Q}_N } U(\pmb{\sigma}) $, vanishes by the law of large numbers. In the order $ \Gamma^{-1} $-term, one uses the approximation that most of the states of $ T $  are found near $ \vert A \vert \approx N/2 $. As will be explained in more detail in Section~\ref{sec:deloc}, one crucial point is that $ U $ is exponentially small when restricted to the eigenspace of $ T $ outside an interval around $ |A| \approx N/2 $. By a decomposition of unity one is therefore left with $  \langle \Phi_\emptyset  \vert \ U^2  \Phi_\emptyset \rangle \approx N $, again by the law of large numbers.
\end{enumerate}
Unlike in a finite-dimensional situation, higher orders in this naive perturbation theory turn out to be of lower order with $ N^{-1} $ the relevant parameter. 
One result of this paper is that these predictions can be confirmed: 
for $ \Gamma < \beta_c $ the ground state is sharply localized near a lowest-energy configuration of the REM. In contrast, 
for $ \Gamma > \beta_c $ the ground state 
resembles the maximally delocalized state given by the ground state of $ T $. In both cases, the ground state is energetically separated and the ground-state gap only closes exponentially near $ \Gamma = \beta_c $, see also~\cite{AW16}. 	
In fact, we do not only restrict attention to the ground state but characterize a macroscopic window of the entire low-energy spectrum in the different parameter regimes. 

Before delving into the details, let us emphasize that the localization-delocalization transition at extreme energies presented here relies on the delocalization properties of $ T $ on $\mathcal{Q}_N $, which fundamentally differ from the finite-dimensional situation. The eigenfunctions of $ T $ can only form localized states from linear combinations in the center of its spectrum. This is given a precise mathematical formulation in the form of novel estimates on the spectral shift and Green function of  Dirichlet  restrictions of $ T $ to Hamming balls in Section~\ref{sec:Adjmatrix}, and random matrix estimates on projections of multiplication operators in Section~\ref{sec:deloc}.  A separation of the localized versus delocalized parts of the spectrum beyond the extremal energies, on which the subsequent results concerning the finite-size corrections of the free energy rest, is facilitated by a novel detailed description of the geometry of extremal fluctuations the REM in Section~\ref{sec:free}.\\

Aside from Theorem~\ref{thm:free}, our results pertain to fixed, but arbitrarily large $ N $ on the product probability space $ \Omega_N $ corresponding to $ 2^N $ i.i.d.\ standard normal random variables whose product measure we denote by $ \mathbb{P} $. We suppress its dependence on $ N $.
In this setting, the results apply to all realizations $ \omega $, aside from exceptional events whose probability will be estimated and goes (exponentially) to zero as $ N \to \infty $. This strong concentration enables the use of Borel-Cantelli arguments in Theorem~\ref{thm:free}, which then apply to the product space $ \prod_{N=1}^\infty \Omega_N $ (which is also the set-up in Proposition~\ref{thm:old}).
To present our results and estimates in a precise, yet reader-friendly, manner, we will make use  of an "indexed" version of Landau's O-notation. 

\begin{definition}
	Let $\Theta = (\theta_1,\ldots, \theta_k)$ be a tuple of parameters, $(a_N)_{N \in \nn}$ a real and $(b_N)_{N \in \nn}$
	a positive sequence. We then write 
	\begin{equation} 
		a_N = \mathcal{O}_{\Theta}(b_N) \quad \text{if} \quad  \limsup_{N \to \infty} \frac{\vert a_N \vert}{b_N} \leq C(\Theta), \end{equation}
	for some positive constant $C(\Theta)$, which may depend on $\Theta$.
	Analogously, we write 
	\begin{equation} 	a_N = o_{\Theta}(b_N) \quad \text{if} \quad    \vert a_N \vert \leq c_N(\Theta) \vert b_N \vert, \end{equation}
	where $c_N(\Theta)$ denotes a null sequence.
\end{definition}
In particular, the appearing constant $C(\Theta)$ or, respectively sequence $c_N(\Theta)$, does not depend on any other parameters in question not included in $\Theta$. That is, if $a_N$ is a random sequence and the realization $ \omega $ of the randomness is not included in the list $
\Theta$ of parameters, the estimates are understood to hold uniformly on the event of interest.

\subsubsection{Paramagnetic regime $\Gamma > \beta_c$ }

Our first main result shows that  in the paramagnetic regime the addition of the REM shifts the eigenvalues~\eqref{eq:specT} of $ T $ at energies below the minimum of $ U $ 
deterministically. 
\begin{theorem}\label{thm:qpgs}
	For $ \Gamma > \beta_c $ and any $ \tau \in( 0,1) $ there are events $ \Omega_{N,\tau}^\textup{par} $ with probability  
	\[ \pp(\Omega_{N,\tau}^\textup{par}) \geq 1 - e^{-N/C} \] and $ C \in (0,\infty) $ a universal constant such that  for all sufficiently large $ N $  and any $\eta > 0$ on $ \Omega_{N,\tau}^\textup{par} \cap \Omega_{N,\eta}^\textup{REM}$ (cf.\ \eqref{eq:REMevent}) all eigenvalues 
	of $ H = \Gamma T  +  U $ below $ -(\beta_c +2\eta) N $  are found in the union of intervals of radius $ \mathcal{O}_{\Gamma,\eta}(N^{\frac{\tau-1}{2}} )$  centered at 
	\begin{equation}\label{eq:center}
		(2n-N)\Gamma +  \frac{N}{(2n-N)\Gamma}  
	\end{equation}
	with $ n \in \{ m \in \mathbb{N}_0  \, \vert (2m-N) \Gamma <  -(\beta_c +2\eta) N \} $.
	Moreover, the ball centered at~\eqref{eq:center} contains exactly $ \binom{N}{n} $ eigenvalues of $ H$ .
\end{theorem}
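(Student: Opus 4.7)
The plan is a Feshbach--Schur reduction to the low-lying eigenspaces of $T$. Let $P_n$ denote the spectral projection of $T$ onto its eigenvalue $2n-N$ (of rank $\binom{N}{n}$) and, choosing a buffer $\eta' \in (0,\eta)$, set
\[
	\mathcal{M} := \{ n : (2n-N)\Gamma < -(\beta_c+\eta')N \}, \qquad P := \sum_{n\in\mathcal{M}} P_n, \qquad Q := \mathbbm{1} - P .
\]
For $E < -(\beta_c+2\eta)N$ the block $QHQ$ has free part $\Gamma QTQ \geq -(\beta_c+\eta')N\,Q$ and random part of comparable size $\|U\|_\infty \leq (\beta_c+\eta)N$ on $\Omega_{N,\eta}^{\textup{REM}}$, so invertibility of $QHQ-E$ is the first and hardest issue. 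The plan is to establish, on an event of probability $\geq 1 - e^{-N/C}$,
\[
	\bigl\|(QHQ-E)^{-1}\bigr\|_{\mathrm{op}} = \mathcal{O}_{\Gamma,\eta}(N^{-1}) \quad \text{uniformly in } E < -(\beta_c+2\eta)N ,
\]
by invoking the random matrix estimates on projections of multiplication operators prepared in Section~\ref{sec:deloc}: although $\|U\|_\infty \approx \beta_c N$, the restriction of $U$ to projections onto the bulk of $\spec T$ has operator norm strictly smaller, which suffices to close a Neumann series in $QUQ$ around $(\Gamma QTQ - E)^{-1}$.

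Granted this resolvent bound, the Schur complement recasts $Hv = Ev$ as the nonlinear eigenvalue equation on $\mathrm{ran}\,P$,
\[
	\bigl( \Gamma T + PUP - PUQ(QHQ - E)^{-1} QUP - E \bigr) Pv = 0 .
\]
The diagonal term $P_n U P_n$ has Hadamard-basis matrix elements $\langle\Phi_A|U|\Phi_B\rangle = 2^{-N}\sum_{\pmb{\sigma}}\chi_{A\triangle B}(\pmb{\sigma}) U(\pmb{\sigma})$, where $\chi_C(\pmb{\sigma}) := \prod_{j\in C}\sigma_j$. These are jointly Gaussian of variance $N\,2^{-N}$, and since $\Gamma > \beta_c$ forces every $n \in \mathcal{M}$ to lie at distance $\Omega(N)$ from $N/2$, the total dimension $\sum_{n\in\mathcal{M}}\binom{N}{n}$ is sub-exponential in $2^N$; a standard $\varepsilon$-net / Gaussian concentration argument therefore yields $\|PUP\|_{\mathrm{op}} = \mathcal{O}(N^{(\tau-1)/2})$ with exponentially small exceptional probability. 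For the off-diagonal term, a Neumann expansion of $(QHQ-E)^{-1}$ around the free resolvent has leading contribution in the $n$-th block
\[
	-\tfrac{1}{2\Gamma}\sum_{m\notin\mathcal{M}} \frac{1}{m-n}\, P_n U P_m U P_n .
\]
Using $\Ee[U(\pmb{\sigma})U(\pmb{\sigma}')] = N\delta_{\pmb{\sigma},\pmb{\sigma}'}$ one finds $\Ee[P_n U P_m U P_n] = N\binom{N}{m}2^{-N}\, P_n$; a Hanson--Wright-type inequality concentrates the quadratic form around this mean; and the combinatorial sum $\sum_{m\notin\mathcal{M}}\binom{N}{m}/(2^N(m-n))$ is evaluated asymptotically via the concentration of the binomial mass at $m\approx N/2$ to give $2/(2n-N)$. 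Multiplying by $-N/(2\Gamma)$ produces exactly $N/((2n-N)\Gamma)$, with the fluctuation and higher Neumann corrections absorbed in an error of size $\mathcal{O}(N^{(\tau-1)/2})$.

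Combining, the effective operator in the display above equals $\bigl[(2n-N)\Gamma + N/((2n-N)\Gamma)\bigr] P_n$ on each block up to operator-norm error $\mathcal{O}_{\Gamma,\eta}(N^{(\tau-1)/2})$. Since consecutive centers in \eqref{eq:center} are separated by $2\Gamma$ up to $\mathcal{O}(\Gamma^{-1})$ corrections, the balls of the claimed radius around them are pairwise disjoint for $N$ large, and standard spectral perturbation (min--max) localizes the eigenvalues of the effective and hence of the original problem into these balls. The exact multiplicity $\binom{N}{n}$ follows from a continuous deformation along $H_s := \Gamma T + sU$, $s \in [0,1]$: all estimates above hold uniformly in $s$ with $sU$ replacing $U$, so the $\binom{N}{n}$ eigenvalues sitting at $(2n-N)\Gamma$ when $s=0$ flow continuously into the $n$-th ball at $s=1$ without ever crossing into neighboring balls. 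The principal obstacle throughout is the resolvent bound on $\mathrm{ran}\,Q$: the naive bounds from $\spec T$ and $\|U\|_\infty$ separately are just barely too weak, and the delicate delocalization-type input of Section~\ref{sec:deloc} is indispensable.
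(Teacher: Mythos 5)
Your strategy is in the right family — a Feshbach--Schur reduction on spectral subspaces of $T$, driven by random-matrix estimates on projections of $U$ and an explicit asymptotic evaluation of the effective energy shift — so the high-level picture is compatible with the paper's Section~\ref{sec:deloc}. The explicit Neumann-series evaluation of the shift, with $\Ee[P_nUP_mUP_n]=N\binom{N}{m}2^{-N}P_n$ and the binomial concentration $\sum_m \binom{N}{m}2^{-N}/(m-n)\to (N/2-n)^{-1}$, is a clean and genuinely different route to the number $N/((2n-N)\Gamma)$ than the paper's resolvent-identity manipulation in Lemma~\ref{lem:SchurRest}. But there is a real gap, and it is precisely at the point you flag as "the hardest issue."

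The gap is in the choice of $P$ and the claimed control of $(QHQ-E)^{-1}$. The paper takes $P_\varepsilon = \mathbbm{1}_{\{|T|\geq \varepsilon N\}}$ (both spectral edges of $T$), so that the complementary block $Q_\varepsilon$ is the \emph{center} $\{|T|<\varepsilon N\}$. There $\Gamma Q_\varepsilon TQ_\varepsilon \geq -\Gamma\varepsilon N$, and the trivial bound $Q_\varepsilon HQ_\varepsilon \geq -\Gamma\varepsilon N - \|U\|_\infty$ already keeps the spectrum at distance $\gtrsim \eta N$ from any $E<-(\beta_c+2\eta)N$ once $\varepsilon\Gamma<\eta$; no random-matrix input is needed to invert $Q_\varepsilon HQ_\varepsilon - E$. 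You instead take $P$ to be only the \emph{low} edge $\{(2n-N)\Gamma<-(\beta_c+\eta')N\}$, so your $Q$ contains a ``shell'' where $\Gamma T$ is close to $-(\beta_c+\eta')N$. On that shell $\Gamma QTQ-E$ has tiny spectral gap of order $(2\eta-\eta')N$, while $\|QUQ\|\approx\|U\|_\infty\approx \beta_c N$, so $\|(\Gamma QTQ-E)^{-1}QUQ\|\gg 1$ and the Neumann series you propose does not converge there. Your purported rescue — that ``the restriction of $U$ to projections onto the bulk of $\spec T$ has operator norm strictly smaller'' — has the mechanism backwards: Proposition~\ref{prop:PUP}/Corollary~\ref{cor:pup} give smallness of $P_\varepsilon U P_\varepsilon$ because $\langle\delta_{\pmb\sigma}|P_\varepsilon\delta_{\pmb\sigma}\rangle=\dim P_\varepsilon/2^N$ is exponentially small; the \emph{bulk} projection has $\dim/2^N\approx 1$, so $\|Q_{\textup{bulk}}UQ_{\textup{bulk}}\|$ stays $\approx\|U\|_\infty$. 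What is small is $U$ restricted to the \emph{shell} near your cut (which is again a sub-exponential-dimensional subspace), so a correct version of your argument would have to split $Q$ further into shell, bulk, and high-edge pieces, controlling the shell by a PUP-type estimate and the bulk by the trivial lower bound on $T$ — at which point you have effectively rebuilt the paper's decomposition, and the asymmetric $P$ buys you nothing. As written, the resolvent bound $\|(QHQ-E)^{-1}\|=\mathcal O(N^{-1})$ is asserted but not actually obtainable from the ingredients you cite, so the subsequent Schur-complement evaluation and the deformation argument for multiplicities do not get off the ground.
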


For the ground-state in the regime $\Gamma > \beta_c$, Theorem~\ref{thm:qpgs} implies that with overwhelming probability
\begin{equation} \label{eq:GSpara}
	\inf\spec H = - \Gamma N - \frac{1}{\Gamma} + o_\Gamma(1). 
\end{equation} 
The energy shift with respect to the ground state of $\Gamma T$ is as predicted by naive second-order perturbation theory~\eqref{eq:2ndpara}. Second-order perturbation theory for the eigenvalues corresponds to first-order perturbation theory for the eigenvectors: the eigenvectors are well approximated  by  their first order corrections. In particular, the ground state in the paramagnetic phase is close to the fully paramagnetic state~$ \Phi_\emptyset $. This is made more precise in our next main result, whose proof alongside that of Theorem~\ref{thm:qpgs} can be found in Section~\ref{sec:deloc}.
\begin{theorem}\label{thm:qpstate}
	In the situation of Theorem~\ref{thm:qpgs} on the event $ \Omega_{N,\tau}^\textup{par} \cap \Omega_{N,\eta}^\textup{REM}   $ with $0 < \eta < (\Gamma - \beta_c)/4$ the $ \ell^2 $-normalized ground state  $\psi$ of $ H = \Gamma T + U $ satisfies:
	\begin{itemize}
		\item[1.] The $\ell^2$-distance of $\psi$ and $  \Phi_\emptyset  $ is
		$ \displaystyle	\| \psi - \Phi_\emptyset\| =  \mathcal{O}_{\Gamma}(N^{\frac{\tau-1}{2}} ) $. 
		\item[2.] The ground state  $\psi$ is exponentially delocalized in the maximum norm, i.e.\ 
		\begin{equation}\label{eq:qpstate}
			\| \psi \|_{\infty}^2 \leq 2^{-N} e^{N\gamma((\beta_c + \eta)/(2\Gamma) )+ o_\Gamma(N) },
		\end{equation}
		where $\gamma \colon [0,1] \to \rr $ denotes the binary entropy 
		\begin{equation}\label{eq:binaryent}
			\gamma(x) \coloneqq - x\ln x - (1-x) \ln (1-x) . 
		\end{equation}
	\end{itemize}
\end{theorem}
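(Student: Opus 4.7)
The plan is to exploit the spectral data from Theorem~\ref{thm:qpgs}: on $\Omega_{N,\tau}^{\textup{par}}\cap \Omega_{N,\eta}^{\textup{REM}}$, the ground-state energy is $E_0=-N\Gamma-1/\Gamma+\mathcal{O}_{\Gamma,\eta}(N^{(\tau-1)/2})$, and the next cluster of eigenvalues sits near $-(N-2)\Gamma-N/((N-2)\Gamma)$, providing a spectral gap $g\geq 2\Gamma-o(1)$. I then combine this with a Feshbach--Schur reduction built on $P:=\ketbra{\Phi_\emptyset}{\Phi_\emptyset}$ and $Q:=\mathbbm{1}-P$. Since $T\Phi_\emptyset=-N\Phi_\emptyset$, the eigenvalue equation projected onto $\operatorname{ran}Q$ reduces to $Q(H-E_0)Q\cdot Q\psi=-c_0\,QU\Phi_\emptyset$ with $c_0:=\langle\Phi_\emptyset\vert\psi\rangle$, which gives
\begin{equation*}
  \psi=c_0\,\bigl(\Phi_\emptyset+(E_0-QHQ)^{-1}QU\Phi_\emptyset\bigr),\qquad \|\psi-\Phi_\emptyset\|^2\simeq \|(E_0-QHQ)^{-1}QU\Phi_\emptyset\|^2.
\end{equation*}

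For Part~1 it therefore suffices to bound the right-hand side by $\mathcal{O}_\Gamma(N^{\tau-1})$. Expanding $QU\Phi_\emptyset=\sum_{n\ge 1}\sum_{|A|=n} u_A\,\Phi_A$ with $u_A:=\langle\Phi_A\vert U\vert\Phi_\emptyset\rangle$ centered Gaussian of variance $N\cdot 2^{-N}$ (and in fact jointly independent across $A$), a block-diagonal approximation of the resolvent yields the heuristic
\begin{equation*}
  \|(E_0-QHQ)^{-1}QU\Phi_\emptyset\|^2 \;\simeq\;\sum_{n\ge 1}\frac{\sum_{|A|=n}|u_A|^2}{(2n\Gamma+1/\Gamma)^2},
\end{equation*}
and Gaussian/chi-squared concentration for each block, $\sum_{|A|=n}|u_A|^2\le (1+o(1))\binom{N}{n}N\,2^{-N}$ off an exponentially small event, produces the bound $\mathcal{O}_\Gamma(1/N)$---well inside the target $\mathcal{O}_\Gamma(N^{\tau-1})$. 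The hard part is that $QHQ$ is not block-diagonal in $\{P_n\}$: the off-diagonal piece $QUQ$ has operator norm up to $(\beta_c+\eta)N$, so any naive Neumann expansion in $QUQ$ diverges. I would overcome this using the ball structure of Theorem~\ref{thm:qpgs}: since the $n$-th ball contains exactly $\binom{N}{n}=\dim P_n$ eigenvalues, the eigenprojection of $QHQ$ onto that ball must be operator-norm close to $P_n$, which justifies the block-diagonal approximation and allows the per-block concentration to be summed uniformly in $n$.

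For Part~2 I would work pointwise from the same Schur representation:
\begin{equation*}
  \psi(\sigma)=c_0\cdot 2^{-N/2}\Bigl(1+\sum_{\tau\in\mathcal{Q}_N}G^Q(E_0;\sigma,\tau)\bigl(U(\tau)-\bar U\bigr)\Bigr),\qquad \bar U:=2^{-N}\sum_\sigma U(\sigma),
\end{equation*}
with $G^Q$ the kernel of $Q(E_0-QHQ)^{-1}Q$. The free Green function $G_0(E_0;\sigma,\tau)$ of $\Gamma T$, diagonalized in the Hadamard basis of Section~\ref{sec:Adjmatrix}, admits a one-dimensional integral representation whose Laplace analysis gives a bound of the form $|G_0(E_0;\sigma,\tau)|\le e^{-N\gamma(d(\sigma,\tau)/N)+o(N)}$. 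Combining this decay with $|U(\tau)|\le(\beta_c+\eta)N$ on $\Omega_{N,\eta}^{\textup{REM}}$ and with the shell cardinality $\binom{N}{d}\le e^{N\gamma(d/N)}$, the competing exponential factors balance at $d/N\approx(\beta_c+\eta)/(2\Gamma)$, which produces precisely the entropy factor $\gamma((\beta_c+\eta)/(2\Gamma))$ stated in~\eqref{eq:qpstate}. The main obstacle is to replace $G^Q$ by $G_0$ at this exponential precision; this is handled by the same block-projection estimates underlying Theorem~\ref{thm:qpgs}, supplemented by the $\ell^2$-closeness of $\psi$ to $\Phi_\emptyset$ established in Part~1, which localizes the weight of $\psi$ on low-lying $T$-blocks where $G^Q-G_0$ can be controlled.
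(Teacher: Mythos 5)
Your Part~1 builds a Feshbach--Schur reduction on the rank-one projection $P = \ketbra{\Phi_\emptyset}{\Phi_\emptyset}$; the paper's proof instead uses the $T$-spectral projections $P_\varepsilon$, $Q_\varepsilon$ with $\varepsilon = N^{(\tau-1)/2}$, onto the edges and center of $\spec T$ respectively. The difference is not cosmetic: $\dim P_\varepsilon/2^N$ is exponentially small, which is exactly what makes Proposition~\ref{prop:PUP} applicable and yields the controlled Schur complement of Lemma~\ref{lem:SchurRest}; the final projection onto $\Phi_\emptyset$ inside $P_\varepsilon \ell^2$ is then an easy add-on using the gap $2$ of $T$. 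In your decomposition, $Q$ still carries essentially the whole Hilbert space and $QUQ$ has operator norm of order $N$ -- comparable to the gaps of $QTQ$ -- so nothing controls $(E_0 - QHQ)^{-1}$. Your proposed fix, that the eigenprojections of $QHQ$ onto the balls of Theorem~\ref{thm:qpgs} are operator-norm close to the $T$-eigenprojections $P_n$ uniformly in $n$, is not established anywhere: Theorem~\ref{thm:qpgs} locates eigenvalues and counts them per ball but says nothing about the associated eigenspaces, and for $n$ near $N/2$ (where $U$ dominates the gap $2$ of $T$) such closeness is not expected. Note also that your block-diagonal heuristic would return $\mathcal{O}_\Gamma(1/N)$, strictly sharper than the theorem's $\mathcal{O}_\Gamma(N^{\tau-1})$; the paper explicitly flags $N^{-1/2}$ as only conjectural, a hint that the block-diagonalization error is not negligible at the level actually proved.

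Part~2 has a more basic problem: replacing the kernel of $Q(E_0 - QHQ)^{-1}Q$ by the free Green function of $\Gamma T$ ``at exponential precision'' is essentially the delocalization statement itself -- for a random operator the resolvent can concentrate near potential wells, and ruling this out is what the theorem asserts, so invoking it here is circular. The paper avoids the issue entirely via semigroup positivity (Proposition~\ref{prop:ext}): a Chebyshev--Markov bound combined with the positivity-improving structure of $e^{-\beta T}$ gives $\langle\delta_{\pmb{\sigma}}|\mathbbm{1}_{(-\infty,E)}(T+V)\delta_{\pmb{\sigma}}\rangle \leq e^{\beta\nu(E)N}(\cosh\beta)^N$ from the pointwise bound $V\geq -vN$ alone, with no resolvent information needed; applying this to $H/\Gamma = T + U/\Gamma$ with $v = (\beta_c+\eta)/\Gamma$ yields~\eqref{eq:qpstate} directly. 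Separately, your claimed decay $|G_0(E_0;\pmb{\sigma},\pmb{\tau})|\leq e^{-N\gamma(d/N)+o(N)}$ omits the factor $(N/|E|)^d$ appearing in~\eqref{eq:G_bound}, which is itself exponential in $N$ (since $\Gamma > \beta_c > 1$) and cannot be dropped from a leading-order balance.
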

The true $ \ell^2 $-distance of the ground-state function to the fully delocalized state $ \Phi_\emptyset  $ is presumably of order $ N^{-\frac{1}{2}}  $ up to a logarithmic correction in $ N $. 
The participation ratio~\eqref{eq:qpstate} is not expected to be  sharp: we conjecture a delocalization bound of the form $ \| \psi \|_{\infty}^2 \leq 2^{-N+o(N)}$.
Section~\ref{sec:deloc}, in which the proofs of Theorems~\ref{thm:qpgs} and~\ref{thm:qpstate} can be found, also contains (non-optimal) $ \ell^\infty $-delocalization estimates for all eigenvalues strictly below the threshold $ - \beta_c N $ in the paramagnetic regime. The optimal decay rates for excited states are not known. 
 In Section~\ref{sec:deloc2} we record a method, which improves the estimate \eqref{eq:qpstate} if $\Gamma - \beta_c$ is small.  A similar, but more elaborate, argument might result in better estimates for all field strengths $\Gamma > \beta_c$.

\subsubsection{Spin glass regime $\Gamma < \beta_c$ }
In the spin glass phase the low-energy configurations of the REM, which occur on the extremal sites
\begin{equation}\label{eq:large}
	\mathcal{L}_{\epsilon} \coloneqq \{\pmb{\sigma} \vert \, U(\pmb{\sigma}) \leq - \epsilon N \} \quad \text{with} \; \epsilon \in (0,\beta_c) ,
\end{equation}
are also shifted by a deterministic, order-one correction by the transverse field as predicted by second-order perturbation theory. 
To characterize localization properties of the corresponding eigenvectors in the canonical  $z$-basis, i.e., the configuration basis $ (\delta_{\pmb{\sigma}} ) $ of $ \ell^2(\mathcal{Q}_N) $, we
let 
\[B_R(\pmb{\sigma}) \coloneqq \{\pmb{\sigma}^\prime  \vert \, d(\pmb{\sigma}, \pmb{\sigma}^\prime ) \leq R \} , \qquad S_R(\pmb{\sigma}) \coloneqq \{\pmb{\sigma}^\prime  \vert \, d(\pmb{\sigma}, \pmb{\sigma}^\prime ) = R \} \]
stand for the Hamming ball and sphere of radius $R$, which are defined in terms of the Hamming distance 
\[ d(\pmb{\sigma}, \pmb{\sigma}^\prime ) \coloneqq \frac{1}{2} \sum_{i=1}^{N} \vert \sigma_i - \sigma^\prime_i \vert \] 
of two configurations $ \pmb{\sigma}, \pmb{\sigma}^\prime \in \mathcal{Q}_N $.

\begin{theorem}\label{thm:sggs}
	For $\Gamma < \beta_c$ and $ \delta > 0 $ small enough there are events $ \Omega_{N,\Gamma,\delta}^\textup{loc} $
	with probability 
	\[ \pp(\Omega_{N,\Gamma,\delta}^\textup{loc}) \geq 1 - e^{-cN}  \] 
	for some $ c = c(\Gamma,\delta) $ such that 
	the following applies for sufficiently large $N$ on $ \Omega_{N,\Gamma,\delta}^\textup{loc} $:
	\begin{enumerate}
		\item
		The eigenvalues $ E $ of $ H = \Gamma T + U $ below $-(\beta_c - \delta) N$ 
		and low-energy configurations $U(\pmb{\sigma})$ are in a one-to-one correspondence such that 
		\begin{equation}\label{eq:sgenergy}
			E = U(\pmb{\sigma})	+ \frac{\Gamma^2 N}{	U(\pmb{\sigma})} + \mathcal{O}_{\Gamma,\delta}(N^{-1/4}). 
		\end{equation}
		In particular,  the estimate $\mathcal{O}_{\Gamma,\delta}(N^{-1/4})$ is independent of $ \pmb{\sigma} \in \mathcal{L}_{\beta_c - \delta}$. 
		\item The $ \ell^2 $-normalized eigenvector $ \psi $ corresponding to $ E $ and $  \pmb{\sigma} $ concentrates near this configuration in the sense that:
		\begin{enumerate}
			\item \emph{Close to extremum:} For any $K \in \nn$ and
			for all  $\pmb{\sigma}^\prime \in S_{K}(\pmb{\sigma})$: 
			\[
			\vert \psi(\pmb{\sigma}^\prime) \vert = \mathcal{O}_{\Gamma,\delta,K}(N^{-K}), \quad \text{and} \quad \sum_{\pmb{\sigma}^\prime \notin B_K(\pmb{\sigma})} \vert \psi(\pmb{\sigma}) \vert^2  = \mathcal{O}_{\Gamma,\delta,K}(N^{-(K+1)}) .
			\]
			\item \emph{Far from extremum:} For any $0 < \alpha <1$, there is $c_\alpha  \in (0,\infty) $ such that 
			\begin{equation}\label{eq:decfar}
				\sum_{\pmb{\sigma}^\prime \notin B_{\alpha N }(\pmb{\sigma})} \vert \psi(\pmb{\sigma}^\prime) \vert^2 \leq  e^{-c_\alpha N} .
			\end{equation}
		\end{enumerate}
	\end{enumerate}
\end{theorem}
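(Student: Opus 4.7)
The plan is a Feshbach--Schur analysis around each deep well $\pmb{\sigma}_0 \in \mathcal{L}_{\beta_c-\delta}$, treating $\Gamma T$ as a perturbation of the diagonal operator $U$. First, using the large-deviation geometry of the REM developed in Section~\ref{sec:free}, I would define the event $\Omega^{\textup{loc}}_{N,\Gamma,\delta}$ with probability $\geq 1 - e^{-cN}$ on which: (i) $\Omega^{\textup{REM}}_{N,\eta}$ holds for some fixed $\eta = \eta(\delta) > 0$ small; (ii) any two elements of $\mathcal{L}_{\beta_c - \delta}$ are at mutual Hamming distance $\geq \alpha N$ for some $\alpha = \alpha(\delta) \in (0,1/2)$; (iii) for each $\pmb{\sigma}_0 \in \mathcal{L}_{\beta_c - \delta}$ and each fixed $K$, the values $\{U(\pmb{\sigma}') : \pmb{\sigma}' \in B_K(\pmb{\sigma}_0) \setminus\{\pmb{\sigma}_0\}\}$ are of size $\mathcal{O}(\sqrt{N\log N})$. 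Property (ii) follows from a second-moment count: the expected number of ordered pairs of deep wells within Hamming distance $\alpha N$ is of order $\exp(N(-\ln 2 + \gamma(\alpha) + 2\beta_c \delta - \delta^2))$, which vanishes for $\alpha$ small.

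For fixed $\pmb{\sigma}_0$, I would set $P_0 = \ketbra{\delta_{\pmb{\sigma}_0}}{\delta_{\pmb{\sigma}_0}}$, $Q_0 = \mathbbm{1} - P_0$, and study the Feshbach map
\[
H_{\textup{eff}}(E) \;=\; U(\pmb{\sigma}_0) + \Gamma^2 \bra{\delta_{\pmb{\sigma}_0}} T (E - Q_0 H Q_0)^{-1} T \ket{\delta_{\pmb{\sigma}_0}}
\]
for $E$ in an $o(N)$-neighborhood of $U(\pmb{\sigma}_0)$. An a priori gap for $(E - Q H Q)^{-1}$, obtained by projecting out \emph{all} of $\mathcal{L}_{\beta_c - \delta}$ via $Q = \mathbbm{1} - \sum_{\pmb{\sigma} \in \mathcal{L}_{\beta_c-\delta}} P_{\pmb{\sigma}}$ and invoking the Green function and spectral-shift bounds for Dirichlet restrictions of $T$ from Section~\ref{sec:Adjmatrix}, combined with the fact that $T\ket{\delta_{\pmb{\sigma}_0}}$ is supported on $S_1(\pmb{\sigma}_0)$, reduces the Feshbach map to leading order to
\[
H_{\textup{eff}}(E) \;=\; U(\pmb{\sigma}_0) + \Gamma^2 \sum_{\pmb{\sigma}' \in S_1(\pmb{\sigma}_0)} \frac{1}{E - U(\pmb{\sigma}')} + \mathcal{O}_{\Gamma,\delta}(N^{-1/4}).
\]
By (iii) the leading sum equals $\Gamma^2 N / U(\pmb{\sigma}_0) + \mathcal{O}(N^{-1/2})$. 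A fixed-point argument then yields a unique $E(\pmb{\sigma}_0)$ obeying~\eqref{eq:sgenergy}, and $\tilde\psi_{\pmb{\sigma}_0} := \delta_{\pmb{\sigma}_0} + \Gamma (E - Q_0 H Q_0)^{-1} Q_0 T \delta_{\pmb{\sigma}_0}$ is a corresponding approximate eigenvector. By (ii) these trial vectors are essentially orthogonal across different wells, so min--max produces $|\mathcal{L}_{\beta_c-\delta}|$ genuine eigenvalues, one near each $E(\pmb{\sigma}_0)$; conversely, $\inf \spec (Q H Q) \geq -(\beta_c - \delta)N + o(N)$ (again via Section~\ref{sec:Adjmatrix}) rules out any further eigenvalues below $-(\beta_c - \delta)N$.

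For the eigenvector localization, I would iterate the Schur equation $Q_0 \psi = \Gamma(E - Q_0 H Q_0)^{-1} Q_0 T P_0 \psi$. Close to the extremum, $|E - U(\pmb{\sigma}')|^{-1} = \mathcal{O}(N^{-1})$ on $B_K(\pmb{\sigma}_0)$ by (iii), so each outward iteration gains a factor $\Gamma/N$, giving $|\psi(\pmb{\sigma}')| = \mathcal{O}(N^{-K})$ on $S_K(\pmb{\sigma}_0)$ and the polynomial $\ell^2$-tail bound of part 2(a). Far from the extremum, beyond Hamming distance $\alpha N$, a Combes--Thomas / Agmon-type multiplier argument applied to $(E - Q H Q)^{-1}$ yields exponential decay of the resolvent kernel in Hamming distance, using (ii) to exclude resonances with competing wells and hence yielding~\eqref{eq:decfar}. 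The main obstacle throughout is the control of $(E - Q H Q)^{-1}$: since $\|\Gamma T\| = \Gamma N$ is of the same order as $|E|$, $\Gamma T$ cannot be treated as norm-small against $Q U Q$, and a naive Neumann expansion diverges. The resolution is the Hamming-cube-specific analysis of Section~\ref{sec:Adjmatrix}, which shows that Dirichlet truncations of $T$ have an effective operator norm much smaller than $N$ on the low-energy window where the expansion operates; it is this input that makes both the $N^{-1/4}$ precision in~\eqref{eq:sgenergy} and the exponential decay in~\eqref{eq:decfar} accessible.
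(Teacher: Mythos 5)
Your overall architecture (Feshbach--Schur around each deep well, self-energy formula $\Sigma(\pmb{\sigma}_0,E)$ given by a resolvent matrix element, iteration of the Schur equation for the near-field decay) is the same skeleton as the paper's proof. The paper packages this as rank-one perturbation theory for the ball-restricted operators $H_{\alpha N}(\pmb{\sigma})$ followed by a second Schur complement to glue the balls; your version does a direct full-space Feshbach. However, there are several concrete gaps that would prevent your sketch from closing.

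\textbf{The probabilistic event (iii) is not attainable.} You require $|U(\pmb{\sigma}')|=\mathcal{O}(\sqrt{N\log N})$ for all $\pmb{\sigma}'\in B_K(\pmb{\sigma}_0)\setminus\{\pmb{\sigma}_0\}$, uniformly over all $\pmb{\sigma}_0\in\mathcal{L}_{\beta_c-\delta}$, on an event of probability $\geq 1-e^{-cN}$. But $|\mathcal{L}_{\beta_c-\delta}|\asymp e^{\beta_c\delta N}$, each $B_K$ has $\asymp N^K$ points, and a single Gaussian $U(\pmb{\sigma}')=\sqrt{N}\,\omega(\pmb{\sigma}')$ exceeds $C\sqrt{N\log N}$ with probability $\asymp N^{-C^2/2}$, which is only polynomially small. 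A union bound over exponentially many wells therefore fails for any choice of $C$. The paper sidesteps this by asking only that the \emph{spherical average} $u(\pmb{\sigma})=N^{-2}\sum_{\pmb{\sigma}'\in S_1(\pmb{\sigma})}|U(\pmb{\sigma}')|$ be $\leq N^{-1/4}$, which concentrates on the exponential scale (cf.\ \eqref{eq:uest}), together with the much weaker pointwise bound $|U|\leq\epsilon N$ on $B_{\alpha N}$ from the deep-hole scenario. Your expansion of $\Sigma$ to get the $N^{-1/4}$ precision in \eqref{eq:sgenergy} needs exactly this averaged control over $S_1(\pmb{\sigma}_0)$, so the fix is to replace (iii) by the averaged condition.

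\textbf{Combes--Thomas does not give \eqref{eq:decfar} on the Hamming cube.} A multiplier $e^{\mu d(\cdot,\pmb{\sigma}_0)}$ costs a perturbation of size $\asymp N|e^{\mu}-1|$ from the degree-$N$ adjacency matrix, so the spectral gap $\asymp\delta N$ only tolerates $\mu=\mathcal{O}(1)$. That gives a pointwise bound $|\psi(\pmb{\sigma}')|\leq e^{-\mu d(\pmb{\sigma}',\pmb{\sigma}_0)}$, but summing this over $S_d$ yields $\binom{N}{d}e^{-2\mu d}=e^{N[\gamma(d/N)-2\mu d/N]}$, and the exponent is positive for $d$ near $N/2$ and $\mu$ of order one. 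The $\ell^2$ bound in part~2(b) thus diverges under a naive Combes--Thomas argument. What saves the day is the combinatorial prefactor $\binom{N}{d}^{-1/2}$ in the Green-function bound of Proposition~\ref{lem:freeB}, obtained from the Riccati recursion \eqref{eq:recur}. That extra factor exactly cancels the measure $|S_d|$ in the summation. This is a qualitatively sharper statement than what a multiplier argument produces; you cannot substitute Combes--Thomas for it.

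\textbf{Two further points you should flag.} First, the lower bound $\inf\spec(QHQ)\geq-(\beta_c-\delta)N+o(N)$ is not a corollary of Section~\ref{sec:Adjmatrix}; the Dirichlet-restriction bounds there concern balls, whereas $QHQ$ lives on the complement of the wells. The paper instead derives this from the free-energy peeling principle in Proposition~\ref{lem:hr}. Second, your event (ii) (deep wells mutually distant in the Hamming metric) controls overlaps of trial vectors but not near-degeneracies in \emph{energy}: two far-apart wells can have $U(\pmb{\sigma})$ and $U(\pmb{\sigma}')$ accidentally close, in which case the perturbative identification of a single eigenvector per well fails. The paper rules this out via spectral averaging (Lemma~\ref{lem:specav}), and some analogous resonance-avoidance argument is needed in your setup as well.
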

This theorem covers states in the extreme localization regime in which the eigenvectors are sharply localized -- each in its own extremal site of the potential.  In this regime, the estimates on the decay rate of the eigenvectors close to the extremum are  optimal and far from the extremum they are optimal up to determining the decay rate $ c_\alpha $.  
Concrete, non-optimized values of the energy threshold $ - N (\beta_c -\delta) $ as well as more precise values of the error terms can be found in the proof of Theorem~\ref{thm:sggs} in Section~\ref{sec:loc}.  In essence, the localization analysis in Section~\ref{sec:loc} proves that resonances and tunneling among different large deviation sites does not play a role in this energy regime. An upper bound for our technique to fail is at $ \delta = \beta_c /2$. The energy threshold at which eigenvectors are believed \cite{BLPS16,BFSTV21} to occupy a positive fraction of $ \mathcal{Q}_N $  is strictly larger than $ - N \beta_c/2  $ and for small fields yet smaller than $- N \Gamma  $. \\

The precise low energy statistics of the REM $ U $  beyond the location of the minimum~\eqref{eq:Uinfty}  is well known. Utilizing the rescaling~\eqref{eq:rescaling} around its minimal value,  the point process 
\be\label{eq:REMconv} 
\sum_{\pmb{\sigma} \in \mathcal{Q}_N} \delta_{s_N^{-1}(U(\pmb{\sigma}))} \to \PPP(e^{-x} \, dx)  \ee
converges weakly to the Poisson point process with intensity measure $e^{-x} \, dx$ on $\rr$ (i.e, when integrating the random measure against a continuous compactly supported function, the resulting random variables converge weakly, see e.g.~\cite[Thm~9.2.2]{Bov06} or~\cite{LLR83}). 
Theorem~\ref{thm:sggs} implies a similar result for the low energy statistics in the QREM.

\begin{corollary}\label{cor:sgstat}
	Let $\Gamma < \beta_c$ and let
	\begin{equation}\label{eq:scal}
		s_N(x;\Gamma) \coloneqq -\beta_c N+\frac{\ln(N \ln2) + \ln(4\pi)}{2 \beta_c} - \frac{\Gamma^2}{\beta_c} - \frac{x}{\beta_c} . 
	\end{equation}
	Then the rescaled eigenvalue process $ \spec H $ of the QREM $ H = \Gamma T + U $ converges  weakly 
	\begin{equation}
		\sum_{E\in   \spec H } \delta_{s_N^{-1}(E;\Gamma)} \to \PPP(e^{-x} \, dx). 
	\end{equation}
	In particular, the ground state energy converges weakly
	\begin{equation}
		\inf\spec H - \left( -\beta_c N + \frac{\ln(N \ln2) + \ln(4\pi)}{2 \beta_c} - \frac{\Gamma^2}{\beta_c} \right) \to -\frac{X}{\beta_c} ,
	\end{equation}
	where $X$ is a random variable distributed according to the law of the maximum of a Poisson point process $\PPP(e^{-x}\, dx )$ with intensity $e^{-x} \, dx$ on the real line. 
\end{corollary}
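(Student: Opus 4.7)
The strategy is to transfer the known REM point process convergence \eqref{eq:REMconv} to the eigenvalues of $H$ via the bijection supplied by Theorem~\ref{thm:sggs}. Since the event $\Omega_{N,\Gamma,\delta}^\textup{loc}$ has probability at least $1 - e^{-cN}$, Borel-Cantelli ensures that on a set of full probability, for all sufficiently large $N$ every eigenvalue $E$ of $H$ below $-(\beta_c - \delta) N$ is in one-to-one correspondence with a unique $\pmb{\sigma} \in \mathcal{L}_{\beta_c - \delta}$ satisfying
\[ E = U(\pmb{\sigma}) + \frac{\Gamma^2 N}{U(\pmb{\sigma})} + \mathcal{O}_{\Gamma,\delta}(N^{-1/4}). \]

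The key step is to check that, on any fixed bounded window $I \subset \rr$ of rescaled coordinates, the eigenvalue and REM point processes agree up to a uniform vanishing shift. Configurations with $s_N^{-1}(U(\pmb{\sigma})) \in I$ satisfy $U(\pmb{\sigma}) = -\beta_c N + \mathcal{O}(\ln N)$, so a Taylor expansion gives
\[ \frac{\Gamma^2 N}{U(\pmb{\sigma})} = - \frac{\Gamma^2}{\beta_c} + \mathcal{O}_{\Gamma}\!\left( \frac{\ln N}{N} \right) . \]
Because the QREM rescaling \eqref{eq:scal} differs from the REM rescaling \eqref{eq:rescaling} exactly by the shift $-\Gamma^2/\beta_c$, combining with the energy formula yields
\[ s_N^{-1}(E;\Gamma) - s_N^{-1}(U(\pmb{\sigma})) = o(1) \]
uniformly for pairs $(E,\pmb{\sigma})$ linked by the bijection with $s_N^{-1}(U(\pmb{\sigma})) \in I$. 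Conversely, for $N$ large any $\pmb{\sigma}$ with $s_N^{-1}(U(\pmb{\sigma})) \in I$ satisfies $U(\pmb{\sigma}) < -(\beta_c - \delta)N$, so the bijection captures every atom of the rescaled REM process in $I$.

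Vague convergence of $\sum_{E \in \spec H} \delta_{s_N^{-1}(E;\Gamma)}$ to $\PPP(e^{-x}\, dx)$ then follows by testing against a continuous, compactly supported $f$: replace each $f(s_N^{-1}(E;\Gamma))$ by $f(s_N^{-1}(U(\pmb{\sigma})))$ at the cost of a uniformly vanishing error (by continuity of $f$ and the $o(1)$ shift), and apply \eqref{eq:REMconv}. For the ground state, observe that $\inf \spec H - s_N(0;\Gamma) = -\beta_c^{-1} \max_E s_N^{-1}(E;\Gamma)$ and apply the continuous mapping theorem to the max functional, which is almost surely continuous at configurations with a finite maximum (ensured by the integrability of $e^{-x}$ at $+\infty$). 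The main obstacle I anticipate is uniform control of the $o(1)$ error over all bounded windows; this is handled by the quantitative $\mathcal{O}_{\Gamma,\delta}(N^{-1/4})$ bound from Theorem~\ref{thm:sggs} combined with Borel-Cantelli.
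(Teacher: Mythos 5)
Your proposal is correct and takes essentially the same approach as the paper: invoke the one-to-one correspondence of Theorem~\ref{thm:sggs} together with Borel--Cantelli, observe that for configurations with rescaled coordinate in a fixed bounded window one has $\Gamma^2 N / U(\pmb{\sigma}) = -\Gamma^2/\beta_c + o(1)$ uniformly so that $s_N^{-1}(E;\Gamma) = s_N^{-1}(U(\pmb{\sigma})) + o(1)$, test against $f \in C_c(\rr)$, and transfer the known REM convergence~\eqref{eq:REMconv}. One small caveat concerns your final appeal to the continuous mapping theorem for the maximum functional: vague convergence of point measures does not by itself control mass escaping to $+\infty$ (compare $\delta_0 + \delta_n \to \delta_0$ vaguely, but the max diverges), so strictly speaking you still need tightness of the rescaled eigenvalue process at $+\infty$. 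This is easily supplied here, for instance by noting via the bijection that $\inf\spec H$ corresponds to $\min U$ (the map $u \mapsto u + \Gamma^2 N/u$ is monotone on $(-\infty, -\Gamma\sqrt{N})$) and then appealing directly to the REM extreme-value law~\eqref{eq:Uinfty}; that argument in fact gives the ground-state statement without passing through the point process at all. The paper leaves the ground-state part as an immediate consequence and does not spell this out either, so this is a stylistic remark rather than a flaw in your logic.
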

\begin{proof} Corollary~\ref{cor:sgstat} is a straightforward consequence of Theorem~\ref{thm:sggs} combined with~\eqref{eq:REMconv}. \end{proof} 

Theorem~\ref{thm:sggs} in particular covers the ground-state of the QREM and thus extends the result~\cite[Lemma~2.1]{AGH20} on the leading asymptotics of the ground-state energy in the parameter regime $ \Gamma = \kappa/N $ with $ \kappa > 0 $.  
The proof already contains more information on the $ \ell^2 $-properties of the ground-state eigenvector, which we record next. 
More can be said in on its $ \ell^1 $-localization properties. The latter is of interest in the context of the  interpretation of the QREM in population genetics~\cite{BBW97,BW01,HRWB02}.

\begin{theorem}\label{thm:sgstate}
	For $\Gamma < \beta_c$ there are events $ \hat{\Omega}_{N,\Gamma}^\textup{loc} $
	with probability 
	\[ \pp(\hat{\Omega}_{N,\Gamma}^\textup{loc}) \geq 1 - e^{-cN}  \] 
	for some $ c = c(\Gamma) $ such that on  $ \hat{\Omega}_{N,\Gamma}^\textup{loc} $ for all $N$ large enough there is $ \delta > 0 $ and $\pmb{\sigma}_0 \in \mathcal{L}_{\beta_c - \delta}$ such that the positive $ \ell^2 $-normalized ground state $ \psi $ of the QREM Hamiltonian is concentrated near $ \pmb{\sigma}_0 $ in the sense that: 
	\begin{enumerate}
		\item the  $\ell^2$-distance of $\psi$ and $\delta_{\pmb{\sigma}_0 }$  is $ 
		\| \psi -\delta_{\pmb{\sigma}_0 } \|^2 = \mathcal{O}_{\Gamma}\left(\frac{1}{N}\right) $, 
		and its first order correction
		\begin{equation}
			\xi \coloneqq \sqrt{1-\frac{\Gamma^2}{\beta_c^2 N}} \delta_{\pmb{\sigma}_0} + \frac{\Gamma}{\beta_c N} \sum_{\pmb{\sigma} \in S_1} \delta_{\pmb{\sigma}} 
		\end{equation}
		has the same energy as $\psi$ up to order one, and 
		$	\| \psi - \xi \|^2 = \mathcal{O}_{\Gamma}\left(\frac{1}{N^2}\right) $. 
		\item the $\ell^1$-norm of $\psi$ converges to a bounded constant:
		\begin{equation}\label{eq:ell1}
			\| \psi \|_{1} =	\sum_{\pmb{\sigma}} \psi(\pmb{\sigma}) = \frac{\beta_c}{\beta_c-\Gamma} + o_{\Gamma}(1),
		\end{equation}
		and, for any $1 < p < \infty$: \quad 
		$ \displaystyle	\| \psi \|_{p}^p =	\sum_{\pmb{\sigma}} \vert \psi(\pmb{\sigma}) \vert^p = 1 + o_{\Gamma,p}(1) $.
	\end{enumerate}
	
\end{theorem}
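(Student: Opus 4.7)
The plan is to build everything on Theorem~\ref{thm:sggs}. I would work on $\hat{\Omega}_{N,\Gamma}^\textup{loc} := \Omega_{N,\Gamma,\delta}^\textup{loc} \cap \Omega_{N,\eta}^\textup{REM}$ for fixed small $\delta,\eta > 0$, and take $\pmb{\sigma}_0 \in \mathcal{L}_{\beta_c-\delta}$ to be the low-energy configuration of $U$ corresponding to the ground-state eigenvalue $E$ under the bijection of Theorem~\ref{thm:sggs}(1). Applying Perron--Frobenius to $-H+cI$, whose off-diagonal entries equal $\Gamma\geq 0$ on the connected graph $\mathcal{Q}_N$, forces the (unique) ground state to be strictly positive, $\psi>0$. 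The $K=0$ tail bound from Theorem~\ref{thm:sggs}(2) combined with $\|\psi\|_2=1$ and $\psi(\pmb{\sigma}_0)>0$ then gives $\psi(\pmb{\sigma}_0) = 1 + \mathcal{O}_\Gamma(N^{-1})$, which yields the first identity $\|\psi-\delta_{\pmb{\sigma}_0}\|^2 = (1-\psi(\pmb{\sigma}_0))^2 + \sum_{\pmb{\sigma}\neq\pmb{\sigma}_0}\psi(\pmb{\sigma})^2 = \mathcal{O}_\Gamma(N^{-1})$.

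For the sharper bound $\|\psi-\xi\|^2 = \mathcal{O}_\Gamma(N^{-2})$, I would exploit the pointwise Schr\"odinger equation $(U(\pmb{\sigma}) - E)\psi(\pmb{\sigma}) = \Gamma\sum_{\pmb{\sigma}'\in S_1(\pmb{\sigma})}\psi(\pmb{\sigma}')$ at each $\pmb{\sigma} \in S_1(\pmb{\sigma}_0)$. Since $S_1(\pmb{\sigma})\setminus\{\pmb{\sigma}_0\}\subset S_2(\pmb{\sigma}_0)$, the $K=2$ pointwise decay of Theorem~\ref{thm:sggs}(2)(a) makes the second-neighbor sum of size $\mathcal{O}_\Gamma(N^{-1})$, so
\[
\psi(\pmb{\sigma}) = \frac{\Gamma\,\psi(\pmb{\sigma}_0)}{U(\pmb{\sigma}) - E} + \mathcal{O}_\Gamma(N^{-2}) .
\]
Writing $\Gamma/(U(\pmb{\sigma})-E) - \Gamma/(\beta_c N) = \Gamma(E + \beta_c N - U(\pmb{\sigma}))/[(U(\pmb{\sigma})-E)\beta_c N]$, squaring and summing over $\pmb{\sigma}\in S_1(\pmb{\sigma}_0)$, the leading random piece contributes $\sum_{\pmb{\sigma}\in S_1}U(\pmb{\sigma})^2/(\beta_c N)^4 = \mathcal{O}(N^{-2})$ thanks to the law-of-large-numbers event $\sum_{\pmb{\sigma}\in S_1}U(\pmb{\sigma})^2 = N^2(1+o(1))$ (easily included in $\hat{\Omega}^\textup{loc}$ by standard concentration), while the $\mathcal{O}(\log N)$ deterministic piece yields $\mathcal{O}(\log^2 N/N^3)$. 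Combined with the $\mathcal{O}_\Gamma(N^{-2})$ discrepancy at $\pmb{\sigma}_0$ (both $\psi(\pmb{\sigma}_0)$ and $\xi(\pmb{\sigma}_0) = \sqrt{1-\Gamma^2/(\beta_c^2 N)}$ equal $1 + \mathcal{O}(N^{-1})$) and the $\mathcal{O}_\Gamma(N^{-2})$ $\ell^2$-tail outside $B_1(\pmb{\sigma}_0)$ from Theorem~\ref{thm:sggs}(2)(a), the bound $\|\psi - \xi\|^2 = \mathcal{O}_\Gamma(N^{-2})$ follows. A direct evaluation gives $\langle \xi|H|\xi\rangle = U(\pmb{\sigma}_0) - \Gamma^2/\beta_c + \mathcal{O}(1)$, matching $E$ up to order one.

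The $\ell^1$-asymptotic uses the key observation that the uniform vector $\mathbf{1} := \sum_{\pmb{\sigma}}\delta_{\pmb{\sigma}}$ is an eigenvector of $T$: $T\mathbf{1} = -N\mathbf{1}$. Pairing $H\psi = E\psi$ with $\mathbf{1}$ and invoking positivity $\psi>0$ yields the clean identity
\[
(E + \Gamma N)\,\|\psi\|_1 \;=\; \sum_{\pmb{\sigma}\in\mathcal{Q}_N} U(\pmb{\sigma})\,\psi(\pmb{\sigma}) ,
\]
in which the prefactor on the left equals $-(\beta_c-\Gamma)N + \mathcal{O}(\log N)$ by Theorem~\ref{thm:sggs}(1) and~\eqref{eq:rescaling}. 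On the right, $U(\pmb{\sigma}_0)\psi(\pmb{\sigma}_0) = -\beta_c N + o(N)$ is the dominant term; for each shell $1\leq k\leq K$ Cauchy--Schwarz gives $|\sum_{S_k}U\psi| \leq (\sum_{S_k}U^2)^{1/2}(\sum_{S_k}\psi^2)^{1/2} = \mathcal{O}(\sqrt{N/k!})$, and the same inequality with $\sum_{\pmb{\sigma}\notin B_K}\psi^2 = \mathcal{O}_K(N^{-K-1})$ bounds $\sum_{\pmb{\sigma}\notin B_K}|\psi|$ by a $K$-dependent sequence $\epsilon_K \to 0$ uniformly in $N$. Hence $\sum U\psi = -\beta_c N + o(N)$ and $\|\psi\|_1 = \beta_c/(\beta_c-\Gamma) + o_\Gamma(1)$ follows. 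The $\ell^p$-concentration for $p > 1$ then comes by splitting $\|\psi\|_p^p = \psi(\pmb{\sigma}_0)^p + \sum_{\pmb{\sigma}\neq\pmb{\sigma}_0}\psi^p$; for $p\geq 2$ one uses $\sum_{\pmb{\sigma}\neq\pmb{\sigma}_0}\psi^p \leq (\max_{\pmb{\sigma}\neq\pmb{\sigma}_0}\psi)^{p-2}\sum_{\pmb{\sigma}\neq\pmb{\sigma}_0}\psi^2 = \mathcal{O}(N^{-(p-1)})$, and for $1<p<2$ one partitions $\mathcal{Q}_N\setminus\{\pmb{\sigma}_0\}$ into inner shells $1\leq k\leq K$ (using the pointwise bounds), a middle region $B_{\alpha N}\setminus B_K$ (H\"older with the polynomial $\ell^2$-tails of Theorem~\ref{thm:sggs}(2)(a)), and the exterior $B_{\alpha N}^c$ (H\"older with the exponential tail of Theorem~\ref{thm:sggs}(2)(b), for $\alpha$ close enough to $1$ that $c_\alpha p/2 > \ln 2\cdot (2-p)/2$).

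The main technical obstacle is the sharp $\mathcal{O}(N^{-2})$ bound for $\|\psi-\xi\|^2$. A naive quasi-mode argument would estimate $\|(H-E)\xi\|$ and divide by the spectral gap above the ground state, but the approximation of the first-order denominator $U(\pmb{\sigma}) - U(\pmb{\sigma}_0)$ by $\beta_c N$ in the definition of $\xi$ leaves a residual of typical size $\Gamma|U(\pmb{\sigma})|/(\beta_c N)$ at each $\pmb{\sigma}\in S_1(\pmb{\sigma}_0)$, whose squared $\ell^2$-norm is $\Theta(1)$, and dividing by the $\mathcal{O}(1)$ gap only gives the trivial $\|\psi - \xi\| = \mathcal{O}(1)$. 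Circumventing this requires the pointwise Schr\"odinger analysis sketched above, where the law-of-large-numbers cancellation among the random $U(\pmb{\sigma})$ on the $N$-element sphere $S_1(\pmb{\sigma}_0)$ is what recovers the expected $N^{-2}$ scaling.
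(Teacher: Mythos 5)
Your $\ell^2$-analysis (using the eigenvalue equation on $S_1(\pmb{\sigma}_0)$, the $K=2$ pointwise decay for the second-neighbor sum, and a law-of-large-numbers concentration for $\sum_{\pmb{\sigma}\in S_1}U(\pmb{\sigma})^2$) is essentially the paper's argument and is sound. The $\ell^1$-argument, however, has a genuine gap at the crucial step. You split the sum $\sum_{\pmb{\sigma}\neq\pmb{\sigma}_0}U(\pmb{\sigma})\psi(\pmb{\sigma})$ shell by shell and then claim that the exterior $\sum_{\pmb{\sigma}\notin B_K}\lvert\psi\rvert \le \epsilon_K$ with $\epsilon_K\to 0$ uniformly in $N$, citing ``the same [Cauchy--Schwarz] inequality'' together with $\sum_{\pmb{\sigma}\notin B_K}\psi^2 = \mathcal{O}_K(N^{-K-1})$. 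This does not work: Cauchy--Schwarz gives $\sum_{\pmb{\sigma}\notin B_K}\lvert\psi\rvert \le \lvert B_K^c\rvert^{1/2}\,\bigl(\sum_{\notin B_K}\psi^2\bigr)^{1/2} \sim 2^{N/2}N^{-(K+1)/2}$, which diverges. Splitting further into a middle region $B_{\alpha N}\setminus B_K$ and an exterior $B_{\alpha N}^c$ does not help: $\lvert B_{\alpha N}\rvert^{1/2}\sim e^{\gamma(\alpha)N/2}$ again overwhelms the polynomial $\ell^2$-tail, and for $B_{\alpha N}^c$ the rate $c_\alpha$ in Theorem~\ref{thm:sggs}(2)(b) (traceable to $\varrho_0(\alpha)\ln 4$ with $\varrho_0<1/2$) stays below $\ln 2$, so the competition against $\lvert B_{\alpha N}^c\rvert^{1/2}\sim 2^{N/2}$ is not won either. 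In fact the quantity you need to control is not small for fixed $K$: the $\ell^1$-mass on the sphere $S_k$ is $\Theta(1)$ in $N$ (asymptotically $(\Gamma/\beta_c)^k$, as the final answer $\sum_k(\Gamma/\beta_c)^k = \beta_c/(\beta_c-\Gamma)$ makes plain), so $\sum_{\pmb{\sigma}\notin B_K}\lvert\psi\rvert$ converges to $(\Gamma/\beta_c)^{K+1}/(1-\Gamma/\beta_c)$, which is positive. Capturing that geometric decay requires a pointwise bound of the form $\lvert\psi(\pmb{\sigma})\rvert\lesssim \binom{N}{d}^{-1/2}q^{\,d}$ with $q<1$ sharp, which the paper's available Green-function estimate (Proposition~\ref{lem:freeB}, with its crude $q=1/2$ but $K$-dependent constants) does not deliver uniformly over all $d$, as the paper itself remarks.

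The paper avoids this entirely by a self-consistent argument: it fixes a small $\epsilon$, partitions $\mathcal{Q}_N$ into $A_1=\{\lvert U\rvert\le\epsilon N\}$, $A_2=\{\epsilon N<\lvert U\rvert\le(\beta_c-\delta)N\}$, $A_3=\{\lvert U\rvert>(\beta_c-\delta)N\}$, and shows $\lvert\sum_{\pmb{\sigma}\neq\pmb{\sigma}_0}U\psi\rvert \le (2\epsilon + \mathcal{O}(N^{-1}))N\|\psi\|_1 + \mathcal{O}_K(1)$. The $A_1$ contribution is absorbed back into the unknown $\|\psi\|_1$ on the left side of the equation $ (E+\Gamma N)\|\psi\|_1 = \sum U\psi$; the $A_3$ contribution uses the exponential smallness of $\psi$ at far large-deviation sites and a bound on $\lvert A_3\rvert$; and the $A_2$ contribution is the real work, handled by a two-step spherical-mean iteration together with a combinatorial estimate on the density of $A_2\cup A_3$ sites near any fixed $\pmb{\sigma}$ (Lemma~\ref{lem:sep2}). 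Letting $\epsilon\to 0$ then yields the sharp constant. Your proposal does not contain this absorption mechanism nor the $A_2$-region iteration, which are the substance of the paper's $\ell^1$ proof.

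A minor further remark: once the $\ell^1$ limit is known, the paper obtains all the $\ell^p$-asymptotics for $1<p<\infty$ in one line from $\psi(\pmb{\sigma})^p\le(\max_{\pmb{\sigma}\neq\pmb{\sigma}_0}\psi)^{p-1}\psi(\pmb{\sigma})$ together with $\max_{\pmb{\sigma}\neq\pmb{\sigma}_0}\psi = \mathcal{O}(N^{-1})$ and $\|\psi\|_1 = \mathcal{O}(1)$. Your separate $p\ge 2$ and $1<p<2$ cases are unnecessary, and your treatment of $1<p<2$ via H\"older over the middle and exterior regions inherits the same divergence issue as the $\ell^1$ argument, so it would also need to be replaced.
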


It is natural to assume that the configuration $\pmb{\sigma}_0$ on which the ground-state is asymptotically localized and the classical minimal configuration ${\pmb{\sigma}}_{\min} \coloneqq \arg\min U $ agree. While this is true with high probability, it does not hold almost surely. In the situation of Theorem~\ref{thm:sgstate} one may show that there are two constants $C \geq c > 0$ such that for $N$ large enough:
\begin{equation}\label{eq:hat=0}
	\frac{c}{N} \leq	\pp(\pmb{\sigma}_0 \neq {\pmb{\sigma}}_{\min} ) \leq 	\frac{C}{N}.
\end{equation}
The reason for this is found in the following description of low-energy eigenvalues, 
\[ E_{\pmb{\sigma}} = U(\pmb{\sigma}) - \frac{\Gamma^2}{\beta_c} + \frac{\Gamma^2}{\beta_c^2 N } Z_{\pmb{\sigma}} + \mathcal{O}_\Gamma(N^{-5/4}) , \quad  Z_{\pmb{\sigma}} \coloneqq \frac{1}{N} \sum_{\sigma^\prime\in S_1(\pmb{\sigma})} U(\pmb{\sigma}^\prime) ,
\]
which is proved in Lemma~\ref{lem:rank1} below and which takes into account the next leading term in comparison to~\eqref{eq:sgenergy}. 
The random variables $ Z_{\pmb{\sigma}} $ are standard normal distributed and independent of the large deviations $  U(\pmb{\sigma})  $ with $ \pmb{\sigma} \in \mathcal{L}_{\beta_c-\delta} $ and $ \delta > 0 $ small enough. 
Since the extremal energies form a Poisson process with mean density of order one, the normal fluctuations in the energy-correction of order $ \mathcal{O}(1/N) $ are able to cause the event~\eqref{eq:hat=0}. 
More generally, the method presented in this paper allows for a systematic control of subleading corrections in an expansion of the energy eigenvalues. As we will see, they are determined by potential fluctuations on increasing spheres around the extremal sites.

\subsubsection{Critical case $\Gamma = \beta_c$}
We complete the picture on the ground state by describing the situation in the critical case $\Gamma = \beta_c$, where the quantum phase transition occurs. Adapting techniques, one may also prove that typically one observes a paramagnetic behavior at criticality. 

\begin{proposition}\label{prop:trans}
	Let $\Gamma = \beta_c$. On an event of  probability $1-\mathcal{O}(N^{-1/2})$ the ground state is at $ \inf\spec H = -\Gamma N - \Gamma^{-1} + \mathcal{O}(N^{-1/4}) $ and the eigenvector $ \psi $ is paramagnetic in the sense that $ \|  \psi - \Phi_\emptyset \| =  \mathcal{O}(N^{-1/4}) $. 
	On an event of probability $\mathcal{O}(N^{-1/2})$ the ground state is at  
	$ \inf\spec H = \min U  - \Gamma + \mathcal{O}(N^{-1/4}) $,  and the eigenvector $ \psi $ is  localized in the sense that $ \| \psi - \delta_{\pmb{\sigma}_0} \| =  \mathcal{O}(N^{-1/4}) $.  	
\end{proposition}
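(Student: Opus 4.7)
At $\Gamma = \beta_c$ the two naive second-order energies in \eqref{eq:2ndspin} and \eqref{eq:2ndpara} coincide to leading order: the localized candidate $\delta_{\pmb{\sigma}_{\min}}$ yields $\min U - \Gamma^2/\beta_c = \min U - \beta_c$, while the paramagnetic candidate $\Phi_\emptyset$ yields $-\beta_c N - \beta_c^{-1}$. Both are of the form $-\beta_c N + \mathcal{O}(1)$, so the identity of the ground state is determined by a random $\mathcal{O}(1)$ quantity, namely $\min U + \beta_c N$. The plan is to split probability space accordingly and then import the ground-state analysis of Theorems~\ref{thm:qpgs}-\ref{thm:qpstate} on one event and of Theorem~\ref{thm:sggs} on the other.

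\textbf{Step 1 (probability split).} Define
$$ \Omega^{\mathrm{loc}}_c := \{ \min U + \beta_c N < \beta_c - \beta_c^{-1} \}. $$
By the Poisson convergence \eqref{eq:Uinfty}-\eqref{eq:rescaling}, the random variable $\beta_c (\min U + \beta_c N) - \tfrac{1}{2}\ln(N \ln 2) - \tfrac{1}{2}\ln(4\pi)$ converges weakly to $-X$, where $X$ is Gumbel-distributed with upper tail $\pp(X > y) \sim e^{-y}$. Translating, $\Omega^{\mathrm{loc}}_c$ corresponds to $\{X > \tfrac{1}{2}\ln N + C + o(1)\}$ for an explicit constant $C$, so
$$ \pp(\Omega^{\mathrm{loc}}_c) = \mathcal{O}(N^{-1/2}), \qquad \pp((\Omega^{\mathrm{loc}}_c)^c) = 1 - \mathcal{O}(N^{-1/2}). $$

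\textbf{Step 2 (analysis on each event).} On $(\Omega^{\mathrm{loc}}_c)^c \cap \Omega^{\mathrm{REM}}_{N,\eta}$, adapt the proof of Theorem~\ref{thm:qpgs}: the only place that strict inequality $\Gamma > \beta_c$ enters is to guarantee a macroscopic separation between $\spec(\Gamma T)$ near $-\beta_c N$ and the support of $U$. Here this separation is replaced by the event-dependent gap $\min U + \beta_c N - (\beta_c - \beta_c^{-1}) \geq 0$, which is typically of order $\ln N$ by Step~1. This is enough for the Schur-complement / resolvent expansion underlying Theorems~\ref{thm:qpgs}-\ref{thm:qpstate} to converge, now with the degraded rate $\mathcal{O}(N^{-1/4})$ both in the energy \eqref{eq:GSpara} and in the $\ell^2$-distance $\|\psi - \Phi_\emptyset\|$. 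On $\Omega^{\mathrm{loc}}_c$, by definition $\pmb{\sigma}_{\min}$ sits below the paramagnetic candidate, and the rank-one / resolvent machinery of Section~\ref{sec:loc}, applied site-by-site to $\pmb{\sigma}_{\min}$, does not require $\Gamma < \beta_c$ once attention is restricted to a single deep large-deviation site; it produces an eigenvalue $\min U - \Gamma + \mathcal{O}(N^{-1/4})$ with eigenvector close to $\delta_{\pmb{\sigma}_{\min}}$ in $\ell^2$-norm up to $\mathcal{O}(N^{-1/4})$. The variational bound $\langle \Phi_\emptyset \vert H \Phi_\emptyset \rangle = -\beta_c N$, which lies above this eigenvalue on $\Omega^{\mathrm{loc}}_c$ by the very definition of the event, then confirms that this localized eigenvalue is indeed the ground state.

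\textbf{Main obstacle.} The principal difficulty is the simultaneous collapse, at $\Gamma = \beta_c$, of the two spectral gaps that drive Theorems~\ref{thm:qpgs} and~\ref{thm:sggs}. One must track the perturbation theory quantitatively to show that the favourable $\mathcal{O}(\ln N)$ gap on $(\Omega^{\mathrm{loc}}_c)^c$ is still enough to give the $\mathcal{O}(N^{-1/4})$ rate rather than something qualitatively worse, and that the marginal large-deviation configuration on $\Omega^{\mathrm{loc}}_c$ — which sits only just below the paramagnetic threshold — can still be handled by the site-wise localization estimates of Section~\ref{sec:loc}. Matching the perturbative error $\mathcal{O}(N^{-1/4})$ against the crossover probability scale $\mathcal{O}(N^{-1/2})$ from Step~1 is what fixes the form and precision of the proposition.
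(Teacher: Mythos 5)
Your Step 1, the probability computation via the extremal-value scaling \eqref{eq:Uinfty}--\eqref{eq:rescaling}, matches the paper's Step 3 and is essentially correct. The genuine gap is in Step 2, specifically in the paramagnetic branch. You claim that ``the only place that strict inequality $\Gamma > \beta_c$ enters is to guarantee a macroscopic separation'' and that an event-dependent gap of order $\ln N$ suffices for the Schur complement argument to go through with a degraded $\mathcal{O}(N^{-1/4})$ error. This does not work: the quantitative estimate of Lemma~\ref{lem:SchurRest} carries a factor $d^{-2}$, where $dN$ is the distance of the Schur energy $z$ to $\spec Q_\varepsilon H Q_\varepsilon$. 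That distance is bounded below precisely by $\Gamma N - \|U\|_\infty$, and at $\Gamma = \beta_c$ on the typical event this is only $\mathcal{O}(\ln N)$, giving $d = \mathcal{O}(\ln N / N)$ and hence an error of size $\mathcal{O}(N^{2 + (\tau-1)/2}/(\ln N)^2)$, which diverges for every admissible $\tau\in(0,1)$. So a naive replay of the proof of Theorem~\ref{thm:qpgs} with a shrunken gap is not an available move.

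The paper resolves this through machinery your proposal does not invoke. It passes to the truncated Hamiltonian $H_\delta = \Gamma T + U\mathbbm{1}_{|U|\le(\beta_c-\delta)N}$, which restores a macroscopic gap of order $\delta N$ and hence makes Theorem~\ref{lem:qpgs} applicable; but this creates the new problem of comparing $H$ with $H_\delta$ in the region of deep holes, and the delocalization bound of Proposition~\ref{prop:ext} becomes trivial exactly at $\Gamma=\beta_c$. The key tool that you omit is Theorem~\ref{lem:deloc} (the random-walk ``locator'' expansion of Section~\ref{sec:deloc2}), which produces an exponential $\ell^\infty$-delocalization bound for the low-energy eigenstates of $H_\delta$ that survives at $\Gamma=\beta_c$. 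The Appendix then builds the auxiliary $H'$ as a direct sum of ball restrictions around symmetrized large deviations plus a remainder $H_r$, shows $H_r$ and $H_\delta$ have matching low-energy blocks up to an exponentially small error (Step 1), shows $H$ and $H'$ agree likewise (Step 2), and only then applies the probability split (Step 3), using spectral averaging (Lemma~\ref{lem:specav}) to exclude near-resonances between the minimal localized eigenvalue and $\inf\spec H_r$. Your proposal identifies the correct crossover quantity and the $\mathcal{O}(N^{-1/2})$ scale, but without the truncation/comparison/random-walk chain the paramagnetic branch of the analysis does not close.
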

The heuristics explanation for this is the following. 
For  $\Gamma = \beta_c$ the ground state energy of $\Gamma T$ is given by $-\beta_c N$, whereas the classical minimal energy is given by $\min U = - \beta_c N +C \ln(N) + \mathcal{O}(1)$ with $ C > 0 $. The logarithmic correction in this expression ensures that the paramagnetic behavior is dominant. 
This argument also suggests that the phase transition should be observed at the $N$-dependent field strength $\Gamma_N$, where the energy predictions of Theorem~\ref{thm:qpgs} and Theorem~\ref{thm:sggs} agree, 
\[  -\Gamma_N N - \frac{1}{\Gamma_N} = \min U	+ \frac{\Gamma_N^2 N}{\min U}  \]
which leads to 
\begin{equation}\label{eq:gammagap}
	\Gamma_N = -\frac{\min U}{N} + \frac{1}{N} \left( \frac{N}{\min U} - \frac{\min U}{N}  \right) + o(N^{-1}). 
\end{equation}
Indeed, in a $o(N^{-1})$ neighborhood of $\Gamma_N$ one can observe a sign of critical behavior, the exponential vanishing gap of the Hamiltonian.

\begin{proposition}\label{prop:gap}
	Let $\Delta_N(\Gamma) > 0$ denote the energy gap of the QREM Hamiltonian. Then, for some $ c > 0$ and $ N$ large enough
	\begin{equation}\label{eq:gap}
		\min_{\Gamma \geq 0} \Delta_N(\Gamma) \leq e^{-cN}
	\end{equation}
	except for a exponentially small event. The minimum is attained at some $\Gamma_N^\star$ satisfying~\eqref{eq:gammagap}.
\end{proposition}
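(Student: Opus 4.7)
The plan is to identify $\Gamma_N^\star$ with the position of the avoided crossing of the lowest localized branch $E^{\mathrm{loc}}(\Gamma):=\min U-\Gamma^2/\beta_c$ (from Theorem~\ref{thm:sggs}) and the lowest paramagnetic branch $E^{\mathrm{par}}(\Gamma):=-\Gamma N-\Gamma^{-1}$ (from Theorem~\ref{thm:qpgs}), and to read off the exponentially small gap from the exponentially small hybridization of these two sectors. First, I solve the crossing equation $E^{\mathrm{loc}}(\Gamma)=E^{\mathrm{par}}(\Gamma)$ perturbatively in $1/N$ about the leading root $\Gamma\approx -\min U/N$, using the extreme-value expansion $\min U=-\beta_c N+\mathcal{O}(\ln N)$ from~\eqref{eq:Uinfty}--\eqref{eq:rescaling}: the first correction reproduces exactly~\eqref{eq:gammagap}, so any minimizer obtained below automatically has the required form.

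At $\Gamma=\Gamma_N^\star$ I construct two nearly orthogonal trial vectors: the first-order corrected localized state $\xi_{\mathrm{loc}}$ from Theorem~\ref{thm:sgstate}, and a first-order corrected paramagnetic state $\xi_{\mathrm{par}}$ built from the expansion~\eqref{eq:2ndpara}, whose Rayleigh quotients agree with $E^{\mathrm{loc}}(\Gamma)$ and $E^{\mathrm{par}}(\Gamma)$ up to $\mathcal{O}_\Gamma(N^{-1})$. Because $\xi_{\mathrm{loc}}$ is supported on $\mathcal{O}(N)$ bit strings while $\xi_{\mathrm{par}}(\pmb{\sigma})=\mathcal{O}(2^{-N/2})$ uniformly, direct $\ell^1$-$\ell^\infty$ estimates give
\[
	\bigl|\langle\xi_{\mathrm{loc}}\,\vert\,\xi_{\mathrm{par}}\rangle\bigr|=\mathcal{O}\bigl(\operatorname{poly}(N)\,2^{-N/2}\bigr), \qquad v:=\bigl|\langle\xi_{\mathrm{loc}}\,\vert\,H\,\xi_{\mathrm{par}}\rangle\bigr|=\mathcal{O}\bigl(\operatorname{poly}(N)\,2^{-N/2}\bigr).
\]
Since the derivative of $\langle\xi_{\mathrm{loc}}|H|\xi_{\mathrm{loc}}\rangle-\langle\xi_{\mathrm{par}}|H|\xi_{\mathrm{par}}\rangle$ with respect to $\Gamma$ is of order $N$ (essentially coming from $\langle\xi_{\mathrm{par}}|T|\xi_{\mathrm{par}}\rangle\approx -N$), a further shift of $\Gamma$ by $\mathcal{O}(N^{-2})$ produces a value $\tilde\Gamma$ still satisfying~\eqref{eq:gammagap} at which the two diagonal Rayleigh quotients coincide with a common value $E^\star$.

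At $\tilde\Gamma$ the $2\times 2$ restriction of $H$ to $V:=\operatorname{span}(\xi_{\mathrm{loc}},\xi_{\mathrm{par}})$ has equal diagonals and off-diagonal bounded by $v$, so its eigenvalues $\lambda_\pm$ satisfy $\lambda_+-\lambda_-\leq 2v+\mathcal{O}(2^{-N})=e^{-(\ln 2)N/2+o(N)}$. The Courant--Fischer min--max principle gives $E_0(\tilde\Gamma)\leq\lambda_-$ and $E_1(\tilde\Gamma)\leq\lambda_+$. The matching lower bound on $E_0$ --- equivalently, that no further eigenvalue of $H(\tilde\Gamma)$ drops below $\lambda_-$ by more than $e^{-cN}$ --- I plan to obtain by a Feshbach--Schur reduction: on the intersection of the events from Theorems~\ref{thm:sggs} and~\ref{thm:qpgs}, together with the Poisson statistics in Corollary~\ref{cor:sgstat}, the remainder spectrum $\operatorname{spec}(QHQ)$ with $Q=\mathbbm{1}-P_V$ is separated from $E^\star$ by a positive constant; combined with the small residuals $\|(H-E^\star)\xi_{\mathrm{loc/par}}\|$ this turns the Schur complement into a convergent Neumann series and places the two lowest true eigenvalues within $e^{-cN}$ of $\lambda_\pm$, yielding~\eqref{eq:gap}.

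The main obstacle lies in this last step, since $\tilde\Gamma$ sits exactly at the phase transition where neither Theorem~\ref{thm:sggs} nor Theorem~\ref{thm:qpgs} applies as stated. The proofs in Sections~\ref{sec:loc} and~\ref{sec:deloc} therefore have to be reinspected through the $\mathcal{O}(\ln N/N)$-window around $\beta_c$ to guarantee the $\mathcal{O}(1)$-separation of $\operatorname{spec}(QHQ)$ from $E^\star$ and thereby to ensure that the avoided crossing is indeed between the two lowest branches rather than interfered with by a third nearby eigenvalue.
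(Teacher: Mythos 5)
Your high-level plan — locate an avoided crossing between the localized branch and the paramagnetic branch, and read off an exponentially small gap from exponentially small hybridization — is the right one and matches the spirit of the paper's proof, and the derivation of~\eqref{eq:gammagap} by matching $\min U - \Gamma^2/\beta_c$ with $-\Gamma N - \Gamma^{-1}$ is essentially what the paper does (via a continuity argument for the eigenvalue branches rather than a perturbative expansion, but this is cosmetic). However, the step you flag at the end as "the main obstacle" is in fact a genuine gap, and it is not a matter of "reinspecting" the existing proofs: the quantitative input you rely on collapses at the critical point. Your estimate $v = \mathcal{O}(\mathrm{poly}(N)\,2^{-N/2})$ and the bound $\xi_{\mathrm{par}}(\pmb{\sigma})=\mathcal{O}(2^{-N/2})$ both invoke the $\ell^\infty$-delocalization of Theorem~\ref{thm:qpstate}, but that bound is $\|\psi\|_\infty^2 \leq 2^{-N}e^{N\gamma((\beta_c+\eta)/(2\Gamma))+o(N)}$, and at $\Gamma=\beta_c$ (where $\Gamma_N^\star$ sits) the exponent $\gamma(1/2)=\ln 2$ cancels the $2^{-N}$ entirely, making the estimate trivial. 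So the off-diagonal matrix element $v$ is not controlled by the tools you cite, and without a nontrivial $\ell^\infty$ bound near criticality neither your upper bound on the $2\times 2$ gap nor your Feshbach--Schur lower bound on $E_0$ goes through.

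The paper fills exactly this hole with a new ingredient you do not supply: the improved delocalization estimate of Theorem~\ref{lem:deloc}, proved by a random-walk (spherical-mean) expansion together with the finite-frequency bound of Lemma~\ref{lem:probest}, which yields $\max_{\pmb{\sigma}}\langle\delta_{\pmb{\sigma}}\,|\,\mathbbm{1}_{(-\infty,-(\Gamma-\eta)N)}(H_\delta)\,\delta_{\pmb{\sigma}}\rangle\leq e^{-cN}$ uniformly down to and at $\Gamma=\beta_c$. The paper's implementation is also structurally different: instead of a two-vector variational subspace, it works with the direct-sum auxiliary Hamiltonian $H'$ from the proof of Theorem~\ref{thm:sggs}, extended to the symmetrized large-deviation set $\mathcal{S}_{\beta_c-\delta}$, and establishes a chain of $e^{-cN}$-comparisons $H\leftrightarrow H'=(\bigoplus_{\pmb{\sigma}}H_{\alpha N}(\pmb{\sigma}))\oplus H_r$ and $H_r\leftrightarrow H_\delta$ via Feshbach--Schur. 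At the crossing $\Gamma_N$ where $\min_{\pmb{\sigma}}E_{\pmb{\sigma}}(\Gamma_N)=\inf\spec H_r(\Gamma_N)$, the operator $H'$ has an exactly degenerate ground state, so the gap of $H$ is at most $2e^{-cN}$; your variational $2\times 2$ approach, by contrast, still needs the lower bound on $E_0$ even after the $\ell^\infty$ control is secured, so it is a somewhat less economical route even once the missing estimate is in place.
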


The proof of both Proposition~\ref{prop:trans} and~\ref{prop:gap} are found in the appendix. It relies on a spectral analysis of $H$ and is completely different from the derivation in \cite{AW16}. 

\subsection{Free energy and partition function}\label{sec:freefluc}
The spectral techniques presented here also allow to pin down the pressure $\Phi_N$ and its fluctuations up to order one in $ N $  in all three phases of the QREM: the spin-glass phase as well as the classical ('unfrozen REM') and quantum paramagnetic phase, cf.~Figure~\ref{fig:phase}. 
\begin{theorem}\label{thm:free}
	
	\begin{enumerate}
		\item If $\Gamma > \Gamma_c(\beta)$ the pressure $\Phi_N(\beta,\Gamma)$ is up to order one deterministic and one has the almost sure convergence
		\begin{equation}
			\Phi_N(\beta,\Gamma) - (\ln \cosh(\beta \Gamma)) N \to  \frac{\beta}{\Gamma \tanh(\beta \Gamma)} .
		\end{equation}
		\item If $\Gamma < \Gamma_c(\beta)$ and $\beta\leq\beta_c$, the pressure $\Phi_N(\beta,\Gamma)$ differs from the REM's pressure $  \Phi_N(\beta,0) $ by a deterministic $\beta$-independent shift of order one, i.e., one has the almost sure convergence
		\begin{equation}
			\Phi_N(\beta,\Gamma) - \Phi_N(\beta,0) \to  \Gamma^2 .
		\end{equation}
		\item If $\Gamma < \Gamma_c(\beta)$ and $\beta > \beta_c$, the pressure $\Phi_N(\beta,\Gamma)$ differs from  the REM's pressure by a deterministic $\beta$-dependent shift of  order one, i.e., one has the almost sure convergence
		\begin{equation}\label{eq:spinglassconv}
			\Phi_N(\beta,\Gamma) - \Phi_N(\beta,0) \to  \frac{\Gamma^2 \beta}{\beta_c} .
		\end{equation}
	\end{enumerate}
\end{theorem}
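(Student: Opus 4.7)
My plan is to decompose $Z(\beta,\Gamma)=2^{-N}\tr e^{-\beta H}$ according to the spectral information available in each regime, read off the leading term and the order-one correction from a distinguished subset of eigenvalues, and bound the residual part of the trace via either crude norm estimates or the REM large-deviation geometry of Section~\ref{sec:free}. The choice of which eigenvalues to isolate differs between the paramagnetic and spin-glass phases, reflecting the different nature of the dominant contributions to the Gibbs weight.

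\emph{Paramagnetic phase $\Gamma > \Gamma_c(\beta)$.} I will truncate the trace to the low-lying cluster structure of Theorem~\ref{thm:qpgs}: eigenvalues lie in balls of vanishing radius around $(2n-N)\Gamma + N/((2n-N)\Gamma)$ with multiplicity $\binom{N}{n}$. The corresponding contribution to $Z$ reads
$$
2^{-N}\sum_n \binom{N}{n}\,\exp\!\Bigl(-\beta(2n-N)\Gamma - \tfrac{\beta N}{(2n-N)\Gamma} + o(1)\Bigr),
$$
and Laplace's method at the saddle $m^{\star}=-\tanh(\beta\Gamma)$ (with $m=(2n-N)/N$) produces the leading $\cosh(\beta\Gamma)^N$ from the binomial together with the linear tilt, while the sub-leading term $-\beta N/((2n-N)\Gamma)$ evaluated at $m^{\star}$ yields the order-one correction $\beta/(\Gamma\tanh(\beta\Gamma))$. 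The contribution from the remaining eigenvalues above $-(\beta_c+2\eta)N$ is bounded crudely by $e^{\beta(\beta_c+2\eta)N}$ and is negligible against $\cosh(\beta\Gamma)^N$ as soon as $\Gamma > \Gamma_c(\beta)$; where the crude bound is insufficient (namely in the sub-window $\Gamma_c(\beta)<\Gamma\leq\beta_c$ at high temperature), one has to extend Theorem~\ref{thm:qpgs} a little further into the spectrum using the techniques of Section~\ref{sec:deloc}.

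\emph{Spin-glass phases $\Gamma < \Gamma_c(\beta)$.} The core input here is that every sufficiently deep large-deviation site $\pmb{\sigma}$ of the REM hosts one eigenvalue
$$
E_{\pmb{\sigma}}=U(\pmb{\sigma})+\frac{\Gamma^2 N}{U(\pmb{\sigma})}+o(1).
$$
If $\beta>\beta_c$ (case 3), $Z(\beta,0)$ is concentrated on $\mathcal{L}_{\beta_c-\delta}$, where Theorem~\ref{thm:sggs} applies directly and the shift $\Gamma^2 N/U(\pmb{\sigma})\to -\Gamma^2/\beta_c$ is uniform, producing the multiplicative factor $e^{\beta\Gamma^2/\beta_c}$. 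If $\beta\leq\beta_c$ (case 2), the REM partition function is instead dominated by bulk configurations with $U(\pmb{\sigma})/N\approx -\beta$; there the shift becomes $-\Gamma^2/\beta$ and the Boltzmann weight gains the $\beta$-independent factor $e^{\Gamma^2}$, a pleasant cancellation between the tilt $\beta$ and the $1/\beta$ in the shift. Summing the Boltzmann weights over these distinguished sites and factoring yields $Z(\beta,\Gamma)\approx e^{\Gamma^2}Z(\beta,0)$ in case 2 and $Z(\beta,\Gamma)\approx e^{\beta\Gamma^2/\beta_c}Z(\beta,0)$ in case 3.

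\emph{Main obstacle.} The hardest step is case 2: the eigenvalue shift formula is supplied by Theorem~\ref{thm:sggs} only on the extreme tail $\mathcal{L}_{\beta_c-\delta}$, whereas case 2 requires it on the exponentially larger set $\mathcal{L}_{\beta_{\star}}$ with $\beta_{\star}<\beta_c$ arbitrary. Theorem~\ref{thm:sggs}'s rank-1/Schur-complement argument relies on each extreme site being isolated on the Hamming cube; at less extreme energies this isolation fails, and one must quantify how often, and by how much, neighboring low-energy configurations perturb the shift. This calls for the detailed REM large-deviation geometry of Section~\ref{sec:free}---typical counts of configurations at a given Hamming distance with comparable potential values---together with the Dirichlet resolvent bounds for $T$ on Hamming balls from Section~\ref{sec:Adjmatrix}. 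Once the shift formula is extended to a set carrying all but a $o(1)$-fraction of the REM Boltzmann mass, the sums above yield the claimed asymptotics and the theorem follows.
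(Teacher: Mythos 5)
Your outline follows the same broad strategy as the paper --- split the trace into a localized part carried by large-deviation sites and a delocalized part, read off the order-one shift $-\Gamma^2/u(\beta)$ from the second-order energy correction evaluated at the saddle energy $u(\beta)N$, and argue that the two parts dominate in the respective phases --- and you correctly identify the main difficulty in case 2, namely that the rank-one shift formula must be extended from the extreme tail $\mathcal{L}_{\beta_c-\delta}$ to the exponentially large set of sites with $U(\pmb{\sigma})\approx-\beta N$, where isolation fails. However, two points deserve a warning.

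First, the paramagnetic phase. Your plan is to take Theorem~\ref{thm:qpgs} as given and then "extend it a little further into the spectrum" when the thermal saddle at $-N\Gamma\tanh(\beta\Gamma)$ sits above the $-(\beta_c+2\eta)N$ threshold. That is not what the paper does, and it is not a small fix: pushing the Schur-complement description above $-\beta_c N$ is precisely where localized eigenstates of the REM start appearing in the spectrum of $H$ and the cluster picture of the delocalized states breaks down. The paper's actual device is different and more robust: one applies Theorem~\ref{lem:qpgs} not to $U$ but to a \emph{truncated} potential $U_\epsilon \coloneqq U\,\mathbbm{1}_{|U|\leq 2\epsilon N}$, whose sup-norm is at most $2\epsilon N$, so that the Schur analysis is valid down to the energy threshold $\approx -4\epsilon N$, arbitrarily close to zero. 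The discarded large deviations $V_\epsilon = U - U_\epsilon$ are then controlled through the cluster analysis of Lemmas~\ref{lem:clus}--\ref{lem:gsclus} together with a convexity argument in an auxiliary coupling parameter $\lambda$ (comparing pressures $\Phi_N(\beta,\Gamma,\lambda)$ for $H(\lambda) = \Gamma T + U_\epsilon + \lambda V_\epsilon$, using $\Phi_N(\beta,\Gamma,\lambda)=\Phi_N(\beta,\Gamma,0)+o(1)$ at small $\lambda>0$). Your proposal as written would get stuck here.

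Second, the spin-glass phase. Your heuristic for the $\beta$-independent $e^{\Gamma^2}$ factor in case 2 is correct, and your identification of the obstacle is exactly right. The paper's resolution is the $(k,\epsilon)$-component/cluster apparatus: Lemma~\ref{lem:clus} shows that all but an $e^{-\epsilon^2 N/4}$-fraction of configurations in the Gibbs window are $(k,\epsilon)$-isolated, Lemma~\ref{lem:gsclus} gives the energy-shift formula \eqref{eq:isol} uniformly on isolated clusters, and the non-isolated ones are absorbed into the exponentially small remainder via a trivial bound. You also need to justify the decoupling $Z_N \approx Z_N^{(1)} + Z_N^{(2)}$ up to $e^{o(1)}$ multiplicative error, which in the paper follows from the boundary decay estimate $\|\mathbbm{1}_{\partial C_k(G)}\psi\|^2 = \mathcal{O}(N^{-k})$ of Lemma~\ref{lem:gsclus} and the resulting exponential smallness of the off-diagonal coupling $P_\epsilon^{(c)} H Q_\epsilon^{(c)}$; your phrase "truncate the trace" hides this step. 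These are the concrete pieces you would have to supply to make the argument rigorous.
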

The proof of the almost-sure convergence, for which the probability space is the product $ \prod_{N=1}^\infty \Omega_N$ of independently redrawn variables for every single $ N $, is based on a Borel-Cantelli argument and contained in Section~\ref{sec:free}.

At all values of $ \beta > 0 $, the fluctuations of the REM's pressure $\Phi_N(\beta,0) $ below its deterministic leading term $N  p^{\mathrm{REM}}(\beta)$ have been determined in~\cite{BKL02} (see also~\cite[Thm.~9.2.1]{Bov06}).
Their nature and scale changes from normal fluctuations on the scale $ \exp\left( - \frac{N}{2}(\ln 2 - \beta^2) \right) $ for $ \beta \leq \beta_c/2 $  into a more interesting form of exponentially small fluctuations in the regime $  \beta \in (\beta_c/2, \beta_c) $. In the spin glass phase $ \beta > \beta_c $, the fluctuations are on order one \cite{GMP89} and asymptotically described by Ruelle's partition function of the REM~\cite{Rue87}. More precisely, one has the weak convergence \cite[Thm.~1.6]{BKL02}:
\be\label{eq:Ruelle} 
e^{-N[\beta \beta_c - \ln 2] + \frac{\beta}{2\beta_c}[\ln(N \ln2) + \ln 4 \pi]  } Z_N(\beta,0) \to \int_{-\infty}^{\infty} \, e^{x \beta/\beta_c} \PPP(e^{-x} \, dx). 
\ee
As a consequence of Theorem~\ref{thm:free}, we thus obtain the analogous result for the QREM. 

\begin{corollary}\label{cor:fluctuations} 
	If $\Gamma < \Gamma_c(\beta)$ and $\beta > \beta_c$, we have  the weak convergence:
	$$
	e^{-N[\beta \beta_c - \ln 2] + \frac{\beta}{2\beta_c}[\ln(N \ln2) + \ln 4 \pi] - \frac{\beta \Gamma^2}{\beta_c} } Z_N(\beta,\Gamma) \to \int_{-\infty}^{\infty} \, e^{x \beta/\beta_c} \PPP(e^{-x} \, dx).
	$$
\end{corollary}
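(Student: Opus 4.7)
The plan is to combine the almost-sure part of Theorem~\ref{thm:free}(3) with the known weak convergence~\eqref{eq:Ruelle} via Slutsky's theorem. The essential observation is that Theorem~\ref{thm:free} tells us the correction induced by the transversal field on the log-partition function is purely deterministic at order one, so the \emph{fluctuations} of $Z_N(\beta,\Gamma)$ about its leading exponential behavior must inherit the Poisson structure already established for the classical REM.

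First I would rewrite Theorem~\ref{thm:free}(3) in multiplicative form. Exponentiating the almost-sure convergence $\Phi_N(\beta,\Gamma) - \Phi_N(\beta,0) \to \Gamma^2\beta/\beta_c$ yields
\begin{equation*}
R_N \;\coloneqq\; \frac{Z_N(\beta,\Gamma)}{Z_N(\beta,0)} \, e^{-\beta\Gamma^2/\beta_c} \;\longrightarrow\; 1 \qquad \text{almost surely.}
\end{equation*}
Next, denote the REM prefactor by
\begin{equation*}
a_N \;\coloneqq\; \exp\!\Bigl(-N[\beta\beta_c - \ln 2] + \tfrac{\beta}{2\beta_c}[\ln(N\ln 2) + \ln 4\pi]\Bigr),
\end{equation*}
so that~\eqref{eq:Ruelle} reads $a_N \, Z_N(\beta,0) \Rightarrow Y$ with $Y \coloneqq \int_{-\infty}^{\infty} e^{x\beta/\beta_c}\,\PPP(e^{-x}\,dx)$.

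The quantity whose limit law is sought is then
\begin{equation*}
a_N\, e^{-\beta\Gamma^2/\beta_c}\, Z_N(\beta,\Gamma) \;=\; \bigl(a_N\, Z_N(\beta,0)\bigr) \cdot R_N.
\end{equation*}
Since the first factor converges weakly to $Y$ and the second converges to $1$ almost surely (hence in probability), Slutsky's theorem yields the claimed weak convergence
\begin{equation*}
a_N\, e^{-\beta\Gamma^2/\beta_c}\, Z_N(\beta,\Gamma) \;\Longrightarrow\; Y,
\end{equation*}
which is precisely the statement of the corollary.

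There is no real obstacle: all the analytic work has been absorbed into Theorem~\ref{thm:free} and the cited convergence~\eqref{eq:Ruelle}. The only minor subtlety to check is that one truly has almost-sure (not merely in-probability) convergence of $R_N$ to $1$, which is exactly what Theorem~\ref{thm:free}(3) asserts; this guarantees the applicability of Slutsky in its standard form and ensures that the product of a weakly convergent and an almost-surely convergent sequence has the expected limit on the common probability space carrying the REM disorder.
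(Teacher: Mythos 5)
Your proof is correct and is precisely the argument the paper compresses into a one-line proof ("by the continuity of the exponential function, this follows immediately from~\eqref{eq:spinglassconv} and~\eqref{eq:Ruelle}"): exponentiating the almost-sure convergence of Theorem~\ref{thm:free}(3) and then applying Slutsky's theorem to the product with the weakly convergent sequence of~\eqref{eq:Ruelle}. You have simply spelled out the details that the paper leaves implicit.
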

\begin{proof}
	By the continuity of the exponential function, this follows immediately from~\eqref{eq:spinglassconv} and~\eqref{eq:Ruelle}. 
\end{proof}

The fluctuations of the QREM's partition function outside the spin glass phase are expected to be much smaller -- for $ \Gamma < \Gamma_c(\beta) $ and $ \beta < \beta_c $ most likely on a similar scale as in the REM and for the paramagnetic regime presumably even smaller. The  methods in this paper do not allow to determine fluctuations on an exponential scale.

\subsection{Comments}
We close this introduction by putting our main results into the broader context of related questions discussed in the physics and mathematics literature.\\
%
In the past years, the QREM has attained interest in the physics community as basic testing ground for quantum annealing algorithms \cite{JKrKuMa08,J+10} and, somewhat related, physicist have started to investigate many-body localization in the QREM~\cite{LPS14,BLPS16,Bur17,FFI19,BFSTV21}. 
Based on numerical computations and non-rigorous methods such as  the forward-scattering approximation and the replica trick, they predict a dynamical phase transition between ergodic and localized behavior in the parameter region $\Gamma < \Gamma_c(\beta), \beta < \beta_c$. This transition is expected to be reflected in a change in the spread of eigenfunctions at the correspond energies, which in the ergodic regime is neither uniform nor localized. It is an interesting mathematical challenge to investigate this. As this requires a good understanding of the eigenfunctions far away from the spectral edges, the methods presented in this paper are not yet sharp enough to tackle those problems. 

In simplified models of Rosenzweig-Porter type such non-ergodic delocalization regimes have been predicted~\cite{KK+15,SK+20} and confirmed by a rigorous analysis~\cite{SW19}. 
In an even more simplified model in which one replaces $ T $ by the orthogonal projection onto its ground-state $ -  \vert \Phi_\emptyset\rangle \langle \Phi_\emptyset \vert $ a fully detailed description of  the localization-delocalization transition has been worked out in~\cite{ASW15}.

Focusing on the physics of spin glasses, the independence of the REM is an oversimpli\-fication. This was the main motivation for Derrida to introduce the Generalized Random Energy Model (GREM) \cite{Der85,DG86}, in which the basic random variables are correlated, but still with a prescribed hierarchical structure.  The free energy of the GREM has been studied extensively \cite{Rue87,CCP87,BK04a,BK04b}. On the quantum side, the specific free energy of the QGREM has been determined in \cite{MW20c}; and in \cite{MW22} the effects of an additional longitudinal field have been considered. 
We expect that our methods can be adapted to the case of a finite-level QGREM to derive analogous results as in Theorems~\ref{thm:qpgs}, \ref{thm:sggs} and \ref{thm:free}.
More precisely, we conjecture that the multiple phase transitions in the QGREM are reflected in the behavior of the ground state wavefunction, i.e., at  the critical field strengths $\Gamma_k$  the wavefunction undergoes a transition from being localized in the block $\pmb{\sigma_k}$ to  a delocalized states in the respective part of the spin components. 
The infinite-level case might require substantially new ideas, as standard interpolation techniques do not reveal
order-one corrections. Our methods, however, are strong enough to cover non-Gaussian REM type models, i.e., i.i.d. a centered square integrable random process, whose distribution satisfies a large deviation principle (see also \cite[Assumption 2.1]{MW20c}). Clearly, explicit expressions in analogous versions of Theorem~\ref{thm:qpgs}, \ref{thm:sggs} and \ref{thm:free} will depend on the distribution of the process as already the parameter $\beta_c$ is specific to  Gaussians.

Among spin glass models with a transversal field, the Quantum Sherrington-Kirkpatrick (QSK) model, in which one substitutes in~\eqref{eq:Ham} for $ U $ the classical SK potential, is of particular interest~\cite{SIC12}. In contrast to the classical SK model, which is solved by Parisi's celebrated formula, such an explicit expression for the free energy of the QSK is lacking, and its analysis remains a physical and mathematical challenge. So far, the universality of the limit of the free energy has been settled in \cite{Craw07}, and in \cite{AB19} the limit of the free energy was expressed as a limit of Parisi-type formulas for high-dimensional vector spin glass models.  Unfortunately, despite the knowledge of a Parisi-type formula, the qualitative features of the phase transition in the QSK could only be analyzed by other means, adapting the methods of \cite{BM80,ALR87}.  In terms of the glass behavior, the analysis in \cite{LRRS19} shows that the glass parameter vanishes uniformly in $\Gamma$ for all $\beta \leq 1$. This is complemented by  \cite{LMRW21}, where the existence of a glass phase has been established for $\beta > 1$ and weak magnetic fields $ \Gamma $.
%

The localization-delocalization transition for the QREM differs drastically from related results on a finite-dimensional graph such as $ \mathbb{Z}^d $  (see e.g.~\cite{AW15,Koe16} and references). Unlike on $ \mathbb{Z}^d $, all low-energy eigenvectors on $ \mathcal{Q}_N $ are delocalized in a regime of large $ \Gamma $ (a regime, which is also absent if one takes $ \Gamma = \kappa/N $ as in~\cite{AGH20}). The localized states appear only for small~$ \Gamma $. Although the norm of the adjacency matrix $ T $ is on the same scale $ N $ as the random potential $ U $, which is not the case for the any of the variety of unbounded distributions studied on subsets of $ \mathbb{Z}^d $, the localization of eigenvectors for extremal energies is even stronger on $ \mathcal{Q}_N $. For the Gaussian distribution studied here, the mass of the eigenvectors sharply concentrates not only for a finite number of eigenvalues in one of the extremal sites of $ U $, but rather for all eigenvalues below a threshold (cp.~\cite{GMS90} with Theorem~\ref{thm:sggs}).  In the finite-dimensional setting, the ground state and the first few excited states concentrate on a small, but growing subdomain of $\zz^d$ and, hence, a finite $\ell^1$-norm for the ground state is specific to the QREM. 
This seemingly contradictory strong localization property compared to  $ \mathbb{Z}^d $ can be traced to the adjacencies matrix's $ T $ bad localization properties to balls, on which we elaborate in Section~\ref{sec:Adjmatrix}: the spectral shift due to localization on a ball of radius $ K $ is order of order $ N $ and not $ K^{-2} $ as on $ \mathbb{Z}^d $. 
This together with the sparseness of the potential's extremal sites does not allow for resonances (cf.~Lemma~\ref{lem:specav}).   In this sense, our proof is in fact somewhat simpler (and hence also stronger) than existing proofs of localization in the extremal sites of a random potential on $ \mathbb{Z}^d $. E.g. most recently and notably, in~\cite{BK16} the statistics of a finite number of eigenvalues above the ground state and the localization properties of their eigenvectors were studied for single-site distributions with doubly exponential tails (see also~\cite{Koe16} for more references). While the degree of localization in the $\Gamma < \beta$ phase is significantly stronger than in the models studied in~\cite{BK16}, we observe a similar exponential decay of the localized states for larger distances and in both cases the extremal statistics is governed by a Poisson process. 
In the study of the parabolic Anderson model, an interesting question is how the shape of the localized eigenstates and the speed of convergence depend on the underlying distribution of the random potential~\cite{Koe16}. For the sake of concreteness, we only study the most prominent case of a Gaussian distribution. Although several quantities such as the constant $\beta_c$ depend crucially on the Gaussian nature, we expect the qualitative aspects of the localization-delocalization transition to be persistent even with other unbounded distributions (e.g. those which meet \cite[Ass~2.1]{MW20c}). 

The operator $T$ coincides up to a diagonal shift $N$ with the Laplacian, i.e., the generator of a simple clock process on  $\mathcal{Q}_N$. This correspondence gives rise to yet another link with the parabolic Anderson model on $\zz^d$. The dynamics of the Anderson model is a vast research topic and its study has revealed many interesting phenomena such as ageing. The spin glass nature is believed to be reflected in non-equilibrium properties and a slow relaxation to equilibrium. However, aging in spin glasses is typically not studied under an unbiased random walk, but rather under the Glauber dynamics for which the transition rates depend on the sites' energies. In the case of the REM,  
 the related Glauber dynamics has drawn considerable interest as a well treatable case for metastability and aging~\cite{ABG03a,ABG03b,CW15,GH19,Gay19}. Our spectral methods might provide some further insights
into the dynamics of REM-type clock processes. 

\section{Adjacency matrix on Hamming balls}\label{sec:Adjmatrix}

This section collects results on the spectral properties of the restriction of $T$ to Hamming balls. We focus on the analysis of the Green's function, which by rank-one perturbation theory, is closely related to the ground state for potentials corresponding to a narrow deep hole - a situation typically encountered in potentials of REM type. Most of the spectral analysis in the literature related to $ T $ is motivated by the theory of error corrections (see e.g.~\cite{CDS95,FriedTill05,BLL18} and references therein). The methods we use are rather different and neither rely on elaborate combinatorics nor a Hadamard transformation, which is applicable on a full Hamming cube only.  

\subsection{Norm estimates} 
In the following, we fix $\pmb{\sigma}_0 \in \mathcal{Q}_N$ and  $0 \leq K \leq N \in \nn$. The restriction $T_K$ of $T$ to the Hamming ball $B_K(\pmb{\sigma}_0 )$ is defined through its matrix elements in the canonical orthonormal basis on $ \ell^2(B_K(\pmb{\sigma}_0 ) )$, which is naturally embedded in $ \ell^2(\mathcal{Q}_N) $:
\begin{equation}\label{eq:tk}
	\langle \delta_{\pmb{\sigma}} \vert T_K \delta_{\pmb{\sigma}^\prime} \rangle = \begin{cases}
		\langle \delta_{\pmb{\sigma}} \vert T\delta_{\pmb{\sigma}^\prime} \rangle & \text{if } \pmb{\sigma}, \pmb{\sigma}^\prime \in B_K(\pmb{\sigma}_0 ) \\
		0 & \text{otherwise.}  
	\end{cases}
\end{equation}

We start with two known results on $T_K$. The first part of the following lemma has been already proved in~\cite{FriedTill05} in case $ K = \varrho N $. The second part is just a special case of the spectral symmetry of any bipartite graph's adjacency matrix (cf.~\cite{CDS95}).
\begin{proposition}[cf.~\cite{FriedTill05}]\label{lem:tknorm} 
	For the restriction $ T_{K} $ to balls $B_{K}(\pmb{\sigma}_0) $ of radius $K \leq N/2$:
	\begin{enumerate}
		\item 
		The operator norm is bounded according to
		\begin{equation}\label{eq:tknorm}
			\|T_K \| \leq 2 \sqrt{K(N-K+1)} ,
		\end{equation}
		and for any radius $ \varrho N $ with $ 0 < \varrho < 1/2 $:
		\begin{equation}\label{eq:tknorm2}
	E_N(\varrho) := \inf\spec T_{\varrho N} =  - \|T_{\varrho N} \| = - 2\sqrt{\varrho(1-\varrho)}N + o_{\varrho}(N) .
		\end{equation}
		\item If $\varphi$ is an eigenvector of $T_K$, then
		$\hat{\varphi}$ given by
		$
		\hat{\varphi}(\pmb{\sigma}) \coloneqq (-1)^{d(\pmb{\sigma},\pmb{\sigma}_0)} \varphi(\pmb{\sigma})
		$
		is also an eigenvector of $T_K$ with
		$	\langle \hat{\varphi} \vert T_K  \hat{\varphi} \rangle = - \langle \varphi \vert T_K  \varphi  \rangle $. 
		Consequently,   the spectrum   is symmetric, $
		\spec(T_K) = - \spec(T_K)$.
	\end{enumerate}
\end{proposition}
	\begin{proof}
		1.~The operator $T_{R} - T_{R-1}$, when naturally defined on the full Hilbert space $ \ell^2(\mathcal{Q}_N) $, describes the hopping between the $R$th and $R-1$th Hamming sphere. Thus, $T_{R} - T_{R-1}$ and $T_{R-2} - T_{R-4}$ act on non overlapping parts of the configuration space.
		This allows us to write
		\begin{equation}\label{eq:dirsum}
			T_K = \left(\bigoplus_{R \leq K, \text{ R even}} T_R - T_{R-1} \right)  + \left(\bigoplus_{R \leq K, \text{ R odd}} T_R - T_{R-1} \right).
		\end{equation} 
		Consequently, it is enough to consider the operators $T_{R} - T_{R-1}$ on $ \ell^2(\mathcal{Q}_N) $. As all matrix elements of $T_{R} - T_{R-1}$ are nonnegative, the Perron-Frobenius Theorem implies that its eigenvector  $\psi_R$ corresponding to the maximal eigenvalue, which coincides with $\|T_{R} - T_{R-1} \|$, is positive. Moreover, $\psi_R$  is radial by symmetry and supported on $S_{R}(\pmb{\sigma}) \cup S_{R+1}(\pmb{\sigma}) $, i.e.,
		$ \psi_R = s_{R+1} \sum_{\pmb{\sigma}^\prime \in S_{R+1}(\pmb{\sigma})} \, \delta_{\pmb{\sigma}^\prime} + s_{R} \sum_{\pmb{\sigma}^\prime \in S_{R}(\pmb{\sigma}) } \, \delta_{\pmb{\sigma}^\prime}  $. 
		By an explicit calculation one thus has
		$ (T_{R} - T_{R-1})^2 \psi_R = R(N-R+1) \psi_R= \|T_{R} - T_{R-1} \|^2 \psi_R $ 
		and, hence using~\eqref{eq:dirsum}:
		\[ \begin{split} \|T_K \| &\leq \max_{R \leq K , \text{ R even}} \|T_R - T_{R-1} \|  +  \max_{R \leq K , \text{ R odd}} \|T_R - T_{R-1} \|  \\ &\leq 2 \max_{R \leq K} \|T_R - T_{R-1} \| = 2  \sqrt{K(N-K+1)}. \end{split}  \]
		A complementing variational bound for a proof of~\eqref{eq:tknorm2} is in~\cite[Appendix~C]{FriedTill05}. 
		
		\noindent 
		2.~The second assertion follows from a direct computation.
	\end{proof}

If $K$ is of order one as a function of $ N $, we have $\|T_K \| = \mathcal{O}_K(\sqrt{N})$. 
This drastic shift of the operator norm due to confinement should be compared to the finite-dimensional situation where this shift for a ball of radius $ K $ is propartional to $ K^{-2} $.

In the remaining part of this section, we will analyze $T_K$ and its Green function in the two extreme cases in relation to $ N $: 
1)~fixed-size balls in Subsection~\ref{subsec:K}, and 2)~growing balls with radius~$K = \varrho N $ with some $ 0 < \varrho < 1/2 $ in Subsection~\ref{subsec:rhoN}.  

\subsection{Green function for balls of  fixed size}\label{subsec:K}

The Green's function of the operator $T_K$ on $ \ell^2(B_K(\pmb{\sigma}_0)) $ is defined by
\be
G_{K}(\pmb{\sigma}, \pmb{\sigma}_0;E) \coloneqq \left\langle \delta_{\pmb{\sigma}} \vert \, (-T_K - E)^{-1}  \delta_{ \pmb{\sigma}_0} \right\rangle .
\ee
Before we derive decay estimates in case $ E \not\in [ -\|T_K \|,\|T_K \|]  $, we recall some general facts:
\begin{enumerate}
	\item By radial symmetry,
	$G_{K}(\pmb{\sigma}, \pmb{\sigma}_0;E)$ only depends on the distance $d(\pmb{\sigma}, \pmb{\sigma}_0)$.
	\item All $ \ell^2$-normalized eigenvectors $(\varphi_j) $ of $T_K$ with eigenvalues $(E_j)$ can chosen to be real, and we have 
	\[ \begin{split} G_{K}(\pmb{\sigma}, \pmb{\sigma}_0;E) &= \sum_{j}\frac{ \varphi_j(\pmb{\sigma}) \varphi_j(\pmb{\sigma}_0) }{E_j - E} = \sum_{j} (-1)^{d(\pmb{\sigma}, \pmb{\sigma}_0)} \frac{ \varphi_j(\pmb{\sigma}) \varphi_j(\pmb{\sigma}_0) }{-E_j - E} \\ & = (-1)^{d(\pmb{\sigma}, \pmb{\sigma}_0)+1}  G_{K}(\pmb{\sigma}, \pmb{\sigma}_0;-E), \end{split} \]
	where the second equality follows from the symmetry of the spectrum stated in Lemma~\ref{lem:tknorm}. Thus, it is sufficient to derive decay estimates for $ E < - \|T_K \|$.  
	\item
	The Green function at $ E < - \|T_K \|$ is related to the ground-state $\varphi$ of the rank-one perturbation 
	\begin{equation}\label{eq:hrk1}
		H^{(E)}	\coloneqq T_K - \alpha^{(E)} \vert \delta_{\pmb{\sigma}_0} \, \rangle \langle \delta_{ \pmb{\sigma} _0} \vert 
	\end{equation}
	on $ \ell^2(B_K(\pmb{\sigma}_0)) $. More precisely, by rank-one perturbation theory $  \alpha^{(E)} := G_{K}(\pmb{\sigma}_0, \pmb{\sigma}_0;E)^{-1} $  is the unique value at which $ 	H^{(E)} $ has a ground-state at  $ E < - \|T_K \|$, and 
	\begin{equation}\label{eq:greenrk1}
		G_{K}(\pmb{\sigma}, \pmb{\sigma}_0;E)	 = \frac{1}{ \alpha^{(E)} } \frac{\varphi(\pmb{\sigma})}{\varphi(\pmb{\sigma}_0)} ,
	\end{equation}
	cf.~\cite[Theorem~5.3]{AW15}. By the Perron-Frobenius theorem, $ \varphi $ and hence the Green function is strictly positive on $ B_K(\pmb{\sigma}_0) $.  A decay estimate  for $ G_{K}(\cdot, \pmb{\sigma}_0;E)$ translates to a bound on the ground state $\varphi $ of $ 	H^{(E)} $ and vice versa. Our proof of the localization results in Section~\ref{sec:loc} will make use of this relation. 
\end{enumerate}

In order to establish decay estimates, we employ the radial symmetry and write the Green function as a telescopic product 
\be \label{eq:prodGamma}
G_{K}(\pmb{\sigma}, \pmb{\sigma}_0;E) \  = \ 
\prod_{d=0}^{\text{dist}(\pmb{\sigma}, \pmb{\sigma}_0)} \Gamma_{K}(d;E)  
\ee  
with factors $  \Gamma_{K}(0;E)  \coloneqq G_{K}(\pmb{\sigma}_0, \pmb{\sigma}_0;E) $ and 
\[
\Gamma_{K}(d;E)  \coloneqq \frac{G_{K}(\pmb{\sigma}, \pmb{\sigma}_0;E)}{G_{K}(\pmb{\sigma}^\prime, \pmb{\sigma}_0;E)}  , \quad \text{if} \,   1\leq d = {\text{dist}(\pmb{\sigma}, \pmb{\sigma}_0)} = {\text{dist}(\pmb{\sigma}^\prime, \pmb{\sigma}_0)} -1 .
\]
The choice of $ \pmb{\sigma} \in S_d(\pmb{\sigma}_0) $ and  $ \pmb{\sigma}^\prime \in S_{d-1}(\pmb{\sigma}_0) $ in the last definition is irrelevant due to the radial symmetry.

The fundamental equation $(-T_{K} -E) G_{K}(\cdot, \pmb{\sigma}_0;E) = \delta_{\cdot,\pmb{\sigma}_0}$ yields for a configuration $\pmb{\sigma}$ with $1 \leq d = \text{dist}(\pmb{\sigma}, \pmb{\sigma}_0) \leq K$
\begin{align*}
	0 &= [(T_K -E) G_{K}(\cdot, \pmb{\sigma}_0;E)]( \pmb{\sigma}) \\ &= - d \prod_{j=0}^{d-1} \Gamma_{K}(j;E) - E  \prod_{j=0}^{d} \Gamma_{K}(j;E) - (N-d)  \prod_{j=0}^{d+1} \Gamma_{K}(j;E) \\
	&  = \left(  \frac{d}{ \Gamma_{K}(d;E) } -E + (N-d)  \Gamma_{K}(d+1;E) \right) \prod_{j=0}^{d} \Gamma_{K}(j;E),
\end{align*}
where we use the convention $  \Gamma_{K}(K+1;E) \coloneqq 0 $. In the case $d= 0$, we have 
$ 1 = (-N \Gamma_{K}(1;E) - E) \Gamma_{K}(0;E) $. 
That  translates 
to the following recursive relation of Riccati type: 
\be\label{eq:mgrinder}
\Gamma_{K}(d;E) \ = \  \mathcal{M}_{d,E}(\, \Gamma_{K}(d+1;E) \, )  , \qquad    0 \le d \le K .
\ee
with the fractional linear transformation   acting on $\cc$:
\be \label{eq:M}
\mathcal{M}_{d,E}(\Gamma) \ := \  \frac{\max\{d,1\}} {-E  - (N-d)\  \Gamma }  .
\ee 
We now analyze the behavior of solutions of the recursive relation in the various regimes of interest.

\begin{proposition}\label{prop:greenfix}
	For any $K \in \nn$ there is some $ C_K < \infty $ such that for any $N > 2K$ and $E < -\|T_K \|$ we have 
	\begin{equation}\label{eq:glocbound}
		G_{K}(\pmb{\sigma}, \pmb{\sigma}_0;E) \leq \frac{C_K}{\left\vert E + \|T_K \| \right\vert} \binom{N}{d(\pmb{\sigma}, \pmb{\sigma}_0)}^{-1/2} \left(\frac{\sqrt{N}}{\vert E\vert}\right)^{d(\pmb{\sigma}, \pmb{\sigma}_0)} .
	\end{equation}
\end{proposition}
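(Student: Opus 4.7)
The plan is to combine the telescopic product representation \eqref{eq:prodGamma} with the Riccati recursion \eqref{eq:mgrinder} and a norm estimate from Lemma~\ref{lem:tknorm}. The $d=0$ factor is handled directly by the spectral theorem: since $E<-\|T_K\|$ lies strictly outside $\spec T_K\subset[-\|T_K\|,\|T_K\|]$, one has
\[
|\Gamma_K(0;E)| = |G_K(\pmb{\sigma}_0,\pmb{\sigma}_0;E)| \leq \|(-T_K-E)^{-1}\| = \tfrac{1}{|E|-\|T_K\|} = \tfrac{1}{|E+\|T_K\||},
\]
producing the prefactor in the target bound.

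The off-diagonal factors $\Gamma_K(d;E)$ for $1\leq d\leq K$ are controlled by downward induction starting from $\Gamma_K(K+1;E)=0$, which via \eqref{eq:mgrinder} gives the base case $\Gamma_K(K;E)=K/|E|$. For the inductive step I use the recursion rewritten as $\Gamma_K(d;E) = d/(|E|-(N-d)\Gamma_K(d+1;E))$ (valid since $E<0$) and aim for a bound of the form $|\Gamma_K(d;E)|\leq c_K\,d/|E|$ with $c_K$ a $K$-dependent constant. The two quantitative ingredients are (i)~the monotonicity $(d+1)(N-d)\leq K(N-K+1)$ on $d\in\{0,\dots,K-1\}$, valid under $K\leq N/2$, and (ii)~the lower bound $\|T_K\|^2\geq K(N-K+1)$, obtained from Cauchy--Schwarz applied to $|\langle\chi_K|T_K\chi_{K-1}\rangle|=\sqrt{K(N-K+1)}$ using the spherical symmetrizers $\chi_R = |S_R(\pmb{\sigma}_0)|^{-1/2}\sum_{\pmb{\sigma}\in S_R(\pmb{\sigma}_0)}\delta_{\pmb{\sigma}}$ from the proof of Lemma~\ref{lem:tknorm}. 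Combining $|E|^2>\|T_K\|^2\geq K(N-K+1)$ with (i) keeps the recursion denominator positive, and tracking the constants across the at most $K$ inductive steps produces the desired bound.

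Plugging the factor bounds into \eqref{eq:prodGamma} yields $|G_K(\pmb{\sigma},\pmb{\sigma}_0;E)| \leq c_K^{\,d}\,d!/(|E|^d\,|E+\|T_K\||)$ with $d=d(\pmb{\sigma},\pmb{\sigma}_0)\leq K$. To recast this into the stated form I use the elementary identity
\[
\binom{N}{d}^{-1/2}(\sqrt{N})^d = \sqrt{\frac{d!\,N^d}{\prod_{j=0}^{d-1}(N-j)}} \geq \sqrt{d!},
\]
valid under $N>2K$, $d\leq K$. Absorbing the remaining combinatorial factor $c_K^{\,d}\sqrt{d!}\leq c_K^{\,K}\sqrt{K!}$ into a final $K$-dependent constant $C_K$ completes the proof. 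The main delicacy I anticipate is in the inductive step near the spectral edge $E\to -\|T_K\|^+$, where the accumulated constants $c_K$ can grow: closing the induction cleanly benefits from a uniform (in $N>2K$) separation $\|T_K\|^2 > \eta_K \cdot K(N-K+1)$ with $\eta_K>1$, already manifest for $K=2$ where $\|T_2\|^2=3N-2 > 2(N-1)=K(N-K+1)$, and obtainable for general $K$ by a variational estimate on the radial subspace. Any residual edge blow-up is absorbed by the $1/|E+\|T_K\||$ prefactor in the target bound, so it suffices that $c_K$ remains finite for each fixed $K$.
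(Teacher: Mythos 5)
Your decomposition into the diagonal factor $\Gamma_K(0;E)$ and a product of off-diagonal factors matches the paper, and the base case $\Gamma_K(K;E)=K/|E|$ and recursion are correct. The genuine gap is in the inductive step, and your closing paragraph identifies the right worry but does not resolve it. If you assume $\Gamma_K(d+1;E)\leq c\,(d+1)/|E|$ and use $(d+1)(N-d)\leq K(N-K+1)\leq \|T_K\|^2/ \eta_K$ with $|E|^2>\|T_K\|^2$, the recursion gives $\Gamma_K(d;E)\leq \tfrac{d}{|E|}\cdot\tfrac{1}{1-c/\eta_K}$. For the constant to be self-consistent you need $c(1-c/\eta_K)\geq 1$, i.e.\ $c^2-\eta_K c+\eta_K\leq 0$, which has a real solution only when $\eta_K\geq 4$. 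But $\eta_K\leq 4$ always, since $\|T_K\|^2\leq 4K(N-K+1)$ by Lemma~\ref{lem:tknorm}; and your own example gives $\eta_2 = (3N-2)/(2N-2)\to 3/2 < 4$. Even if you let the constant vary with the level $d$, the resulting sequence $a_{j+1}=1/(1-a_j/\eta_K)$ starting from $a_0=1$ hits a zero denominator within a couple of steps when $\eta_K<4$ (for $\eta_K=3/2$: $a_0=1,a_1=3$, then $1-a_1/\eta_K<0$), so the induction as stated does not close for $E$ close to $-\|T_K\|$.

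The missing idea is that one must split the energy range, and near the spectral edge use a wholly different mechanism. The paper runs your induction only in the regime $E\leq -2\sqrt{KN}$, where $|E|^2\geq 4KN\geq 4K(N-K+1)$ provides an effective $\eta\geq 4$ and the induction closes with constant $c=2$. In the complementary regime $E\in(-2\sqrt{KN},-\|T_K\|)$ the paper bypasses the recursion entirely: it invokes the $\ell^2$ bound $\sum_{\pmb{\sigma}}G_K(\pmb{\sigma},\pmb{\sigma}_0;E)^2\leq\|(T_K-E)^{-2}\|\leq |E+\|T_K\||^{-2}$ together with radial symmetry, which immediately yields $G_K(\pmb{\sigma},\pmb{\sigma}_0;E)\leq |E+\|T_K\||^{-1}\binom{N}{d}^{-1/2}$; the remaining factor $(\sqrt{N}/|E|)^d$ is then bounded below by $(2\sqrt{K})^{-K}$ in this regime and absorbed into $C_K$. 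Your intuition that the off-diagonal factors stay bounded near the edge is actually correct (it follows from the Perron--Frobenius representation \eqref{eq:greenrk1}, where $\varphi$ converges to the bounded radial ground state of $T_K$ as $E\to -\|T_K\|$), but the recursion alone cannot see this, so you need either that spectral argument or the $\ell^2$ shortcut to handle the near-edge case.
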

%
	\begin{proof}
		In case $E \leq - 2 \sqrt{K N} \leq - \| T_K \| $ (cf.~Lemma~\ref{lem:tknorm}), we use the recursive relations~\eqref{eq:mgrinder} with initial condition $ \Gamma_{K}(K;E) = -\frac{K}{E}  $  to prove that for $1 \leq d \leq K-1$:
		\begin{align*} 	\left( 1+ \frac{(N-d)  \Gamma_{K}(d+1;E) }{E} \right) \geq \frac12 . \end{align*}
		This can be established directly in case $d = K-1$.  For $1 \leq d \leq K-2$ we proceed by induction. Indeed, we have 
		\begin{align*} \left( 1+ \frac{(N-d)  \Gamma_{K}(d+1;E) }{E} \right) & = 1 - \frac{\max\{d+1,1\}} {-E  - (N-d-1) \Gamma_{K}(d+1;E)  } \\ &\geq 1- \frac{2 \max\{d+1,1\}}{E^2} \geq \frac12, 	\end{align*}
		where we used the recursive relation~\eqref{eq:mgrinder}, the induction hypothesis and the upper bound on $E$.  This inductive argument also yields 
		\[ 	\Gamma_{K}(d;E)\leq  \frac{2 d}{ \vert E\vert}  \quad \mbox{for $ E \leq - 2\sqrt{KN}   $ and any $ 1\leq d < K $.} \] 
		
		Utilising the abbreviation $ d \coloneqq \text{dist}(\pmb{\sigma}, \pmb{\sigma}_0) $ and~\eqref{eq:prodGamma} together with the trivial bound $ \Gamma_{K}(0;E) \leq \left\vert E + \|T_K \| \right\vert^{-1} $,  this in turn implies:
		\[ G_{K}(\pmb{\sigma}, \pmb{\sigma}_0;E)  \leq \frac{2^{d} d!}{\vert E\vert^{d}}  \ \Gamma_{K}(0;E) \leq \frac{2^{K} \sqrt{K!}}{\vert E + \|T_K\| \vert}  \binom{N}{d}^{-1/2} \left(\frac{\sqrt{N}}{\vert E\vert}\right)^{d},   \]
		which agrees with the claim in case $E \leq - 2 \sqrt{K N}$. 
		In case $ E \in ( -  2 \sqrt{K N} , - \|T_K \| ) $,  we recall that
		\[ \sum_{\pmb{\sigma}} G_{K}(\pmb{\sigma}, \pmb{\sigma}_0;E)^2 \leq   \left\|(T_K - E)^{-2} \right\| \leq \frac1{\left\vert E + \|T_K \| \right\vert^2} ,   \]
		and the fact that $G_{K}(\cdot, \pmb{\sigma}_0;E)$ is a radially symmetric function. Consequently,
		\[ 	G_{K}(\pmb{\sigma}, \pmb{\sigma}_0;E) \leq \frac{1}{\left\vert E + \|T_K \| \right\vert} \binom{N}{d(\pmb{\sigma}, \pmb{\sigma}_0)}^{-1/2}, \]
		and the claim follows with an appropriate choice for the constant $C_K$. 
	\end{proof}

\subsection{Green function for growing balls}\label{subsec:rhoN}
We now turn to the behavior of the Green's function on balls, which grow with $ N $. This will require a more detailed analysis of the recursion relation~\eqref{eq:mgrinder}. 
To see what to expect, we first derive an estimate on the Green's function of the full Hamming cube.

\begin{lemma}   For any $N\in \mathbb{N} $, $E < - N = - \| T \| $ and $ \pmb{\sigma}, \pmb{\sigma}_0 \in \mathcal{Q}_N $:
	\be \label{eq:G_bound} 
	G_{N}(\pmb{\sigma}, \pmb{\sigma}_0;E)  \ 
	\le  \  \frac{1}{\vert E +  N\vert}\  \  \left(\frac{N}{\vert E\vert}\right)^d  \ {N \choose d(\pmb{\sigma}, \pmb{\sigma}_0)}^{-1}   \, . 
	\ee 
\end{lemma}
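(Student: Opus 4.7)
My plan is to realise $G_N$ as a Laplace transform of the heat kernel of the continuous-time simple random walk on $\mathcal{Q}_N$, and then reduce the claim to a one-line random-walk tail estimate. For $E < -N = -\|T\|$ the operator $-T-E$ is strictly positive, so one can write $(-T-E)^{-1} = \int_0^\infty e^{-t(-T-E)}\,dt$. Using the tensor-product form $e^{tT}=\prod_{j=1}^N e^{-t\sigma_j^x}$ to expand matrix elements gives, after absorbing an overall factor $(-1)^d$ and a shift by $N$,
\[
\bigl|G_N(\pmb{\sigma}, \pmb{\sigma}_0; E)\bigr| \;=\; \int_0^\infty e^{-t|E+N|}\, p_t(\pmb{\sigma}, \pmb{\sigma}_0)\, dt,
\]
where $p_t := e^{-t(T+N)}$ is the heat kernel of the graph Laplacian $T+N$, equivalently the transition kernel of the continuous-time random walk on $\mathcal{Q}_N$ in which each bit flips independently at rate $1$. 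Since the right-hand side of the stated bound is positive, it suffices to prove it for $|G_N|$.

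I now exploit the product structure of this walk. Bit $j$ has been flipped an odd (respectively even) number of times by time $t$ with probability $\tfrac{1-e^{-2t}}{2}$ (resp.\ $\tfrac{1+e^{-2t}}{2}$), and flips of different bits are independent, hence
\[
p_t(\pmb{\sigma}, \pmb{\sigma}_0) \;=\; \Bigl(\tfrac{1-e^{-2t}}{2}\Bigr)^{\! d}\Bigl(\tfrac{1+e^{-2t}}{2}\Bigr)^{\! N-d}, \qquad d := d(\pmb{\sigma}, \pmb{\sigma}_0),
\]
and $\binom{N}{d}\,p_t(\pmb{\sigma}, \pmb{\sigma}_0) = \mathbb{P}(d_t = d)$ is exactly the distribution of the walker's Hamming distance $d_t$ from $\pmb{\sigma}_0$ at time $t$. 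Introducing an independent exponential time $\mathcal{T} \sim \mathrm{Exp}(|E+N|)$ and multiplying the integral identity above by $|E+N|\binom{N}{d}$ turns the left-hand side into an unconditional probability:
\[
|E+N|\,\binom{N}{d}\,|G_N(\pmb{\sigma}, \pmb{\sigma}_0; E)| \;=\; \mathbb{P}(d_\mathcal{T} = d).
\]

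The last step is a geometric-tail estimate. Each bit flip changes the Hamming distance by exactly $\pm 1$, so if $J$ denotes the total number of flips up to time $\mathcal{T}$ then $d_\mathcal{T} \leq J$, and therefore $\mathbb{P}(d_\mathcal{T}=d) \leq \mathbb{P}(J \geq d)$. Conditional on $\mathcal{T}$, the variable $J$ is $\mathrm{Poisson}(N\mathcal{T})$; integrating out $\mathcal{T} \sim \mathrm{Exp}(|E+N|)$ (standard Gamma--Poisson duality) produces the geometric law
\[
\mathbb{P}(J=j) \;=\; \frac{|E+N|}{|E|}\Bigl(\frac{N}{|E|}\Bigr)^{\! j}, \qquad j=0,1,2,\ldots,
\]
where I used $|E|=N+|E+N|$. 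Summing the tail yields $\mathbb{P}(J \geq d) = (N/|E|)^d$, which combined with the preceding display gives the claimed bound $|G_N| \leq (N/|E|)^d/(|E+N|\binom{N}{d})$.

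I do not anticipate a serious obstacle. The only delicate point is the sign bookkeeping in the heat-kernel representation, where the matrix elements of $e^{tT}$ carry a factor $(-1)^d$ that is absorbed into the equality $G_N = (-1)^d |G_N|$; passing to the Laplacian $T+N$ eliminates the sign cleanly. The estimate $d_\mathcal{T}\leq J$ is deliberately loose since it ignores cancellations from a bit being flipped back, but this slack is exactly the factor $(N/|E|)^d<1$ by which the target improves upon the naive pointwise bound $p_t(d)\leq 1/\binom{N}{d}$ provided by the binomial PMF.
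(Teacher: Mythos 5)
Your proof is correct and takes a genuinely different route than the paper's. You realize $(-T-E)^{-1}$ as the Laplace transform of the heat semigroup of the continuous-time random walk on $\mathcal{Q}_N$, use the explicit product form of the walk's transition kernel, reinterpret $|E+N|\binom{N}{d}\,|G_N|$ as $\mathbb{P}(d_{\mathcal{T}}=d)$ with $\mathcal{T}$ an independent exponential killing time, and bound this by the geometric tail of the total jump count via Gamma--Poisson mixing. The paper instead applies the truncated Neumann identity $(1-X)^{-1}=\sum_{k<d}X^k+X^d(1-X)^{-1}$, observes that $T^k$ has a vanishing $(\pmb{\sigma},\pmb{\sigma}_0)$-entry for $k<d(\pmb{\sigma},\pmb{\sigma}_0)$, passes to a sphere average by radial symmetry, and evaluates the resulting overlap with $\Phi_\emptyset$ using $T\Phi_\emptyset=-N\Phi_\emptyset$. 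Both arguments exploit the tensor-product structure of $T$ on the hypercube---you via the explicit heat kernel, the paper via the constant ground state---and both hinge on spherical averaging. Your version is a bit longer but makes transparent exactly where the estimate is lossy (the majorization $d_{\mathcal{T}}\le J$, which discards cancellation from bits flipped an even number of times), while the paper's computation is purely algebraic and more compact. The sign $(-1)^d$ you track is indeed present in $(-T-E)^{-1}$ as written; bounding $|G_N|$, as you do, yields \eqref{eq:G_bound} and matches the paper's calculation, which is phrased in terms of the positive kernel $\langle\delta_{\pmb{\sigma}}\,|\,(T-E)^{-1}\delta_{\pmb{\sigma}_0}\rangle=|G_N|$.
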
 
\begin{proof}  
	The Neumann series formula readily implies the operator identity
	\begin{equation}\label{eq:opid} \frac{1}{1-X} = \sum_{k=0}^{d-1} X^k  +  X^d\ \frac{1}{1-X}  \end{equation}
	for any operator with $\| X \| < 1$. Setting $d = d(\pmb{\sigma}, \pmb{\sigma}_0)$, we thus obtain
	\[ 	\left\langle \delta_{\pmb{\sigma}} \big\vert (T  - E)^{-1} \delta_{  \pmb{\sigma}_0} \right\rangle = \frac{-1}{E}		\left\langle \delta_{\pmb{\sigma}} \big\vert (1 - T /E)^{-1} \delta_{  \pmb{\sigma}_0}\right\rangle = \frac{1}{E^d}	\left\langle  \delta_{\pmb{\sigma}} \Big\vert \frac{T^d}{ T-E} \  \delta_{  \pmb{\sigma}_0} \right\rangle , \]
	since terms in \eqref{eq:opid} corresponding to $k < d$ vanish.
	Radial symmetry of the Green function yields 
	\begin{align*}    \label{eq:66}
		\left\langle \delta_{\pmb{\sigma}}  \big\vert (T  - E)^{-1} \delta_{  \pmb{\sigma}_0} \right\rangle    &= {N \choose d}^{-1} \sum_{\pmb{\sigma} \in S_d( \pmb{\sigma}_0)}	\left\langle  \delta_{\pmb{\sigma}} \big\vert (T  - E)^{-1} \delta_{  \pmb{\sigma}_0} \right\rangle  \\  \leq {N \choose d}^{-1}  \frac{\sqrt{2^N}}{E^d} \  \ 	&\Big\langle	\Phi_\emptyset \big\vert T^d\, \frac{1}{T-E} \delta_{  \pmb{\sigma}_0} \Big\rangle \ \notag 
		= \   {N \choose d}^{-1}  \left(\frac{N}{\vert E\vert}\right)^d \   
		\  \frac{1}{\vert E\vert-N} \,,  
	\end{align*} 
	where $	\Phi_\emptyset (\pmb{\sigma}) = 2^{-N/2} $ denotes the lowest energy eigenfunction of $T$, and we applied the eigenfunction equation,   $T 	\Phi_\emptyset  \ = \ - N 	\Phi_\emptyset $, in the last step. 
\end{proof} 

A main difference between the small versus large ball behavior of the Green's function is in the  factor $(\sqrt{N}/\vert E\vert)^d$ in~\eqref{eq:glocbound} versus $(N/\vert E\vert)^d$ in~\eqref{eq:G_bound}. In the case of interest where $\vert E\vert $ is of order $ N$, we arrive at a decay of the order $N^{-d/2}$ versus  $e^{-C d}$. \\

There are at least two strategies to derive upper bounds on the Green function
$ G_{\varrho N}(\pmb{\sigma}, \pmb{\sigma}_0;E)  $
for $E < E_N(\varrho) =  - 2\sqrt{\varrho(1-\varrho)}N + o(N) $ and $ 0<\varrho < 1/2 $, cf.~\eqref{eq:tknorm2}. 
The first strategy is to apply the  arguments, which led to \eqref{eq:G_bound} and which yield
\be \label{eq:Bsigma0} 
\ 
G_{\varrho N}(\pmb{\sigma}, \pmb{\sigma}_0;E) \leq \frac{1}{E_N(\varrho) - E}\  \ \left(\frac{E_N(\varrho)}{E }\right)^d  \       \frac{\Psi_\varrho(\pmb{\sigma}_0)}{\Psi_\varrho(\pmb{\sigma})}   \ {N \choose d}^{-1}   \, , 
\ee 
with $ \Psi_\varrho \in \ell^2(B_{\varrho N}( \pmb{\sigma}_0))$  the $ \ell^2$-normalized, positive eigenfunction corresponding to $ E_N(\varrho) $. It then remains to establish a bound on the ratio $ \Psi_\varrho(\pmb{\sigma}_0)/\Psi_\varrho(\pmb{\sigma})$.  
We, however, will instead proceed by an analysis of the factors
$\Gamma_{\rho N}$ defined in \eqref{eq:prodGamma}.

\begin{proposition} \label{lem:freeB}
	Let $ 0 < \varrho  <1/2$, and $ \varepsilon > 0 $. Then for $E \leq E_N(\varrho) - \epsilon N$, all $\pmb{\sigma}\in B_{\rho N}(\pmb{\sigma}_0)$ and all $N$ large enough:
	\begin{equation}  \label{G_bound}
		G_{\varrho N}(\pmb{\sigma}, \pmb{\sigma}_0;E) \  
		\leq \ \frac{1}{\epsilon N} {N \choose d(\pmb{\sigma}_0,\pmb{\sigma})}^{-1/2} \ \  2^{-\min\{d(\pmb{\sigma}_0,\pmb{\sigma}), \, \rho_0(\varrho) N\}}
	\end{equation}
	where $ 0 < \varrho_0(\varrho) < \varrho $ is the unique solution of the equation $2 \sqrt{\varrho(1-\varrho)} = 3 \sqrt{\varrho_0(1-\varrho_0)}$. Moreover, for any fixed $K \in \nn$ there is some $ C_K < \infty $ such that for all $N$ large enough:
	\begin{enumerate}
		\item  for all $\pmb{\sigma} \in S_K(\pmb{\sigma}_0)$:\quad 
		$ \displaystyle 
		G_{\varrho N}(\pmb{\sigma}, \pmb{\sigma}_0;E) \  
		\leq \frac{1}{\varepsilon N} \frac{C_K}{\sqrt{N^K}}  {N \choose d(\pmb{\sigma}_0,\pmb{\sigma})}^{-1/2}.
		$
		\item
		$ \displaystyle \sum_{\pmb{\sigma} \not\in B_K(\pmb{\sigma}_0)} G_{\varrho N}(\pmb{\sigma}, \pmb{\sigma}_0;E)^2 \leq \frac{C_K}{\varepsilon^2 N^{K+2}} $. 
	\end{enumerate}

\end{proposition}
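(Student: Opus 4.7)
The plan is to exploit the Riccati recursion~\eqref{eq:mgrinder} together with the telescopic product~\eqref{eq:prodGamma}. The zeroth factor is controlled by the spectral gap via Corollary~\ref{cor:E}: $\Gamma_{\varrho N}(0;E) = G_{\varrho N}(\pmb{\sigma}_0,\pmb{\sigma}_0;E) \leq (|E|-|E_N(\varrho)|)^{-1} \leq (\varepsilon N)^{-1}$, and the starting point of the backward iteration is $\Gamma_{\varrho N}(\varrho N;E) = \varrho N/|E|$.

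By backward induction on $j \in \{1,\ldots,\varrho N\}$ I establish two complementary bounds, namely (A)~$\Gamma_{\varrho N}(j;E) \leq 2j/|E|$ and (B)~$\Gamma_{\varrho N}(j;E) \leq \sqrt{j/(N-j+1)}$. In each case, the elementary inequality $(N-j)(j+1) \leq \varrho(1-\varrho)N^2 + \mathcal{O}_\varrho(N)$ combined with the hypothesis $|E| \geq (2\sqrt{\varrho(1-\varrho)}+\varepsilon)N - \mathcal{O}_\varrho(1)$ forces the denominator $|E| - (N-j)\Gamma_{\varrho N}(j+1;E)$ of the recursion to stay bounded below by $|E|/2$ for (A) and by $(\sqrt{\varrho(1-\varrho)}+\varepsilon)N$ for (B), from which the inductive step follows.

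Splitting according to $d = d(\pmb{\sigma},\pmb{\sigma}_0)$: in the near regime $d \leq \varrho_0 N$, the stronger inequality $(N-j)(j+1) \leq \varrho_0(1-\varrho_0)N^2 = \tfrac{4}{9}\varrho(1-\varrho)N^2 \leq |E|^2/9$, available for $j \leq \varrho_0 N$, lets me bootstrap (A) to $\Gamma_{\varrho N}(j;E) \leq c\, j/|E|$, where $c$ is the smaller root of $c^2 - 9c + 9 = 0$. Telescoping together with the identity $j!\binom{N}{j} = N!/(N-j)!$ reduces the target to the combinatorial inequality $\binom{N}{d} \geq (c^2/(\varrho(1-\varrho)))^d$; this follows from the entropy estimate $\log_2 \binom{N}{d} \geq N\gamma(d/N) - \mathcal{O}(\log N)$ evaluated at $d = \varrho_0 N$, given that $\varrho_0 < 1/2$ which holds by the definition $\sqrt{\varrho_0(1-\varrho_0)} = (2/3)\sqrt{\varrho(1-\varrho)}$ combined with $\varrho < 1/2$. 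For the far regime $d > \varrho_0 N$, I pick $\pmb{\sigma}' \in S_{\varrho_0 N}(\pmb{\sigma}_0)$ along a geodesic from $\pmb{\sigma}_0$ to $\pmb{\sigma}$, write
\[
G_{\varrho N}(\pmb{\sigma},\pmb{\sigma}_0;E) \;=\; G_{\varrho N}(\pmb{\sigma}',\pmb{\sigma}_0;E) \prod_{j=\varrho_0 N+1}^{d} \Gamma_{\varrho N}(j;E),
\]
apply bound (B), and use the telescoping $\prod_{j=\varrho_0 N+1}^{d} j/(N-j+1) = \binom{N}{\varrho_0 N}/\binom{N}{d}$ to conclude $(\varepsilon N)^{-1}\binom{N}{d}^{-1/2} 2^{-\varrho_0 N}$.

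The refinements (1) and (2) then follow by specialization to fixed $K$. For (1), the product $\prod_{j=1}^{K} 2j/|E| = 2^K K!/|E|^K$ combined with $|E| \geq c_\varrho N$ and $\binom{N}{K}^{-1/2} \sim \sqrt{K!}\,N^{-K/2}$ delivers the claimed $C_K N^{-K/2}$ prefactor. For (2), the dominant term in $\sum_{d>K}\binom{N}{d} G_{\varrho N}(\pmb{\sigma},\pmb{\sigma}_0;E)^2$ is at $d=K+1$, contributing $\mathcal{O}((\varepsilon^2 N^{K+3})^{-1})$; the partial sum up to $\varrho_0 N$ decays geometrically, and the tail beyond $\varrho_0 N$ is exponentially small. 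The main obstacle is the near-regime entropy comparison: bootstrapping (A) tightly enough that the effective constant $c$ still leaves room for $\binom{N}{\varrho_0 N}$ to absorb both $(\varrho(1-\varrho))^{-d}$ and $c^{2d}$, and the $3/2$ ratio built into the definition of $\varrho_0$ is precisely what makes this margin strictly positive.
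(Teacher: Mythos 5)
Your approach is correct and, at heart, the same as the paper's — both analyze the Riccati recursion~\eqref{eq:mgrinder} in two regimes split at $\varrho_0 N$ — but you work in the unnormalized variable $\Gamma_{\varrho N}$, whereas the paper first renormalizes to $\hat\Gamma_{\varrho N}(d;E) = \sqrt{(N-d)/d}\,\Gamma_{\varrho N}(d;E)$ so that the factor $\binom{N}{d}^{-1/2}$ is built into the recursion. Concretely, your bound (B) is exactly the paper's $\hat\Gamma_{\varrho N}\leq 1$, and your bootstrapped bound $\Gamma_{\varrho N}(j;E)\leq c\,j/|E|$ (with $c\approx 1.15$ the smaller fixed point of $a\mapsto 1/(1-a/9)$) corresponds to the paper's $\hat\Gamma_{\varrho N}\leq \tfrac12$ on $[1,\varrho_0 N]$; indeed it gives $\hat\Gamma_{\varrho N}\leq c/3\approx 0.38$. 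The price you pay for keeping $\Gamma$ unnormalized is that extracting $\binom{N}{d}^{-1/2}2^{-d}$ from $c^d d!/|E|^d$ requires a separate entropy comparison, which the paper's parametrization avoids. Both routes yield the same content and the same constants.

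Two points should be tightened. (i)~The bootstrap iterates converge to $c$ \emph{from above}: starting from $a\leq 2$, the sequence $a_j$ generated by $a_j = 1/(1-a_{j+1}/9)$ is decreasing with limit $c$, so the honest statement is $\Gamma_{\varrho N}(j;E)\leq a_j\, j/|E|$ with $a_j > c$; this is harmless because $\sum_j (a_j-c)$ converges geometrically and hence $\prod_j a_j = \mathcal{O}(c^d)$, but that observation must be made (alternatively, replace $c$ by any $c'\in(c,2)$, say $c'=3/2$, which the recursion preserves after one iteration). (ii)~The combinatorial inequality $\binom{N}{d}\geq (c^2/(\varrho(1-\varrho)))^d$ must be established for \emph{all} $d\in\{1,\dots,\varrho_0 N\}$, not just at the endpoint. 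This does follow because $\gamma(x)/x$ is decreasing on $(0,1)$, so $d=\varrho_0 N$ is the worst case (for $d$ bounded the inequality is trivial since the left side grows polynomially in $N$), but you should say so, and the endpoint margin deserves a line: the crude bound $-\ln\varrho_0 \geq \ln(c^2/(\varrho(1-\varrho)))$ reduces via $\varrho(1-\varrho)=\tfrac94\varrho_0(1-\varrho_0)$ to $\tfrac94(1-\varrho_0)\geq c^2\approx 1.32$, which holds because $\varrho_0(1-\varrho_0)\leq\tfrac{1}{9}$ forces $\varrho_0\leq 0.13$. (Also, the entropy estimate should be stated in natural log to match~\eqref{eq:binaryent}, not $\log_2$.)
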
 
\begin{proof}
	It is convenient to separate the combinatorial factor ${N \choose d(\pmb{\sigma}_0,\pmb{\sigma})}^{-1/2}$ and study
	\begin{equation}\label{eq:renorm}
		\hat{G}_{\varrho N}(\pmb{\sigma}, \pmb{\sigma}_0;E) \coloneqq {N \choose d(\pmb{\sigma}_0,\pmb{\sigma})}^{1/2} G_{\varrho N}(\pmb{\sigma}, \pmb{\sigma}_0;E), \quad \hat{G}_{\varrho N}(\pmb{\sigma}, \pmb{\sigma}_0;E) \  = \ 
		\prod_{d=0}^{d(\pmb{\sigma}, \pmb{\sigma}_0)} \hat{\Gamma}_{\varrho N}(d;E) .
	\end{equation}
	By direct inspection of \eqref{eq:renorm} one obtains the relation $\hat{\Gamma}_{\varrho N}(d;E) = \sqrt{\frac{N-d}{d}} \ \Gamma_{\varrho N}(d;E)$ for $ d \geq 1 $, which in turn implies the recursive relation
	\begin{align}
		\hat{\Gamma}_{\varrho N}(d;E) &= \frac{1}{\frac{\vert E\vert}{V(d)} -m(d) \hat{\Gamma}_{\varrho N}(d;E)  } \quad \text{ for } 1 \leq d \leq \varrho N \label{eq:recur} \\
		\text{with }\quad  V(d) &\coloneqq \sqrt{d(N-d)}, \quad m(d) \coloneqq \sqrt{\frac{(d+1)(N-d)}{d(N-d+1)}}  \notag \\
		\quad  \text{and} &\quad \hat{\Gamma}_{\varrho N}(\varrho N + 1;E) = 0, \quad  \hat{\Gamma}_{\varrho N}(0;E) = \Gamma_{\varrho N}(0;E) = 	G_{\varrho N}(\pmb{\sigma}_0, \pmb{\sigma}_0;E) . 
		\notag 
	\end{align}
	We will now analyze the solution of these recursive relations.
	
	We first claim that for all $N$ large enough:
	\begin{equation}\label{eq:leq1}
		\hat{\Gamma}_{\varrho N}(d;E) \leq 1 \text{ for all  } d \in [\varrho_0 N, \varrho N].	
	\end{equation}
	This is proven by induction on $ d $ starting from $ d= \varrho N+1 $, where it trivially holds. 
	For the induction step from $d+1$ to $d$, we recall that $E_N(\varrho) = -2 \sqrt{\varrho(1-\varrho)} +o_\varrho(N)$  
	from~\eqref{eq:tknorm2}. The monotonicity of $ V(d) $ and $ m(d) $ then implies that for all $\varrho_0 N \leq d \leq \varrho N$ and all $N$ large enough:
	\[ \frac{\vert E\vert}{V(d)} \geq 2 + \frac{\epsilon}{2\sqrt{\varrho(1-\varrho)}}, \quad m(d) \leq  m(\varrho_0 N) = \sqrt{\frac{1+ 1/(\varrho_0 N)}{1-1/(\varrho N)} } 
	= 1 + \mathcal{O}_\varrho(N^{-1}) . \]
	Inserting these estimates into the recursion relation~\eqref{eq:recur}, the claimed inequality \eqref{eq:leq1} follows. 
	
	We now control the recursion relation in the regime $ 1 \leq d \leq \varrho_0 N $. To this end, note that the definition of $\varrho_0$ implies that for any $ d \leq \rho N$ and $N$ large enough:
	$ \vert E\vert/ V(d)  \geq 3 + \epsilon/ (2\sqrt{\rho(1-\rho)})$.
	Using $\hat{\Gamma}_{\varrho N}(\varrho_0 N +1;E) \leq 1$ one readily establishes $\hat{\Gamma}_{\varrho N}(d ;E) \leq \frac12$ inductively as long as $m(d) \leq 2$. The monotonicity $ m(d) \leq m(1) = \sqrt{2} \ (1+ \mathcal{O}(N^{-1})) $  implies that this is true for any $d \geq 1$ at sufficiently large $ N $.  
	The proof of the claimed exponential decay \eqref{G_bound} is then completed using the trivial norm bound	\[ \hat{\Gamma}_{\varrho N}(0;E) =  	G_{\varrho N}(\pmb{\sigma}_0, \pmb{\sigma}_0;E) \leq \left\| (T_{\varrho N} - E)^{-1} \right\| = \dist(E,\spec(T_{\varrho N}))^{-1} \leq \frac{1}{\epsilon N}.  \]
	
	Let us finally consider the case of fixed integers $K$. Note that for any $K \geq 1$ we know by the above $\hat{\Gamma}_{\varrho N}(K+1 ;E) \leq 1/2$. The recursion relation \eqref{eq:recur} then yields for any $ 1 \leq d \leq K$
	\[ \hat{\Gamma}_{\varrho N}(d ;E) \leq \frac{d_K}{\sqrt{N}}  \]
	with some constants $d_K = d_K(\rho)$. This completes the proof of the first item. For the second item we organize the summation into sums over spheres of radius greater or equal to $ K+1 $:
	\[ \begin{split} &
		\sum_{\pmb{\sigma} \not\in B_K(\pmb{\sigma}_0)} G_{\varrho N}(\pmb{\sigma}, \pmb{\sigma}_0;E)^2  \\& \quad = \prod_{d=0}^K  \hat{\Gamma}_{\varrho N}(d ;E)^2 \left( \sum_{D=K+1}^{\varrho_oN}  \prod_{d=K+1}^D  \hat{\Gamma}_{\varrho N}(d ;E)^2 +  \sum_{D=\varrho_oN}^{\varrho N}   \prod_{d=K+1}^D  \hat{\Gamma}_{\varrho N}(d ;E)^2 \right).
	\end{split}	\]
	The product in the prefactor is estimated by $ C_K /(\varepsilon^2 N^{K+2}) $ using the first item.
	The second product is dominated by $ 4^{K - D} $ such that the summation over $ D \geq K+1 $ is bounded by a geometric series. The last product is bounded by $  4^{K - \varrho_0 N} $ such that the sum is bounded trivially by this exponential factor times $ \varrho N $. This completes the proof.%
\end{proof}

The decay established in Proposition~\ref{lem:freeB}  for fixed distance $ K $ to the center of the ball agrees in its dependence on $ N $ with the result of Proposition~\ref{prop:greenfix}.
Moreover, the rough decay estimate \eqref{G_bound} is 'qualitatively correct' in the sense that we expect an estimate of the form 
\[ 	G_{\varrho N}(\pmb{\sigma}, \pmb{\sigma}_0;E) \  
\leq \ \frac{1}{\epsilon N} {N \choose d(\pmb{\sigma}_0,\pmb{\sigma})}^{-1/2} \ \  e^{- L(E,\varrho,d(\pmb{\sigma}_0,\pmb{\sigma})) N} \]
with some positive function $L(E,\varrho,d(\pmb{\sigma}_0,\pmb{\sigma}))$. However, it is clear from the proof of Proposition~\ref{lem:freeB} that we did not attempt to derive a sharp bound for $L$ as it requires a more elaborate analysis of the factors $\hat{\Gamma}_{\varrho N}(d ;E)$.

\section{Delocalization regime}\label{sec:deloc}

%
\subsection{Spectral concentration}

The analysis of the low-energy spectrum in the paramagnetic phase is based on the Schur complement method~\cite[Theorem 5.10]{AW15} for which we define the spectral projections for $\epsilon \in (0,1)$
\begin{equation}
	Q_\varepsilon :=  \mathbbm{1}_{(- \varepsilon N, \varepsilon N)}(T) \, \qquad  P_\varepsilon :=  1- Q_\varepsilon  ,
\end{equation}
which separate eigenstates of $T$ with energies at the center of its spectrum from the edges. Here and in the following, $ \mathbbm{1}(\cdot) $ stands for the indicator function. 
A Chernoff bound shows that the dimension of the range of $ P_\varepsilon $ is only an exponential fraction of the total dimension of Hilbert space:
\begin{equation}\label{eq:dimcheb}
	\dim P_\varepsilon =   \sum_{\vert k -   \frac{N}{2}\vert >\frac{\varepsilon N}{2}} \binom{N}{k} \ \leq \ 2^{N+1} \, e^{-\varepsilon^2 N/2 } \, . 
\end{equation}
The exact asymptotics  of 	$\dim P_\varepsilon$ is in fact well-known, $  \ln \dim P_\varepsilon = (\gamma(\frac{1-\varepsilon}{2}) +o(1))N  $,  in terms of the binary entropy  $\gamma $ defined in~\eqref{eq:binaryent}. 

The following spectral concentration bound expresses the exponential smallness of the projection of symmetric random multiplication operators to the above subspace. It will be our main working horse  in the paramagnetic phase. 
\begin{proposition}\label{prop:PUP}
	Let $\epsilon > 0$ and $ W(\pmb{\sigma}) $, $\pmb{\sigma} \in Q_N $, be  independent and identically distributed random variables  such that
	\begin{enumerate}
		\itemsep.3ex
		\item[i.] the mean is zero, $ \mathbb{E}\left[W(\pmb{\sigma})\right] = 0 $,	
		\item[ii.] the variance of $W(\pmb{\sigma})$ is bounded by one, i.e. $ \mathbb{E}\left[W(\pmb{\sigma})^2\right] \leq1 $, and
		\item[iii.] $W$ is bounded , i.e. $ \| W \|_\infty \leq M_N $ with some $ M_N <\infty$, and $M_N^2 N\, \dim P_\varepsilon / 2^N \leq  1$.
	\end{enumerate}
	Then there are (universal) constants $ c, C \in (0,\infty) $ such for any  $ \lambda > 0 $:
	\begin{equation}\label{eq:PUP}
		\mathbb{P}\left( \| P_\varepsilon W P_\varepsilon \| - \mathbb{E}\left[ \| P_\varepsilon W P_\varepsilon \| \right] > \lambda \,  \sqrt{\frac{\dim P_\varepsilon}{2^N}} \right) \leq \ C e^{- c \lambda^2}.
	\end{equation}
	Moreover, we have the following bound: 
	\begin{equation}
		\mathbb{E}\left[ \| P_\varepsilon W P_\varepsilon \|  \right] \ \leq \ C\, \sqrt{N} \,  \sqrt{ \frac{\dim P_\varepsilon}{2^N}} \, . 
	\end{equation}
\end{proposition}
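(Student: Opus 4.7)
The plan is to combine an $\ell^\infty$-delocalization estimate on $\operatorname{ran} P_\varepsilon$ with standard concentration tools. The crucial observation is that, since the Hadamard eigenbasis $\{\Phi_A\}$ of $T$ satisfies $|\Phi_A(\pmb{\sigma})|^2 = 2^{-N}$ pointwise, any unit vector $\psi \in \operatorname{ran} P_\varepsilon$, expanded as $\psi = \sum_{A} c_A \Phi_A$ over the index set of $P_\varepsilon$, obeys by Cauchy--Schwarz
\[ \| \psi \|_\infty^2 \, \leq \, \frac{\dim P_\varepsilon}{2^N} \qquad \text{and therefore} \qquad \| \psi \|_4^4 \, \leq \, \frac{\dim P_\varepsilon}{2^N} . \]
Combined with the variational identity $ \| P_\varepsilon W P_\varepsilon \| = \sup_\psi |\langle \psi, W \psi\rangle| = \sup_\psi | \sum_{\pmb{\sigma}} |\psi(\pmb{\sigma})|^2 \, W(\pmb{\sigma}) |$, this implies that the map $ f(W) := \| P_\varepsilon W P_\varepsilon \| $, regarded as a function of the independent coordinates $ W(\pmb{\sigma}) \in \rr^{2^N} $, is convex (as a supremum of linear functionals in $W$) and $L$-Lipschitz in the Euclidean norm with $ L = \sqrt{\dim P_\varepsilon / 2^N} $.

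For the expectation bound I would apply the Matrix Bernstein inequality to the decomposition $ P_\varepsilon W P_\varepsilon = \sum_{\pmb{\sigma}} W(\pmb{\sigma}) X_{\pmb{\sigma}} $ with $ X_{\pmb{\sigma}} := P_\varepsilon | \delta_{\pmb{\sigma}} \rangle \langle \delta_{\pmb{\sigma}} | P_\varepsilon $. The identities $ \langle \delta_{\pmb{\sigma}}, P_\varepsilon \delta_{\pmb{\sigma}} \rangle = \dim P_\varepsilon / 2^N $, $ X_{\pmb{\sigma}}^2 = (\dim P_\varepsilon / 2^N) X_{\pmb{\sigma}} $ and $ \sum_{\pmb{\sigma}} X_{\pmb{\sigma}} = P_\varepsilon $, together with assumption (ii), yield a variance proxy bounded by $ \dim P_\varepsilon / 2^N $, while the entry bound reads $ \| W(\pmb{\sigma}) X_{\pmb{\sigma}} \| \leq M_N \, \dim P_\varepsilon / 2^N $. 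Since $ \log \dim P_\varepsilon \lesssim N $ and assumption (iii) forces $ M_N \sqrt{N \dim P_\varepsilon / 2^N} \leq 1 $, both contributions to the Matrix Bernstein bound collapse to order $ \sqrt{N \dim P_\varepsilon / 2^N} $, producing the stated expectation bound.

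For the deviation bound I would invoke Talagrand's concentration inequality for convex Lipschitz functions of independent bounded random variables, yielding
\[ \pp \bigl( | f - \operatorname{med} f | > s \bigr) \, \leq \, C \exp\!\Bigl( - \frac{c s^2}{M_N^2 L^2} \Bigr) . \]
Specializing to $ s = \lambda L $ turns this into $ C \exp(- c \lambda^2 / M_N^2) $, and assumption (iii) absorbs the $ M_N^2 $ factor to deliver the sub-Gaussian tail $ C e^{- c \lambda^2 } $. The replacement of the median by the mean is then standard, using the tail itself to bound $ | \operatorname{med} f - \Ee f | \lesssim L $.

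The main obstacle -- and the genuinely non-classical ingredient -- is the $\ell^\infty$-delocalization estimate on $\operatorname{ran} P_\varepsilon$: it is only because of the perfect uniformity of the Hadamard basis, a feature particular to the Hamming cube, that the Euclidean Lipschitz constant $L$ of $f$ is as small as $\sqrt{\dim P_\varepsilon/2^N}$. Without this input, neither the Matrix Bernstein expectation nor the Talagrand tail would reach the exponentially small scale demanded by the proposition; the two concentration ingredients themselves are classical, and their hypotheses are neatly matched by (i)--(iii).
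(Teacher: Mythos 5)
Your overall strategy matches the paper's: the deviation bound is proven via Talagrand's convex-Lipschitz concentration after establishing that $F(W)=\|P_\varepsilon W P_\varepsilon\|$ is convex with Euclidean Lipschitz constant $L=\sqrt{\dim P_\varepsilon/2^N}$, and the expectation bound via Matrix Bernstein with the same decomposition $P_\varepsilon W P_\varepsilon = \sum_{\pmb{\sigma}} W(\pmb{\sigma})\,P_\varepsilon|\delta_{\pmb{\sigma}}\rangle\langle\delta_{\pmb{\sigma}}|P_\varepsilon$. Your derivation of the $\ell^\infty$ (hence $\ell^4$) delocalization on $\operatorname{ran} P_\varepsilon$ via Cauchy--Schwarz in the Hadamard basis is a clean equivalent of the paper's computation of $\langle\delta_{\pmb{\sigma}}|P_\varepsilon\delta_{\pmb{\sigma}}\rangle = \dim P_\varepsilon/2^N$; both yield $L=\sqrt{\dim P_\varepsilon/2^N}$. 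The Matrix Bernstein step (variance proxy $\leq L^2$, entry bound $\leq M_N L^2$, $\log d \lesssim N$) reproduces the paper's computation exactly.

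However, there is a genuine error in your treatment of the Talagrand deviation bound. You correctly observe that Talagrand's inequality for independent coordinates bounded by $M_N$ yields a tail of the form $C\exp\bigl(-cs^2/(M_N^2 L^2)\bigr)$, so that at $s=\lambda L$ the bound is $C\exp(-c\lambda^2/M_N^2)$. You then assert that \emph{``assumption (iii) absorbs the $M_N^2$ factor.''} This is false. Assumption (iii) reads $M_N^2\, N\, (\dim P_\varepsilon/2^N)\leq 1$, i.e.\ $M_N^2 \leq 2^N/(N \dim P_\varepsilon)$, which is an \emph{upper} bound allowing $M_N$ to be exponentially large, not a statement that $M_N=\mathcal O(1)$. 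Inserting (iii) into your tail gives $C\exp\bigl(-c\lambda^2\,N\dim P_\varepsilon/2^N\bigr)$, which tends to $C$ as $N\to\infty$ (since $\dim P_\varepsilon/2^N\to 0$), i.e.\ it is essentially trivial — far from the required $Ce^{-c\lambda^2}$. In the very application of the proposition (Corollary~\ref{cor:pup}), one takes $W'=W/\sqrt N$ with $\|W'\|_\infty\lesssim\sqrt N$, so $M_N\sim\sqrt N$, and your proposed tail at $\lambda=\sqrt N$ would be $\exp(-cN/M_N^2)=\exp(-\mathcal O(1))$, a constant, whereas the corollary requires probability $1-e^{-N/C}$.

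The gap, in short: the coordinate-bound factor $M_N$ coming from Talagrand's inequality for bounded random variables does not cancel out the way you claim; it must either be bypassed (e.g.\ by a truncation/subgaussian argument, or by a version of Talagrand adapted to variance-one coordinates rather than sup-bounded ones), or accounted for by working at fluctuation scale $\lambda L M_N$ and then re-examining the downstream applications. The paper's own proof is terse on this point — it simply invokes Talagrand after exhibiting the Lipschitz constant $L$, without discussing the coordinate bound $M_N$ — so do not take the paper's brevity as confirmation that your reconciliation is sound; the specific claim you make about assumption (iii) is incorrect, and the step needs a different argument.
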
 
\begin{proof}
	The first statement follows from Talagrand's concentration inequality \cite{Tal95} (see also~\cite[Thm. 2.1.13]{Tao}) by considering  $ F: \mathbb{R}^{Q_N} \to \mathbb{R} $ given by $ F(W) :=  \| P_\varepsilon W P_\varepsilon \|  $. We need to show that $F$ is Lipschitz continuous and convex. 
	Convexity, i.e., $ F(\alpha W + (1-\alpha) W') \leq \alpha F(W) + (1-\alpha) F(W') $ for all $ \alpha \in [0,1] $, is evident from the triangle inequality.
	To establish the Lipschitz continuity, let $W,W^\prime \in \mathbb{R}^{Q_N}$ and  $ \psi \in P_\varepsilon \ell^2(Q_N) $ with $ \| \psi \| = 1 $ be such that
	$ \| P_\varepsilon (W-W^\prime) P_\varepsilon \| = \langle \psi , (W-W^\prime) \psi \rangle $. Then one has
	\begin{align*}
		\left\vert F(W) - F(W') \right\vert \leq   & \  \langle \psi , (W-W^\prime)  \psi \rangle \notag = \sum_{\pmb{\sigma}} \vert\psi(\pmb{\sigma})\vert^2 (W(\pmb{\sigma})-W^\prime(\pmb{\sigma})) \\
		\leq  \ \| W - W' \|_2 &\| \psi \|_4^2  \ \leq  \| W - W' \|_2  \| \psi \|_\infty 
		\leq  \max_{\pmb{\sigma}} \sqrt{\langle \delta_{\pmb{\sigma}}\vert P_\varepsilon   \delta_{\pmb{\sigma}} \rangle} \; \| W - W' \|_2 \, . 
	\end{align*}
	The first estimate is the triangle inequality. The next two estimates are special cases of H\"older's inequality, in which we also use $ \| \psi \| = 1 $. The last estimate results from the Cauchy-Schwarz inequality applied to $   \| \psi \|_\infty =\max_{\pmb{\sigma}} \vert\langle P_\varepsilon\delta_{\pmb{\sigma}}\vert  \psi \rangle \vert $ and the fact that 
	$ \|P_\varepsilon\delta_{\pmb{\sigma}} \|  =   \sqrt{\vert\langle P_\varepsilon  \delta_{\pmb{\sigma}}\vert  \delta_{\pmb{\sigma}} \rangle \vert } $. 
	Since by symmetry for any $ \pmb{\sigma} \in \mathcal{Q}_N $:
	\begin{equation}\label{eq:sigmaP}
		\langle\delta_{\pmb{\sigma}} \vert P_\varepsilon  \delta_{\pmb{\sigma}} \rangle = \frac{\dim P_\varepsilon}{2^N} \, ,
	\end{equation}
	we conclude that $ F $ is Lipschitz 
	with constant $ 2^{-N/2 } \, \sqrt{\dim P_\varepsilon } $. This finishes the proof of~\eqref{eq:PUP}. 
	
	The second statement is derived from the matrix Bernstein inequality \cite{Oliv10,Tropp15}. For its application, we note that the matrix under consideration is a sum of independent random matrices,
	$$
	P_\varepsilon W P_\varepsilon = \sum_{\pmb{\sigma}} S(\pmb{\sigma}) \, , \qquad \mbox{with $ S(\pmb{\sigma}) := \frac{\dim P_\varepsilon}{2^N}  \ W(\pmb{\sigma}) \  \vert \psi(\pmb{\sigma}) \rangle \langle  \psi(\pmb{\sigma})  \vert$, }
	$$
	where $  \vert \psi(\pmb{\sigma}) \rangle \langle  \psi(\pmb{\sigma})  \vert $ denotes the rank-one projection onto the  vector $  \psi(\pmb{\sigma})  := \sqrt{ \frac{2^N}{\dim P_\varepsilon}} \ P_\varepsilon \delta_{ \pmb{\sigma} } $, which in view of~\eqref{eq:sigmaP} is normalised. 
	By assumption the matrices $ S(\pmb{\sigma}) $ are centred, $ \mathbb{E}\left[ S(\pmb{\sigma})\right] = 0 $, and bounded
	$$
	\left\| S(\pmb{\sigma}) \right\| \leq M_N  \frac{\dim P_\varepsilon}{2^N}  \leq  \sqrt{\frac{\dim P_\varepsilon}{N \, 2^N} } .
	$$
	The mean variance matrix of $ P_\varepsilon W P_\varepsilon $ is 
	$$
	\sum_{\pmb{\sigma}} \mathbb{E}\left[S(\pmb{\sigma})^2\right] = \left(\frac{\dim P_\varepsilon}{2^N}\right)^2 \sum_{\pmb{\sigma}} \mathbb{E}\left[W(\pmb{\sigma})^2\right]  \vert \psi(\pmb{\sigma}) \rangle \langle  \psi(\pmb{\sigma})  \vert \leq  \frac{\dim P_\varepsilon}{2^N} \ P_\varepsilon .
	$$
	The last inequality follows from the assumption, $  \mathbb{E}\left[W(\pmb{\sigma})^2\right] \leq 1 $, as well as the fact that $( \delta_{\pmb{\sigma}} )$ form an orthonormal basis.  
	Consequently, \cite[Thm.~6.6.1]{Tropp15} together with the trivial bound, $  \dim P_\varepsilon \leq 2^{N}$, on the dimension of the matrices implies
	$$
	\mathbb{E}\left[ \left\| P_\varepsilon W P_\varepsilon \right\| \right] \leq \left( \sqrt{2 \ln 2^{N+1}} + \frac{\ln 2^{N+1}}{3\sqrt{N}} \right) \sqrt{\frac{\dim P_\varepsilon}{2^N} },
	$$
	which completes the proof. 
\end{proof}
Alternatively to Talagrand's concentration inequality, the concentration of measure part of the matrix Bernstein inequality~\cite[Thm.~6.6.1]{Tropp15} would also have been sufficient for proving a slightly less sharp upper bound on the upper tail of the large-deviation probability~\eqref{eq:PUP}.

As an application, we state the following straightforward corollary. Its assumptions are tailored to fit in particular the case of the REM.

\begin{corollary}\label{cor:pup}
	Suppose that  $ W(\pmb{\sigma}) $, $ \pmb{\sigma} \in Q_N $ are i.i.d. random variables  which are
	\begin{itemize}
		\item[i.] mean zero with variance  $ w_N:= \mathbb{E}\left[W(\pmb{\sigma})^2\right] \leq N$ and obey a moment bound $  \mathbb{E}\left[W(\pmb{\sigma})^{8}\right] \leq c \, N^4 $ for some $ c < \infty $. 
		\item[ii.] linearly bounded in the sense that there is some $c < \infty $ such that $  \| W \|_\infty \ \leq \ c\,  N  $.
	\end{itemize}
	Then, there is some $ C \in (0,\infty) $ such that for any $ \tau \in (0,1) $ there are events $ \Omega_{N,\tau} $ with 
	\be\label{eq:defOmega} \pp( \Omega_{N,\tau}) \geq 1 - e^{-N/C} 
	\ee
	such that for all sufficiently large $ N $ and  at $ \varepsilon = {N}^{\frac{\tau-1}{2}} $:
	\begin{align}
		&  \left\| P_\varepsilon W P_\varepsilon \right\| \ \leq  \, C \, N \, e^{-N^{\tau} /4} \, , \label{eq:cor1} \\
		&  \left\| P_\varepsilon (W^2- w_N) P_\varepsilon \right\| \ \leq \  C \, N^{\frac{3}{2}}  \, e^{-N^{\tau} /4} \, , \label{eq:cor2} \\
		& \left\| P_\varepsilon W^p  P_\varepsilon \right\| \ \leq \  C N^{\frac{p}{2}} \quad \mbox{for all $ p \in [1,4] $.} \label{eq:cor3} 
	\end{align}
\end{corollary}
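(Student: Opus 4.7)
The plan is to reduce all three bounds to Proposition~\ref{prop:PUP} after an appropriate rescaling of the random diagonal operator. Two quantitative inputs are used throughout: at $\varepsilon = N^{(\tau-1)/2}$, the Chernoff estimate~\eqref{eq:dimcheb} yields $\dim P_\varepsilon / 2^N \leq 2\, e^{-N^\tau/2}$, so that the scale on the right-hand side of~\eqref{eq:PUP} satisfies $\sqrt{\dim P_\varepsilon/2^N}\leq \sqrt{2}\, e^{-N^\tau/4}$; and choosing $\lambda^2$ of order $N$ turns the deviation inequality into a good event of probability at least $1 - Ce^{-cN}$. The event $\Omega_{N,\tau}$ will be the intersection of the three good events constructed below, one for each of the three bounds, and~\eqref{eq:defOmega} then follows by a union bound.

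For~\eqref{eq:cor1}, I would apply Proposition~\ref{prop:PUP} to $\tilde W := W/\sqrt{N}$, which has mean zero, variance $w_N/N\leq 1$, and sup-norm $\leq c\sqrt{N}$; hypothesis~(iii) thus reads $(c\sqrt{N})^2 N \cdot \dim P_\varepsilon/2^N = O(N^2 e^{-N^\tau/2})\leq 1$ for $N$ large enough. Combining the expectation estimate of Proposition~\ref{prop:PUP} with the deviation inequality at $\lambda\sim\sqrt{N}$ yields $\|P_\varepsilon\tilde W P_\varepsilon\|\leq C\sqrt{N}\, e^{-N^\tau/4}$ on a good event, and rescaling by $\sqrt{N}$ produces~\eqref{eq:cor1}. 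Bound~\eqref{eq:cor2} follows from the same template applied to the i.i.d.\ mean-zero process $V(\pmb\sigma):=W(\pmb\sigma)^2-w_N$: Cauchy--Schwarz and hypothesis~(i) give variance $\mathbb{E}[V^2]\leq \mathbb{E}[W^4]\leq \sqrt{\mathbb{E}[W^8]}\leq \sqrt{c}\,N^2$, so the rescaled operator $V/(c^{1/4}N)$ has variance $\leq 1$ and sup-norm of order $N$, again satisfying~(iii) for large $N$.

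For~\eqref{eq:cor3} it suffices to establish the $p=4$ case, which follows from the same template applied to $W^4-\mathbb{E}[W^4]$: its variance is bounded by $\mathbb{E}[W^8]\leq c\,N^4$, so the resulting fluctuation is $O(N^{5/2} e^{-N^\tau/4}) = o(N^2)$ and is dwarfed by the mean $\mathbb{E}[W^4]\leq \sqrt{c}\,N^2$, giving $\|P_\varepsilon W^4 P_\varepsilon\|\leq C\,N^2$. The remaining $p\in[1,4]$ then follow by a Jensen interpolation: for any unit $\psi\in\operatorname{ran}P_\varepsilon$,
\[
\bigl|\langle\psi, W^p\psi\rangle\bigr|\leq \sum_{\pmb\sigma}|\psi(\pmb\sigma)|^2 |W(\pmb\sigma)|^p \leq \Bigl(\sum_{\pmb\sigma}|\psi(\pmb\sigma)|^2 W(\pmb\sigma)^4\Bigr)^{p/4}\leq \|P_\varepsilon W^4 P_\varepsilon\|^{p/4},
\]
using concavity of $x\mapsto x^{p/4}$ on $[0,\infty)$ and the fact that $|\psi|^2$ is a probability measure; taking the supremum in $\psi$ yields~\eqref{eq:cor3}. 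The only obstacle in the proof is bookkeeping: verifying hypothesis~(iii) of Proposition~\ref{prop:PUP}, $M_N^2 N \dim P_\varepsilon/2^N \leq 1$, in each rescaling. The most demanding instance is $p=4$, where $M_N = O(N^2)$, so the required inequality is of the form $O(N^5 e^{-N^\tau/2})\leq 1$, which still holds eventually because $\tau>0$.
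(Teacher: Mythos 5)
Your proposal matches the paper's own argument essentially step for step: you rescale $W$, $W^2-w_N$, and $W^4-\mathbb{E}[W^4]$ to satisfy the hypotheses of Proposition~\ref{prop:PUP}, apply it at $\lambda\sim\sqrt N$, use the Chernoff bound $\dim P_\varepsilon\leq 2^{N+1}e^{-N^\tau/2}$ to get the scale $e^{-N^\tau/4}$ and to check hypothesis~(iii), intersect the three good events, and finish $p\in[1,4]$ by Jensen from the $p=4$ case. The only difference is that you spell out the intermediate checks (e.g.\ bounding $\mathrm{Var}(W^4)$ via $\mathbb{E}[W^8]$, and the bookkeeping for hypothesis~(iii) at $p=4$) that the paper leaves implicit; this is a faithful reproduction of the intended proof.
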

\begin{proof}
	The proof of these inequalities follows by three applications of Proposition~\ref{prop:PUP} with different $  W^\prime $ always at the same $ \lambda = \sqrt{N} $. We note that our choice   $  \varepsilon = {N}^{\frac{\tau-1}{2}}$ implies by \eqref{eq:dimcheb}  $ \dim P_\varepsilon \leq 2^{N+1} e^{-N^\tau /2} $. This in turn yields for any polynomial $M_N$ and $N$ large enough $M_N^2 N \dim P_\varepsilon / 2^N \leq 1$, which indeed checks one of the assumptions of Proposition~\ref{prop:PUP}.
	We then construct three events  $ \Omega_{N,\tau}^{(j)} $ with $ j \in \{ 1,2,3 \} $ each with probability $ \pp(  \Omega_{N,\tau}^{(j)}) \geq 1-  3^{-1} e^{-N/C} $ with some (universal) $ C < \infty $ and all $ N $ large enough. 
	Their intersection $  \Omega_{N,\tau} \coloneqq  \Omega_{N,\tau}^{(1)} \cap  \Omega_{N,\tau}^{(2)} \cap  \Omega_{N,\tau}^{(3)}$ then defines the required events. 
	
	More specifically, for a proof of~\eqref{eq:cor1}, we take $ W^\prime(\pmb{\sigma}) = W(\pmb{\sigma}) /\sqrt{N} $. The event $ \Omega_{N,\tau}^{(1)} $ on which~\eqref{eq:cor1} then satisfies the required probability estimate.  \\
	The proof of~\eqref{eq:cor2} follows again from Proposition~\ref{prop:PUP} with $ W^\prime(\pmb{\sigma}) = c^{-1/4} \, (W(\pmb{\sigma})^2-w_N) /N $ and the prefactor ensuring $ \mathbb{E}\left[W^\prime(\pmb{\sigma})^2\right] \leq 1 $. In this way, we construct $  \Omega_{N,\tau}^{(2)} $.
	
	By Jensen's inequality $ \langle \psi , W^p \psi \rangle^{4/p} \leq \langle \psi , W^4 \psi \rangle $  for any $ p \in [1,4] $, it suffices to establish~\eqref{eq:cor3} for $ p = 4 $. We choose  $ W'(\pmb{\sigma}) = c^{-1/2}  \, (W(\pmb{\sigma})^4-\mathbb{E}\left[W(\pmb{\sigma})^4\right]) /N^{2} $ to define $   \Omega_{N,\tau}^{(3)} $. 
\end{proof}

\subsection{Proof of  Theorem~\ref{thm:qpgs}}
We now use the estimates of the preceding subsection in our  Schur's complement  analysis for the proofs of Theorem~\ref{thm:qpgs} and \ref{thm:qpstate}. These results will actually follow from a slightly more general theorem 
on operators $ H = \Gamma T + W $ of QREM-type.
As a preparation and motivation of the following lemma, we collect some basic facts about these operators. 
The kinetic part of the block component $ Q_\varepsilon H Q_\varepsilon = \Gamma T Q_\varepsilon +  \,  Q_\varepsilon W Q_\varepsilon $ is   
estimated by
\begin{equation}\label{eq:fixKin}
	\| T Q_\varepsilon \| \ \leq \  \varepsilon N \, ,
\end{equation}
which  implies
\begin{equation}\label{eq:specab}
	- \|W\|_\infty - \Gamma \varepsilon \, N  \leq \inf \spec Q_\varepsilon HQ_\varepsilon  \, . 
\end{equation}
For any $  z \in \mathbb{C} $ with $ \Re z < \|W\|_\infty - \Gamma \varepsilon \, N $, the operator $  Q_\varepsilon H Q_\varepsilon - z $ is hence invertible on $ Q_\varepsilon \ell^2(Q_N) $ with inverse denoted by $ R_\varepsilon(z) := (Q_\varepsilon H Q_\varepsilon - z  Q_\varepsilon)^{-1}$. The latter features in Schur's complement formula for the resolvent of $ H $ projected onto the subspace $ P_\varepsilon \ell^2(Q_N) $:
\begin{equation}\label{eq:Schur} 
	P_\varepsilon (H-z)^{-1} P_\varepsilon = \left( P_\varepsilon (H-z) P_\varepsilon  - P_\varepsilon W Q_\varepsilon R_\varepsilon(z) Q_\varepsilon W P_\varepsilon  \right)^{-1} . 
\end{equation} 
Our main observation is that Schur's complement is approximated by an operator proportional to the identity. 
\begin{lemma}\label{lem:SchurRest}
	Consider the operator $ H := \Gamma T + W $ on $ \ell^2(\mathcal{Q}_N)$ with $ W $ satisfying the assumptions in Corollary~\ref{cor:pup} and let $ \Omega_{N,\tau} $ with $ \tau \in (0,1) $ be the events constructed there. Then on $ \Omega_{N,\tau} $ and at $ \varepsilon = {N}^{\frac{\tau-1}{2}} $ for all $ N $ large enough:
	\begin{equation}\label{eq:Schukontr}
		\left\| P_\varepsilon W R_\varepsilon(z)  W P_\varepsilon + P_\varepsilon \frac{w_N}{z}  \right\| \ \leq \ \max\{1,\Gamma\} \frac{C}{d^2} \ N^{\frac{\tau -1}{2}}  \, , \quad R_\varepsilon(z)  := (Q_\varepsilon H Q_\varepsilon - z  Q_\varepsilon)^{-1} ,
	\end{equation}
	for all $ z \in \mathbb{C} $ such that $ \min\{ \vert z\vert \, , \, \dist(\spec Q_\varepsilon H Q_\varepsilon , z ) \} \geq d \,  N $ with $ d \in (0,1] $.
\end{lemma}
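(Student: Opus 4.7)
The strategy is to expand $R_\varepsilon(z)$ via the first-order resolvent identity around the ``free'' approximation $-Q_\varepsilon/z$, and then use the spectral concentration bounds from Corollary~\ref{cor:pup} to show that the leading contribution produced by this expansion cancels exactly with $w_N P_\varepsilon/z$, leaving only controllable errors.

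Starting from $(Q_\varepsilon H Q_\varepsilon - z Q_\varepsilon)R_\varepsilon(z) = Q_\varepsilon$, one derives
\[ R_\varepsilon(z) \;=\; -\frac{1}{z}Q_\varepsilon \;+\; \frac{1}{z}\,R_\varepsilon(z)\,Q_\varepsilon H Q_\varepsilon . \]
Sandwiching between $P_\varepsilon W$ and $W P_\varepsilon$, substituting $Q_\varepsilon = \mathbbm{1}-P_\varepsilon$ in the resulting main term, and writing $P_\varepsilon W^2 P_\varepsilon = w_N P_\varepsilon + P_\varepsilon(W^2-w_N)P_\varepsilon$, I arrive at the identity
\[ P_\varepsilon W R_\varepsilon(z) W P_\varepsilon + \frac{w_N}{z}P_\varepsilon \;=\; -\frac{P_\varepsilon(W^2-w_N)P_\varepsilon}{z} + \frac{(P_\varepsilon W P_\varepsilon)^2}{z} + \frac{P_\varepsilon W R_\varepsilon(z)\,Q_\varepsilon H Q_\varepsilon\, W P_\varepsilon}{z}. \]
The first two summands on the right are super-polynomially small in $N$ by \eqref{eq:cor1}--\eqref{eq:cor2}; together with $|z|\ge dN$ they are dwarfed by the target bound $N^{(\tau-1)/2}/d^2$. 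The crux is to control the last summand.

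I split $Q_\varepsilon H Q_\varepsilon = \Gamma T Q_\varepsilon + Q_\varepsilon W Q_\varepsilon$ and treat each piece separately. Common to both estimates is the crucial observation $\|P_\varepsilon W\|^2 = \|W P_\varepsilon\|^2 = \|P_\varepsilon W^2 P_\varepsilon\| \le 2N$ for $N$ large, obtained from the hypothesis $w_N \le N$ together with \eqref{eq:cor2}. For the kinetic piece, exploiting that $T$ commutes with $Q_\varepsilon$ (so $\|T Q_\varepsilon\|\le\varepsilon N$ by \eqref{eq:fixKin}) yields
\[ \bigl\| z^{-1}\Gamma\, P_\varepsilon W R_\varepsilon(z) T Q_\varepsilon W P_\varepsilon \bigr\| \;\le\; \frac{\Gamma}{dN}\cdot\sqrt{2N}\cdot\frac{\varepsilon N}{dN}\cdot\sqrt{2N} \;=\; \frac{2\Gamma\varepsilon}{d^2} . \]
For the potential piece, the naive bound $\|Q_\varepsilon W Q_\varepsilon\| \le \|W\|_\infty = \mathcal{O}(N)$ would produce only $\mathcal{O}(d^{-2})$, which is too weak. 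Instead, I decompose $Q_\varepsilon W Q_\varepsilon W P_\varepsilon = Q_\varepsilon W^2 P_\varepsilon - Q_\varepsilon W P_\varepsilon W P_\varepsilon$ and apply the fourth-moment bound \eqref{eq:cor3} at $p=4$ together with $Q_\varepsilon\le\mathbbm{1}$:
\[ \|Q_\varepsilon W^2 P_\varepsilon\|^2 \;=\; \|P_\varepsilon W^2 Q_\varepsilon W^2 P_\varepsilon\| \;\le\; \|P_\varepsilon W^4 P_\varepsilon\| \;\le\; C N^2, \]
while $Q_\varepsilon W P_\varepsilon W P_\varepsilon$ is super-polynomially small via \eqref{eq:cor1}. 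Combining with $\|P_\varepsilon W\|\le\sqrt{2N}$ and $\|R_\varepsilon(z)\|\le(dN)^{-1}$ yields a contribution of order $d^{-2}N^{-1/2}$, which is dominated by $d^{-2}N^{(\tau-1)/2}$ since $\tau\in(0,1)$.

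Summing all pieces gives the asserted $C\max\{1,\Gamma\}d^{-2}N^{(\tau-1)/2}$. The main technical obstacle is that one must never use the crude bound $\|W\|=\mathcal{O}(N)$ whenever $W$ flanks a $P_\varepsilon$: the factor of $\sqrt{N}$ saved by $\|P_\varepsilon W\|=\mathcal{O}(\sqrt N)$, and the analogous $\|Q_\varepsilon W^2 P_\varepsilon\|=\mathcal{O}(N)$ gained via the fourth moment, is precisely what distinguishes the argument from a Neumann expansion in $Q_\varepsilon H Q_\varepsilon/z$, which in general need not converge because $\|Q_\varepsilon H Q_\varepsilon\|/|z|$ can be of order $1/d$.
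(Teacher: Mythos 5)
Your proof is correct and follows essentially the same route as the paper's: the same resolvent identity for $R_\varepsilon(z)$, the same split of $Q_\varepsilon H Q_\varepsilon$ into the kinetic piece (controlled via $\|TQ_\varepsilon\|\le\varepsilon N$) and the potential piece (controlled via the fourth-moment bound \eqref{eq:cor3}), and the same key observation that $W$ adjacent to $P_\varepsilon$ must be estimated as $\|P_\varepsilon W\|=\mathcal{O}(\sqrt N)$ rather than $\|W\|_\infty=\mathcal{O}(N)$. The only cosmetic difference is that in the potential piece you use the two-term decomposition $Q_\varepsilon W Q_\varepsilon W P_\varepsilon=Q_\varepsilon W^2 P_\varepsilon-Q_\varepsilon W P_\varepsilon W P_\varepsilon$ with $\|Q_\varepsilon W^2 P_\varepsilon\|^2\le\|P_\varepsilon W^4 P_\varepsilon\|$, whereas the paper expands $P_\varepsilon W Q_\varepsilon W^2 Q_\varepsilon W P_\varepsilon$ into four $P_\varepsilon W^p P_\varepsilon$ terms; both yield the same $\mathcal{O}(d^{-2}N^{-1/2})$ subleading contribution.
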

\begin{proof}
	We use the resolvent equation to write
	\begin{align}\label{eq:Schurres}
		P_\varepsilon & \left( W R_\epsilon(z)  W +\frac{w_N}{z} \right)  P_\varepsilon = \frac1{z} P_\varepsilon \left(w_N -W Q_\varepsilon W + W  R_\epsilon(z) Q_\varepsilon H Q_\varepsilon W  \right)  P_\varepsilon\notag \\ & = \frac1z P_\varepsilon (w_N - W Q_\varepsilon W ) P_\varepsilon + \frac1z P_\varepsilon \left( W R_\varepsilon(z)  Q_\varepsilon  HQ_\varepsilon)  W \right) P_\varepsilon  ,
	\end{align}
	and estimate both terms in the second line separately. For the first expression we rewrite
	\be\label{eq:umschr}
	P_\varepsilon (w_N - W Q_\varepsilon W ) P_\varepsilon = \ P_\varepsilon (w_N - W^2 ) P_\varepsilon + P_\varepsilon W P_\varepsilon W P_\varepsilon \, . 
	\ee
	According to~\eqref{eq:cor1} and~\eqref{eq:cor2}, the norm of the two terms in the right side is negligible in comparison to $ N^{\frac{\tau - 1}{2}}  $ for all $ N $ large enough. 
	It hence remains to estimate the norm of the second term in the right side of~\eqref{eq:Schurres}. To do so, we split  the terms as follows
	\[ \frac1z P_\varepsilon  W R_\varepsilon(z)  Q_\varepsilon  HQ_\varepsilon  W  P_\varepsilon =  \frac1z P_\varepsilon  W R_\varepsilon(z)  Q_\varepsilon  \Gamma T Q_\varepsilon  W  P_\varepsilon + \frac1z P_\varepsilon  W R_\varepsilon(z)  Q_\varepsilon  W Q_\varepsilon W P_\varepsilon \] and use~\eqref{eq:fixKin}
	together with $ \| R_\varepsilon(z) \| \leq ( d N)^{-1} $ (since $\dist(\spec Q_\varepsilon H Q_\varepsilon , z ) \geq d N$  )  and $ \| P_\varepsilon W \|^2 = \| P_\varepsilon W^2 P_\varepsilon \| \leq C N $ by~\eqref{eq:cor2}.  On $ \Omega_{N,\tau} $  for all $ N $ large enough, we thus conclude:  	\be
	\vert z\vert^{-1} \left\| P_\varepsilon W R_\varepsilon(z)   \Gamma T Q_\varepsilon  W P_\varepsilon \right\| \ \leq \frac{C}{d^2  N}
	\| T Q_\varepsilon \| \leq \frac{C}{d^2 }  \, N^{\frac{\tau - 1}{2}} \, . 
	\ee
	Similarly, we estimate
	\begin{align}
		\vert z\vert^{-1} \left\| P_\varepsilon W R_\varepsilon(z)  Q_\varepsilon W   Q_\varepsilon  W P_\varepsilon \right\| \ & \leq \  \vert z\vert^{-1} \left\| P_\varepsilon W \right\| \, \| R_\varepsilon(z)  \| \, \left\| W  Q_\varepsilon  W P_\varepsilon \right\| \notag \\
		& \leq \ \frac{C}{d^2  N^{3/2}} \,  \sqrt{ \left\| P_\varepsilon W   Q_\varepsilon W^2  Q_\varepsilon   W P_\varepsilon \right\| }\, . 
	\end{align}
	In order to estimate the norm in the right side with the help of~\eqref{eq:cor3}, we rewrite 
	\be
	P_\varepsilon W   Q_\varepsilon W^2  Q_\varepsilon   W P_\varepsilon = P_\varepsilon W^4  P_\varepsilon - P_\varepsilon W^3 P_\varepsilon W P_\varepsilon - P_\varepsilon W P_\varepsilon W^3 P_\varepsilon + P_\varepsilon W  P_\varepsilon   W^2 P_\varepsilon W P_\varepsilon \, . 
	\ee
	On $ \Omega_{N,\tau} $ the norm of this operator is bounded by $ C \, N^2 $ for all $ N $ large enough by~\eqref{eq:cor3}. This concludes the proof.
\end{proof}

These preparations enable us to proof the following general result.%
\begin{theorem}\label{lem:qpgs}
	Consider the operator $ H = \Gamma T + W $ on $ \ell^2(\mathcal{Q}_N)$ with $ W $ satisfying the assumptions in Corollary~\ref{cor:pup} and let $ \Omega_{N,\tau} $ with $ \tau \in (0,1) $ arbitrary be the events constructed there. 
	Then on $ \Omega_{N,\tau}  $ and for all $ N $ large enough  the eigenvalues 
	of $ H$ below $ - \| W \|_\infty - \eta N $ with $ \eta > 0 $ are found in the union of intervals of radius $ \mathcal{O}_{\Gamma,\eta}(N^{\frac{\tau-1}{2}} )$  centered at 
	\begin{equation}\label{eq:centera}
		(2n-N)\Gamma +  \frac{w_N}{(2n-N)\Gamma}  
	\end{equation}
	with $ n \in \{ m \in \mathbb{N}_0  \, \vert  (2m-N) \Gamma <  - \| W \|_\infty -\eta N \} $.
	Moreover, the ball centered at~\eqref{eq:centera} contains exactly $ \binom{N}{n} $ eigenvalues of $ H$ if $ \Gamma > \eta +  \| W \|_\infty/N   $.
\end{theorem}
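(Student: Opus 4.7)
The approach is a Feshbach-Schur reduction relative to the splitting $\ell^2(\mathcal{Q}_N) = P_\varepsilon \ell^2 \oplus Q_\varepsilon \ell^2$ at $\varepsilon = N^{\frac{\tau-1}{2}}$. Since $\Gamma \varepsilon N = \Gamma N^{\frac{\tau+1}{2}}$ is $o(N)$, the estimate \eqref{eq:specab} ensures that for all $N$ large any candidate eigenvalue $z$ below $-\|W\|_\infty - \eta N$ lies outside $\spec(Q_\varepsilon H Q_\varepsilon)$; in fact $|z|$ and $\dist(z, \spec Q_\varepsilon H Q_\varepsilon)$ are both at least $\eta N/2$. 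Schur's formula \eqref{eq:Schur} then makes $z$ an eigenvalue of $H$ iff the Schur complement
\[
S(z) \coloneqq P_\varepsilon(H - z)P_\varepsilon - P_\varepsilon W Q_\varepsilon R_\varepsilon(z) Q_\varepsilon W P_\varepsilon
\]
has a non-trivial kernel, with matching multiplicity (via the Feshbach correspondence and self-adjointness of $H$). Applying Lemma \ref{lem:SchurRest} with $d = \eta/2$ together with the exponentially small bound \eqref{eq:cor1} on $\|P_\varepsilon W P_\varepsilon\|$, one obtains on $\Omega_{N,\tau}$
\[
S(z) = P_\varepsilon\bigl(\Gamma T - z + \tfrac{w_N}{z}\bigr)P_\varepsilon + E(z), \qquad \|E(z)\| = \mathcal{O}_{\Gamma,\eta}\bigl(N^{\frac{\tau-1}{2}}\bigr).
\]

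The location statement follows by noting that the leading operator is diagonal in the Hadamard basis on $P_\varepsilon \ell^2$, with eigenvalues $f_n(z) \coloneqq \Gamma(2n-N) - z + w_N/z$ of multiplicity $\binom{N}{n}$ for each $n$ with $|2n-N| > \varepsilon N$. Singularity of $S(z)$ at an eigenvalue $z$ forces $|f_n(z)| = \mathcal{O}_{\Gamma,\eta}(N^{\frac{\tau-1}{2}})$ for some such $n$. Solving the quadratic $z^2 - \Gamma(2n-N) z - w_N = 0$ for its large root, using that $|\Gamma(2n-N)| \geq \|W\|_\infty + \eta N$ and $w_N \leq N$ so that $|f_n'(z)| = 1 + w_N/|z|^2 \geq 1$ at the unperturbed root, produces $z = z_n^\star + \mathcal{O}_{\Gamma,\eta}(N^{\frac{\tau-1}{2}})$ with $z_n^\star$ the center given in \eqref{eq:centera}. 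The admissible index set in the theorem is exactly that imposed by the cut-off $\Re z < -\|W\|_\infty - \eta N$.

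For the multiplicity count, my plan is an operator-valued Rouch\'e (Gohberg-Sigal) argument. Fix an admissible $n$, decompose $P_\varepsilon = \Pi_n + \tilde P_n$ with $\Pi_n$ the $T$-spectral projection for eigenvalue $2n-N$ (of dimension $\binom{N}{n}$), and observe that $f_m(z) - f_n(z) = 2\Gamma(m - n)$ is uniformly bounded away from zero for $m \neq n$; hence on a disk $D_n$ of radius $CN^{\frac{\tau-1}{2}}$ around $z_n^\star$ the $\tilde P_n$-restriction of the leading operator is invertible with inverse of norm $\mathcal{O}(\Gamma^{-1})$. A second Schur step then reduces $S(z)$ to an effective operator $S_n(z) = f_n(z)\Pi_n + A_n(z)$ on $\Pi_n \ell^2$ with $\|A_n(z)\| = \mathcal{O}_{\Gamma,\eta}(N^{\frac{\tau-1}{2}})$. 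Choosing $C$ large enough so that $|f_n(z)| > \|A_n(z)\|$ throughout $\partial D_n$, the Gohberg-Sigal theorem implies that $\det S_n$ and $\det(f_n \Pi_n) = f_n^{\binom{N}{n}}$ have the same number of zeros inside $D_n$, namely $\binom{N}{n}$; via the Feshbach correspondence these give precisely $\binom{N}{n}$ eigenvalues of $H$ in $D_n$. The condition $\Gamma > \eta + \|W\|_\infty/N$ merely ensures the admissible index set is non-empty (it covers $n=0$, i.e.\ the ground state). The main technical hurdle I anticipate is uniform control of the error $E(z)$, and hence of $A_n(z)$, along the contour $\partial D_n$; this reduces to the uniformity of Lemma \ref{lem:SchurRest}, which holds because $|z|$ and $\dist(z, \spec Q_\varepsilon H Q_\varepsilon)$ remain of order $N$ throughout the contour.
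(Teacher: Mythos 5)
Your proof is correct and follows the paper's Feshbach--Schur strategy closely for the location of the eigenvalues: both reduce to the Schur complement on the $P_\varepsilon$ block, replace $P_\varepsilon W R_\varepsilon(z) W P_\varepsilon$ by $-w_N/z$ via Lemma~\ref{lem:SchurRest}, absorb $P_\varepsilon W P_\varepsilon$ into the error using~\eqref{eq:cor1}, and then read off approximate zeros of the scalar functions $f_n(z) = \Gamma(2n-N) - z + w_N/z$. Where you diverge is the multiplicity count. The paper exploits that the Schur complement $T_\varepsilon(E)$ is a \emph{self-adjoint} family, strictly monotone decreasing in $E$ (since $\partial_E T_\varepsilon(E) = -P_\varepsilon - P_\varepsilon W R_\varepsilon(E)^2 W P_\varepsilon < 0$), so the eigenvalue curves $\lambda_j(E)$ cross zero exactly once and one can simply count curves via Weyl's inequality; this stays entirely on the real axis and avoids any analytic-continuation machinery. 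You instead run a second Schur reduction onto the $T$-eigenspace $\Pi_n$ and invoke Gohberg--Sigal on a complex contour. Both routes are sound. Yours is arguably more general (it doesn't use self-adjointness of the Schur complement), but it requires two extra verifications that the monotonicity argument gets for free: uniform control of $\|E(z)\|$ along the complex contour (which does hold, since Lemma~\ref{lem:SchurRest} is stated for all $z$ with $\min\{|z|,\dist(z,\spec Q_\varepsilon H Q_\varepsilon)\} \geq dN$, not only real $z$), and the correct multiplicity bookkeeping through two nested Feshbach maps. Your reading of the hypothesis $\Gamma > \eta + \|W\|_\infty/N$ as precisely the non-emptiness of the admissible index set (equivalently, $n=0$ being admissible) is also correct.
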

\begin{proof}
	We write $ H $ using the block decomposition of $ \ell^2(\mathcal{Q}_N) $ induced by $ P_{\varepsilon} $ and employ the Schur complement method. Since the $ Q_\varepsilon $ block is lower bounded according to \eqref{eq:specab},
	all eigenvalues $ E $ of $ H $ strictly below $ - \| W \|_\infty - \Gamma \varepsilon N $ can be read from the 
	equation
	\begin{align}\label{eq:schurneu}
		0 \in   \spec\left( T_\varepsilon(E) \right) 
		\quad \text{with}\quad&  T_\varepsilon(E) \coloneqq P_\varepsilon \Big(\Gamma T +  \frac{N}{E} \Big)  -E   + Y_\varepsilon(E) , \\
		& Y_\varepsilon(E) \coloneqq  \, P_\varepsilon W P_\varepsilon -  \left( P_\varepsilon \frac{N}{E}+  P_\varepsilon  W R_\varepsilon(E) W P_\varepsilon  \right) . \notag 
	\end{align}
	Lemma~\ref{lem:SchurRest} combined with~\eqref{eq:specab} and~\eqref{eq:cor1} implies that for any $ \eta > 0 $ at  $ \varepsilon = N^{(\tau -1)/2} $ and on the event  $ \Omega_{N,\tau} $ in Corollary~\ref{cor:pup}
	\be
	\sup_{E < -\| W \|_\infty - \eta N} \left\| Y_\varepsilon(E) \right\| \ \leq \ C \max\{1,\Gamma\} \ \eta^{-2} \ N^\frac{\tau-1}{2} \, ,
	\ee 
	for all $ N $ large enough. As a consequence  of standard perturbation theory \cite[Corollary 3.2.6]{Bha97} and using the explicit values~\eqref{eq:specT} of the spectrum of $ T $, within this energy region the solution of~\eqref{eq:schurneu} are found within the union of intervals of radius at most $ C \max\{1,\Gamma\}  \eta^{-2} N^{(\tau -1)/2}  $ from the solutions to the equation
	\[ (2n-N) \Gamma + \frac{w_N}{z} - z = 0   \]
	with integers $2n < N(\Gamma-\| W \|_\infty - \eta)/ \Gamma $.
	This leads to 
	\[
	z = \frac{2n-N}{2} \Gamma - \sqrt{\tfrac14 (2n-N)^2 \Gamma^2 + w_N} = (2n-N) \Gamma + \frac{w_N}{(2n-N) \Gamma} + \mathcal{O}_\Gamma\left(N^{-1}\right), 
	\]
	which completes the proof of~\eqref{eq:centera}. 
	The assertion concerning the range of the spectral projections on the small intervals around the above points follows from the monotonicity of $ T_\varepsilon(E)  $ and the fact that the eigenvalue $ 2n-N $ of $ T $ has multiplicity $ \binom{N}{n}  $. 
\end{proof}
Theorem~\ref{thm:qpgs} now immediately follows. 
\begin{proof}[Proof of Theorem~\ref{thm:qpgs}] 
	On $\Omega_{N,\eta/2}^\textup{REM}$ the REM's extremal values are bounded by $ \| U \|_\infty \leq N (\beta_c + \eta)  $. Moreover, $ \mathbb{E}\left[ U(\pmb{\sigma})^2\right] = N $ and $ \mathbb{E}\left[ U(\pmb{\sigma})^8\right] = 105 \ N^4 $, so that $ U $ satisfies all requirements on $ W $ in Corollary~\ref{cor:pup}. The claim is thus  a straightforward consequence of Theorem~\ref{lem:qpgs} with  $ W = U $. 
\end{proof}

\subsection{Proof of  Theorem~\ref{thm:qpstate}}\label{sec:deloc1}
The proof of our second main result, Theorem~\ref{thm:qpstate}, is based on  delocalization properties of the eigenprojection of $T$, which will be derived using the semigroup properties of $T$. More generally, let $B \subset \mathcal{Q}_N$ be any subset of the Hamming cube and $T(B)$ the corresponding restriction, i.e, the operator with matrix elements
$ \langle \delta_{\pmb{\sigma}} \, \vert \, T(B) \,  \delta_{\pmb{\sigma}^\prime} \rangle \coloneqq - \mathbbm{1}_{d(\pmb{\sigma},\pmb{\sigma}^\prime)=1} \mathbbm{1}_B(\pmb{\sigma}) \mathbbm{1}_B(\pmb{\sigma}^\prime) $. 
For Hamming balls $B_K(\pmb{\sigma}_0)$ the operator $T(B_K(\pmb{\sigma}_0))$ was studied in Section 2 and abbreviated there by $T_K$. 
	As all matrix elements are zero or negative, a stochastic representation is at hand: for any $\pmb{\sigma}, \pmb{\sigma}^\prime \in B$ and $\beta\geq 0$ there is a measure $\mu_{B}$ on the space of c\`{a}dl\`{a}g-paths 
	$\Omega(\pmb{\sigma},\pmb{\sigma}^\prime)$ on the hypercube with $\omega(0) = \pmb{\sigma}$ and 
	$\omega(1) = \pmb{\sigma}^\prime$ such that for any $V:B \to \mathbb{R} $ we have 
	$
	\langle \delta_{\pmb{\sigma}}  \, \vert \, e^{-\beta(T(B)+ V)}  \delta_{\pmb{\sigma}^\prime} \rangle = \int_{\Omega(\pmb{\sigma},\pmb{\sigma}^\prime)} e^{- \beta \int_{0}^{1} \, V(\omega(s)) \, ds } \mu_{B}[d \omega]
	$. 
	Such a representation can be derived via the Suzuki-Trotter-formula  (see e.g.~\cite[Appendix B]{LRRS19}). If $B = \mathcal{Q}_N$ the path measure is described in terms of independent Poisson jump processes, whereas for general $B$ one has to take into account that the process is not allowed to leave the set $B$. Here, we do not use the stochastic representation but the related positivity of  matrix elements for any $ \beta \geq 0 $:
	\begin{equation}\label{eq:semi}
		0\leq   \langle \delta_{\pmb{\sigma}}  \, \vert \, e^{-\beta(T(B)+ V)}  \delta_{\pmb{\sigma}^\prime} \rangle  \leq e^{- \min \, \beta V}  \langle \delta_{\pmb{\sigma}}  \, \vert \, e^{-\beta T(B)}  \delta_{\pmb{\sigma}^\prime} \rangle .
	\end{equation}
Since $-T(B)$ and $-T$ have nonnegative matrix elements and $ \langle \delta_{\pmb{\sigma}} \, \vert \, (-T(B)) \, \delta_{\pmb{\sigma}^\prime} \rangle \leq \langle \delta_{\pmb{\sigma}} \, \vert \, (-T) \, \delta_{\pmb{\sigma}^\prime} \rangle $ for any $\pmb{\sigma}, \pmb{\sigma}^\prime$, we also conclude
\begin{equation}\label{eq:semi2}
	\langle \delta_{\pmb{\sigma}}  \, \vert \, e^{-\beta T(B)}  \delta_{\pmb{\sigma}^\prime} \rangle \leq  \langle \delta_{\pmb{\sigma}}  \, \vert \, e^{-\beta T}  \delta_{\pmb{\sigma}^\prime} \rangle = (\cosh \beta)^N (\tanh \beta)^{d(\pmb{\sigma}, \pmb{\sigma}^\prime)} ,
\end{equation} 
where the last equality is by explicit calculation using the Hadamard transformation, i.e., the representation  of $ T $ in terms of Pauli matrices. 
\begin{proposition}\label{prop:ext}  
	Let $B \subset \mathcal{Q}_N$ and $V: B \to \mathbb{R} $ a potential with $V \geq - v N$ for some $0 \leq v < 1$.
	Then the eigenprojection $ P_E := \mathbbm{1}_{(-\infty, E)}(T(B)+V) $ onto eigenvalues 
	$E \in [ - N (1+v) , - v N ] $ satisfies:
	\be  \label{eq:tobefixed}
	\max_{\pmb{\sigma}}  
	\langle \delta_{\pmb{\sigma}}  \, \vert \, P_E \, \delta_{\pmb{\sigma}} \rangle \leq 2^{-N} \exp\left(N \gamma\left(\frac{1+\nu(E))}{2}\right)\right) 
	\ee 
	with the binary entropy $\gamma$ from~\eqref{eq:binaryent} and $\nu(E) := \frac{E}{N} +v $. Moreover, for all normalised states $\psi \in \ell^2(B) $:
	\be\label{eq:delocef}
	\left\| P_E \psi \right\|_\infty^2 \ \leq 2^{-N}  \exp\left(N \gamma\left(\frac{1+\nu(E))}{2}\right)\right) .
	\ee	
\end{proposition}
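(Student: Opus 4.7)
The natural approach is a Chernoff-type estimate obtained by combining the semigroup bounds \eqref{eq:semi}--\eqref{eq:semi2} with the elementary operator inequality
\[
 P_E \ = \ \mathbbm{1}_{(-\infty,E)}(T(B)+V) \ \leq \ e^{\beta (E - T(B) - V)}, \qquad \beta \geq 0,
\]
valid for any self-adjoint operator and any $\beta\geq 0$ (since $\mathbbm{1}_{x<E} \leq e^{\beta(E-x)}$ pointwise). Taking diagonal matrix elements in the canonical basis, using $V \geq -vN$ through \eqref{eq:semi}, and then \eqref{eq:semi2} at $\pmb{\sigma}=\pmb{\sigma}'$, I obtain
\[
\langle \delta_{\pmb{\sigma}} | P_E \delta_{\pmb{\sigma}}\rangle \ \leq \ e^{\beta(E+vN)}\, (\cosh\beta)^N \ = \ \bigl( e^{\beta\nu(E)}\cosh\beta\bigr)^N,
\]
where $\nu(E) = E/N + v$. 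By assumption on $E$, we have $\nu(E) \in [-1,0]$.

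The next step is to minimize the right-hand side in $\beta\geq 0$. Differentiating $\beta \nu + \ln\cosh\beta$ yields the unique optimizer $\tanh\beta^\star = -\nu(E) \in [0,1]$, which is indeed nonnegative. At this $\beta^\star$ one has $\cosh\beta^\star = (1-\nu^2)^{-1/2}$ and $e^{2\beta^\star} = (1-\nu)/(1+\nu)$, so a short direct calculation gives
\[
 e^{\beta^\star \nu}\cosh\beta^\star \ = \ (1-\nu)^{-(1-\nu)/2} (1+\nu)^{-(1+\nu)/2} \ = \ \tfrac{1}{2}\exp\!\Bigl(\gamma\bigl(\tfrac{1+\nu}{2}\bigr)\Bigr),
\]
upon recognizing the definition \eqref{eq:binaryent} of the binary entropy. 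Taking the $N$th power produces exactly \eqref{eq:tobefixed}. The uniform bound \eqref{eq:delocef} on $\|P_E\psi\|_\infty$ for normalized $\psi\in\ell^2(B)$ then follows at once from Cauchy--Schwarz and the fact that $P_E$ is an orthogonal projection:
\[
 |(P_E\psi)(\pmb{\sigma})| \ = \ |\langle P_E \delta_{\pmb{\sigma}} \, | \, \psi\rangle| \ \leq \ \|P_E \delta_{\pmb{\sigma}}\| \ = \ \sqrt{\langle \delta_{\pmb{\sigma}}|P_E\delta_{\pmb{\sigma}}\rangle}.
\]

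The argument is essentially a one-line Chernoff bound, and the only mildly nontrivial step is verifying that the optimized exponent coincides exactly with the entropy $\gamma((1+\nu(E))/2)$. The restriction $\nu(E)\in[-1,0]$ imposed by the hypothesis $E\in[-N(1+v),-vN]$ is precisely what guarantees that the optimizer lies in $\beta\geq 0$, where the inputs \eqref{eq:semi}--\eqref{eq:semi2} are available; this is why the range of admissible $E$ in the statement cannot be enlarged without an additional idea.
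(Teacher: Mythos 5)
Your proof is correct and is essentially the same Chernoff-type argument the paper uses: the exponential Markov bound $P_E \leq e^{\beta(E-T(B)-V)}$, the semigroup estimates \eqref{eq:semi}--\eqref{eq:semi2} giving $(\cosh\beta)^N e^{\beta\nu(E)N}$, the optimization $\tanh\beta^\star=-\nu(E)$ with the identity $f(\beta^\star)=-\ln 2+\gamma((1+\nu(E))/2)$, and the Cauchy--Schwarz step for \eqref{eq:delocef}. No substantive differences from the paper's proof of Proposition~\ref{prop:ext}.
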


\begin{proof}
	The spectral theorem combined with an exponential Markov inequality implies for any $ \beta \geq 0 $:
	$$ \langle \delta_{\pmb{\sigma}}  \, \vert \, \mathbbm{1}_{(-\infty, E)}(T(B)+V) \, \delta_{\pmb{\sigma}} \rangle  \leq e^{\beta E}
	\langle \delta_{\pmb{\sigma}} \, \vert \, e^{-\beta(T(B)+V)} \, \delta_{\pmb{\sigma}} \rangle \leq e^{\beta \nu(E) N} (\cosh \beta)^N  .
	$$ 
	The last inequality is a combination of~\eqref{eq:semi} and~\eqref{eq:semi2}. 
		It remains to minimize the function $ f(\beta) \coloneqq \beta \nu(E) + \ln \cosh \beta $ on $ [0,\infty) $. 
		The minimum is attained at
		$	\beta^{\star} = \operatorname{artanh}(-\nu(E)) $. 
		To further simplify the result, we recall the elementary identities $
		\operatorname{artanh}(x) = \frac12 \ln \frac{1+x}{1-x} $ and 
		$ \cosh(\operatorname{artanh}(x)) = \frac{1}{\sqrt{1-x^2}} $ for $ x \in (-1,1) $, 
		which after some algebra lead to 
		$ f(\beta^{\star}) = - \ln 2 + \gamma((1+\nu(E))/2) $ and hence~\eqref{eq:tobefixed}. 
		The second assertion~\eqref{eq:delocef} is a direct consequence of \eqref{eq:tobefixed}.
\end{proof}

We are now ready to complete the proofs of the main results in the paramagnetic regime. 
\begin{proof}[Proof of Theorem~\ref{thm:qpstate}]
	We pick $ \tau \in (0,1) $ and $ 0 < \eta < (\Gamma - \beta_c)/4$ arbitrary and restrict our attention to the event $ \Omega_{N,\tau}^\textup{per} \cap \Omega_{N,\eta}^\textup{REM}$ on which the assertions of Corollary~\ref{cor:pup} for $ W = U $ and Theorem~\ref{thm:qpgs} are valid. 
	
	For a proof of the first assertion, we apply Schur's complement formula to the ground state $\psi = \psi_1 + \psi_2 $ of $ H = \Gamma T + U $. We split $ \psi $ into $ \psi_1 \in P_\varepsilon \ell^2(\mathcal{Q}_N) $ and $ \psi_2 \in Q_\varepsilon \ell^2(\mathcal{Q}_N) $ such that:
	\begin{align*}
		\left(P_\varepsilon  H  P_\varepsilon - E - 	P_\varepsilon  H  R_\varepsilon(E) H P_\varepsilon  \right) \psi_1 = 0& \notag \\
		\psi_2 = - R_\varepsilon(E) Q_\varepsilon H P_\varepsilon \psi_1, &
	\end{align*}
	where  $ E = \inf\spec H = - \Gamma N - \frac{1}{\Gamma} +  \mathcal{O}_\Gamma(N^{\frac{\tau-1}{2}} )$ is the ground-state energy according to Theorem~\ref{thm:qpgs} since $ \Gamma N - \|U \|_{\infty} > \frac12 (\Gamma - \beta_c) N > \eta N$ on $\Omega_{N,\eta}^\textup{REM}$ by the choice for $\eta$.
	Sticking to the notation~\eqref{eq:schurneu}, from the proof of Theorem~\ref{thm:qpgs} we conclude that the first equation can be rewritten in terms of
	\[ P_\varepsilon  H  P_\varepsilon - E - 	P_\varepsilon  H  R_\varepsilon(E) H P_\varepsilon= 	P_\varepsilon \Gamma T 	P_\varepsilon  +  (N {E}^{-1}  - E) P_\varepsilon  	+Y_\varepsilon(E) ,
	\]
	with
	$\| Y_\varepsilon(E)  \| \leq  \mathcal{O}_\Gamma(N^{\frac{\tau-1}{2}} )$. Since $T$ has an energy gap $ 2$ above its unique ground state $ \Phi_\emptyset $ (cf.~\eqref{eq:specT}), we thus conclude 
	\[ \| (\mathbbm{1} - \vert  \Phi_\emptyset  \rangle \langle  \Phi_\emptyset \vert ) \psi_1   \| \leq  \mathcal{O}_\Gamma\left(N^{\frac{\tau-1}{2}} \right).   \]
	To further estimate the norm of
	$ \psi_2 = - R_\varepsilon(E) Q_\varepsilon U  \psi_1 $, 
	we recall that $\|R_\varepsilon(E) \| \leq \frac{C_\Gamma}{N}$ 
	and $\| U \psi_1 \|^2 \leq \| P_\varepsilon U^2P_\varepsilon\| \leq \mathcal{O}(N) $ by Corollary~\ref{cor:pup}. Hence, $\| \psi_2  \|^2 \leq \mathcal{O}_\Gamma\left(\frac1N \right)$. We thus arrive at 
	\begin{equation}
		\| \psi -\Phi_\emptyset \|^2 = \mathcal{O}_\Gamma\left(N^{\tau-1}  \right).
	\end{equation}
	For the second part, we recall the bound \eqref{eq:Uinfty}, and write $H = \Gamma (T + U/\Gamma)$. The claim now follows directly from Proposition~\ref{prop:ext}.
\end{proof}

\subsection{Improved delocalization estimates}\label{sec:deloc2}
The delocalization estimates on low-energy eigenfunctions established in Proposition~\ref{prop:ext}  are not optimal.
In particular, they become trivial in case the minimum of the potential $ v $ is close to one and $ E $ is close to $ - N $. 
The latter corresponds to the critical case addressed in Proposition~\ref{prop:trans}.  In the following, we record an improved delocalization estimate, which involves the QREM with truncated potential on $ \ell^2(\mathcal{Q}_N) $:
	\[  H_{\delta} \coloneqq  \Gamma T  + U_\delta , \qquad U_\delta \coloneqq U  \mathbbm{1}_{\vert U\vert \leq  (\beta_c - \delta)N} , \]
	with truncation parameter $ \delta > 0 $, which will be allowed to be arbitrarily small. 
In that scenario we have the following non-optimal estimate:
\begin{proposition}\label{prop:deloc}
	There are $c,C > 0$ such that for any $ \delta \in (0,\Gamma/25) $ there is some $ \eta > 0 $ and a sequence of events $ \Omega_{N,\delta} $ with $ \pp(  \Omega_{N,\delta} ) \geq 1 - e^{- C N } $  on which for all $ N $ large enough at for all $ \Gamma \geq \beta_c - \delta/4 $:
	\begin{equation}\label{eq:deloccrit} \max_{\pmb{\sigma} \in \mathcal{Q}_N} \langle \delta_{\pmb{\sigma}} \,\vert \, \mathbbm{1}_{(-\infty, -(\Gamma - \eta) N)}(H_{\delta}) \, \delta_{\pmb{\sigma}} \rangle \leq e^{-c N}, \end{equation}
	Moreover, on $ \Omega_{N,\delta}  $ all eigenvalues of $ H_\delta $ below $  E < - (\Gamma - \delta/4) N \leq - (\beta_c - \delta/2) N$ are still described by \eqref{eq:center} with an error $\mathcal{O}_{\Gamma,\delta}(N^{-\frac{1}{4}} ) $.
\end{proposition}
The proof of Proposition~\ref{prop:deloc}  is spelled at the end of this subsection.

	As a preparation, we need the following probabilistic control on the frequency of large deviation sites in a ball of radius $ \alpha N $.
	\begin{lemma}\label{lem:probest}
		Let $\epsilon  , \alpha > 0 $ be such that $ \epsilon^2 > 2 \gamma(\alpha) $.  Then there is $K= K(\epsilon  , \alpha)  \in \nn$ and  $ c = c(\epsilon  , \alpha) > 0 $ such that the sequence of events
		\[ \Omega_{N,\epsilon,\alpha} \coloneqq\{\forall \pmb{\sigma}\in \mathcal{Q}_N : \; \vert B_{\alpha N}(\pmb{\sigma}_0) \cap \mathcal{L}_{\epsilon}\vert < K \} \]
		has a probability bounded by $ \pp( \Omega_{N,\epsilon,\alpha} ) \geq  1 - e^{-c N } $ for all sufficiently large $ N $.
	\end{lemma}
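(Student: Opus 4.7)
The statement is a routine first-moment/union-bound estimate, and the assumption $\epsilon^2 > 2\gamma(\alpha)$ is precisely what is needed to make the expected number of large-deviation sites in a Hamming ball of radius $\alpha N$ exponentially small enough to survive a union bound over the $2^N$ possible centers. (I read the quantifier in the definition of $\Omega_{N,\epsilon,\alpha}$ as $\forall\pmb{\sigma}_0\in\mathcal{Q}_N$; the $\pmb{\sigma}$ and $\pmb{\sigma}_0$ in the display appear to be a typo for the same symbol.)

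First I would collect the two deterministic ingredients. On the combinatorial side, one has the standard entropy estimate $|B_{\alpha N}(\pmb{\sigma}_0)|\le e^{N\gamma(\alpha)+o(N)}$ (which is why the entropy $\gamma$ shows up). On the probabilistic side, since $U(\pmb{\sigma})=\sqrt{N}\,\omega(\pmb{\sigma})$ with $\omega(\pmb{\sigma})$ standard normal and i.i.d., the Gaussian tail bound gives
\begin{equation*}
p_N \;:=\; \pp(\pmb{\sigma}\in\mathcal{L}_\epsilon) \;=\; \pp\bigl(\omega(\pmb{\sigma})\le -\epsilon\sqrt{N}\bigr) \;\le\; e^{-\epsilon^2 N/2},
\end{equation*}
and these events are independent across configurations.

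Next I would estimate the probability for a fixed ball. Because the $U(\pmb{\sigma})$ are i.i.d., for every $K\in\nn$ a union bound over $K$-tuples in the ball yields
\begin{equation*}
\pp\bigl(|B_{\alpha N}(\pmb{\sigma}_0)\cap\mathcal{L}_\epsilon|\ge K\bigr) \;\le\; \binom{|B_{\alpha N}(\pmb{\sigma}_0)|}{K} p_N^{K} \;\le\; e^{K N[\gamma(\alpha)-\epsilon^2/2]+o(N)}.
\end{equation*}
A second union bound over the $2^N$ choices of center then gives
\begin{equation*}
\pp\bigl(\Omega_{N,\epsilon,\alpha}^{c}\bigr) \;\le\; 2^N\, e^{K N[\gamma(\alpha)-\epsilon^2/2]+o(N)} \;=\; e^{N[\ln 2 - K(\epsilon^2/2-\gamma(\alpha))]+o(N)}.
\end{equation*}

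Finally I would choose the integer $K=K(\epsilon,\alpha)$ strictly greater than $\ln 2/(\epsilon^2/2-\gamma(\alpha))$, which is finite and positive by the hypothesis $\epsilon^2>2\gamma(\alpha)$. With this choice the exponent becomes strictly negative, and one can pick any $0<c<K(\epsilon^2/2-\gamma(\alpha))-\ln 2$ to conclude $\pp(\Omega_{N,\epsilon,\alpha})\ge 1-e^{-cN}$ for all $N$ large enough. There is no real analytical obstacle here; the only step that requires a moment of care is making sure the $o(N)$ corrections from the entropy estimate $|B_{\alpha N}(\pmb{\sigma}_0)|\le e^{N\gamma(\alpha)+o(N)}$ and from Stirling (via $\binom{|B|}{K}\le|B|^K/K!$) are absorbed into a slightly smaller rate $c$, which is automatic since $K$ is fixed and finite.
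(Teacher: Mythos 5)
Your proof is correct and follows essentially the same route as the paper's: a Gaussian tail bound for $\pp(\pmb{\sigma}\in\mathcal{L}_\epsilon)$, a binomial estimate $\binom{|B_{\alpha N}|}{K}e^{-K\epsilon^2 N/2}$ exploiting independence, the entropy asymptotics for $|B_{\alpha N}|$, and a union bound over the $2^N$ centers, with $K$ chosen large enough that the resulting exponent beats $\ln 2$. The only difference is cosmetic: the paper leaves the choice of $K$ implicit ("exponentially small for all $K$ large enough") whereas you spell out the threshold $K>\ln 2/(\epsilon^2/2-\gamma(\alpha))$ explicitly.
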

	\begin{proof}
		For any fixed $\pmb{\sigma}\in \mathcal{Q}_N$ we estimate using the independence of the basic random variables and the standard Gaussian tail bound
		\[ \pp(\vert B_{\alpha N}(\pmb{\sigma}) \cap \mathcal{L}_{\epsilon}\vert \geq K) \leq \binom{\vert B_{\alpha N}(\pmb{\sigma})\vert}{K} e^{-K \epsilon^2 N/2} \leq \exp\left(-\frac{K N}{2} \left(\epsilon^2 - 2 \gamma(\alpha) - o_\alpha(1)\right)   \right) .  \]
		The last estimate inserted the asymptotics $\ln  \vert B_{\alpha N}\vert  = N (\gamma(\alpha) +o_\alpha(1) ) $ of the size of a Hamming ball in terms of the binary entropy. 
		The union bound implies 
		$ \pp(\exists \, \pmb{\sigma} \, : \, \vert B_{\alpha_0 N}(\pmb{\sigma}) \cap \mathcal{L}_{\epsilon}\vert \geq K) \leq 2^{N} \pp(\vert B_{\alpha N}(\pmb{\sigma}) \cap \mathcal{L}_{\epsilon}\vert \geq K)    $,
		which is exponentially small for all $K$ large enough.
	\end{proof}
	
Proposition~\ref{prop:deloc} is based on a random-walk expansion for eigenvectors, which facilitates control of spherical means away from their maxima. 
Random walk techniques have been used in the theory of localization under the name 'locator' of 'Feenberg' expansions.
We do not aim to extract optimal information from this technique, which would require much more work. Rather we highlight the usefulness dealing with the critical case.  We will see another more refined use of spherical averaging techniques in Section~\ref{subsec:locL1}. 

As another preparation, we collect some basic properties of a simple random walk $ (Z_k)_{k \in \nn} $ on $\mathcal{Q}_N $, which starts at $Z_0 \coloneqq \pmb{\sigma}_0\in \mathcal{Q}_N  $ and chooses in each unit time step one of the $ N $ neighboring vertices with equal probability $ 1/N $. 
	Let \[ p_K(\pmb{\sigma}, \pmb{\sigma}_0  ) \coloneqq \Prob( Z_K = \pmb{\sigma} \vert Z_0  =  \pmb{\sigma}_0  ) \] stand for the probability to arrive at $ \pmb{\sigma} $ after $K $ steps. 
	Moreover, for a subset $ W  \subset \mathcal{Q}_N $ let  
	\[ M_{K} (W) \coloneqq \sum_{k =0}^{K} \mathbbm{1}[Z_k \in W] \] 
	be the number of visits of the random walk in $ W $ up to time $ K $.
	We will only need the following crude bounds, whose proofs we spell here for the reader's convenience. For further results on random walks on the Hamming cube, see e.g.~\cite{LP+17} and references therein.
	\begin{lemma}
		For all $ \pmb{\sigma}, \pmb{\sigma}_0 \in \mathcal{Q}_N $, all $ \alpha \in (0,3/16) $, and all $ N $ large enough:
		\begin{equation}\label{eq:pwalk}
			p_{\alpha N}(\pmb{\sigma}, \pmb{\sigma}_0 ) \leq \max\{e^{-\gamma(\alpha/8)N + o(N)}, e^{-\alpha N/8}  \}.
		\end{equation}
		For any finite subset $ W  \subset \mathcal{Q}_N $, all $ t \in (0,\alpha] $ and all $ N $ large enough:
		\be\label{eq:sejourntime}
		\Prob\left( M_{\alpha N} (W)  \geq t N \right) \leq  \exp\left( - tN \ln\left(\frac{t N}{\alpha \vert W\vert e} \right) \right) . 
		\ee
	\end{lemma}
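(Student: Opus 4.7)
The plan is to prove the two bounds independently by elementary arguments: \eqref{eq:pwalk} by a case split on $d \coloneqq d(\pmb{\sigma}, \pmb{\sigma}_0)$, and \eqref{eq:sejourntime} by an exponential Markov--Chernoff bound fueled by a crude one-step estimate on transitions into $W$.

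For \eqref{eq:pwalk}, I first invoke invariance of the walk's law under those cube automorphisms fixing $\pmb{\sigma}_0$, so that $p_K(\pmb{\sigma}, \pmb{\sigma}_0)$ depends only on $d$. Summing over $S_d(\pmb{\sigma}_0)$, which has cardinality $\binom{N}{d}$, yields the symmetry estimate $p_K(\pmb{\sigma}, \pmb{\sigma}_0) \leq \binom{N}{d}^{-1}$. For $d \geq \alpha N/8$, the hypothesis $\alpha < 3/16$ ensures $d \leq \alpha N < N/2$, so $\binom{N}{d}^{-1}$ is maximized on this range at $d = \lceil \alpha N/8\rceil$, and Stirling gives $p_K(\pmb{\sigma}, \pmb{\sigma}_0) \leq e^{-\gamma(\alpha/8) N + o(N)}$. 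For $d < \alpha N/8$, I switch to a diagonal bound: since $P \coloneqq -T/N$ is self-adjoint, $P^K$ is positive semidefinite for $K$ even, so Cauchy--Schwarz together with the vertex-transitive symmetry $p_K(\pmb{\sigma}, \pmb{\sigma}) = p_K(\pmb{\sigma}_0, \pmb{\sigma}_0)$ gives $p_K(\pmb{\sigma}, \pmb{\sigma}_0) \leq p_K(\pmb{\sigma}_0, \pmb{\sigma}_0)$. The Hadamard basis $(\Phi_A)$ diagonalizing $T$ from \eqref{eq:efs} then yields
\begin{equation*}
 p_K(\pmb{\sigma}_0,\pmb{\sigma}_0) = 2^{-N}\sum_{n=0}^{N}\binom{N}{n}\Bigl(1-\tfrac{2n}{N}\Bigr)^K \leq 2\left(\frac{1+e^{-2\alpha}}{2}\right)^N,
\end{equation*}
where I used $(1-x)^K \leq e^{-Kx}$ on $[0,1]$, the symmetry $n \leftrightarrow N-n$, and the binomial identity $\sum_n \binom{N}{n}y^n = (1+y)^N$. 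A short one-variable calculus check shows $(1+e^{-2\alpha})/2 \leq e^{-\alpha/8}$ on $(0, 3/16)$, giving $p_K(\pmb{\sigma}, \pmb{\sigma}_0) \leq e^{-\alpha N/8 + o(N)}$. For $K$ odd I would factor $P^K = P^{(K+1)/2}P^{(K-1)/2}$ and apply Cauchy--Schwarz in $\ell^2(\mathcal{Q}_N)$ to conclude $p_K(\pmb{\sigma}, \pmb{\sigma}_0)^2 \leq p_{K-1}(\pmb{\sigma}_0, \pmb{\sigma}_0)\,p_{K+1}(\pmb{\sigma}_0, \pmb{\sigma}_0)$, then apply the same spectral bound to each factor.

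For \eqref{eq:sejourntime}, the plan is a direct exponential Markov bound. For any $\lambda > 0$, with $u \coloneqq e^\lambda - 1 > 0$, one has
\begin{equation*}
 e^{\lambda M_{\alpha N}(W)} = \prod_{k=0}^{\alpha N}\bigl(1 + u\,\mathbbm{1}[Z_k \in W]\bigr).
\end{equation*}
Conditioning iteratively on $Z_{k-1}$ and using the Markov property, the one-step estimate $\Prob(Z_k \in W \mid Z_{k-1}) = N^{-1}|\{i : F_i Z_{k-1} \in W\}| \leq |W|/N$ gives
\begin{equation*}
 \mathbb{E}\bigl[e^{\lambda M_{\alpha N}(W)}\bigr] \leq (1+u)\left(1 + \frac{u|W|}{N}\right)^{\alpha N} \leq \exp\bigl(u(1+\alpha|W|)\bigr).
\end{equation*}
Combining this with the Chernoff inequality $\Prob(M_{\alpha N}(W) \geq tN) \leq e^{-\lambda tN}\,\mathbb{E}[e^{\lambda M_{\alpha N}(W)}]$ and optimizing at $\lambda = \ln(tN/(\alpha|W|))$ produces, after dropping subleading constants and the non-positive $-\alpha|W|$ contribution, the claimed bound $\exp(-tN\ln(tN/(\alpha|W| e)))$.

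Neither estimate presents a serious difficulty. The most delicate elements I foresee are the elementary inequality $(1+e^{-2\alpha})/2 \leq e^{-\alpha/8}$ on $(0, 3/16)$ (verifiable by checking the endpoints and the sign of the derivative of the difference) and the parity bookkeeping in the spectral bound for \eqref{eq:pwalk}; both are routine. The conceptual crux of \eqref{eq:sejourntime} is the observation that the uniform one-step bound $|\{i : F_i \pmb{\tau} \in W\}| \leq |W|$ lets the dependent indicators $\mathbbm{1}[Z_k \in W]$ be decoupled by iterated conditioning, reducing the tail estimate to a standard Poisson-type Chernoff computation.
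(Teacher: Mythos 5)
Your proof of \eqref{eq:pwalk} is correct but takes a genuinely different route from the paper's for the regime $d(\pmb{\sigma},\pmb{\sigma}_0) < \alpha N/8$. The paper argues probabilistically about the distance process: a walk at distance $d \leq \alpha N$ decreases its distance to $\pmb{\sigma}_0$ with conditional probability at most $\alpha$, so to end within $\alpha N/8$ after $\alpha N$ steps requires at least $\tfrac{3}{8}\alpha N$ such decreases, and a Bernoulli Chernoff bound gives $e^{-\alpha N/8}$. Your spectral argument diagonalizes $P = -T/N$ in the Hadamard basis and bounds the return probability $p_K(\pmb{\sigma}_0,\pmb{\sigma}_0)$ directly by $2\bigl((1+e^{-2\alpha})/2\bigr)^N$; this is cleaner, uniform in $d$, and buys a statement about the transition kernel itself. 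Two things should be made explicit: the diagonal-dominance step $p_K(\pmb{\sigma},\pmb{\sigma}_0)\le p_K(\pmb{\sigma}_0,\pmb{\sigma}_0)$ requires $K$ even so that $P^K\ge 0$ (your Cauchy--Schwarz patch for odd $K$ handles this), and the prefactor $2$ (and the $e^{O(1)}$ arising for odd $K$) is only absorbed for large $N$ because $(1+e^{-2\alpha})/2 < e^{-\alpha/8}$ is a \emph{strict} inequality on $(0,3/16)$, which supplies genuine exponential slack.

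Your proof of \eqref{eq:sejourntime} has a gap in the handling of the $k=0$ factor. You bound $1+u\,\mathbbm{1}[Z_0\in W]$ by $1+u=e^\lambda$ and fold it into the moment generating function, obtaining $\mathbb{E}\bigl[e^{\lambda M_{\alpha N}(W)}\bigr]\le e^{u(1+\alpha|W|)}$. After inserting $\lambda=\ln(tN/(\alpha|W|))$, the Chernoff exponent reads $-tN\ln\tfrac{tN}{\alpha|W|}+tN+\tfrac{tN}{\alpha|W|}-1-\alpha|W|$, and the term $\tfrac{tN}{\alpha|W|}$ is $\Theta(N)$ for fixed $t,|W|$; it is not a ``subleading constant'' and cannot be dropped. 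The resulting bound is $e^{\Theta(N)}$ times larger than claimed and even becomes trivially $>1$ (e.g.\ $|W|=1$, $\alpha=t=1/10$, $N=100$ yields exponent $\approx +62.9$). The fix is to keep $\mathbbm{1}[Z_0\in W]$ out of the generating function since it is deterministic: apply the Chernoff bound to $\tilde M\coloneqq\sum_{k=1}^{\alpha N}\mathbbm{1}[Z_k\in W]$, which gives $\Prob(\tilde M\ge tN)\le e^{-tN\ln(tN/(\alpha|W|))+tN-\alpha|W|}\le e^{-tN\ln(tN/(\alpha|W| e))}$, matching the claim when $\pmb{\sigma}_0\notin W$ and losing only a polynomial factor otherwise. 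This is the reading the paper intends when it treats $M_{\alpha N}(W)$ as a sum of conditionally independent Bernoullis with success probability $\le |W|/N$; the $k=0$ summand is not one of them.
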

	\begin{proof}
		If $d = d(\pmb{\sigma}, \pmb{\sigma}_0 ) \geq \alpha N/8$, we have by spherical symmetry and the asymptotics of the binomial coefficient
		\[ p_{\alpha N}(\pmb{\sigma}, \pmb{\sigma}_0) = \binom{N}{d}^{-1} \sum_{\pmb{\sigma}^\prime \in S_d(\pmb{\sigma}_0)}p_{\alpha N}(\pmb{\sigma}^\prime, \pmb{\sigma}_0  )  \leq  \binom{N}{d}^{-1}  \leq e^{-\gamma(\alpha/8)N +o(N)}. \]
		To complete the proof~\eqref{eq:pwalk}, we discuss the case $d(\pmb{\sigma}, \pmb{\sigma}_0 ) < \alpha N/8$. A simple random walk, which at step $ k $ is at distance $ d =   d(Z_{k},\pmb{\sigma}_0) ) \in [1,\alpha N] $ to the starting point, has $ N-d $ possibilities to move further away and only $ d \leq \alpha N $  to decrease the distance to $ \pmb{\sigma}_0 $ by one. Hence,  $ \Prob( d(Z_{k+1},\pmb{\sigma}_0) < d(Z_{k},\pmb{\sigma}_0)  ) \leq \alpha $  for any $0\leq k \leq \alpha N $. 
		However, to end after $ \alpha N $ steps at some $\pmb{\sigma} \in B_{\alpha N/8}(\pmb{\sigma}_0) $, the walk $ (Z_k)_{k \in \nn}$ has at least $\frac38 \alpha N$ steps, where it gets back closer to the center. Since the random variables $Y_k = \mathbbm{1}[ d(Z_{k+1},\pmb{\sigma}_0) < d(Z_{k},\pmb{\sigma}_0)   ] $ are distributed as conditionally independent Bernoulli variables with success probability at most $\alpha$, we thus arrive at
		\[  p_{\alpha N}(\pmb{\sigma}, \pmb{\sigma}_0) \leq \Prob(\left(\sum_{k=1}^{\alpha N} Y_k \geq 3 \alpha N/8 \right) \leq e^{- \alpha N/8}, \] 
		by a standard Chernoff-bound for Bernoulli variables. This establishes~\eqref{eq:pwalk}.
		
		For a proof of~\eqref{eq:sejourntime} we note that as $\vert W \cap S_1(\pmb{\sigma}) \vert \leq \vert W\vert $ for any $\pmb{\sigma} \in B_{\alpha N}(\pmb{\sigma}_0)$
		\[ \Prob(Z_{k+1} \in W \, \vert \, Z_k = \pmb{\sigma} ) \leq \frac{\vert W\vert}{N}. \]
		The claim thus follows again by a standard Chernoff bound, since $  M_{\alpha N} (W) $ is a sum of conditionally independent Bernoulli-type random variables  with success probability smaller than $ \vert W\vert/N$. 
	\end{proof}
	
	\begin{proof}[Proof of Proposition~\ref{prop:deloc}]
		One easily sees that $U_{\delta}$ meets the requirements of Theorem~\ref{lem:qpgs} with variance 
		$w_N = \Ee[U_\delta(\pmb{\sigma})^2] = N(1- \mathcal{O}( e^{-\frac12(\beta_c - \delta)^2 N}))$. Consequently, for any $ \tau \in (0,1) $ there is some sequence $  \Omega_{N,\tau,\delta} $ with $  \pp(  \Omega_{N,\tau,\delta} ) \geq 1 - e^{- C N}  $, and we arrive at the description of eigenvalues in~\eqref{eq:center} for all $ E < -(\beta_c - \delta/2) N$. 
		This also implies that for $ \eta \in (0,\delta/2) $ and all $ N $ large enough:
		\be\label{eq:entropynumber}
		\tr  \mathbbm{1}_{(-\infty, -(\Gamma - \eta) N)}(H_{\delta}) \leq  \tr \mathbbm{1}_{(-\infty, -(\Gamma - \eta) N)}(\Gamma T) = \sum_{n=0}^{N\eta/(2\Gamma)} \binom{N}{n}  \leq e^{N \gamma(\eta/(2\Gamma)) +o_\Gamma(N)} , 
		\ee
		again by the known asymptotics of the binomial coefficients. 
		
		For a proof of the exponential delocalization estimate~\eqref{eq:deloccrit}, we assume
		throughout the validity of $  \Omega_{N,\tau,\delta} $ with some $ \tau \in (0,1) $ and the event $  \Omega_{N,\epsilon,\alpha_0} $ of Lemma~\ref{lem:probest} with $ \epsilon \coloneqq \Gamma/100 $ and some fixed $ \alpha_0 > 0 $ small enough such that $ 2 \gamma(\alpha_0) < \epsilon^2 $. The intersection of these two events still has a probability, which is exponentially bounded from below independent of $ \delta $ as required.

		Let $\psi$ be an $ \ell^2 $-normalized eigenfunction of $\Gamma T +  U_{\delta}$ with eigenvalue $E < -(\beta_c - \delta/2)N$ and suppose that $\pmb{\sigma}_\psi \in \mathcal{Q}_N $ is a configuration, where $\psi$ takes its maximum absolute value. 
		To make the main idea transparent, we proceed in two steps.\\
		
		\noindent
		\emph{Step 1:}~~We first assume that $B_{\alpha_0 N}(\pmb{\sigma}_\psi) \cap \mathcal{L}_{\epsilon} = \emptyset$. 
		The eigenvalue equation for $ \psi $ at any $\pmb{\sigma} \in B_{\alpha_0 N}(\pmb{\sigma}_\psi)$ together with the bound $U \geq - \epsilon N$ implies 
		\be\label{eq:sperical}  \begin{split} \vert\psi(\pmb{\sigma})\vert & \leq \frac{\Gamma} { \vert E - U(\pmb{\sigma}) \vert } \sum_{\pmb{\sigma^\prime} \in S_1(\pmb{\sigma} )}
			\vert\psi(\pmb{\sigma}^\prime)\vert  \leq   \frac{C_\Gamma(E,\epsilon)}{N} \sum_{\pmb{\sigma^\prime} \in S_1(\pmb{\sigma} )}
			\vert \psi(\pmb{\sigma}^\prime)\vert  , \\ \quad C_\Gamma(E,\epsilon) &\coloneqq \frac{\Gamma}{ \vert E\vert/N -\epsilon} \end{split}  \ee
		We start at $ \pmb{\sigma}_\psi$ and use this estimate iteratively $\alpha N$ times for some $\alpha < \alpha_0$ to arrive at 
		\begin{align}\label{eq:appmain} \vert\psi(\pmb{\sigma}_\psi)\vert & \leq C_\Gamma(E,\epsilon)^{\alpha N}  \sum_{\pmb{\sigma} \in B_{\alpha N}(\pmb{\sigma}_\psi )} p_{\alpha N}(\pmb{\sigma}, \pmb{\sigma}_\psi ) \vert\psi(\pmb{\sigma})\vert \\ \leq   C_\Gamma(&E,\epsilon)^{\alpha N}  \Big(  \sum_{\pmb{\sigma} \in B_{\alpha N}(\pmb{\sigma}_\psi )} p_{\alpha N}(\pmb{\sigma}, \pmb{\sigma}_\psi )^2  \Big)^{1/2} \notag 
			\leq   C_\Gamma(E,\epsilon)^{\alpha N}  \max_{\pmb{\sigma}\in B_{\alpha N}(\pmb{\sigma}_\psi )} p_{\alpha N}(\pmb{\sigma}, \pmb{\sigma}_\psi )^{1/2} 
		\end{align}
		where $p_{\alpha N}(\pmb{\sigma}, \pmb{\sigma}_\psi ) $ is the probability of a simple random walk on $\mathcal{Q}_N$ starting at $ \pmb{\sigma}_\psi  $ to arrive at $\pmb{\sigma}$ after $\alpha N$ steps. Here we have the Cauchy-Schwarz inequality and the normalization of $ \psi $ as well as $ p_{\alpha N} $. 
		Since  $\vert E\vert > (\Gamma -\delta/4) N \geq \Gamma N (1 - 1/100) $  we see that for $N$ large enough:
		$ C_\Gamma(E,\epsilon) \leq \left(\frac{1}{1-1/50}\right)^{\alpha N} \leq e^{\alpha N/49} $. 
		Together with the probability bound~\eqref{eq:pwalk} our main estimate~\eqref{eq:appmain} yields 
		\be\label{eq:end}   \vert\psi(\pmb{\sigma}_\psi)\vert^2  \leq \max\{ e^{(\frac{2\alpha}{49} -\gamma(\alpha/8) + o(1)) \, N}, e^{-\frac{1}{12} \alpha N} \} \eqqcolon e^{-c_\alpha N/2} . \ee
		Due to the bound~\eqref{eq:entropynumber} on the number of eigenvalues, which allows us to pick $ \eta> 0 $ arbitrarily small not to spoil any exponential decay of the eigenfunctions, the proof of the exponential bound~\eqref{eq:deloccrit} in case of Step 1 is completed if we choose $ \alpha \in (0, \alpha_0) $ such that $\gamma(\alpha/8) > \frac{2\alpha}{49}$. This is always possible since $\gamma$ has an infinite slope at zero.\\[.5ex]

		\noindent
		\emph{Step 2:}~~We now turn to general case, in which the intersection
		$ W \coloneqq B_{\alpha_0 N}(\pmb{\sigma}) \cap \mathcal{L}_{\epsilon} $
		is nonempty, but finite of size at most $ K $. For $\pmb{\sigma} \in W$ the spherical mean estimate~\eqref{eq:sperical} turns into
		\[ \vert\psi(\pmb{\sigma})\vert \leq \frac{C_\Gamma(E,\beta_c -\delta)}{N} \sum_{\pmb{\sigma^\prime} \in S_1(\pmb{\sigma} )} \vert\psi(\pmb{\sigma}^\prime)\vert  \]
		as we still have $U(\pmb{\sigma}) \geq -(\beta_c - \delta)N$.
		For a random walk $ (Z_k)_{k\in\mathbb{N}} $, which starts again at $ Z_0 \coloneqq \pmb{\sigma}_\psi  $, let $ M_{\alpha N}(W) $ stand for the number of visits of sites in $ W $. 
		We way modify our prior estimate by distinguishing between the random walks with a high visit number $M_{\alpha N}(W) > tN$  and low visit number  $M_{\alpha N}(W) \leq tN$ . Indeed, abbreviating by $ p_{\alpha N}(\pmb{\sigma}, \pmb{\sigma}_\psi \vert M_{\alpha N}(W) < t N )  $ the transition probability of the random walk to reach $ \pmb{\sigma} $ after $ \alpha N $ steps and spending only less than $ t N $ steps in $ W $, we have  for any $t > 0$
		\begin{align*} 
			\vert\psi(\pmb{\sigma}_\psi)\vert  \leq & \ C_\Gamma(E,\epsilon)^{(\alpha-t) N} C_\Gamma(E,\beta_c-\delta)^{tN}  \sum_{\pmb{\sigma} \in B_{\alpha N}(\pmb{\sigma}_\psi )} p_{\alpha N}(\pmb{\sigma}, \pmb{\sigma}_\psi \vert M_{\alpha N}(W) < t N ) \; \vert\psi(\pmb{\sigma})\vert \notag \\
			& \quad + C_\Gamma(E,\beta_c-\delta)^{\alpha N}  \sum_{\pmb{\sigma} \in B_{\alpha N}(\pmb{\sigma}_\psi )} p_{\alpha N}(\pmb{\sigma}, \pmb{\sigma}_\psi \vert M_{\alpha N}(W) \geq t N ) \; \vert\psi(\pmb{\sigma})\vert  \\
			\leq & \ C_\Gamma(E,\beta_c-\delta)^{tN}  e^{-c_\alpha N/2} + C_\Gamma(E,\beta_c-\delta)^{\alpha N}  \Prob( M_{\alpha N}(W) \geq t N ) ,
		\end{align*}
		where $c_\alpha >0$ is the constant in~\eqref{eq:end}.
		According to~\eqref{eq:sejourntime} the probability in the right side decays for any $t>0$ faster than any exponential function. Thus, choosing $t = c_\alpha /(4 \ln C_\Gamma(E,\beta_c-\delta))$ yields
		$ \vert\psi(\pmb{\sigma}_\psi)\vert^2 \leq e^{-c_\alpha N/4} $
		for $N$ large enough and $c_\alpha > 0$ is a constant independent from $\delta$ and $\Gamma$.
		This completes the proof.
	\end{proof}

\section{Extreme localization regime}\label{sec:loc}

\subsection{Deep-hole geometry}

The proof of our main results in the spin-glass regime are based on the deep-hole geometry of the REM. They rest on the fact that the large extremal sites $ \mathcal{L}_{\beta_c - \delta} $ of the REM, which were defined in~\eqref{eq:large}, are well separated on $\mathcal{Q}_N$ at least if $ \delta \in (0, \beta_c) $ is not too large.

\begin{definition}\label{def:deephole}
	Let $ \varepsilon,\delta > 0 $ and $ \alpha \in (0,\tfrac{1}{2}) $. Then 
	$ U :\mathcal{Q}_N \to \mathbb{R} $ is said to satisfy:
	\begin{enumerate}
		\item  a  \emph{local $(\varepsilon,\delta,\alpha) $-deep hole scenario on $ B_{\alpha N}(\pmb{\sigma}) $} with  $\pmb{\sigma} \in \mathcal{L}_{\beta_c-\delta}$  if: 
		\begin{enumerate}
			\item  $\vert U(\pmb{\sigma}^\prime)\vert \leq \epsilon N$ for all $\pmb{\sigma}^\prime \in B_{\alpha N}(\pmb{\sigma})$ with $ \pmb{\sigma}^\prime \neq  \pmb{\sigma} $,
			\item $u(\pmb{\sigma})\coloneqq \frac{1}{N^2} \sum_{\pmb{\sigma}^\prime \in S_1(\pmb{\sigma}) } \vert U(\pmb{\sigma}^\prime) \vert \leq N^{-1/4}$.
		\end{enumerate}
		\item  a  \emph{global $(\varepsilon,\delta,\alpha) $-deep hole scenario} if:
		\begin{enumerate}
			\item $ U $ satisfies a local $(\varepsilon,\delta,\alpha) $-deep hole scenario on $ B_{\alpha N}(\pmb{\sigma}) $ for all $\pmb{\sigma} \in \mathcal{L}_{\beta_c-\delta}$,
			\item $ B_{\alpha N}( \pmb{\sigma}) \cap B_{\alpha N}( \pmb{\sigma}^\prime) = \emptyset $ for all pairs $ \pmb{\sigma},  \pmb{\sigma}^\prime \in \mathcal{L}_{\beta_c-\delta} $ with $  \pmb{\sigma} \neq \pmb{\sigma}^\prime $.
		\end{enumerate}
	\end{enumerate}
\end{definition}


The probabilistic estimate for the occurrence  of a global deep-hole scenario in the REM is the subject of the following lemma.

\begin{lemma}\label{lem:deeph}
	Let $\epsilon, \delta>0$ and $ \alpha \in (0,1/2) $ be such that
	\begin{equation}\label{eq:cond}
		2	\gamma(3\alpha) + \delta(2 \beta_c - \delta) < \epsilon^2 .
	\end{equation}
	The event $ \Omega_N(\varepsilon,\delta,\alpha) \coloneqq \left\{  \text{$U $ satisfies a global $(\varepsilon,\delta,\alpha) $-deep hole scenario} \right\} $ occurs with probability exponentially close to one, i.e., there is some $ c(\varepsilon,\delta,\alpha) > 0 $ such that for all $ N $ sufficiently large:
	\be\label{eq:GDH}
	\pp\left(\Omega_N(\varepsilon,\delta,\alpha)\right) \geq 1 - e^{-  c(\varepsilon,\delta,\alpha) N } .
	\ee 
\end{lemma}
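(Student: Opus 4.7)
The plan is a union-bound argument: I decompose the complement of $\Omega_N(\varepsilon,\delta,\alpha)$ according to which of the three requirements in Definition~\ref{def:deephole} fails, namely
\begin{align*}
\mathcal{B}_1 &:= \bigl\{\exists\,\pmb{\sigma}\in\mathcal{L}_{\beta_c-\delta},\ \pmb{\sigma}'\in B_{\alpha N}(\pmb{\sigma})\setminus\{\pmb{\sigma}\}\colon |U(\pmb{\sigma}')|>\varepsilon N\bigr\},\\
\mathcal{B}_2 &:= \bigl\{\exists\,\pmb{\sigma}\neq\pmb{\sigma}'\in\mathcal{L}_{\beta_c-\delta}\colon d(\pmb{\sigma},\pmb{\sigma}')\le 2\alpha N\bigr\},\\
\mathcal{B}_3 &:= \bigl\{\exists\,\pmb{\sigma}\in\mathcal{L}_{\beta_c-\delta}\colon u(\pmb{\sigma})>N^{-1/4}\bigr\},
\end{align*}
and show each has exponentially small probability.

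The main step is a combined pair bound for $\mathcal{B}_1$ and $\mathcal{B}_2$. Observing that we may (and will) assume $\varepsilon\le \beta_c-\delta$ (this only strengthens the scenario, and the hypothesis \eqref{eq:cond} still holds after decreasing $\varepsilon$ if needed), both $\mathcal{B}_1$ and $\mathcal{B}_2$ are contained in
\[
\mathcal{E}:=\bigl\{\exists\,\pmb{\sigma}\neq\pmb{\sigma}',\ d(\pmb{\sigma},\pmb{\sigma}')\le 3\alpha N,\ |U(\pmb{\sigma})|\ge(\beta_c-\delta)N,\ |U(\pmb{\sigma}')|\ge\varepsilon N\bigr\},
\]
since a pair of $(\beta_c-\delta)$-deep sites at distance $\le 2\alpha N\le 3\alpha N$ automatically qualifies. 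Using independence of the $U(\pmb{\sigma})$ and the Gaussian tail $\mathbb{P}(|U(\pmb{\sigma})|\ge tN)\le 2 e^{-t^2 N/2}$, together with the binomial asymptotics $|B_{3\alpha N}|\le e^{N\gamma(3\alpha)+o(N)}$, the union bound gives
\[
\mathbb{P}(\mathcal{E})\le 4\cdot 2^{N}\,|B_{3\alpha N}|\,e^{-N[(\beta_c-\delta)^2+\varepsilon^2]/2}
\le \exp\!\Bigl(\tfrac{N}{2}\bigl(2\gamma(3\alpha)+\delta(2\beta_c-\delta)-\varepsilon^2+o(1)\bigr)\Bigr),
\]
after using $\beta_c^2=2\ln 2$ and the elementary identity $\beta_c^2-(\beta_c-\delta)^2=\delta(2\beta_c-\delta)$. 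By hypothesis \eqref{eq:cond} the exponent is strictly negative for large $N$.

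For $\mathcal{B}_3$, note that $\{U(\pmb{\sigma}')\}_{\pmb{\sigma}'\in S_1(\pmb{\sigma})}$ are i.i.d.\ $\mathcal{N}(0,N)$ and independent of $U(\pmb{\sigma})$; thus $u(\pmb{\sigma})$ is an average of $N$ i.i.d.\ half-normals with mean of order $N^{-1/2}$, which is $o(N^{-1/4})$. A standard sub-Gaussian concentration (e.g.\ the bounded-differences or Gaussian concentration inequality applied to $F(\omega)=\tfrac{1}{N^{3/2}}\sum_{\pmb{\sigma}'\in S_1(\pmb{\sigma})}|\omega(\pmb{\sigma}')|$, whose Lipschitz constant is $N^{-1}$) gives $\mathbb{P}(u(\pmb{\sigma})>N^{-1/4})\le e^{-cN^{3/2}}$ for some absolute $c>0$. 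A trivial union bound over all $2^N$ sites then yields $\mathbb{P}(\mathcal{B}_3)\le 2^N e^{-cN^{3/2}}$, which is super-exponentially small. Combining, $\Omega_N(\varepsilon,\delta,\alpha)^c\subset\mathcal{E}\cup\mathcal{B}_3$, establishing \eqref{eq:GDH}.

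The only nontrivial point is the algebraic reduction of the Gaussian-tail exponent to the form \eqref{eq:cond}; there is no real obstacle otherwise. The choice of $3\alpha$ (rather than $2\alpha$) in the pair bound is a convenient uniform radius that subsumes both the radius $\alpha N$ relevant for $\mathcal{B}_1$ and the radius $2\alpha N$ relevant for $\mathcal{B}_2$, at the mild cost of an inessentially strengthened hypothesis and providing some safety margin for subsequent uses (compare Lemma~\ref{lem:probest}).
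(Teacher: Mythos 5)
Your argument is essentially the paper's: the paper first bounds the probability of the single event
\[
\widehat\Omega_N(\varepsilon,\delta,\alpha)=\bigl\{\exists\,\pmb{\sigma}\in\mathcal{L}_{\beta_c-\delta},\ \pmb{\sigma}'\in B_{3\alpha N}(\pmb{\sigma})\setminus\{\pmb{\sigma}\}\ \text{s.t.}\ |U(\pmb{\sigma}')|>\epsilon N\bigr\},
\]
which plays exactly the role of your $\mathcal{E}$, by the same union bound over pairs, the same Gaussian tails, the same entropy asymptotics $\ln|B_{3\alpha N}|=N(\gamma(3\alpha)+o(1))$, and the same algebraic identity $\beta_c^2-(\beta_c-\delta)^2=\delta(2\beta_c-\delta)$, yielding a negative exponent under \eqref{eq:cond}. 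It then handles the mean-field condition on $u(\pmb{\sigma})$ by an exponential Chebyshev--Markov bound giving $\mathbb{P}(u(\pmb{\sigma})>N^{-1/4})\le 2^N e^{-N^{3/2}/2}$, whereas you invoke Gaussian concentration for the Lipschitz functional $F$ with constant $N^{-1}$; both give a super-exponential rate $e^{-cN^{3/2}}$ and are standard. Your explicit decomposition into $\mathcal{B}_1,\mathcal{B}_2,\mathcal{B}_3$ makes the logic slightly more transparent than the paper's phrasing but is mathematically the same.

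One small slip: the parenthetical ``the hypothesis \eqref{eq:cond} still holds after decreasing $\varepsilon$ if needed'' is backwards — \eqref{eq:cond} is a \emph{lower} bound on $\varepsilon^2$, so decreasing $\varepsilon$ can only break it. What is true, and what the paper also tacitly uses, is that the lemma is meant to be applied in the regime $\varepsilon\le\beta_c-\delta$ (indeed all later invocations take $\varepsilon=\beta_c/2$ with $\delta$ small), and in that regime a $(\beta_c-\delta)$-deep neighbor automatically witnesses $\mathcal{E}$, giving $\mathcal{B}_2\subset\mathcal{E}$. You were right to flag that this inclusion needs $\varepsilon\le\beta_c-\delta$; the paper leaves it implicit. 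You should simply state it as a standing assumption (or observe that the intended parameter range makes it automatic) rather than try to derive it from \eqref{eq:cond}.
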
 
\begin{proof}
	We first bound  the probability of the event 
	\[ 
	\widehat  \Omega_N(\varepsilon,\delta,\alpha) \coloneqq \left\{ \exists \, \pmb{\sigma} \in \mathcal{L}_{\beta_c-\delta}, \, \pmb{\sigma}^\prime \in B_{3\alpha N}(\pmb{\sigma}) \backslash \{\pmb{\sigma} \}  \text{ s.t. } \vert U(\pmb{\sigma}^\prime)\vert > \epsilon N \right\} .
	\]
	On its complement, all  $  \pmb{\sigma} \in \mathcal{L}_{\beta_c-\delta} $ satisfy the first requirement in the local deep-hole definition on $ B_{\alpha N}( \pmb{\sigma}) \subset B_{3\alpha N}( \pmb{\sigma}) $, and the balls of radius $ \alpha N $ around the large deviation sites are disjoint., i.e., the second requirement in the global deep-hole definition is also checked. By a union bound and independence, we conclude:
	\[ \begin{split} \pp\left(\widehat  \Omega_N(\varepsilon,\delta,\alpha) \right) &\leq \sum_{\pmb{\sigma} \in \mathcal{Q}_N } \pp(U(\pmb{\sigma}) \leq - (\beta_c-\delta) N) \sum_{\pmb{\sigma}^\prime \in B_{3\alpha N}(\pmb{\sigma} )\setminus\{\pmb{\sigma}  \}}  \pp(\vert U(\pmb{\sigma}')\vert \geq \epsilon N ) \\
		& \leq 2^{N+1} \left\vert B_{3\alpha N} \right\vert  e^{-(\beta_c-\delta)^2N/2} e^{-\epsilon^2 N/2} \leq e^{\left(\gamma(3\alpha) +\beta_c \delta - \frac{\delta^2+\epsilon^2}{2} +o(1)\right)N }.\end{split} \]
	The second line is a result of the usual Gaussian-tail estimates and the fact that the volume of a Hamming ball of radius $ \alpha N < N/2 $ is asymptotically given in terms of the binary entropy, $ \ln \vert B_{\alpha N} \vert = N ( \gamma(\alpha) + o(1)) $ as $ N \to \infty $. Using assumption \eqref{eq:cond}, we see that the above probability is exponentially small in $N$.
	
	The proof is concluded by showing that the event 
	\be\label{def:Omu}
	\Omega_N^{u} \coloneqq \left\{ \max_{\pmb{\sigma} \in \mathcal{Q}_N}   u(\pmb{\sigma})  \leq N^{-1/4} \right\} 
	\ee
	occurs with a probability, which is exponentially close to one, i.e.
	\be\label{eq:uest} \pp(\exists \, \pmb{\sigma} \in \mathcal{Q}_N  \text{ s.t. } u(\pmb{\sigma}) > N^{-1/4}) \leq 2^{2N}  e^{-N^{3/2}/2} .\ee
	For a proof of this bound, we rewrite the moment-generating function of $  u(\pmb{\sigma}) $ for any $t >0$ in terms of a standard normal variable $ g $:
	\[ \mathbbm{E}[e^{t u(\pmb{\sigma}) }]  = \mathbbm{E}[e^{t N^{-3/2} \vert g\vert }]^N \leq 2^N \mathbbm{E}[e^{t N^{-3/2} g }]^N =   2^N e^{t^2/(2N^2)}.\]
	By an exponential Chebychev-Markov estimate with $t = N^{7/4}$, this  then yields
	$ \pp(u(\pmb{\sigma}) > N^{-1/4}) \leq 2^N e^{-N^{3/2}/2} $, 
	and hence the claim by a union bound using $\vert\mathcal{Q}_N\vert = 2^N$. 
\end{proof}

\subsection{Rank-one analysis}\label{sec:rank1}
If $ U $ satisfies a local $(\varepsilon,\delta,\alpha) $-deep hole scenario on  
$ B_{\alpha N}(\pmb{\sigma}) $ at some fixed $  \pmb{\sigma} \in \mathcal{L}_{\beta_c-\delta}  $, it is natural  
to consider the Hamiltonian $ H_{\alpha N}(\pmb{\sigma}) = \Gamma T_{\alpha N} + U  $ restricted to $ \ell^2(B_{\alpha N}(\pmb{\sigma})) $, i.e.
\[  \langle \delta_{\pmb{\tau}} \vert H_{\alpha N}(\pmb{\sigma}) \delta_{\pmb{\tau}^\prime} \rangle  = \langle \delta_{\pmb{\tau}} \vert H \delta_{\pmb{\tau}^\prime} \rangle  \ \mathbbm{1}_{B_{\alpha N}(\pmb{\sigma})}(\pmb{\tau})\mathbbm{1}_{B_{\alpha N}(\pmb{\sigma})}(\pmb{\tau}^\prime),   \]
A spectral analysis of these self-adjoint matrices is facilitated by rank-one perturbation theory. 
Since $ \delta_{\pmb{\sigma}} $ is a cyclic vector for $ H_{\alpha N}(\pmb{\sigma})  $, the spectrum can read from zeros of the meromorphic function given by
\be\label{eq:rank1} \begin{split}
	\langle \delta_{\pmb{\sigma}}  \vert \left( H_{\alpha N}(\pmb{\sigma}) - z\right)^{-1}   \delta_{\pmb{\sigma}} \rangle^{-1} &= U(\pmb{\sigma}) - \Sigma(\pmb{\sigma},z) , \quad  \\ \Sigma(\pmb{\sigma},z) &\coloneqq - \langle \delta_{\pmb{\sigma}}  \vert \big( H_{\alpha N}^{\prime} (\pmb{\sigma}) - z\big)^{-1}   \delta_{\pmb{\sigma}} \rangle^{-1} , \end{split}
\ee
where $  H_{\alpha N}^{\prime} (\pmb{\sigma})  $ coincides with the matrix $ H_{\alpha N}^{\prime} (\pmb{\sigma})  $ when setting $ U(\pmb{\sigma}) = 0 $.
Moreover, an $ \ell^2 $-normalized eigenvector $ \varphi_E $ corresponding to $ E \in \spec H_{\alpha N}(\pmb{\sigma}) $ is given in terms of the free resolvent, i.e., 
\be\label{eq:rank1ef}
\varphi_E( \pmb{\tau})  =  - U(\pmb{\sigma} ) \,   \varphi_E( \pmb{\sigma})  \langle \delta_{\pmb{\tau}}  \vert \left( H_{\alpha N}^{\prime} (\pmb{\sigma}) - E_{\pmb{\sigma} } \right)^{-1}   \delta_{\pmb{\sigma}} \rangle  ,
\ee
for any $  \pmb{\tau} \in B_{\alpha N}(\pmb{\sigma})  $, cf.~\cite[Theorem~5.3]{AW15}. The deep-hole scenario then entails the following information about the low-energy part of the spectrum.

\begin{lemma}\label{lem:rank1}
	Suppose $ U $ satisfies a local $(\varepsilon,\delta,\alpha) $-deep hole scenario on  
	$ B_{\alpha N}(\pmb{\sigma}) $ at some $  \pmb{\sigma} \in \mathcal{L}_{\beta_c-\delta}  $ with \be\label{eq:cond2}
	2 \Gamma \sqrt{\alpha (1-\alpha) } + \varepsilon < \beta_c - 2 \delta .
	\ee
	Then for all sufficiently large $ N $, the spectrum $ \spec_{E_\delta}  H_{\alpha N}(\pmb{\sigma})  \coloneqq  \spec H_{\alpha N}(\pmb{\sigma}) \cap (-\infty, E_\delta) $ below $ E_\delta := -N (\beta_c - \delta) $ consists only of one simple eigenvalue $ E_{\pmb{\sigma} }  $ which satisfies
	\begin{align}\label{eq:evasym}
		E_{\pmb{\sigma} } 
		& = U(\pmb{\sigma} ) + \frac{\Gamma^2 N}{ E_{\pmb{\sigma} }}   + \frac{\Gamma^2 }{ E_{\pmb{\sigma} }^2}  \sum_{\pmb{\sigma}^\prime \in S_1(\pmb{\sigma}) } U(\pmb{\sigma}^\prime) +  \mathcal{O}_{\Gamma,\delta,\epsilon}\left(N^{-5/4} \right) \notag \\ 
		&  = U(\pmb{\sigma} ) + \frac{\Gamma^2 N}{U(\pmb{\sigma} )}  + \mathcal{O}_{\Gamma,\delta}\left(N^{-1/4} \right) .
	\end{align}
	The $ \ell^2 $-normalized eigenfunction $ \psi_{\pmb{\sigma} } $ corresponding to  $  E_{\pmb{\sigma} }  $ satisfies:
	\begin{enumerate}
		\item for any $K \in \nn$ and for all  $\pmb{\sigma}^\prime \in S_{K}(\pmb{\sigma})$
		\begin{equation}\label{eq:decnear1}
			\vert\psi(\pmb{\sigma}^\prime)\vert = \mathcal{O}_{\Gamma,\delta,K}(N^{-K}), \quad \text{and} \quad \sum_{\pmb{\sigma}^\prime \notin B_K(\pmb{\sigma})} \vert\psi(\pmb{\sigma}^\prime)\vert^2 = \mathcal{O}_{\Gamma,\delta,K}(N^{-(K+1)}) .
		\end{equation} 
		\item  for any $ \alpha' \in (0,\alpha] $ there are  $ C = C(\Gamma,\delta) , c = c(\alpha,\alpha^\prime) \in (0,\infty) $,  such that
		\begin{equation}\label{eq:decfarlem}
			\sum_{\pmb{\sigma}^\prime \notin B_{\alpha' N }(\pmb{\sigma})} \vert\psi_{\pmb{\sigma}} (\pmb{\sigma}^\prime )\vert^2 \leq C N  \exp\left(- N  c \right) .
		\end{equation}
	\end{enumerate}
\end{lemma}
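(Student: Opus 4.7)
The plan is to treat $H_{\alpha N}(\pmb{\sigma}) = H_0 + U(\pmb{\sigma})|\delta_{\pmb{\sigma}}\rangle\langle \delta_{\pmb{\sigma}}|$ with $H_0 := H_{\alpha N}^{\prime}(\pmb{\sigma})$ as a rank-one perturbation, exploiting the cyclic representation \eqref{eq:rank1}--\eqref{eq:rank1ef}. Lemma~\ref{lem:tknorm} gives $\|\Gamma T_{\alpha N}\| \le 2\Gamma\sqrt{\alpha(1-\alpha)}N + o(N)$; combined with $\|U^\prime\|_\infty \le \varepsilon N$ from deep-hole property~(a), hypothesis \eqref{eq:cond2} yields $\|H_0\| \le (\beta_c - 2\delta)N + o(N)$, so $\spec H_0 \cap (-\infty, E_\delta] = \emptyset$ for $N$ large. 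A negative rank-one perturbation can push at most one eigenvalue below $\inf\spec H_0$, and the strict monotonicity of $z \mapsto G_0(z)$ on $(-\infty, \inf \spec H_0)$ guarantees that such an eigenvalue $E_{\pmb{\sigma}}$ exists (since $U(\pmb{\sigma}) \le -(\beta_c-\delta)N$ is sufficiently negative) and is unique. It is characterized by $G_0(E_{\pmb{\sigma}}) = -1/U(\pmb{\sigma})$.

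For the expansion~\eqref{eq:evasym} I would use the Feshbach--Schur reduction at $P_0 := |\delta_{\pmb{\sigma}}\rangle\langle\delta_{\pmb{\sigma}}|$. Since $\langle \delta_{\pmb{\sigma}}|H_0 \delta_{\pmb{\sigma}}\rangle = 0$ and $P_0 H_0 Q_0 = \Gamma P_0 T Q_0$, this gives
\[
G_0(E)^{-1} = -E - \Sigma^\prime(E), \qquad \Sigma^\prime(E) := \Gamma^2 \langle T\delta_{\pmb{\sigma}}| (Q_0 H_0 Q_0 - E)^{-1} T\delta_{\pmb{\sigma}}\rangle,
\]
so the eigenvalue equation becomes $E_{\pmb{\sigma}} = U(\pmb{\sigma}) - \Sigma^\prime(E_{\pmb{\sigma}})$. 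Expand $\Sigma^\prime$ via the Neumann series, which converges by~\eqref{eq:cond2}. Using $\|T\delta_{\pmb{\sigma}}\|^2 = N$, the $k=0$ term gives $-\Gamma^2 N/E$; using that distinct elements of $S_1(\pmb{\sigma})$ are never adjacent in $\mathcal{Q}_N$, the $k=1$ term reduces to the diagonal sum $-\Gamma^2 \sum_{\pmb{\sigma}^\prime \in S_1(\pmb{\sigma})} U(\pmb{\sigma}^\prime)/E^2$. These match exactly the two leading terms on the right of~\eqref{eq:evasym}.

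The main obstacle will be bounding the $k \ge 2$ tail of the Neumann series by $\mathcal{O}(N^{-5/4})$, since crude operator-norm estimates only provide $\mathcal{O}(1/N)$. My plan is to combine the Green's function estimates of Section~\ref{sec:Adjmatrix} with deep-hole property~(b): specifically, I would apply a resolvent identity to separate $(Q_0 H_0 Q_0 - E)^{-1}$ from the diagonal resolvent $(Q_0 U^\prime Q_0 - E)^{-1}$, resumming the local self-energy contributions of $U^\prime$ at each $\pmb{\sigma}^\prime \in S_1(\pmb{\sigma})$ exactly and controlling the remaining kinetic off-diagonal corrections through Propositions~\ref{prop:greenfix} and~\ref{lem:freeB}. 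The $L^1$-bound $\sum_{S_1(\pmb{\sigma})} |U(\pmb{\sigma}^\prime)| \le N^{7/4}$ furnished by property~(b) is what reduces the otherwise delicate $\sum U(\pmb{\sigma}^\prime)^2 / E^3$-type contributions to the required order. The simpler explicit form $E_{\pmb{\sigma}} = U(\pmb{\sigma}) + \Gamma^2 N/U(\pmb{\sigma}) + \mathcal{O}(N^{-1/4})$ then follows by substituting $E_{\pmb{\sigma}} = U(\pmb{\sigma}) + \mathcal{O}(1)$ into $\Gamma^2 N/E_{\pmb{\sigma}}$ and absorbing the $\Gamma^2 \sum U/E_{\pmb{\sigma}}^2 = \mathcal{O}(N^{-1/4})$ contribution into the error.

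For the eigenvector bounds, the explicit formula~\eqref{eq:rank1ef} reduces all decay estimates on $\psi_{\pmb{\sigma}}$ to estimates on $\langle \delta_{\pmb{\tau}}|(H_0 - E_{\pmb{\sigma}})^{-1}\delta_{\pmb{\sigma}}\rangle$ once one observes $|\psi_{\pmb{\sigma}}(\pmb{\sigma})| = 1 + o(1)$ from the normalisation together with the spectral gap. A short Born expansion transfers Green's function bounds from $\Gamma T_{\alpha N}$ to $H_0$, since $\|U^\prime\|_\infty/\dist(E_{\pmb{\sigma}}, \spec \Gamma T_{\alpha N}) = \mathcal{O}(\varepsilon)$ by~\eqref{eq:cond2}. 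At fixed distance $K$, Proposition~\ref{prop:greenfix} then supplies the pointwise decay $\mathcal{O}(N^{-K})$ and the $\ell^2$-mass bound of item~2 of Proposition~\ref{lem:freeB} yields the summed estimate in~\eqref{eq:decnear1}. Finally, the global exponential bound~\eqref{eq:decfarlem} on $B_{\alpha^\prime N}(\pmb{\sigma})^c$ with $\alpha^\prime \le \alpha$ follows from the exponential part~\eqref{G_bound} of Proposition~\ref{lem:freeB} combined with the same Born transfer from $\Gamma T_{\alpha N}$ to $H_0$.
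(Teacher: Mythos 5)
Your outline of the rank-one/Feshbach--Schur reduction, the identification of the $k=0$ and $k=1$ Neumann terms (with the observation that distinct elements of $S_1(\pmb\sigma)$ are never adjacent, killing the $T$-contribution at $k=1$), and the use of the deep-hole bound $u(\pmb\sigma)\le N^{-1/4}$ to control the second-order Taylor remainder are all aligned with what the paper actually does; the paper iterates the eigenvalue equation at $\pmb\sigma$ and at $S_1(\pmb\sigma)$ directly, which is equivalent to your resummed locator expansion. Your argument that $\spec H_0 \cap(-\infty,E_\delta]=\emptyset$ and that the rank-one perturbation produces a unique eigenvalue below $E_\delta$ also matches (the paper uses the slightly more direct Rayleigh--Ritz bound $E_{\pmb\sigma}\le U(\pmb\sigma)$ rather than monotonicity of $G_0$, but both work).

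The genuine gap is in the eigenvector decay bounds, which feed into the error estimate for the $k\ge 2$ tail and are therefore not a side issue. You propose to ``transfer Green's function bounds from $\Gamma T_{\alpha N}$ to $H_0$'' via ``a short Born expansion'' justified by $\|U'\|_\infty/\dist(E,\spec\Gamma T_{\alpha N})=\mathcal{O}(\varepsilon)$. That ratio gives operator-norm convergence of the Born series, but an operator-norm estimate on $\sum_{k\ge1}(\Gamma T_{\alpha N}-E)^{-1}\bigl(-U'(\Gamma T_{\alpha N}-E)^{-1}\bigr)^k$ does \emph{not} reproduce the pointwise decay $|\langle\delta_{\pmb\sigma'}|(H_0-E)^{-1}\delta_{\pmb\sigma}\rangle|=\mathcal{O}(N^{-K})$ at distance $K$: once you bound a single factor in norm you lose all information about which sphere the vector is supported on, and to recover it you would need a full path (locator) resummation with its own nontrivial combinatorics --- hardly ``short.'' The paper avoids this entirely via the semigroup positivity inequality $0\le\langle\delta_{\pmb\tau}|e^{-\beta(T(B)+V)}\delta_{\pmb\tau'}\rangle\le e^{-\beta\min V}\langle\delta_{\pmb\tau}|e^{-\beta T(B)}\delta_{\pmb\tau'}\rangle$ together with $U'\ge-\varepsilon N$ on the punctured ball: integrating over $\beta$ yields the exact, expansion-free comparison
\[
\langle\delta_{\pmb\sigma'}\,|\,(H'_{\alpha N}(\pmb\sigma)-E)^{-1}\,\delta_{\pmb\sigma}\rangle
\;\le\;
\langle\delta_{\pmb\sigma'}\,|\,\bigl(\Gamma T_{\alpha N}-(E+\varepsilon N)\bigr)^{-1}\,\delta_{\pmb\sigma}\rangle,
\]
to which Proposition~\ref{lem:freeB} applies directly. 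You should replace the Born-expansion step with this positivity argument, which is in fact why the deep-hole assumption is formulated as a \emph{lower} bound on $U$ away from $\pmb\sigma$ (and also the reason condition~\eqref{eq:cond2} involves $\varepsilon$): it is exactly what makes the energy-shifted free resolvent dominate.
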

\begin{proof}
	The deep-hole scenario together with~\eqref{eq:tknorm2} and~\eqref{eq:cond2} implies that for all sufficiently large $ N $:
	\be\label{eq:lower}
	H_{\alpha N}^{\prime} (\pmb{\sigma}) \geq \Gamma T_{\alpha N} - \varepsilon N \geq - ( \beta_c - 2 \delta) N > E_\delta .
	\ee
	By rank-one perturbation theory, there is exactly one zero of \eqref{eq:rank1} and hence one simple eigenvalue  $ E_{\pmb{\sigma} }$ of $ H_{\alpha N}(\pmb{\sigma}) $ below $ \inf \spec H_{\alpha N}^{\prime} (\pmb{\sigma}) $. A Rayleigh-Ritz bound
	\be\label{eq:RR} E_{\pmb{\sigma} } \leq    \langle \delta_{\pmb{\sigma}}  \vert H_{\alpha N}(\pmb{\sigma})  \delta_{\pmb{\sigma}} \rangle = U(\pmb{\sigma} ) \leq E_\delta \ee
	provides a first, crude  estimate on this eigenvalue.  According to~\eqref{eq:rank1ef} the corresponding 
	$ \ell^2 $-normalized eigenvector $ \psi_{\pmb{\sigma}} $ satisfies for all $  \pmb{\sigma}^\prime \in B_{\alpha N }(\pmb{\sigma})$:
	\begin{align}
		\psi_{\pmb{\sigma}}( \pmb{\sigma}^\prime) & =  - U(\pmb{\sigma} ) \,  \psi_{\pmb{\sigma}}( \pmb{\sigma})  \langle \delta_{\pmb{\sigma}^\prime}  \vert \left( H_{\alpha N}^{\prime} (\pmb{\sigma}) - E_{\pmb{\sigma} } \right)^{-1}   \delta_{\pmb{\sigma}} \rangle \notag \\ &  \leq  - U(\pmb{\sigma} ) \,  \langle \delta_{\pmb{\sigma}^\prime}  \vert \left( \Gamma T_{\alpha N} - (E_{\pmb{\sigma} } + \varepsilon N) \right)^{-1}   \delta_{\pmb{\sigma}} \rangle \notag \\
		& \leq - U(\pmb{\sigma} ) \ \Gamma^{-1}  \ \langle \delta_{\pmb{\sigma}^\prime}  \vert \left( T_{\alpha N} - (U(\pmb{\sigma} )  + \varepsilon N) \Gamma^{-1} \right)^{-1}   \delta_{\pmb{\sigma}} \rangle .
	\end{align}
	As in~\eqref{eq:lower}, these inequalities are consequence of the deep-hole scenario, the crude bound~\eqref{eq:RR} combined with the positivity of the semigroup, cf.~\eqref{eq:semi}. The assertions~\eqref{eq:decnear1} and~\eqref{eq:decfarlem}  concerning the decay rates of the eigenfunction are now a straightforward consequence of Proposition~\ref{lem:freeB}. For its application, we note that the assumption~\eqref{eq:cond2} ensure that $ \dist( \Gamma^{-1} \spec  T_{\alpha N} , U(\pmb{\sigma} ) + \varepsilon N ) \geq \Gamma^{-1}(E_\delta - U(\pmb{\sigma} ) + \delta) N  \geq \frac{\delta}{\Gamma} N $. The first inequality in Proposition~\ref{lem:freeB} then yields
	\be\label{eq:L1est}
	\vert  \psi_{\pmb{\sigma}}( \pmb{\sigma}^\prime) \vert \leq \frac{\beta_c-\delta}{\delta} {N \choose d(\pmb{\sigma}_0,\pmb{\sigma})}^{-1/2} \ \  2^{-\min\{d(\pmb{\sigma},\pmb{\sigma}^\prime), \, \rho_0(\alpha) N\}} ,
	\ee
	where we also used that the function $U \mapsto \frac{-U}{x-U}$ is monotone increasing in $U$ on $(- \infty, x)$. Hence, \eqref{eq:decfarlem} follows after a summation over the spheres $ S_d(\pmb{\sigma}) $ with $ d \in (\alpha^\prime N , \alpha N] $.  The above binomial decay factor is thereby exactly compensated by the volume $ \vert S_d(\pmb{\sigma}) \vert =  {N \choose d} $. The claimed bounds~\eqref{eq:decnear1} follow analogously from the respective bounds in Proposition~\ref{lem:freeB}.  \\

	For a proof of the asymptotics~\eqref{eq:evasym}, we first consider the eigenvalue equation at any $\pmb{\sigma}^\prime \in S_1({\pmb{\sigma}})$:
	\begin{align}\label{eq:eigen1a}
		E_{\pmb{\sigma}} \psi_{\pmb{\sigma}}({\pmb{\sigma}}^\prime) &= U(\pmb{\sigma}^\prime) \psi_{\pmb{\sigma}}({\pmb{\sigma}}^\prime) - \Gamma \psi_{\pmb{\sigma}}({\pmb{\sigma}}) - \Gamma \sum_{\pmb{\sigma}^{\prime\prime} \in S_1(\pmb{\sigma}^\prime)\setminus \{\pmb{\sigma}\}} \psi_{\pmb{\sigma}}({\pmb{\sigma}}^{\prime\prime}) \notag \\
		&= U(\pmb{\sigma}^\prime) \psi_{\pmb{\sigma}}({\pmb{\sigma}}^\prime) - \Gamma \psi_{\pmb{\sigma}}({\pmb{\sigma}}) + \mathcal{O}_{\Gamma,\delta}(N^{-1}).
	\end{align}
	The uniform $ \mathcal{O}_{\Gamma,\delta}(N^{-1})$ estimate is a direct consequence of~\eqref{eq:decnear1}. 
	This equation
	can be rewritten as 
	\be\label{eq:evre}
	\psi_{\pmb{\sigma}}({\pmb{\sigma}}^\prime) = - \frac{\Gamma}{E_{\pmb{\sigma}}-U(\pmb{\sigma}^\prime)} \left( \psi_{\pmb{\sigma}}({\pmb{\sigma}})  + \mathcal{O}_{\Gamma,\delta}(N^{-1}) \right) , 
	\ee
	which we insert 
	into the eigenvalue equation at $ \pmb{\sigma}$:
	\begin{align}\label{eq:eigen1b} E_{\pmb{\sigma}} \psi_{\pmb{\sigma}}({\pmb{\sigma}}) &= U(\pmb{\sigma}) \psi_{\pmb{\sigma}}({\pmb{\sigma}}) - \Gamma \sum_{\pmb{\sigma}^{\prime} \in S_1(\pmb{\sigma})} \psi_{\pmb{\sigma}}({\pmb{\sigma}}^{\prime})\notag \\
		&= U(\pmb{\sigma}) \psi_{\pmb{\sigma}}({\pmb{\sigma}}) + \frac{\Gamma^2}{E_{\pmb{\sigma}}} \left(\sum_{\pmb{\sigma}^{\prime} \in S_1(\pmb{\sigma})} \frac{\psi_{\pmb{\sigma}}(\pmb{\sigma})  + \mathcal{O}_{\Gamma,\delta}(N^{-1})} {1-U(\pmb{\sigma}^{\prime})/E_{\pmb{\sigma}}} \right) \notag  \\
		&= U(\pmb{\sigma}) \psi_{\pmb{\sigma}}({\pmb{\sigma}}) + \frac{\Gamma^2}{E_{\pmb{\sigma}}} \left(\sum_{\pmb{\sigma}^{\prime} \in S_1(\pmb{\sigma})} \frac{\psi_{\pmb{\sigma}}(\pmb{\sigma})} {1-U(\pmb{\sigma}^\prime)/E_{\pmb{\sigma}}} \right) + \mathcal{O}_{\Gamma,\delta,\epsilon}(N^{-5/4}) \notag\\
		&= \left[ U(\pmb{\sigma})  + \frac{\Gamma^2 N }{E_{\pmb{\sigma}}}+ \frac{\Gamma^2}{E_{\pmb{\sigma}}}  \left(\sum_{\pmb{\sigma}^{\prime} \in S_1(\pmb{\sigma})} 
		\frac{U(\pmb{\sigma}^{\prime})}{E_{\pmb{\sigma}}} \right)\right] \psi_{\pmb{\sigma}}({\pmb{\sigma}}) + \mathcal{O}_{\Gamma,\delta,\epsilon}(N^{-5/4}) .
	\end{align}
	The third equality follow from a second-order Taylor expansion with an error estimate using $\vert U(\pmb{\sigma}^{\prime})\vert^2 \leq \epsilon N \vert U(\pmb{\sigma}^{\prime})\vert$ as well as the bound on $ u(\pmb{\sigma}) $ in the deep-hole assumption in Definition~\ref{def:deephole}. 
	Since $  \psi_{\pmb{\sigma}}({\pmb{\sigma}}) = 1 + \mathcal{O}(N^{-1})$, the first identity in~\eqref{eq:evasym} follows. For a proof of the second identity, we again use the bound on $ u(\pmb{\sigma}) $ as well as our crude estimate~\eqref{eq:RR}  to estimate the last term in the above square brackets by $\mathcal{O}_{\Gamma,\delta}(N^{-1/4})$. This concludes the proof.
\end{proof}

\subsection{Spectral averaging}
In order to control the probability of resonances between distinct extremal sites, we will use the spectral averaging technique from the theory of random operators~\cite[Chapter 4.1]{AW15}. 
\begin{lemma} \label{lem:specav}
	Let $\epsilon, \delta>0$ and $ \alpha \in (0,1/2) $ be such that~\eqref{eq:cond} and \eqref{eq:cond2} holds. Then there is some $ c = c(\varepsilon, \delta,\alpha) >  0 $ such that for all $ N $ sufficiently large and 
	\begin{enumerate}
		\item
		for any real interval $ I $:
		\be\label{eq:Wegner}
		\pp\left(\exists \, \pmb{\sigma} \in \mathcal{L}_{\beta_c-\delta} \text{ s.t. }  \spec_{E_\delta} H_{\alpha N}(\pmb{\sigma}) \cap I \neq \emptyset \right) \leq 2 \vert I\vert \ e^{\beta_c \delta N - \delta^2 N/2} + e^{-cN} .
		\ee
		\item for any $ r > 0 $:
		\begin{multline}\label{eq:Wegner2}
			\pp\left(\exists \, \pmb{\sigma},  \pmb{\sigma}^\prime \in \mathcal{L}_{\beta_c-\delta},  \pmb{\sigma}\neq \pmb{\sigma}^\prime \text{ s.t. } \dist\left( \spec_{E_\delta} H_{\alpha N}(\pmb{\sigma}),  \spec_{E_\delta} H_{\alpha N}(\pmb{\sigma}^\prime)  \right) 
			\leq  r  \right) \\ 
			\leq 4 r  e^{(2\beta_c \delta  - \delta^2 )N} + e^{-cN} .
		\end{multline}
	\end{enumerate}
\end{lemma}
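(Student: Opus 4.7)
The plan is to combine the rank-one spectral-averaging principle of~\cite[Ch.~4.1]{AW15}, applied to the decomposition $H_{\alpha N}(\pmb{\sigma}) = H^\prime_{\alpha N}(\pmb{\sigma}) + U(\pmb{\sigma})\, |\delta_{\pmb{\sigma}}\rangle\langle \delta_{\pmb{\sigma}}|$, with the concentration of $\psi_{\pmb{\sigma}}$ at $\pmb{\sigma}$ supplied by Lemma~\ref{lem:rank1}. Throughout we work on the event $\Omega_N(\varepsilon,\delta,\alpha)$ of Lemma~\ref{lem:deeph}, whose complement contributes only $e^{-cN}$. On this event Lemma~\ref{lem:rank1} guarantees that for every $\pmb{\sigma} \in \mathcal{L}_{\beta_c-\delta}$ the set $\spec_{E_\delta} H_{\alpha N}(\pmb{\sigma})$ consists of exactly one eigenvalue $E_{\pmb{\sigma}}$ whose eigenvector satisfies $|\psi_{\pmb{\sigma}}(\pmb{\sigma})|^2 \geq 1/2$ for $N$ large enough.

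For the first bound the pointwise majorization
\[ \mathbbm{1}_{\pmb{\sigma} \in \mathcal{L}_{\beta_c-\delta}}\, \mathbbm{1}_{E_{\pmb{\sigma}} \in I} \ \leq \ 2\, \mathbbm{1}_{\pmb{\sigma} \in \mathcal{L}_{\beta_c-\delta}}\, \langle \delta_{\pmb{\sigma}}|\mathbbm{1}_{I \cap (-\infty, E_\delta)}(H_{\alpha N}(\pmb{\sigma}))\delta_{\pmb{\sigma}}\rangle \]
valid on $\Omega_N$ lets us pass from an eigenvalue statement to one about a spectral measure. Since $H^\prime_{\alpha N}(\pmb{\sigma})$ is independent of $U(\pmb{\sigma})$, rank-one averaging yields $\int du\, \langle \delta_{\pmb{\sigma}}|\mathbbm{1}_I(H_{\alpha N}(\pmb{\sigma}))\delta_{\pmb{\sigma}}\rangle \leq |I|$. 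Multiplying by the density of $U(\pmb{\sigma})$, which is uniformly bounded on $\{u \leq -(\beta_c-\delta)N\}$ by $(2\pi N)^{-1/2} e^{-(\beta_c-\delta)^2 N/2}$, and summing over the $2^N$ vertices of $\mathcal{Q}_N$ converts the combined prefactor into $(2\pi N)^{-1/2} e^{\beta_c \delta N - \delta^2 N/2}$ by means of $\beta_c^2 = 2\ln 2$. Absorbing the $(2\pi N)^{-1/2}$ into the factor~$2$ produces the claimed $2|I|\, e^{\beta_c \delta N - \delta^2 N/2} + e^{-cN}$.

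For the second bound the crucial input is that on $\Omega_N$ the balls around distinct elements of $\mathcal{L}_{\beta_c-\delta}$ are disjoint, so $H_{\alpha N}(\pmb{\sigma})$ and $H_{\alpha N}(\pmb{\sigma}^\prime)$ are measurable with respect to independent blocks of the random potential. Starting from the analogous bound
\[ \mathbbm{1}_{\Omega_N \cap \mathcal{R}_{\pmb{\sigma},\pmb{\sigma}^\prime}} \leq 4\, \mathbbm{1}_{\pmb{\sigma}, \pmb{\sigma}^\prime \in \mathcal{L}}\, \langle \delta_{\pmb{\sigma}^\prime}|\mathbbm{1}_{(-\infty, E_\delta)}(H_{\alpha N}(\pmb{\sigma}^\prime))\delta_{\pmb{\sigma}^\prime}\rangle\, \langle \delta_{\pmb{\sigma}}|\mathbbm{1}_{[E_{\pmb{\sigma}^\prime}-r,\, E_{\pmb{\sigma}^\prime}+r]}(H_{\alpha N}(\pmb{\sigma}))\delta_{\pmb{\sigma}}\rangle \]
and conditioning on $\sigma(U|_{B_{\alpha N}(\pmb{\sigma}^\prime)})$, the rank-one averaging of the previous paragraph applied to the now deterministic interval of length $2r$ yields a conditional bound of $2r\,(2\pi N)^{-1/2} e^{-(\beta_c-\delta)^2 N/2}$. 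The factor involving $\pmb{\sigma}^\prime$ is bounded unconditionally by $\pp(\pmb{\sigma}^\prime \in \mathcal{L}_{\beta_c-\delta}) \leq e^{-(\beta_c-\delta)^2 N/2}$ via a Gaussian tail estimate, and the outer sum over pairs contributes $2^{2N}$. Invoking $\beta_c^2 = 2\ln 2$ once more converts $2^{2N} e^{-(\beta_c-\delta)^2 N}$ into $e^{(2\beta_c \delta - \delta^2)N}$, from which the stated bound follows after absorbing the $(2\pi N)^{-1/2}$ into the numerical constant.

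The main obstacle is primarily bookkeeping: one must ensure that the localization input from Lemma~\ref{lem:rank1}---needed to pass from ``$E_{\pmb{\sigma}} \in I$'' to a bound on the spectral measure $\langle \delta_{\pmb{\sigma}}|\mathbbm{1}_I(H_{\alpha N}(\pmb{\sigma}))\delta_{\pmb{\sigma}}\rangle$ amenable to rank-one averaging---and the global disjointness from Lemma~\ref{lem:deeph}---needed for the independence step in the pair bound---are simultaneously available on the same event. The hypotheses~\eqref{eq:cond} and~\eqref{eq:cond2} are arranged precisely so that both lemmas apply.
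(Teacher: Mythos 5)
Your proposal is correct and follows essentially the same route as the paper's proof: restrict to the deep-hole event $\Omega_N(\varepsilon,\delta,\alpha)$, use the mass bound $|\psi_{\pmb{\sigma}}(\pmb{\sigma})|^2 \geq 1/2$ from Lemma~\ref{lem:rank1} to convert the eigenvalue-in-interval event into a lower bound on the spectral measure $\langle\delta_{\pmb{\sigma}}|\mathbbm{1}_I(H_{\alpha N}(\pmb{\sigma}))\delta_{\pmb{\sigma}}\rangle$, apply the rank-one spectral-averaging (Wegner) estimate against the Gaussian density of $U(\pmb{\sigma})$, and for the pair bound exploit the independence coming from the disjoint-ball structure; the Chernoff arithmetic via $\beta_c^2 = 2\ln 2$ then yields the exponents. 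Your bookkeeping of the numerical prefactor in part 2 differs trivially from the paper's (you harvest factors of $2$ from both $\psi_{\pmb{\sigma}}$ and $\psi_{\pmb{\sigma}'}$ and then absorb the spare $(2\pi N)^{-1/2}$, whereas the paper takes the factor of $2$ only for $\pmb{\sigma}$ and directly bounds the $\pmb{\sigma}'$-indicator by a Gaussian tail), but both give the stated constant.
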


	\begin{proof} 
		For a proof of the above estimates, we may thus restrict attention to events in $\Omega_N(\varepsilon,\delta,\alpha) $, cf.~Lemma~\ref{lem:deeph}. 
		\begin{enumerate}[wide, labelwidth=!, labelindent=0pt]
			\item According to Lemma~\ref{lem:rank1}, under the deep-hole scenario $  \spec_{E_\delta} H_{\alpha N}(\pmb{\sigma}) \cap I \neq \emptyset  $ if and only if $ E_{\pmb{\sigma} } = \inf\spec H_{\alpha N}(\pmb{\sigma}) \in I $. Since $  \psi_{\pmb{\sigma}}({\pmb{\sigma}})^2 \geq 1/2 $ by Lemma~\ref{lem:rank1} for sufficiently large $ N $ and all $ \pmb{\sigma} \in \mathcal{L}_{\beta_c-\delta} $, the latter implies $ \langle \delta_{\pmb{\sigma}} \vert P_I  \delta_{\pmb{\sigma}} \rangle \geq 1/2 $, where $ P_I $ denotes the spectral projection of $ H_{\alpha N}(\pmb{\sigma}) $ onto $ I $.  A union bound hence enables to estimate the probability of the event in the left side of~\eqref{eq:Wegner} and its intersection with  $\Omega_N(\varepsilon,\delta,\alpha) $ by
			\begin{align*}
				\sum_{\pmb{\sigma} \in \mathcal{Q}_N} \pp\left( \pmb{\sigma} \in \mathcal{L}_{\beta_c-\delta} \, \text{and} \,   \langle \delta_{\pmb{\sigma}} \vert P_I  \delta_{\pmb{\sigma}} \rangle \geq 1/2  \right)  \leq 2 \ \mathbb{E}\left[ \mathbbm{1}[ \pmb{\sigma} \in \mathcal{L}_{\beta_c-\delta} ] \  \langle \delta_{\pmb{\sigma}} \vert P_I  \delta_{\pmb{\sigma}}  \rangle \right] .
			\end{align*}
			The inequality is a Chebychev-Markov estimate. Conditioning on all random variables aside from $ U(\pmb{\sigma}) $, the integration of $ p_I( U(\pmb{\sigma})) :=  \langle \delta_{\pmb{\sigma}} \vert P_I  \delta_{\pmb{\sigma}}  \rangle $ with respect to the random variable $ U(\pmb{\sigma})  $  is bounded with the help of the spectral averaging lemma (also referred to as Wegner estimate, cf.~\cite[Thm.~4.1]{AW15}).
			 It yields
			\[
			\int_{-\infty}^{-(\beta_c-\delta) N} \mkern-20mu p_I(u) \ e^{-\frac{u^2}{2N} } \frac{du}{\sqrt{2\pi N}} \leq e^{- (\beta_c-\delta)^2 N/2} \vert I\vert . 
			\]
			This completes the proof of the first assertion.
			\item On $ \Omega_N(\varepsilon,\delta,\alpha) $, we may assume that $ B_{\alpha N}( \pmb{\sigma}) \cap B_{\alpha N}( \pmb{\sigma}^\prime) = \emptyset $ for all pairs $ \pmb{\sigma},  \pmb{\sigma}^\prime \in \mathcal{L}_{\beta_c-\delta} $. This ensures that the random variables  $ E_{\pmb{\sigma}^\prime } = \inf\spec H_{\alpha N}(\pmb{\sigma}^\prime) $ and $ U(\pmb{\sigma}^\prime) $ are independent of all random variables in $ B_{\alpha N}(\pmb{\sigma}) $. Using the strategy as in 1., we thus bound the probability of the event in the left side of~\eqref{eq:Wegner2} and its intersection with  $  \Omega_N(\varepsilon,\delta,\alpha) $ by
			\begin{multline*}
				\sum_{ \pmb{\sigma}, \pmb{\sigma}^\prime \in \mathcal{Q}_N} \mkern-10mu \mathbb{E}\left[ \mathbbm{1}[\pmb{\sigma}^\prime \in  \mathcal{L}_{\beta_c-\delta}  \, \text{and} \, B_{\alpha N}( \pmb{\sigma}) \cap B_{\alpha N}( \pmb{\sigma}^\prime) = \emptyset  ] \right. \\[-1ex]
				\times \left.  \pp\left(\pmb{\sigma}  \in  \mathcal{L}_{\beta_c-\delta}  \, \text{and} \,   \langle \delta_{\pmb{\sigma}} \vert P_{(E_{\pmb{\sigma}^\prime }-r, E_{\pmb{\sigma}^\prime }+r)} \delta_{\pmb{\sigma}} \rangle \geq 1/2  \,  \vert \,  B_{\alpha N}(\pmb{\sigma})^c \right)\right] 
				\leq 2^{2N+2}  e^{- (\beta_c-\delta)^2 N} r .   
			\end{multline*}
			where   $ \pp(  \cdot   \vert  B_{\alpha N}(\pmb{\sigma})^c ) $ denotes the conditional expectation, conditioned on all random variables aside from those in $ B_{\alpha N}(\pmb{\sigma}) $ and $  P_I  $ is still the spectral projection of $ H_{\alpha N}(\pmb{\sigma}) $ onto $ I $. The last inequality resulted from an application of the bound from~1.\ to the conditional expectation. This completes the proof of the second assertion. 
		\end{enumerate}
	\end{proof}

\subsection{Proof of Theorem~\ref{thm:sggs}}

The proof of Theorem~\ref{thm:sggs} makes use of the deep-hole geometry of the REM. If $ U $ satisfies a global $ (\varepsilon,\delta,\alpha) $-deep hole scenario, we study the auxiliary Hamiltonian
\be\label{eq:defHprime}
H^\prime \coloneqq \left( \bigoplus_{\pmb{\sigma} \in \mathcal{L}_{\beta_c-\delta} } H_{\alpha N}(\pmb{\sigma}) \right) \bigoplus H_r, 
\ee
with operators $ H_{\alpha N}(\pmb{\sigma}) $, whose action is restricted to the non-intersecting balls $ B_{\alpha N}(\pmb{\sigma}) $ around extremal sites $ \pmb{\sigma} \in \mathcal{L}_{\beta_c-\delta} $. These operator have been introduced and studied in Subsection~\ref{sec:rank1}. The remainder
$H_r$ is that part of $H$ which purely belongs to the complement of the union of balls, 
\[   \langle \delta_{\pmb{\tau}} \vert H_r \delta_{\pmb{\tau}^\prime} \rangle  =\langle \delta_{\pmb{\tau}} \vert H \delta_{\pmb{\tau}^\prime} \rangle \left(1-\sum_{\pmb{\sigma}\in \mathcal{L}_{\beta_c-\delta}}\mathbbm{1}_{B_{\alpha N}(\pmb{\sigma})}(\pmb{\tau})\right)\left(1- \sum_{\pmb{\sigma}\in \mathcal{L}_{\beta_c-\delta}}
\mathbbm{1}_{B_{\alpha N}(\pmb{\sigma})}(\pmb{\tau}^\prime) \right) .  \]
The difference between the Hamiltonian of interest $H = \Gamma T + U  $ and the auxiliary  $H^\prime$ is
\[ H - H^\prime \eqqcolon - \Gamma A \eqqcolon - \Gamma \bigoplus_{\pmb{\sigma} \in \mathcal{L}_{\beta_c-\delta}} A_{\pmb{\sigma}}  . \]
It describes the hopping between the balls and the complementary configuration space,   i.e.,
\[ \langle \delta_{\pmb{\tau}} \vert A_{\pmb{\sigma}}   \delta_{\pmb{\tau^\prime}} \rangle = \mathbbm{1}_{d(\pmb{\tau},\pmb{\tau}^\prime) =1} (\mathbbm{1}_{d(\pmb{\tau},\pmb{\sigma}) = \alpha N} \mathbbm{1}_{d(\pmb{\tau}^\prime,\pmb{\sigma}) = \alpha N+1} + \mathbbm{1}_{d(\pmb{\tau},\pmb{\sigma}) = \alpha N+1} \mathbbm{1}_{d(\pmb{\tau}^\prime,\pmb{\sigma}) = \alpha N}).  \] 
The norm of $A$ can be bounded as follows
\be\label{eq:norm:A}
\|A \| = \max_{\pmb{\sigma} \in \mathcal{L}_{\beta_c-\delta}} \|A_{\pmb{\sigma}} \| \leq \|T_{\alpha N + 1}  \| = 2 N \sqrt{\alpha(\alpha -1)} + o_{\alpha}(N)  , 
\ee
where the last equality is~\eqref{eq:tknorm}. 
It is easy to see that $ \| A \| $ is indeed of order $ N$. 
However, for energies below $ E_\delta = -N (\beta_c - \delta) $, the perturbation is of a much smaller magnitude. This is the basic idea in the proofs of our main results for the localization regime. As a preparation, we also need the following result, which is implicitly contained in~\cite{MW20c}.
\begin{proposition}[cf.~\cite{MW20c}]\label{lem:hr} For all $ \Gamma, \delta > 0 $ the truncated Hamiltonian $ H \coloneqq \Gamma T + U \mathbbm{1}_{U \geq  -(\beta_c-\delta)  N } $ acting on $ \ell^2(\mathcal{Q}_N) $ is lower bounded by
	\[ \inf\spec H \geq -N  \max\{ \Gamma , \beta_c-\delta \}  +  o_{\Gamma,\delta}(N)  \]
	except for an event of exponentially small probability.
\end{proposition}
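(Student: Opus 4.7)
My approach reduces the one-sided truncation to the symmetric (two-sided) truncation handled by Theorem~\ref{lem:qpgs}. The key starting observation is the pointwise identity
\[
U(\pmb{\sigma})\,\mathbbm{1}_{U(\pmb{\sigma})\geq -(\beta_c-\delta)N}
\;=\;
U(\pmb{\sigma})\,\mathbbm{1}_{|U(\pmb{\sigma})|\leq (\beta_c-\delta)N}
\;+\; U(\pmb{\sigma})\,\mathbbm{1}_{U(\pmb{\sigma})>(\beta_c-\delta)N},
\]
whose second summand is pointwise nonnegative. Consequently $H \geq \Gamma T + W =: H_\delta$ with $W := U\,\mathbbm{1}_{|U|\leq (\beta_c-\delta)N}$, and it suffices to bound $\inf\spec H_\delta$ from below. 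The truncated potential $W$ meets the hypotheses of Corollary~\ref{cor:pup}: by the Gaussian symmetry $\mathbb{E}[W(\pmb{\sigma})]=0$, by domination by $U$ one has $\mathbb{E}[W^2]\leq N$ and $\mathbb{E}[W^8]\leq 105 N^4$, and deterministically $\|W\|_\infty \leq (\beta_c-\delta)N$. Thus on the event $\Omega_{N,\tau}$ (for any fixed $\tau\in(0,1)$) Theorem~\ref{lem:qpgs} is directly applicable to $H_\delta$.

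For any $\eta>0$, Theorem~\ref{lem:qpgs} asserts that every eigenvalue of $H_\delta$ below $-\|W\|_\infty -\eta N$, and \emph{a fortiori} below $-(\beta_c-\delta)N-\eta N$, lies within $\mathcal{O}_{\Gamma,\eta}(N^{(\tau-1)/2})$ of a paramagnetic center $(2n-N)\Gamma + w_N/((2n-N)\Gamma)$ with $n\in\mathbb{N}_0$ satisfying $(2n-N)\Gamma < -\|W\|_\infty-\eta N$. The lowest admissible center corresponds to $n=0$ and equals $-\Gamma N - w_N/(\Gamma N) = -\Gamma N + o_\Gamma(N)$. If $\Gamma > \beta_c-\delta$, the index $n=0$ is admissible and this directly gives $\inf\spec H_\delta \geq -\Gamma N + o_\Gamma(N)$.

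If instead $\Gamma \leq \beta_c-\delta$, the admissibility condition at $n=0$ becomes $-N\Gamma < -\|W\|_\infty-\eta N$. Whenever this fails no admissible $n\in\mathbb{N}_0$ exists at all, so there are no eigenvalues of $H_\delta$ below $-\|W\|_\infty-\eta N$, yielding $\inf\spec H_\delta \geq -\|W\|_\infty-\eta N \geq -(\beta_c-\delta)N-\eta N$. In the complementary sub-case where the $n=0$ constraint holds, the paramagnetic bound $\inf\spec H_\delta \geq -\Gamma N + o_\Gamma(N)$ applies, and since $\Gamma \leq \beta_c-\delta$ this is still $\geq -(\beta_c-\delta)N + o_{\Gamma,\delta}(N)$. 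Letting $\eta=\eta_N$ tend to zero slowly, both regimes combine to
\[
\inf\spec H_\delta \;\geq\; -\max\{\Gamma,\beta_c-\delta\}\,N + o_{\Gamma,\delta}(N),
\]
and the operator comparison $H\geq H_\delta$ together with $\pp(\Omega_{N,\tau})\geq 1-e^{-N/C}$ finishes the proof.

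The main subtlety, and the only nontrivial step, is recognising that Theorem~\ref{lem:qpgs} does more than approximately locate paramagnetic eigenvalues: its conclusion is that \emph{every} eigenvalue in the specified low-energy window is accounted for by the paramagnetic list, so that an empty list means an empty spectrum below the threshold. This is precisely what furnishes the delicate lower bound $-(\beta_c-\delta)N$ in the localization regime $\Gamma\le \beta_c-\delta$, where a naive combination of $\inf\spec \Gamma T \geq -\Gamma N$ and $U_\delta' \geq -(\beta_c-\delta)N$ would only deliver the useless bound $-(\Gamma+\beta_c-\delta)N$.
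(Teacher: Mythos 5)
Your proof is correct, and it takes a genuinely different route from the paper's.

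The paper's own proof is essentially a one-line reduction: the one-sided truncated REM still satisfies a large deviation principle, so \cite[Thm~2.3]{MW20c} (the ``peeling principle'') expresses the free energy of $\Gamma T + U\mathbbm{1}_{U\geq -(\beta_c-\delta)N}$ as a maximum of the truncated REM's pressure and the paramagnet's pressure, and letting $\beta\to\infty$ gives the ground-state bound (with, as the authors remark, an $\mathcal{O}(\sqrt{N})$ error). Your argument is instead a spectral one: you observe the operator inequality $\Gamma T + U\mathbbm{1}_{U\geq -(\beta_c-\delta)N} \geq \Gamma T + U\mathbbm{1}_{|U|\leq(\beta_c-\delta)N}$ (dropping the pointwise nonnegative part of the potential), verify that the symmetrically truncated potential falls within the hypotheses of Corollary~\ref{cor:pup} (zero mean by Gaussian symmetry, second and eighth moments dominated by those of $U$, deterministic linear $\ell^\infty$-bound), and then feed this into Theorem~\ref{lem:qpgs}. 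You correctly identify the one genuinely nontrivial point: Theorem~\ref{lem:qpgs} gives a \emph{complete} accounting of the spectrum below the threshold $-\|W\|_\infty - \eta N$, so in the glass regime $\Gamma\leq\beta_c-\delta$, where the paramagnetic list of indices $n$ may be empty, emptiness of the list is itself the lower bound $\inf\spec \geq -\|W\|_\infty - \eta N \geq -(\beta_c-\delta)N - \eta N$; a naive operator sum of $\inf\spec\Gamma T$ and $\inf U$ would be useless. Both approaches buy something: the peeling argument gives the sharper $\mathcal{O}(\sqrt{N})$ error but imports a black-box interpolation theorem from a companion paper, whereas yours is self-contained within this paper's Schur-complement machinery and in fact gives $\mathcal{O}_\Gamma(1)$ precision in the paramagnetic case $\Gamma>\beta_c-\delta$. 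One small point worth flagging in a careful write-up: to turn the per-$\eta$ bound $\geq -\max\{\Gamma,\beta_c-\delta\}N-\eta N$ into the stated $o_{\Gamma,\delta}(N)$ in the regime $\Gamma\leq\beta_c-\delta$, the ``let $\eta=\eta_N\to 0$ slowly'' step implicitly requires that the threshold $N_0$ in Theorem~\ref{lem:qpgs} and the $\eta^{-2}$-dependence of the implied constant (visible in Lemma~\ref{lem:SchurRest}) be tracked so that one may take, e.g., $\eta_N = N^{-1/4}$ and still retain an $o(N)$ error; this works, but is a short verification rather than an automatic consequence of the theorem as stated.
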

	\begin{proof}
		The values of the truncated REM potential $U \mathbbm{1}[U(\pmb{\sigma}) \geq  -(\beta_c-\delta)  N ]$ are still independently distributed and satisfy a large deviation principle. The peeling principle \cite[Thm~2.3]{MW20c} then implies that the negative free energy of $ H $ at $ \beta > 0 $ is a maximum of the truncated REM and the pure paramagnet. The result on the ground-state energy thus follows in the limit $\beta \to \infty$. 
	\end{proof}
	We remark that the methods in \cite{MW20c} can be used to replace the $o(N)$ estimate by $\mathcal{O}(\sqrt{N})$.
\begin{proof}[Proof of Theorem~\ref{thm:sggs}]
	We only study the joint event $  \Omega_N(\Gamma,\delta,\alpha) $ on which i)~the bound in Proposition~\ref{lem:hr} applies, and ii)~$ U $ satisfies a global $ (\varepsilon,\delta,\alpha) $-deep hole scenario with parameters 
	\[
	\varepsilon = \frac{\beta_c}{2} \quad \text{and} \quad \delta \in (0,\min\{\beta_c - \Gamma, \beta_c/8 \} ), 
	\]
	and  $ \alpha > 0$ small enough such that~\eqref{eq:cond} and 
	$ 2 \Gamma \sqrt{\alpha (1-\alpha) } <  \delta/8 $, 	and hence in particular~\eqref{eq:cond2} is satisfied.
	Together with Lemma~\ref{lem:deeph} 
	this ensures that $  \Omega_N(\Gamma,\delta,\alpha)$  occurs with a probability of at least $ 1 - e^{- c  N} $ with at some $ c \equiv c(\Gamma, \delta,\alpha) >  0 $. 
	Moreover:
	\begin{enumerate}
		\item From Lemma~\ref{lem:rank1} we learn that 
		for any $\pmb{\sigma} \in \mathcal{L}_{\beta_c-\delta} $ the spectrum  $ \spec  H_{\alpha N}(\pmb{\sigma})  $ below $  E_\delta = - N (\beta_c -\delta) $ 
		consists of just one eigenvalue $  E_{\pmb{\sigma}} = \inf \spec  H_{\alpha N }(\pmb{\sigma})   $, which is given by~\eqref{eq:evasym} 
		with an error term $\mathcal{O}_{\Gamma,\delta}\left(N^{-1/4} \right)$ uniformly for all $ \pmb{\sigma}  \in \mathcal{L}_{\beta_c-\delta} $. 
		\item By the variational principle and the natural embedding of Hilbert spaces, the ground state energy of $H_r$ is bounded from below by that of $  \Gamma T + U \mathbbm{1}_{U \geq  -(\beta_c-\delta)  N } $ 
		on $ \ell^2(\mathcal{Q}_N) $. The lower bound in Proposition~\ref{lem:hr} then shows that 
		\[ \inf \spec H_r \geq - N \left(\beta_c - \delta +o_{\Gamma,\delta}(1) \right) .
		\] 
	\end{enumerate}
	Hence, $ H_r $ does not contribute to the low-energy spectrum of $ H' $ below  $ E_{\delta/2} = -N (\beta_c - \delta/2)  $ for all $ N $ large enough. 
	Moreover, the spectral projection $ P_{\delta}  \coloneqq \mathbbm{1}_{(-\infty,E_{\delta/2} )}(H^\prime) $ 
	can be written as
	\be\label{eq:Ps} 
	P_{\delta} = \sum_{\pmb{\sigma} \in \mathcal{L}_{\beta_c-\delta}, \, E_{\pmb{\sigma}} < E_{\delta/2}   } \vert \psi_{\pmb{\sigma}} \rangle \langle  \psi_{\pmb{\sigma}} \vert  \ee
	in terms of rank-one projections of the $ \ell^2 $-normalized ground states $  \psi_{\pmb{\sigma}} $ of $   H_{\alpha N}(\pmb{\sigma})  $. We thus conclude for some $ C = C(\Gamma,\delta) < \infty $, and $ c = c(\alpha) > 0 $
	\be\label{eq:PA}
	\|A P_{\delta} \| = \max_{\pmb{\sigma} \in \mathcal{L}_{\beta_c-\delta}} \|A_{\pmb{\sigma}} \psi_{\pmb{\sigma}} \| \leq  \|A\| 
	\max_{\pmb{\sigma} \in \mathcal{L}_{\beta_c-\delta}} \Big( \sum_{\pmb{\sigma}^\prime \in S_{\alpha N}(\pmb{\sigma})} \left\vert \psi_{\pmb{\sigma}} \left(\pmb{\sigma}^\prime\right) \right\vert^2 \Big)^{1/2} \leq C \ N^2 \ e^{-  c N} ,
	\ee
	where the inequalities follow from~\eqref{eq:norm:A}  and~\eqref{eq:decfarlem} together with the fact that $A_{\pmb{\sigma}}$ only acts on the part of $\psi_{\pmb{\sigma}}$ on $S_{\alpha N}(\pmb{\sigma})$.

	We then rewrite $ H $ using the block decomposition of $ \ell^2(\mathcal{Q}_N) $ induced by $ P_{\delta} $ and $ Q_\delta \coloneqq 1-P_{\delta} $ and again employ the Schur complement method.  Since $ H^\prime $ is diagonal in this decomposition and its  $ Q_{\delta}  $ projection has a spectrum above the threshold energy $ E_{\delta/2}  $,  it remains to investigate the blocks of the perturbation $ \Gamma A $:
	\begin{enumerate}
		\item 
		Since $ P_\delta $ is supported entirely on the balls, the first diagonal term  vanishes, i.e. $ P_\delta A P_\delta = 0 $. The operator norms of the off-diagonals 
		$
		\| P_\delta A (1-P_\delta) \|  \leq  \|A P_{\delta} \| 
		$
		are exponentially small by~\eqref{eq:PA}.  
		\item
		The  operator $Q_{\delta} A Q_{\delta} $ is bounded from below by $ - \|A\| $ which is estimated in~\eqref{eq:norm:A}.
		We thus conclude that for all $ N $ large enough:
		\begin{align*}
			Q_\delta H^\prime + Q_{\delta} A Q_{\delta}  &\geq E_{\delta/2} - \|A\| \geq -N \left( \beta_c - \delta/2 + 2 \Gamma \sqrt{\alpha (1-\alpha)} +o_{\Gamma,\alpha}(1) \right) \\& \geq -N \left( \beta_c - \delta/4 \right) . 
		\end{align*}
		Consequently, the Schur complement matrix
		\[ 
		S_\delta(E) := \left( Q_\delta H^\prime + Q_{\delta} A Q_{\delta}  - E \right)^{-1} 
		\]
		is well defined on $  Q_\delta \ell^2(\mathcal{Q}_N) $ and bounded, $ \| S_\delta(E) \|  \leq (E_{\delta/4}-E)^{-1} $ for any $ E < E_{\delta/4} $. 
	\end{enumerate}
	The spectrum of $ H $ below $ E_{\delta/4} = -N \left( \beta_c - \delta/4 \right) $ is thus characterized using Schur's method, which yields:
	\begin{enumerate}
		\item $ E < E_{\delta/4}$ is an eigenvalue of $ H $ if and only if $ E \in \spec\left( P_\delta H' - P_\delta A  S_\delta(E) A P_{\delta} \right) $.
		\item The $ \ell^2 $-normalized eigenvector $ \psi $ corresponding to $ E $ and $ H $ satisfies:
		\begin{align}\label{eq:schur2}
			(P_\delta H' - E P_\delta) \psi & =  P_\delta A  S_\delta(E) A P_{\delta} \psi \notag \\
			Q_\delta \psi & = - S_\delta(E) A P_\delta \psi . 
		\end{align}
	\end{enumerate}
	We now proceed with the completion of the proof of the assertion on the spectrum and eigenvectors separately.
	
	\bigskip 
	\noindent
	\textit{Spectrum:}~~The spectrum of $ H $ below $  E_{\delta/8} $ is determined through the above Schur complement method. Since for all $ E \leq  E_{\delta/8} $ at at some $ C = C(\Gamma,\delta) < \infty $ and $ c = c(\alpha) > 0 $
	\be\label{eq:pert}
	\| P_\delta A  S_\delta(E) A P_{\delta} \| \leq \| S_\delta(E) \| \  \|A P_{\delta} \|^2 \leq  C\ N^3 \, e^{- 2 c N } ,
	\ee
	the eigenvalues below $  E_{\delta/8} $ thus coincide with the eigenvalues of $ P_\delta H'  $ below this energy up to an error, which is exponentially small in $ N $~\cite[Corollary 3.2.6]{Bha97}. Since the eigenvalues of $ P_\delta H'  $ are given by~\eqref{eq:evasym}, the assertion in Theorem~\ref{thm:sggs} follows.
	
	\bigskip
	\noindent 
	\textit{Eigenvectors:}~~We concentrate our attention on energies below $ E_s = - N (\beta_c- s ) $ with $ s \in (0, \delta/8] $ small enough such that $ 2 \beta_c s <   c $ with the decay rate $ c > 0 $ from~\eqref{eq:pert}.
	This ensures that $ e^{-\alpha c N}  < e^{-2 \beta_c s N} \eqqcolon r(s) $ for all sufficiently large $ N $. 
	According to the spectral averaging Lemma~\ref{lem:specav}, since $ s \leq \delta/8 $ and the condition~\eqref{eq:cond2} is monotone in $ \delta $, the event 
	\be\label{eq:cond3}
	\left\{ \forall \, \pmb{\sigma},  \pmb{\sigma}^\prime \in \mathcal{L}_{\beta_c-s},  \pmb{\sigma}\neq \pmb{\sigma}^\prime :\quad  \dist\left( \spec_{E_s} H_{\alpha N}(\pmb{\sigma}),  \spec_{E_s} H_{\alpha N}(\pmb{\sigma}^\prime)  \right) 
	>  r(s)  \right\} 
	\ee
	has probability of at least $ 1 - 4 e^{- s^2 N } - e^{-c N} $ for some $ c > 0 $. We may therefore assume its occurrence. 
	
	Perturbation theory based on the above Schur complement analysis and~\eqref{eq:pert} (combined with the characterization of eigenvalues established in Theorem~\ref{thm:sggs}) then guarantees that the eigenvector $ \psi $ of $ H $ corresponding to the eigenvalue $ E= U(\pmb{\sigma}) + \Gamma^2 N/U(\pmb{\sigma}) + \mathcal{O}(N^{-1/4}) $, which is uniquely characterized by $  \pmb{\sigma} \in \mathcal{L}_{\beta_c-s} $, is norm-close to the ground-state eigenvector $ \psi_{  \pmb{\sigma}} $ 
	of $ H_{\alpha N}(\pmb{\sigma}) $, i.e.,
	\begin{align}
		\| \psi - \psi_{  \pmb{\sigma}} \| & \leq \| P_\delta\psi - \psi_{  \pmb{\sigma}} \| +  \| Q_\delta \psi \| \notag \\
		& \leq \frac{  \| P_\delta A  S_\delta(E) A P_{\delta} \|}{r(s)} +  \| S_\delta(E) \| \  \|A P_{\delta} \| \leq  C\   e^{- c N } . 
	\end{align}
	Here the inequalities combine~\eqref{eq:schur2}--\eqref{eq:cond3}. 
	The rest of the claim on the $ \ell^2 $-estimates of the eigenvectors then follows from the respective properties of $ \psi_{  \pmb{\sigma}}  $ established in Lemma~\ref{lem:rank1}.
	The event $\Omega_{N,\Gamma,\delta}^\textup{loc}$ is then defined by specifying a value for $\alpha_0 = \alpha_0(\Gamma,\delta)$ and intersecting $ \Omega_N(\Gamma,\delta,\alpha_0)$ with~\eqref{eq:cond3}.

\end{proof}

\subsection{Proof of Theorem~\ref{thm:sgstate}}\label{subsec:locL1}

All assertions concerning the $ \ell^2 $-properties of the ground-state can easily be collected from the proof of Theorem~\ref{thm:sggs}. 

\begin{proof}[Proof of Theorem~\ref{thm:sgstate} -- $ \ell^2 $-properties] 
	According to Theorem~\ref{thm:sggs} for all events aside from one of exponentially small probability, there is some $ \pmb{\sigma}_0 \in \mathcal{Q}_N $ such that  the ground state eigenvector is approximated by $  \| \psi - \delta_{\pmb{\sigma}_0 }\|_{\ell^2}^2 = \mathcal{O}_{\Gamma}\left(\frac{1}{N}\right) $. The estimate $\mathcal{O}_{\Gamma}\left(\frac{1}{N}\right) $ does not depend on $\delta$ anymore as we may fix $\delta$, if we only consider the ground state. This will be always assumed in the following. Moreover, the ground-state energy is 
	$ E = U(\pmb{\sigma}_0 ) + \frac{\Gamma^2 N}{U(\pmb{\sigma}_0 )}  + \mathcal{O}_{\Gamma}\left(N^{-1/4} \right) $, 
	where $ U(\pmb{\sigma}_0 ) $ is one of the REM's extremal energies for which 
	we may assume that 
	\be\label{eq:aprioriU} 
	\left\vert U(\pmb{\sigma}_0 ) + \beta_c N \right\vert \leq \mathcal{O}(\sqrt{N}) , \quad \text{and hence }\quad  \left\vert E + \beta_c N \right\vert \leq \mathcal{O}(\sqrt{N}) 
	\ee
	at the expense of excluding another event of exponentially small probability stemming from deviations to the known extremal statistics of the REM, cf.~\eqref{eq:Uinfty}.
	
	It thus remains to establish the assertion on the first order perturbation $ \xi \in \ell^2(\mathcal{Q}_N) $. 
	That $\langle \xi \vert H  \xi \rangle$ agrees with the ground state energy up to order $o_\Gamma(1)$ is a result of a simple calculation and a comparison with the above formula for $ E $. It remains to prove 
	$\|\psi - \xi \|^2 = \mathcal{O}_{\Gamma}(N^{-2})$. 
	To this end, we revisit the proof of Theorem~\ref{thm:sggs}. From the validity of the global $(\beta_c/2,\delta,\alpha) $-deep hole scenario specified there and in view of \eqref{eq:decnear1}, it suffices to show 
	\be\label{eq:claimsgs} \left\vert\psi(\pmb{\sigma_0}) - \sqrt{1-\frac{\Gamma^2}{\beta_c^2 N}} \right\vert^2 = \mathcal{O}_{\Gamma}(N^{-2})  \quad \text{and} \quad  \sum_{\pmb{\sigma} \in S_1(\pmb{\sigma}_0 )} \left\vert\psi(\pmb{\sigma}) - \frac{\Gamma}{\beta_c N} \right\vert^2 = \mathcal{O}_{\Gamma}(N^{-2}).  \ee

	For a proof of these assertions, we use the eigenvalue equation~\eqref{eq:evre} on $ S_1(\pmb{\sigma}_0) $ together with $\psi({\pmb{\sigma}_0}) = 1+ \mathcal{O}_{\Gamma}(N^{-1})$ . If we pick  $ \pmb{\sigma} \in S_1(\pmb{\sigma}_0) $, this  yields
	\[ \begin{split} \psi({\pmb{\sigma}}) - \frac{\Gamma}{\beta_c N}  &= - \frac{\Gamma}{E - U(\pmb{\sigma})} \left(1 + \mathcal{O}_{\Gamma}(N^{-1}) \right) - \frac{\Gamma}{\beta_c N} \\
		&= \frac{\Gamma U(\pmb{\sigma})}{\beta_c N ( E - U(\pmb{\sigma}) )} + \mathcal{O}_{\Gamma}(N^{-3/2}) .\end{split}   \]
	Here the last step also relied on the estimate $ \vert U(\pmb{\sigma})\vert \leq \varepsilon N $ valid in the $(\varepsilon,\delta,\alpha) $-deep hole scenario, as well as~\eqref{eq:aprioriU}. 
	With a suitable constant $ C = C(\Gamma) < \infty $, we then have 
	\[  \sum_{\pmb{\sigma} \in S_1(\pmb{\sigma}_0 )} \left\vert\psi(\pmb{\sigma}) - \frac{\Gamma}{\beta_c N} \right\vert^2  \leq  \frac{C}{N^4 }\sum_{\pmb{\sigma} \in S_1(\pmb{\sigma}_0 )} U(\pmb{\sigma})^{2} + \mathcal{O}_{\Gamma}(N^{-2}) \]
	An exponential Chebychev-Markov estimate leads to 
	$\pp( N^{-2} \sum_{\pmb{\sigma} \in S_1(\pmb{\sigma}_0 )} (U(\pmb{\sigma})^{2} - N) ) \geq 1) \leq e^{-c N}  $ 
	for some $ c >0$. Thus, except for an event of exponentially small probability the second claim in~\eqref{eq:claimsgs} holds.
	Since $\psi$ is $ \ell^2$-normalized, this leads to
	\[ \psi(\pmb{\sigma_0})^2 = 1 -  \sum_{\pmb{\sigma} \in S_1(\pmb{\sigma}_0 )} \psi(\pmb{\sigma})^2  + \mathcal{O}_{\Gamma}(N^{-2})
	=  1- \frac{\Gamma^2}{\beta_c^2 N} + \mathcal{O}_{\Gamma}(N^{-2}), \]
	which readily implies the first claim in~\eqref{eq:claimsgs}.
\end{proof}

For a proof of the $ \ell^1 $-estimate on the ground state eigenfunction, we need to sharpen estimates on the large-deviation geometry of the REM. To this end we define for $\epsilon, \delta > 0$ the following tripartition of the Hamming cube: 
\[ \begin{split}
	A_1(\epsilon) & \coloneqq \{ \pmb{\sigma} \in \mathcal{Q}_N \, \vert \,     \vert U(\pmb{\sigma})\vert \leq \epsilon N  \} \\
	A_2(\epsilon, \delta)  & \coloneqq \{ \pmb{\sigma} \in \mathcal{Q}_N \, \vert \,  \epsilon N <   \vert U(\pmb{\sigma})\vert \leq (\beta_c - \delta) N  \} \\
	A_3(\delta)  & \coloneqq \{ \pmb{\sigma} \in \mathcal{Q}_N \, \vert \,     \vert U(\pmb{\sigma})\vert > (\beta_c - \delta) N \} .  \
\end{split}  \]
A modification of ideas used in the proof of Lemma~\ref{lem:deeph} and~\cite[Lemma 2]{MW19} yields:
\begin{lemma}\label{lem:sep2}
	For any $\epsilon > 0$ there exist $K = K(\epsilon) \in \nn$ and a family of events $\Omega_{\epsilon, N}$ such that for $N$ large enough 
	\begin{enumerate}
		\item[(i)] For any $\pmb{\sigma} \in A_2(\epsilon, \delta) \cup A_3(\delta) $: \qquad $ \displaystyle \vert B_4(\pmb{\sigma}) \cap (A_2(\epsilon, \delta) \cup A_3(\delta)) \vert  \leq K $  \quad on $\Omega_{\epsilon, N}$.
		\item[(ii)] $\pp(\Omega_{\epsilon, N}) \geq 1 - 2^{-N}$.
	\end{enumerate}

\end{lemma}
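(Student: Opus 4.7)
The key simplification is that $A_2(\epsilon,\delta) \cup A_3(\delta) = \{\pmb{\sigma} \in \mathcal{Q}_N : |U(\pmb{\sigma})| > \epsilon N\}$ does not depend on $\delta$, so the event we construct can be phrased purely in terms of the fixed threshold $\epsilon N$:
\[ \Omega_{\epsilon,N} \coloneqq \big\{ \forall \pmb{\sigma} \in \mathcal{Q}_N \text{ with } |U(\pmb{\sigma})| > \epsilon N: \ \big| \{\pmb{\sigma}' \in B_4(\pmb{\sigma}) : |U(\pmb{\sigma}')| > \epsilon N \} \big| \leq K \big\} . \]
The strategy is a direct adaptation of the union-bound argument carried out in Lemma~\ref{lem:deeph}.

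For fixed $\pmb{\sigma}$, the family $\{U(\pmb{\sigma}')\}_{\pmb{\sigma}' \neq \pmb{\sigma}}$ is i.i.d.\ and independent of $U(\pmb{\sigma})$, and the Gaussian tail bound yields $\pp(|U(\pmb{\sigma})| > \epsilon N) \leq 2 e^{-\epsilon^2 N/2}$. Hence by factorisation and a union bound over the centre $\pmb{\sigma}$,
\begin{align*}
\pp(\Omega_{\epsilon,N}^c) & \leq \sum_{\pmb{\sigma} \in \mathcal{Q}_N} \pp\big( |U(\pmb{\sigma})| > \epsilon N \big) \cdot \binom{|B_4| - 1}{K} \pp\big( |U| > \epsilon N \big)^{K} \\
& \leq 2^N \cdot \mathcal{O}(N^{4K}) \cdot \big( 2 e^{-\epsilon^2 N/2} \big)^{K+1} ,
\end{align*}
where we used $|B_4| = \mathcal{O}(N^4)$. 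This is bounded by $2^{-N}$ for all sufficiently large $N$ provided $(K+1)\epsilon^2/2 > 2\ln 2$, so the choice $K(\epsilon) \coloneqq \lceil 4 \ln 2/\epsilon^2 \rceil$ suffices; the polynomial prefactor $N^{4K}$ is absorbed once $N$ is large.

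I do not expect a genuine obstacle: the argument is essentially routine. The only conceptual point worth highlighting is that the naive estimate "at least one neighbour has $|U| > \epsilon N$" gives only a prefactor $|B_4| \cdot 2 e^{-\epsilon^2 N/2}$, which fails to beat the $2^N$ union bound over centres $\pmb{\sigma}$. Demanding $K$ simultaneous large deviations in $B_4(\pmb{\sigma}) \setminus \{\pmb{\sigma}\}$ produces the extra factor $(2 e^{-\epsilon^2 N/2})^K$, and the independence of $U(\pmb{\sigma}')$ from $U(\pmb{\sigma})$ for $\pmb{\sigma}' \neq \pmb{\sigma}$ makes the factorisation above immediate.
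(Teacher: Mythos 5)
Your proposal is correct and follows essentially the same union-bound-plus-factorisation strategy as the paper's proof. The only minor difference is bookkeeping: the paper estimates the (slightly larger) event that the intersection has size $\geq K$ by demanding $K-1$ additional large deviations in $B_4(\pmb{\sigma})\setminus\{\pmb{\sigma}\}$, while you correctly observe that the complement of ``$\leq K$'' requires $K$ additional ones, yielding the extra exponent $K+1$ and the slightly cleaner threshold $K+1 > 4\ln 2/\epsilon^2$; both lead to the same conclusion.
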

\begin{proof}
Let $\Omega_{\epsilon, N,K}$ be the event, where the assertion (i) holds true with constant $K$. It remains to show that  the complement satisfies 
$\pp(\Omega_{\epsilon, N,K}^{c}) \leq 2^{-N}$ for an appropriate choice for $K$ and $N$ large enough. To this end we estimate
\[ \begin{split}&\quad \, \pp(\Omega_{\epsilon, N,K}^{c}) =   \pp( \exists\  \pmb{\sigma} \in A_2(\epsilon, \delta) \cup A_3(\delta)\; \text{s.t.} \; \vert B_4(\pmb{\sigma}) \cap (A_2(\epsilon, \delta) \cup A_3(\delta))\vert  \geq K ) \\ &\leq \sum_{\pmb{\sigma}_0  \in\mathcal{Q}_N} \pp(\vert U(\pmb{\sigma}_0)\vert \geq \epsilon N) \, \pp(\exists \, K-1 \text{ different } \pmb{\sigma}_1, \ldots  \pmb{\sigma}_{K-1} \in B_4(\pmb{\sigma}_0)\backslash \{ \pmb{\sigma}_0 \} \,  \text{s.t.}  \\[-2ex]
	& \mkern400mu \,    \vert U(\pmb{\sigma}_j)\vert \geq \epsilon N \text{ for } j=1,\ldots, K-1) \\ 
	&\leq \binom{N^4}{K-1}\,  2^N \, \pp(\vert U(\pmb{\sigma}_0)\vert \geq \epsilon N)^K \leq N^{4K} 2^N e^{- K N\epsilon^2/2},
\end{split}.  \]
Here the second line is a consequence of the union bound and the third line follows from the independence and a simple counting argument. Choosing $K > 4 \ln 2/\epsilon^2$, we see that $\pp(\Omega_{\epsilon, N,K}^{c}) < 2^{-N}$ for $N$ large enough.

\end{proof}

As a final preparation, we also need the following elementary observation on the size of large deviation sites.
\begin{lemma}\label{lem:card}
For any $ \delta \in (0,\beta_c) $ and all $ N $:
\[
\pp\left( \vert A_3(\delta)\vert \geq 2 e^{\beta_c \delta N} \right) \leq e^{-N\delta^2/2} .
\]
\end{lemma}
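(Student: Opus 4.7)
The proof is a direct application of Markov's inequality to the integer-valued random variable $|A_3(\delta)| = \sum_{\pmb{\sigma}\in\mathcal{Q}_N}\mathbbm{1}[|U(\pmb{\sigma})|>(\beta_c-\delta)N]$. I would first compute its expectation using linearity and the fact that the $U(\pmb{\sigma})=\sqrt{N}\,\omega(\pmb{\sigma})$ are identically distributed. This reduces the task to estimating a single Gaussian tail probability.

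Concretely, by the standard Gaussian tail bound (already used in~\eqref{eq:REMevent}),
\[
\pp\bigl(|U(\pmb{\sigma})|>(\beta_c-\delta)N\bigr) \;=\; \pp\bigl(|\omega|>(\beta_c-\delta)\sqrt{N}\bigr) \;\le\; 2\,e^{-(\beta_c-\delta)^2 N/2}.
\]
Summing over the $2^N$ vertices gives
\[
\Ee\bigl[|A_3(\delta)|\bigr] \;\le\; 2^{N+1}\,e^{-(\beta_c-\delta)^2 N/2}.
\]
Now I would use the defining identity $\beta_c^2=2\ln 2$, which converts the exponent via
\[
N\ln 2 - \tfrac{1}{2}(\beta_c-\delta)^2 N \;=\; \tfrac{1}{2}\bigl(\beta_c^2-(\beta_c-\delta)^2\bigr)N \;=\; \beta_c\delta N - \tfrac{\delta^2}{2}N,
\]
yielding $\Ee[|A_3(\delta)|] \le 2\,e^{\beta_c\delta N - \delta^2 N/2}$.

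Finally, Markov's inequality gives
\[
\pp\bigl(|A_3(\delta)| \ge 2\,e^{\beta_c\delta N}\bigr) \;\le\; \frac{\Ee[|A_3(\delta)|]}{2\,e^{\beta_c\delta N}} \;\le\; e^{-\delta^2 N/2},
\]
which is exactly the claim. There is no real obstacle here: the only nontrivial input is the identity $\beta_c^2=2\ln 2$ which makes the entropic $2^N$ factor cancel against the leading Gaussian exponential, leaving only the quadratic correction $-\delta^2 N/2$. Everything else is a one-line Markov estimate.
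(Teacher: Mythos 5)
Your proof is correct and follows exactly the paper's approach: compute the expectation of $|A_3(\delta)|$ via linearity and the Gaussian tail bound, then apply Markov's inequality, with the identity $\beta_c^2 = 2\ln 2$ doing the bookkeeping in the exponent. The paper phrases this more tersely (``a standard Markov estimate'') but the argument is the same.
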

\begin{proof}
The cardinality $\vert A_3(\delta)\vert$ is a sum of $2^N $ independent Bernoulli variables with success probability $ p = \pp(\vert U(\pmb{\sigma})\vert > (\beta_c - \delta) N) \leq 2 e^{-\frac12 (\beta_c - \delta)^2 N} $
such that $ \mathbb{E}[\vert A_3((\delta))\vert] = 2^N p $. The claim thus follows from a standard Markov estimate. 
\end{proof}

We are finally ready to finish the proof of our main result in the localization regime.

\begin{proof}[Proof of Theorem~\ref{thm:sgstate} -- $ \ell^p $-properties]~~We first observe that the claims on the $\ell^p$-norms immediately follow from the $\ell^1$-norm asymptotics~\eqref{eq:ell1}.
To see this, recall that $\psi(\pmb{\sigma}_0 ) = 1 + o_{\Gamma}(1)$ for some $ \pmb{\sigma}_0 \in \mathcal{Q}_N $, and that 
$\psi(\pmb{\sigma}) \leq c \, N^{-1}$  for all $\pmb{\sigma} \neq \pmb{\sigma}_0$. Hence  for any $1 < p < \infty$:
\[  1 + o_{\Gamma}(1) \leq  \| \psi \|_{\ell^p}^p \leq 1 + \frac{c^{p-1}}{N^{p-1}} \sum_{\pmb{\sigma} \neq \pmb{\sigma}_0}\psi(\pmb{\sigma} ) \leq 1+ \frac{c^{p-1}}{N^{p-1}}  \| \psi \|_{\ell^1}  = 1+ o_{\Gamma,p}(1). \]
It therefore remains to establish~\eqref{eq:ell1}.\\

Recalling that the ground state wavefunction $\psi$ is positive, we can write 
$ \| \psi \|_{\ell^1} = \sum_{\pmb{\sigma}} \psi(\pmb{\sigma}) $. 
The eigenvalue equation for $\psi$ leads to 
\begin{equation}\label{eq:l1main} 
	\begin{split}  E \sum_{\pmb{\sigma}} \psi(\pmb{\sigma}) &= +\Gamma \sum_{\pmb{\sigma}} (T \psi)(\pmb{\sigma}) + \sum_{\pmb{\sigma}} (U \psi)(\pmb{\sigma}) =  - \Gamma N \sum_{\pmb{\sigma}}  \psi(\pmb{\sigma})  - \sum_{\pmb{\sigma}} U(\pmb{\sigma}) \psi(\pmb{\sigma})\\
		& = - \Gamma N \sum_{\pmb{\sigma}}  \psi(\pmb{\sigma}) + U(\pmb{\sigma}_0) \psi(\pmb{\sigma}_0) + \sum_{\pmb{\sigma} \neq \pmb{\sigma}_0} U(\pmb{\sigma}) \psi(\pmb{\sigma}). 
\end{split}  \end{equation}
The second equality follows from the fact that each $\pmb{\sigma}$ has $N$ neighbors.
The main idea is now to show that the remainder term $ \sum_{\pmb{\sigma} \neq \pmb{\sigma}_0} U(\pmb{\sigma}) \psi(\pmb{\sigma})$ can be controlled by the other two terms on the right side. Here, we use the tripartition $A_1(\epsilon), A_2(\epsilon,\delta), A_3(\delta)$ of the configuration space and bound the contribution of each $A_i$ separately. 

In the following, we fix $ \delta , \alpha > 0 $ small enough, such that the REM satisfies a global $ (\beta_c/2,\delta,\alpha ) $-deep hole scenario with a probability which is exponentially close to one.
Moreover, we pick  $ \varepsilon > 0 $ arbitrary and fix $ K=K(\varepsilon) \in \mathbb{N} $   the   assertions of Lemma~\ref{lem:sep2} hold on a joint event on which the global $ (\beta_c/2,\delta_0,\alpha_0 ) $-deep hole scenario applies as well. This event still has a probability of at least $ 1 - e^{-c(\delta,\alpha) N } $ with some $ c(\delta,\alpha) > 0 $, which is independent of $ \varepsilon $.\\

\noindent
\textit{Contribution of $A_1(\epsilon)$:} In this case we use the trivial estimate, 
$ \left\vert\sum_{\pmb{\sigma} \in A_1} U(\pmb{\sigma}) \psi(\pmb{\sigma})\right\vert \leq \epsilon N \| \psi \|_{\ell^1} $.\\

\noindent	\textit{Contribution of $A_3(\delta)$:} We only consider $\delta \leq \delta_0$, such that configurations $\pmb{\sigma} \in A_3(\delta) \setminus\{\pmb{\sigma}_0\}$ lie outside the ball $B_{\alpha_0 N}(\pmb{\sigma}_0)$. In particular, there is some $c > 0$ such that for all $N$ large enough and all $\pmb{\sigma} \in A_3(\delta) \setminus\{\pmb{\sigma}_0\}$
the ground state is uniformly bounded, $\vert\psi(\pmb{\sigma})\vert \leq e^{-cN}$. We now pick $ \delta \coloneqq \min\{\delta_0, c/(4 \beta_c) \}$ and shrink the considered event such that $ \vert A_3(\delta)\vert \leq 2 \ e^{N c/4} $. According to Lemma~\ref{lem:card} this event still has a probability greater than $ 1 - e^{-c(\delta,\alpha) N } $ with some $ c(\delta,\alpha) > 0 $.  On this event, we conclude for all $ N $ large enough
\[ \sum_{\pmb{\sigma} \in A_3(\delta) \setminus\{\pmb{\sigma}_0\}} \vert U(\pmb{\sigma})\vert \psi(\pmb{\sigma}) \leq e^{-N c/2 } . \]

\noindent	\textit{Contribution of $A_2(\epsilon,\delta)$:} 	We first consider the configurations in $A_2(\epsilon,\delta)$
close to the center $\pmb{\sigma}_0$, which we estimate for $N$ large by 
\[ \begin{split} \sum_{\pmb{\sigma} \in A_2(\epsilon,\delta) \cap B_4(\pmb{\sigma}_0) } \vert U(\pmb{\sigma})\vert \psi(\pmb{\sigma}) & \leq   \vert  A_2(\epsilon,\delta) \cap B_4(\pmb{\sigma}_0) \vert   \max_{\pmb{\sigma} \in B_4(\pmb{\sigma}_0) \backslash \{\pmb{\sigma}_0 \} }  \vert U(\pmb{\sigma})\vert \psi(\pmb{\sigma})\leq C \, K  \end{split}  \]
with some $C = C(\Gamma)$. We use $  \vert U(\pmb{\sigma})\vert  \leq  \beta_c N/2 $ due to the validity of the global  $ (\beta_c/2,\delta,\alpha ) $-deep hole scenario as well as the pointwise bound $\psi(\pmb{\sigma}) \leq C N^{-1} $ for all $ \pmb{\sigma} \in B_4(\pmb{\sigma}_0) $ with $\pmb{\sigma} \neq \pmb{\sigma}_0 $.

It remains to consider $\pmb{\sigma} \in A_2(\epsilon,\delta) \setminus B_4(\pmb{\sigma}_0)$.  The eigenvalue equation reads
\[ \vert E - U(\pmb{\sigma})\vert \psi(\pmb{\sigma}) = \Gamma \sum_{\pmb{\sigma}^\prime \in S_1(\pmb{\sigma})} \psi(\pmb{\sigma}^\prime).   \]
Since $E \leq (\beta_c - \delta/2) N$ for $N$ large enough, we obtain for $\pmb{\sigma} \in A_2(\epsilon,\delta)$ the bound
\[ \psi(\pmb{\sigma}) \leq \frac{2\Gamma}{\delta N } \sum_{\pmb{\sigma}^\prime \in S_1(\pmb{\sigma})}  \psi(\pmb{\sigma}^\prime).   \]
The essence of the following argument is that the value of any $\psi(\pmb{\sigma})$ is comparable to the mean on the corresponding $S_1(\pmb{\sigma})$ sphere and, thus, $\psi$ cannot take especially large values on $A_2(\epsilon,\delta)$.  
To make this intuition precise, we separate the $A_3(\delta)$ configurations, which we possibly encounter in the spherical mean and repeat the procedure for the remaining $\pmb{\sigma}^\prime \in S_1(\pmb{\sigma})$. This leads to 
\[ \begin{split} \psi(\pmb{\sigma}) &\leq  \frac{2\Gamma}{\delta N } \sum_{\pmb{\sigma}^\prime \in S_1(\pmb{\sigma}) \cap A_3(\delta)}  \psi(\pmb{\sigma}^\prime) +
	\frac{4 \Gamma^2}{\delta^2 N^2} \sum_{\pmb{\sigma}^\prime \in S_1(\pmb{\sigma}) \setminus A_3(\delta)} \sum_{\pmb{\sigma}^{\prime \prime} \in S_1(\pmb{\sigma}^\prime)} \psi(\pmb{\sigma}^{\prime \prime}) \\
	&\leq \frac{2\Gamma}{\delta N } \sum_{\pmb{\sigma}^\prime \in S_1(\pmb{\sigma}) \cap A_3(\delta)}  \psi(\pmb{\sigma}^\prime) + \frac{4 \Gamma^2}{\delta^2 N} \psi(\pmb{\sigma}) + \frac{8 \Gamma^2}{\delta^2 N^2} \sum_{\pmb{\sigma}^\prime \in S_2(\pmb{\sigma})}  \psi(\pmb{\sigma}^{ \prime}), \end{split}  \]
which for $N$ large enough implies 
\[ \psi(\pmb{\sigma}) \leq \frac{4 \Gamma}{\delta N } \sum_{\pmb{\sigma}^\prime \in S_1(\pmb{\sigma}) \cap A_3(\delta)}  \psi(\pmb{\sigma}^\prime) +  \frac{16 \Gamma^2}{\delta^2 N^2} \sum_{\pmb{\sigma}^\prime \in S_2(\pmb{\sigma})}  \psi(\pmb{\sigma}^{ \prime}).  \]
We now further shrink the considered event to ensure that  $\|U\|_\infty\leq 2 \beta_c N$ holds true. This happens for all but an event of exponentially probability,  cf.~\eqref{eq:REMevent}. Thus, for $N$ large enough
\[  \begin{split} \Bigg\vert\sum_{\pmb{\sigma} \in A_2 \setminus B_4(\pmb{\sigma}_0)}& U(\pmb{\sigma}) \psi(\pmb{\sigma})\Bigg\vert  \leq 2 \beta_c N \sum_{\pmb{\sigma} \in A_2 \setminus B_4(\pmb{\sigma}_0)}  \psi(\pmb{\sigma}) \\ & \leq  \sum_{\pmb{\sigma} \in A_2 \setminus B_4(\pmb{\sigma}_0)}  \Big( \frac{8 \beta_c \Gamma}{\delta} \sum_{\pmb{\sigma}^\prime \in S_1(\pmb{\sigma}) \cap A_3(\delta)}  \psi(\pmb{\sigma}^\prime) + \frac{16 \beta_c \Gamma^2}{\delta^2 N} \sum_{\pmb{\sigma}^\prime \in S_2(\pmb{\sigma})}   \psi(\pmb{\sigma}^{ \prime}) \Big) \\ & \leq \frac{8 \beta_c K \Gamma}{\delta} \sum_{\pmb{\sigma} \in A_3(\delta) \setminus\{\pmb{\sigma}_0\}} \psi(\pmb{\sigma})+   \frac{16 \beta_c \Gamma^2 K }{\delta^2 N} \|\psi\|_{\ell^1}  \leq e^{-N c/4 } +  \frac{16 \beta_c \Gamma^2 K }{\delta^2 N} \|\psi\|_{\ell^1}   . \end{split} \]
In the third line we used the observation that each configuration $\pmb{\sigma} \in \mathcal{Q}_N$ appears in the summation  at most $K$ times  due to Lemma~\ref{lem:sep2}. The last step is a consequence of our exponential bound on $ \psi(\pmb{\sigma}) $ on the $A_3(\delta)$-configurations.

Combining the partial results on each $A_J$, we arrive with some $C = C(\Gamma)$ at the bound
\[ \vert \sum_{\pmb{\sigma} \neq \pmb{\sigma}_0} U(\pmb{\sigma}) \psi(\pmb{\sigma}) \vert \leq (2 \epsilon + \mathcal{O}_{K,\Gamma}(N^{-1}) ) N  \| \psi \|_{\ell^1} + 4 C K  \]
which is valid on with probability of at least $ 1- e^{-c N }  $ with some $ c > 0 $ which in independent of $\varepsilon$. Since $\varepsilon> 0 $ was arbitrary, the claimed convergence now follows from~\eqref{eq:l1main}. 
\end{proof}

\section{Free energy asympotics}\label{sec:free}

For our proof of Theorem~\ref{thm:free} we exploit that the partition function is determined by the eigenvalues close to the thermal averages 
\begin{equation}
\langle U \rangle^{\text{cl}}_\beta \coloneqq \frac{ \tr U e^{-\beta U}}{\tr e^{-\beta U}} \quad\text{or}\quad \langle T \rangle^{\text{pm}}_\beta \coloneqq \frac{ \tr T e^{-\beta T}}{\tr e^{-\beta T}} ,
\end{equation}
depending on the phase. To determine their behavior we consider the local region around $\pmb{\sigma} \in \mathcal{L}_{\epsilon}$, where $ \epsilon > 0 $ has to be allowed to be arbitrarily small. In this case, we cannot guarantee anymore that all balls $B_R(\pmb{\sigma})$ are disjoint.  However, we will show that this is still true for isolated 
extremal sites $\pmb{\sigma} \in \mathcal{L}_{\epsilon}$, which are in the majority. 
Then, we establish the order-one corrections of Theorem~\ref{thm:sggs} for those isolated large deviations.
Based on these results, we prove Theorem~\ref{thm:free} via a suitable approximation argument using auxiliary operators on cut domains of the configuration space.

\subsection{Basic large deviations}
We first record some standard facts in the statistical mechanics of the pure REM and pure paramagnet.
\begin{proposition}\label{lem:saddle}
\begin{enumerate}
	\item For any $\beta \geq 0$ we have 
	$	\langle T \rangle^{\text{pm}}_\beta = - N \tanh \beta $.
	Moreover, for any $\beta \geq 0, \delta > 0$ there exists some $c = c(\beta,\delta) > 0$ such that 
	\begin{equation}
		\frac{ \tr \mathbbm{1}_{[-N(\tanh \beta + \delta),-N(\tanh \beta - \delta) ]}(T) e^{-\beta T}}{\tr e^{-\beta T}} \geq 1 - e^{-cN}.
	\end{equation}
	\item For $\beta < \beta_c$ we have almost surely
	$			\langle U \rangle^{\text{cl}}_\beta = - (\beta + o(1)) N $.
	Moreover, for $\beta < \beta_c $ and $ \delta > 0$ there exists some $c = c(\beta,\delta) > 0$ such that 
	\begin{equation}
		\frac{ \tr \mathbbm{1}_{[-N(\beta + \delta),-N( \beta - \delta) ]}(U) e^{-\beta U}}{\tr e^{-\beta U}} \geq 1 - e^{-cN}
	\end{equation}
	except for an exponentially small event.
	\item For $\beta > \beta_c$ we have almost surely
	$ \langle U \rangle^{\text{cl}}_\beta = - (\beta_c + o(1)) N $.
	Moreover, for $\beta > \beta_c $ and $ \delta > 0$ there exists some $c = c(\delta) > 0$ such that 
	\begin{equation}
		\frac{ \tr \mathbbm{1}_{(-\infty,-N( \beta_c - \delta) ]}(U) e^{-\beta U}}{\tr e^{-\beta U}} \geq 1 - e^{-cN}.
	\end{equation}
\end{enumerate}	
\end{proposition}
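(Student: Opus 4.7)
The plan is to prove the three claims separately: Part~1 by direct spectral computation using the product structure of $e^{-\beta T}$, and Parts~2--3 via the classical large-deviation principle for the REM together with a Laplace-type asymptotic for the partition function.

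For Part~1, since the matrices $\sigma_j^x$ pairwise commute, the trace factorizes,
\[ \tr e^{-\beta T} \ =\  \prod_{j=1}^N \tr e^{\beta \sigma_j^x} \ = \ (2\cosh\beta)^N, \]
so $\langle T \rangle^{\textup{pm}}_\beta = -\partial_\beta \ln \tr e^{-\beta T} = -N\tanh\beta$. The induced Gibbs measure on $\spec T = \{2n-N\}$ is the law of $\sum_{j=1}^N \eta_j$ with $\eta_j\in\{-1,1\}$ iid of mean $-\tanh\beta$, and a standard Chernoff bound for such Bernoulli sums gives the claimed exponential concentration on any interval of radius $\delta N$ around $-N\tanh\beta$.

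For Parts~2 and~3, I would fix a small $\eta>0$ and work on $\Omega_{N,\eta}^{\textup{REM}}$, so that $U(\pmb{\sigma})/N \in [-\beta_c-\eta,\beta_c+\eta]$ for every $\pmb{\sigma}\in\mathcal{Q}_N$. Partition the latter interval into $\mathcal{O}(\epsilon^{-1})$ subintervals $I_k$ of width at most $\epsilon>0$, and set $N_k \coloneqq |\{\pmb{\sigma} : U(\pmb{\sigma})/N \in I_k\}|$. Since the indicators $\mathbbm{1}[U(\pmb\sigma)/N \in I_k]$ are iid Bernoulli with mean $p_k = \pp(U(\pmb\sigma)/N \in I_k)$, one has $\Ee[N_k]=2^N p_k$ and $\mathrm{Var}(N_k) \leq \Ee[N_k]$. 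A Chebychev estimate combined with a union bound over $k$ then yields, outside an event of probability $e^{-cN}$,
\[
\tfrac{1}{N}\ln N_k \ = \ \ln 2 - \inf_{x\in I_k} \tfrac{x^2}{2}  + o_\epsilon(1)
\qquad \text{whenever }  \inf_{x\in I_k} \tfrac{x^2}{2} < \ln 2 - \epsilon,
\]
while $N_k = 0$ on $\Omega_{N,\eta}^{\textup{REM}}$ for every $I_k$ with $\inf_{x\in I_k} x^2/2 \geq (\beta_c+\eta)^2/2$. Decomposing $\tr e^{-\beta U} = \sum_k \sum_{\pmb\sigma:\, U/N\in I_k} e^{-\beta U(\pmb\sigma)}$ and inserting the above then yields, up to $o_\epsilon(N)$ in the exponent,
\[
\tfrac{1}{N}\ln \tr e^{-\beta U} \ = \  \sup_{|x|\leq \beta_c+\eta}\bigl(\ln 2 - \tfrac{x^2}{2} - \beta x\bigr) + o_\epsilon(1).
\]

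The proof is then completed by a convexity analysis of $f(x)\coloneqq \ln 2 - x^2/2 - \beta x$. For $\beta<\beta_c$ the supremum is uniquely attained at the interior point $x^\star = -\beta$, $f$ is strictly concave, and contributions from $\pmb\sigma$ with $U/N \notin [-\beta-\delta,-\beta+\delta]$ are suppressed by $e^{-c(\beta,\delta)N}$; the identification $\langle U\rangle_\beta^{\textup{cl}} = -(\beta+o(1))N$ then follows from the trivial bound $|U|\leq (\beta_c+\eta)N$ off the good band. For $\beta>\beta_c$ the unconstrained maximizer $-\beta$ lies below $-\beta_c$, and since $f'(x) = -x-\beta \leq -(\beta-\beta_c)<0$ on $[-\beta_c,\beta_c+\eta]$ for $\eta$ small, the constrained supremum is attained at $x^\star = -\beta_c$ with linear gap $f(-\beta_c+\delta) \leq f(-\beta_c) - (\beta-\beta_c)\delta$, giving the exponential concentration on $(-\infty,-(\beta_c-\delta)N]$ and, by the same averaging argument, $\langle U\rangle_\beta^{\textup{cl}} = -(\beta_c+o(1))N$. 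The main technical point is the uniformity of the level-count estimate on the $N_k$ across all $I_k$ simultaneously; this is handled by the Chebychev union bound above, provided $\epsilon$ is chosen first and $N$ taken large thereafter.
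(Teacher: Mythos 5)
Your Part~1 matches the paper's approach (compute $\tr e^{-\beta T} = (2\cosh\beta)^N$ via the tensor-product structure, differentiate to get $-N\tanh\beta$, and apply a Chernoff bound to the resulting iid Bernoulli product measure). For Parts~2--3 you take a genuinely different route: the paper cites the known limit formula~\eqref{eq:REMc} for the REM pressure and obtains the concentration by the standard exponential-tilting/convexity argument (e.g.\ $\tr \mathbbm{1}_{U>-(\beta_c-\delta)N}e^{-\beta U} \leq e^{(\beta-\gamma)(\beta_c-\delta)N}\tr e^{-\gamma U}$ with $\gamma<\beta$, then pass to the limit of $N^{-1}\Phi_N$ and optimize over $\gamma$, using convexity of $\Phi_N$ in $\beta$), whereas you re-derive the pressure from scratch through a discretized Gaussian level count and Laplace asymptotics. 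Your version is more self-contained at the cost of reproving material the paper treats as known; both are standard in the REM literature.

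There is, however, an inconsistency in your treatment of the edge case $\beta>\beta_c$ that must be repaired. You write $\tfrac1N\ln\tr e^{-\beta U} = \sup_{|x|\le\beta_c+\eta}\bigl(\ln 2 - x^2/2 - \beta x\bigr) + o_\epsilon(1)$ and then analyze the constrained supremum on $[-\beta_c,\beta_c+\eta]$, finding $x^\star=-\beta_c$. But with the only constraint supplied by $\Omega_{N,\eta}^{\textup{REM}}$, the supremum is over $[-(\beta_c+\eta),\beta_c+\eta]$, and for $\beta>\beta_c+\eta$ it is attained at $x=-(\beta_c+\eta)$, where the value is $\beta\beta_c + (\beta-\beta_c)\eta - \eta^2/2 > \beta\beta_c$. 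So the right-hand side of your displayed formula overestimates the correct answer $\beta\beta_c$ by an amount of order $\eta$, which is not $o_\epsilon(1)$ for fixed $\eta$; and your subsequent identification $x^\star=-\beta_c$ silently uses a lower cutoff that you have not established.

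The fix is to tighten the lower cutoff from $-(\beta_c+\eta)$ to $-\beta_c - O(\epsilon)$, which does not come from $\Omega_{N,\eta}^{\textup{REM}}$ but from the level counting itself: by Markov and a union bound, $\pp(\exists\,\pmb{\sigma}: U(\pmb{\sigma})/N < -(\beta_c+\epsilon')) \leq 2^N e^{-(\beta_c+\epsilon')^2N/2} \leq e^{-\beta_c\epsilon'N}$, so outside an exponentially small event one has $N_k=0$ whenever $\inf_{x\in I_k}x^2/2 > \ln 2 + O(\epsilon)$, i.e.\ whenever $\inf_{x\in I_k}|x| > \beta_c + O(\epsilon)$. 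With this sharper cutoff the supremum runs over $[-\beta_c-O(\epsilon),\beta_c]$ and equals $\beta\beta_c+O(\epsilon)$, as needed. You should also supply a Markov upper bound on $N_k$ (of the form $\pp(N_k>e^{2\epsilon N})\leq e^{-\epsilon N}$) for the intermediate intervals with $\inf_{x\in I_k}x^2/2\in[\ln 2-\epsilon,\,\ln 2+O(\epsilon)]$; your Chebychev argument only covers the regime where $\Ee N_k$ is exponentially large, and for $\beta>\beta_c$ this intermediate band near $x\approx-\beta_c$ is exactly where the dominant Gibbs mass sits, so an upper bound there is essential for the Laplace estimate.
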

The proof for the expressions of the thermal averages $\langle T \rangle^{\text{pm}}_\beta, 	\langle U \rangle^{\text{cl}}_\beta $  is  by differentiating the explicit formulas for the pressure with respect to $\beta$. The results on the concentration of the Gibbs measure are then of Cram\'er type and follow from the usual convexity estimates of the (explicit) free energy. 

\subsection{Spectral analysis on clusters}

In the proofs of Theorem~\ref{thm:sggs} and \ref{thm:sgstate} we derived the order-one correction of the energy levels $U(\pmb{\sigma})$  caused by extremal sites $\pmb{\sigma} \in \mathcal{L}_{\epsilon}$ with $ \epsilon \approx \beta_c $ from a local analysis on non-overlapping balls $B_R(\pmb{\sigma})$ of some radius $ R $.  For the proof of Theorem~\ref{thm:free} however, we need good control on all eigenvalues with energy below $ - \varepsilon N $ with $ \varepsilon > 0 $ arbitrary  and, thus, the large deviation set 
$\mathcal{L}_{\epsilon}$ has to be considered for any $\epsilon >0$. The balls $B_R(\pmb{\sigma})$ then have a nonempty intersection and the aim of this subsection is to deal with this modified situation. 
Let us introduce some definitions and notation. 
\begin{definition}
Let $\epsilon > 0$ and  $ k \in \mathbb{N}_0 $. We denote 
$ \pmb{\sigma} \overset{k}{\sim} \pmb{\sigma}^\prime \iff d(\pmb{\sigma},\pmb{\sigma}^\prime) \leq 2k +2 $. 
We call a set $ G \subset \mathcal{L}_{\epsilon}$ \emph{$(k,\epsilon)$-connected} (with respect to $ \overset{k}{\sim} $) if for any $\pmb{\sigma},\pmb{\sigma}^\prime \in G$ there exists a sequence $\pmb{\sigma} = \pmb{\sigma}^{(0)}, \pmb{\sigma}^{(1)}, \cdots, \pmb{\sigma}^{(m)} = \pmb{\sigma}^{\prime}$ such that $\pmb{\sigma}^{(i)} \in G$ and $\pmb{\sigma}^{(i)} \overset{k}{\sim} \pmb{\sigma}^{(i+1)}$ for all $ 0 \leq i \leq m-1$. If $G \subset \mathcal{L}_{\epsilon}$ is $(k,\epsilon)$-connected and for any $(k,\epsilon)$-connected $G^\prime$ with $G \subset G^\prime \subset \mathcal{L}_{\epsilon}$ it follows $G = G^\prime$, we call $G$ a \emph{$(k,\varepsilon)$-component}. We denote the family of $(k,\epsilon)$-components of $\mathcal{L}_{\epsilon}$ by $\mathcal{G}_{k,\epsilon}$. 

We call $\pmb{\sigma} \in \mathcal{Q}_N$ \emph{$(k,\epsilon)$-isolated} if $G = \{\pmb{\sigma}  \} \in \mathcal{G}_{k,\epsilon} $ and $I_{k,\epsilon}$ denotes the collection of $(k,\epsilon)$-isolated configurations.
\end{definition}
The case $k=0$ coincides with the notion of 'gap-connected' used in \cite{MW19,MW20,MW20c}.

The extremal set $\mathcal{L}_{\epsilon}$ naturally decomposes in its components, i.e.,
$\mathcal{L}_{\epsilon} = \cup_{G \in \mathcal{G}_{k,\epsilon} } G$.
We define for each $(k,\epsilon)$-component $G$ the corresponding cluster 
\[ C_k(G) \coloneqq \bigcup_{\pmb{\sigma} \in G} B_k(\pmb{\sigma}). \]
By construction $d(C_k(G),C_k(G^\prime)) \geq 2$ for different $k$-components $G \neq G^\prime$. \\

We start with a combinatorial lemma which shows that the size of  $(k,\epsilon)$-components remains bounded and that most  $(k,\epsilon)$-components are isolated.
\begin{lemma}\label{lem:clus}
Let $\epsilon > 0$ and $k \in \nn_0$ be fixed, but arbitrary.
\begin{enumerate}
	\item There exists an $M = M(k,\epsilon) \in \nn$ such that 
	\begin{equation}\label{def:clusterevent}
		\Omega_{N,M}(\epsilon,k) \coloneqq \left\{ \max_{G \in \mathcal{G}_{k,\epsilon}} \vert G\vert \leq M \right\} .
	\end{equation}
	occurs with probability $ \pp( \Omega_{N,M}(k,\epsilon) ) \geq 1 - e^{-c N} $ for some $ c > 0 $. 
	\item Let $ \epsilon < a < b$ and $b < \beta_c$. Then for all events, but one of exponentially small probability:
	\begin{equation}\label{eq:isocomp}
		\frac{ \vert  \mathcal{L}_{a,b} \cap I_{k,\epsilon}^{c}  \vert}{\vert \mathcal{L}_{a,b}\vert} \leq e^{-\epsilon^2 N/4} ,
	\end{equation}
	where $ \mathcal{L}_{a,b} \coloneqq  \mathcal{L}_{a} \cap \mathcal{L}_{b}^{c}  $ and $ (\cdot)^c $ indicates the complement of that set.
	\item Let $a = \beta_c - \delta$ and suppose that $ \delta (2 \beta_c-\delta) < \epsilon^2$. Then, besides of an exponentially small event 
	\begin{equation}\label{eq:isobc}
		\mathcal{L}_{a} \cap I_{k,\epsilon}^{c} = \emptyset.
	\end{equation}
\end{enumerate}
\end{lemma}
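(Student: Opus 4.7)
All three items exploit the same basic tension: the $\overset{k}{\sim}$-neighbourhood of any site contains only $|B_{2k+2}(\pmb{\sigma})|=\mathcal{O}_k(N^{2k+2})$ configurations, polynomially many in $N$, whereas each individual event $\{U\leq-\epsilon N\}$ has probability $\leq e^{-\epsilon^2 N/2}$, exponentially small. My three proofs would be independent union bounds, occasionally complemented by a Chernoff concentration for sums of i.i.d.\ Bernoulli variables.

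\textbf{Part 1.} I would bound the probability of having a $(k,\epsilon)$-component of size $\geq M+1$ by listing spanning trees. Cayley's formula gives $(M+1)^{M-1}$ labelled trees on $M+1$ vertices, and each tree can be embedded into $\mathcal{Q}_N$ with edges realised by pairs of configurations at Hamming distance $\leq 2k+2$ in at most $2^N\cdot|B_{2k+2}|^M$ ways: one places the root freely and then inserts each further vertex into the ball of radius $2k+2$ around its parent. By independence of the values $U(\pmb{\sigma})$,
\begin{equation*}
\pp\bigl(\Omega_{N,M}(\epsilon,k)^c\bigr)\leq (M+1)^{M-1}\,2^N\,(C_k N^{2k+2})^M\,e^{-(M+1)\epsilon^2 N/2},
\end{equation*}
which decays exponentially as soon as $(M+1)\epsilon^2/2>\ln 2$, e.g.\ for any $M\geq\lceil 4\ln 2/\epsilon^2\rceil$.

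\textbf{Part 2.} Set $p_N:=\pp(U(\pmb{\sigma})\in[-bN,-aN])$ and $\mu_N:=2^N p_N=\Ee|\mathcal{L}_{a,b}|$. Since $a<\beta_c$, the Laplace method yields $\mu_N=e^{(\ln 2-a^2/2+o(1))N}$, which is exponentially large. A standard Chernoff bound for the sum of $2^N$ i.i.d.\ Bernoulli variables then gives $\pp(|\mathcal{L}_{a,b}|<\tfrac{1}{2}\mu_N)\leq e^{-\mu_N/8}$, doubly exponentially small. For the numerator, every $\pmb{\sigma}\in\mathcal{L}_{a,b}\cap I_{k,\epsilon}^c$ carries a witness $\pmb{\sigma}^\prime\in\mathcal{L}_\epsilon\setminus\{\pmb{\sigma}\}$ at Hamming distance $\leq 2k+2$, so by independence
\begin{equation*}
\Ee|\mathcal{L}_{a,b}\cap I_{k,\epsilon}^c|\leq 2^N p_N\cdot|B_{2k+2}|\cdot\pp(U\leq-\epsilon N)\leq C N^{2k+2} e^{-\epsilon^2 N/2}\mu_N.
\end{equation*}
Markov's inequality then yields $|\mathcal{L}_{a,b}\cap I_{k,\epsilon}^c|\leq e^{-\epsilon^2 N/3}\mu_N$ except on an event of probability $\leq C N^{2k+2} e^{-\epsilon^2 N/6}$. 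Combining this with the Chernoff lower bound $|\mathcal{L}_{a,b}|\geq\mu_N/2$ gives the claimed ratio $e^{-\epsilon^2 N/4}$ for $N$ sufficiently large.

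\textbf{Part 3.} This is the direct union bound already used in the proof of Lemma~\ref{lem:deeph}. With $a=\beta_c-\delta$ and $\beta_c^2=2\ln 2$,
\begin{equation*}
\pp\bigl(\mathcal{L}_a\cap I_{k,\epsilon}^c\neq\emptyset\bigr)\leq 2^N\cdot e^{-(\beta_c-\delta)^2 N/2}\cdot|B_{2k+2}|\cdot e^{-\epsilon^2 N/2}\leq C N^{2k+2} e^{\frac{N}{2}(\delta(2\beta_c-\delta)-\epsilon^2)},
\end{equation*}
which is exponentially small precisely under the hypothesis $\delta(2\beta_c-\delta)<\epsilon^2$.

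\textbf{Main obstacle.} The only delicate piece is Part 2, since the target is a ratio rather than an absolute quantity and $a,b$ are only constrained to lie below $\beta_c$, so $\mu_N$ can be only barely exponentially large. One has to verify that the Chernoff concentration of $|\mathcal{L}_{a,b}|$ about its mean, which is doubly exponential in $N$, comfortably outpaces the Markov error $e^{-\epsilon^2 N/6}$ and the polynomial factor $N^{2k+2}$; this is automatic for any fixed $a<\beta_c$. The remaining parts reduce to clean Gaussian-tail union bounds.
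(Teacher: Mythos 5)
Your proof is correct and follows essentially the same strategy as the paper: Gaussian tail bounds plus union bounds for the rarity estimates, combined with a Chernoff-type concentration of $|\mathcal{L}_{a,b}|$ about its (exponentially large) mean in Part 2. The only noteworthy deviations are cosmetic. In Part 1 you enumerate spanning trees via Cayley's formula, whereas the paper uses a coarser bound: any $(k,\epsilon)$-connected set of size $M$ is contained in some ball $B_{(2k+2)M}(\pmb{\sigma})$, so it suffices to union-bound over $\pmb{\sigma}\in\mathcal{Q}_N$ and over $M$-element subsets of that (polynomially sized) ball; both yield the same exponential rate and the same threshold $M>2\ln 2/\epsilon^2$. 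In Part 3 you carry out the direct union bound
\[
2^N\, e^{-(\beta_c-\delta)^2N/2}\,|B_{2k+2}|\,e^{-\epsilon^2N/2}
= \mathcal{O}\!\left(N^{2k+2}\right) e^{\frac{N}{2}\left(\delta(2\beta_c-\delta)-\epsilon^2\right)},
\]
while the paper instead invokes Lemma~\ref{lem:deeph} (choosing $\alpha$ small enough so that the global deep-hole scenario holds); your version is more self-contained but the estimate is identical. Part 2 is essentially verbatim the paper's argument (independence bound on the mean, Markov, and concentration of the denominator), and your sanity check on the intermediate threshold $e^{-\epsilon^2 N/3}$ versus the target $e^{-\epsilon^2N/4}$ is exactly the point one must not overlook.
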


\begin{proof}
	For a proof of the first assertion, we estimate for any $ M \in \mathbb{N} $ using a union bound
	\[ \begin{split} \pp(\max_{G \in \mathcal{G}_{k,\epsilon}} \vert G\vert \geq M) &\leq \pp(\exists \, \pmb{\sigma} \in \mathcal{Q}_N \text{ s.t. } \vert B_{(2k+2)M}( \pmb{\sigma}) \cap \mathcal{L}_\epsilon\vert \geq M) \\
		&\leq 2^N \binom{ \vert B_{(2k+2)M}\vert}{M} e^{-\frac12 \epsilon^2 M N}.
	\end{split} \]
	The first inequality follows from the definition of $k$-connectedness: any $k$-connected set with size $M$ is contained in some ball $B_{(2k+2)M}( \pmb{\sigma})$ and if $\max_{G \in \mathcal{G}_{k,\epsilon}} \vert G\vert \geq M$ there exists a $k$-connected set with size $M$ (e.g. as a subset of the component with maximal size). The second line follows from the union bound, the independence of the random variables $U$ and the standard Gaussian tail estimate. 
	As the binomial coefficient is a polynomial in $N$ the claim follows for $M > 2 \ln 2/ \epsilon^2$.
	
	For a proof of the second assertion, we rewrite 
	$ \vert \mathcal{L}_{a,b}\vert = \sum_{\pmb{\sigma}} Z_{\pmb{\sigma}} $,
	where $Z_{\pmb{\sigma}}$ are iid Bernoulli variables with success probability 
	\be\label{eq:defpN} p_N \coloneqq \pp(\pmb{\sigma} \in  \mathcal{L}_{a,b} ) \geq \frac{\sqrt{N} (b-a)}{\sqrt{2\pi}} \ e^{-Nb^2/2} .
	\ee
	Since $ b < \beta_c$, the average size  $  \mathbb{E}[\vert \mathcal{L}_{a,b}\vert] = 2^N p_N $ is exponentially large and, by a Markov estimate, the same applies to all events aside from one of super-exponentially small probability, i.e., $ \pp(\vert \mathcal{L}_{a,b}\vert \leq 2^{N-1} p_N  ) \leq e^{-e^{C N}} $ for some $ C > 0 $. 
	Similarly, the conditional probability $ \mathbb{P}_{\pmb{\sigma}} := \mathbb{P}\left( \cdot \vert \{ \pmb{\sigma} \}^c \right) $ of the configuration to not be $(\epsilon,k) $-isolated equals the probability to find on $B^{o}_{2k+2} \coloneqq B_{2k+2}(\pmb{\sigma} )\backslash\{ \pmb{\sigma}\} $ another large deviation in $\mathcal{L}_{\epsilon}$ and hence 
	$ \mathbb{P}_{\pmb{\sigma}}(\exists \ \pmb{\sigma}^\prime \in  I_{k,\epsilon}^c\cap B^{o}_{2k+2} ) \leq \vert B_{2k+2}\vert e^{-N \epsilon^2/2} \leq N^{2k+2} e^{-N \epsilon^2/2 } $ 
	by the union bound and the Gaussian-tail estimate. This allows us to estimate
	\be\label{eq:meanb} \mathbb{E}\left[ \vert \mathcal{L}_{a,b} \cap I_{k,\epsilon}^{c}  \vert \right] = \sum_{\pmb{\sigma} \in \mathcal{Q}_N} \mathbb{E}\left[ \mathbbm{1}[\pmb{\sigma} \in  \mathcal{L}_{a,b}] \,  \mathbb{P}_{\pmb{\sigma}}(\pmb{\sigma} \in  I_{k,\epsilon}^c) \right] \leq 2^N p_N \, N^{2k+2} e^{-N \epsilon^2/2 } 
	\ee
	with $p_N $ from~\eqref{eq:defpN}. Excluding the event on which $ \vert \mathcal{L}_{a,b}\vert \leq 2^{N-1} p_N $, we thus arrive at
	\begin{align*}
		\pp\left( \vert   \mathcal{L}_{a,b} \cap I_{k,\epsilon}^{c} \vert \geq e^{-\epsilon^2N/4 } \vert \mathcal{L}_{a,b}\vert  \right)  & \leq  \pp\left( \vert   \mathcal{L}_{a,b} \cap I_{k,\epsilon}^{c} \vert \geq e^{-\epsilon^2N/4 } 2^{N-1} p_N \right) + e^{-e^{CN}} \\
		& \leq \frac{e^{\epsilon^2N/4 } }{2^{N-1} p_N}  \mathbb{E}\left[ \vert \mathcal{L}_{a,b} \cap I_{k,\epsilon}^{c}  \vert \right] + e^{-e^{CN}} 
	\end{align*} 
	by a Chebychev-Markov estimate. Inserting the bound~\eqref{eq:meanb} completes the proof.
	
	For the last assertion, we note that by Lemma~\ref{lem:deeph}  the condition on $\delta$ implies that for $\alpha > 0$ small  enough a global $(\epsilon,\delta,\alpha)$-deep hole scenario occurs
	with probability exponentially close to one.
\end{proof}

The next lemma establishes the spectral properties of the restriction $H_{C_k(G)}$ of the QREM Hamiltonian to  the Hilbert space $ \ell^2(C_k(G)) $ of a cluster corresponding to $G \in \mathcal{G}_{k,\epsilon}$. For its formulation, we define for $ \delta > 0 $ the spectral projections 
\[ P_{\delta}(G) \coloneqq \mathbbm{1}_{(-\infty,-\delta N)}(H_{C_k(G)}), \quad Q_{\delta}(G) \coloneqq \mathbbm{1}-P_{\delta}(G).\]

Recall the events $ \Omega_N^{u}  $ defined in~\eqref{def:Omu} and $  \Omega_{N,M}(k,\epsilon) $ defined in \eqref{def:clusterevent}.
\begin{lemma}\label{lem:gsclus}
Let $\epsilon > 0$ and $k\geq 2 $. On the event  $ \Omega_N^{u} \cap  \Omega_{N,M}(\epsilon,k) $
the following assertions are valid for all $N$ large enough:
\begin{enumerate}
	\item $ \displaystyle
	\max_{G \in \mathcal{G}_{k,\epsilon}} \| H_{C_k(G)} - U_{C_k(G)} \| = \mathcal{O}_{\Gamma,k,M}(\sqrt{N}) $. 
	\item If $\psi$ is an $ \ell^2 $-normalized eigenfunction of  $H_{C_k(G)}$ with $\langle \psi, \, H_{C_k(G)}\psi \rangle \leq - \frac32 \epsilon N$, we have 
	\begin{equation}\label{eq:esteigen}
		\vert\psi(\pmb{\sigma})\vert = \mathcal{O}_{\Gamma,k,M,\epsilon}( N^{-\dist(G,\pmb{\sigma})}),
	\end{equation}
	and 
	\begin{equation}\label{eq:esteigena}
		\| \vert\mathbbm{1}_{C_k(G)\backslash G }\psi \|^2 = \mathcal{O}_{\Gamma,k,M,\epsilon}(N^{-1}) , \qquad  \|\mathbbm{1}_{\partial C_k(G)}\psi \|^2 = \mathcal{O}_{\Gamma,k,M,\epsilon}(N^{-k}), 
	\end{equation}
	where $\mathbbm{1}_{\partial C_k(G)}$ is the natural projection onto the boundary of $C_k(G)$. In particular, all estimates are independent of $ \psi $ and $ G $.
	\item $ \displaystyle
	\sup_{ G\in \mathcal{G}_{k,\epsilon}}  \sup_{\pmb{\sigma} \in \mathcal{L}_{2 \epsilon} \cap G} \langle \delta_{\pmb{\sigma}} \vert  Q_{3 \epsilon/2}(G) \delta_{\pmb{\sigma}} \rangle = \mathcal{O}_{\Gamma,k,M,\epsilon}(N^{-1})
	$.
	\item If $G = \{\pmb{\sigma_0}  \}$ is $(k,\epsilon)$-isolated and $U(\pmb{\sigma}_0) \leq - 2\epsilon N$, then the ground state energy of $H_{C_k(G)}$ is given by 
	\begin{equation}\label{eq:isol}
		E_{\pmb{\sigma_0}} \coloneqq \inf\spec H_{C_k(G)} =	U(\pmb{\sigma}_0)	+ \frac{\Gamma^2 N}{	U(\pmb{\sigma}_0)} + \mathcal{O}_{\Gamma,k,\epsilon}(N^{-1/4}).
	\end{equation}
\end{enumerate}	
\end{lemma}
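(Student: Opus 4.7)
\emph{Proof plan.}

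The whole proof exploits that on $\Omega_{N,M}(\epsilon,k)$ any $(k,\epsilon)$-component has $|G|\le M$ and thus Hamming diameter at most $(2k+2)(M-1)$, so the cluster satisfies $C_k(G)\subset B_R(\sigma_*)$ for any $\sigma_*\in G$, with $R:=(2k+2)M+k$ depending only on $k,M$. Assertion~(1) then drops out at once: $T_{C_k(G)}$ is the compression of $T_R$ to $\ell^2(C_k(G))$, so Lemma~\ref{lem:tknorm} gives $\|T_{C_k(G)}\|\le\|T_R\|\le 2\sqrt{R(N-R+1)}=\mathcal{O}_{k,M}(\sqrt N)$, and $H_{C_k(G)}-U_{C_k(G)}=\Gamma T_{C_k(G)}$ is the claim.

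For assertion~(2) my plan is to run a rank-$|G|$ version of Lemma~\ref{lem:rank1}. I would split $U_{C_k(G)}=V_1+V_2$ with $V_1=\sum_{\sigma_0\in G}U(\sigma_0)|\delta_{\sigma_0}\rangle\langle\delta_{\sigma_0}|$, so that $V_2$ is supported on $C_k(G)\setminus G\subset\mathcal{L}_\epsilon^{\,c}$ and satisfies $V_2\ge-\epsilon N$. By~(1) the truncated Hamiltonian $H_2:=H_{C_k(G)}-V_1$ then obeys $H_2\ge-\tfrac{5}{4}\epsilon N$ for $N$ large, so for any eigenvalue $E\le-\tfrac{3}{2}\epsilon N$ of $H_{C_k(G)}$ the resolvent $(H_2-E)^{-1}$ exists and the eigenvalue equation yields
\begin{equation*}
\psi(\sigma)=-\sum_{\sigma_0\in G}U(\sigma_0)\,\psi(\sigma_0)\,G_2(\sigma,\sigma_0;E),\qquad G_2(\sigma,\sigma_0;E):=\langle\delta_\sigma|(H_2-E)^{-1}\delta_{\sigma_0}\rangle.
\end{equation*}
I would then bound $G_2$ in two steps: first by semigroup positivity (using $V_2\ge-\epsilon N$ in a Feynman--Kac representation) to dominate $G_2$ pointwise by the free Green's function of $\Gamma T_{C_k(G)}$ at the shifted energy $E+\epsilon N$; and then by elementwise monotonicity of $(T+a)^{-1}$ under vertex inclusion (valid via its positive-entry Neumann series whenever $a>\|T_R\|$) to dominate that in turn by $\Gamma^{-1}G_R(\sigma,\sigma_0;\cdot)$ from Proposition~\ref{prop:greenfix}. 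The net bound is a uniform estimate
\begin{equation*}
|G_2(\sigma,\sigma_0;E)|\;\le\;\frac{C}{N}\binom{N}{d(\sigma,\sigma_0)}^{-1/2}\left(\frac{C'}{\sqrt N}\right)^{d(\sigma,\sigma_0)}
\end{equation*}
with $C,C'$ depending on $k,M,\epsilon,\Gamma$. Combined with $|U(\sigma_0)|=\mathcal{O}(N)$ (on a background event of exponentially high probability), this produces the pointwise decay~\eqref{eq:esteigen}. Summing squares over the spheres $S_d(\sigma_0)$ around the points of $G$, the cardinality $\binom{N}{d}$ cancels the binomial factor in the bound and leaves a geometric series in $d$, giving~\eqref{eq:esteigena}.

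Assertion~(3) I would derive from a one-line variance estimate: $\langle\delta_\sigma|H_{C_k(G)}\delta_\sigma\rangle=U(\sigma)\le-2\epsilon N$ and $\|H_{C_k(G)}\delta_\sigma\|^2\le U(\sigma)^2+\Gamma^2 N$, so the spectral measure at $\sigma$ has variance $\le\Gamma^2 N$; any eigenvalue $E_j\ge-\tfrac{3}{2}\epsilon N$ satisfies $(E_j-U(\sigma))^2\ge\epsilon^2 N^2/4$, and Chebyshev gives $\langle\delta_\sigma|Q_{3\epsilon/2}(G)\delta_\sigma\rangle\le 4\Gamma^2/(\epsilon^2 N)$. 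For assertion~(4), with $G=\{\sigma_0\}$ isolated we have $C_k(G)=B_k(\sigma_0)$ and the argument mimics Lemma~\ref{lem:rank1} on a ball of fixed radius $k$. Inserting the Taylor expansion $(E-U(\sigma'))^{-1}=E^{-1}+U(\sigma')/(E(E-U(\sigma')))$ into the eigenvalue equation at $\sigma_0$ and controlling the $S_2(\sigma_0)$ contributions by the $\mathcal{O}(N^{-2})$ decay from~(2), the self-consistent equation $E_{\sigma_0}=U(\sigma_0)+\Gamma^2 N/E_{\sigma_0}+R$ appears with $|R|\le 2|E_{\sigma_0}|^{-2}\sum_{\sigma'\in S_1(\sigma_0)}|U(\sigma')|+\mathcal{O}(N^{-1})$; on $\Omega_N^u$ we have $\sum_{\sigma'\in S_1(\sigma_0)}|U(\sigma')|\le N^2 u(\sigma_0)\le N^{7/4}$, so $R=\mathcal{O}(N^{-1/4})$ and solving for $E_{\sigma_0}$ gives~\eqref{eq:isol}.

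\emph{Main obstacle.} I expect the technically most delicate point to be the uniform Green's function bound in Step~2: the cluster $C_k(G)$ is \emph{not} a Hamming ball, so Proposition~\ref{prop:greenfix} does not apply directly, and the reduction to the ball $B_R(\sigma_*)$ through vertex-inclusion monotonicity requires the positive-matrix Neumann series for $(T_{C_k(G)}+a)^{-1}$ to converge, i.e.\ $a=(-E-\epsilon N)/\Gamma>\|T_R\|$, and everything must remain uniform in $G\in\mathcal{G}_{k,\epsilon}$ and $\sigma$. The cardinality bound $|G|\le M$ on $\Omega_{N,M}(\epsilon,k)$ is precisely what keeps $R$, and hence the constants $C,C'$, bounded independently of~$N$.
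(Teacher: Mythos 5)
Your plan is sound and arrives at all four assertions, but it takes a genuinely different route from the paper in two places, and this is worth noting.

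For assertion~(2) the paper proves the decay directly by a recursion on the spherical maxima $s_r=\max_{\pmb{\sigma}\in S_r(G)}|\psi(\pmb{\sigma})|$: inserting the geometric bounds $|S_1(\pmb{\sigma})\cap S_{r\pm 1}(G)|$ into the eigenvalue equation yields $s_r\le \tfrac{2M\Gamma k}{|E|-\epsilon N-\Gamma M}s_{r-1}$, which iterates from $s_0\le 1$ without ever invoking a Green's function. You instead set up a rank-$|G|$ resolvent identity $\psi(\pmb{\sigma})=-\sum_{\pmb{\sigma}_0\in G}U(\pmb{\sigma}_0)\psi(\pmb{\sigma}_0)G_2(\pmb{\sigma},\pmb{\sigma}_0;E)$ and dominate $G_2$ first by the free Green's function of $\Gamma T_{C_k(G)}$ at the $\epsilon N$-shifted energy (semigroup positivity) and then, via entrywise monotonicity of the positive Neumann series, by $\Gamma^{-1}G_R$ from Proposition~\ref{prop:greenfix}. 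This is closer in spirit to Lemma~\ref{lem:rank1} and more transparent about where the $N^{-d}$ rate comes from (the $\binom{N}{d}^{-1/2}N^{-d/2}$ factors in Proposition~\ref{prop:greenfix}); the paper's recursion is more elementary and avoids the whole resolvent machinery. Both give the needed bounds. For assertion~(3) your variance/Chebyshev argument is considerably slicker than the paper's: the paper uses a coupling principle (eigenvalues of $H_{C_k(G)}$ and $U_{C_k(G)}$ agree up to $\mathcal{O}(\sqrt N)$), a pigeon-hole argument to find a spectral gap in $[3\epsilon/2,2\epsilon]$, and then the finite-rank projection comparison Lemma~\ref{lem:proj}. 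Your one-line estimate $\langle\delta_{\pmb{\sigma}}|H_{C_k(G)}\delta_{\pmb{\sigma}}\rangle=U(\pmb{\sigma})$, $\|H_{C_k(G)}\delta_{\pmb{\sigma}}\|^2\le U(\pmb{\sigma})^2+\Gamma^2N$, plus Chebyshev, gets the same $\mathcal{O}(N^{-1})$ bound with far less scaffolding.

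One small gap to patch in your Step~2: you invoke $|U(\pmb{\sigma}_0)|=\mathcal{O}(N)$ and say this holds ``on a background event of exponentially high probability.'' But the lemma is stated deterministically on $\Omega_N^u\cap\Omega_{N,M}(\epsilon,k)$, and this event does not control $\|U\|_\infty$. You do not actually need it: the eigenvalue equation at $\pmb{\sigma}_0$ gives $U(\pmb{\sigma}_0)\psi(\pmb{\sigma}_0)=E\psi(\pmb{\sigma}_0)+\Gamma\sum_{\pmb{\sigma}'\in S_1(\pmb{\sigma}_0)}\psi(\pmb{\sigma}')$, hence the quantity $|U(\pmb{\sigma}_0)\psi(\pmb{\sigma}_0)|\le|E|+\Gamma N$ that actually appears in your resolvent formula is automatically $\mathcal{O}_{\Gamma,k,M,\epsilon}(N)$ without any further event. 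With that replacement the proof is complete on exactly the stated event, matching the paper's uniformity claim.
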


\begin{proof}
\begin{enumerate}[wide, labelwidth=!, labelindent=0pt]
	\item We write $H_{C_k(G)} = U_{C_k(G)} + \Gamma T_{C_k(G)}$ and recall that $C_k(G)$ is a union of at most $M$ Hamming balls $B_k(\pmb{\sigma})$ with $ \pmb{\sigma} \in G $. Thus, by the triangle inequality and Proposition~\ref{lem:tknorm} we obtain 
	$ \|T_{C_k(G)}\| \leq M c_k \sqrt{N} $,
	and hence the claim.
	
	\item We introduce the modified spheres $S_r(G)$ for $0 \leq r \leq k$,
	\[ S_r(G) \coloneqq C_r(G) \setminus C_{r-1}(G) = \{ \pmb{\sigma} \in C_k(G) \, \vert \, \text{dist}(\pmb{\sigma},G) = r \}  \]
	and for the eigenvector $ \psi $ the maximal values on the spheres,
	$ s_r \coloneqq \max_{\pmb{\sigma} \in S_r(G)} \vert\psi(\pmb{\sigma})\vert $. 
	We use the convention $S_0(G) = G$ and note that $S_k(G) = \partial C_k(G)$. Moreover, we observe that for any $\pmb{\sigma} \in S_r(G)$ and $1 \leq r \leq k$:
	\[ \begin{split} \vv S_1(\pmb{\sigma}) \cap S_r(G)\vv \leq M, \qquad   r &\leq \vv S_1(\pmb{\sigma}) \cap S_{r-1}(G)\vv \leq r M, \\ 
		\quad   N - (r+1)M &\leq \vv S_1(\pmb{\sigma}) \cap S_{r+1}(G)\vv \leq N - r.
	\end{split}	\]
	We now use the eigenvalue equation\[ - E \psi(\pmb{\sigma}) = \Gamma \sum_{\pmb{\sigma}^\prime\in S_1(\pmb{\sigma})} \psi( \pmb{\sigma}^\prime) - U(\pmb{\sigma}) \psi(\pmb{\sigma}) ,
	\] 
	to derive the claimed decay estimate. Inserting the above geometric bounds into the eigenvalue equation, we obtain for all  $  1 \leq r \leq k $ with the convention $ s_{k+1} = 0 $:
	\begin{equation}\label{eq:evmax}
		- E s_r \leq \Gamma rM s_{r-1} + \Gamma (N-r) s_{r+1} + (\epsilon N + \Gamma M) s_r  .
	\end{equation}
	We claim that for all $  1 \leq r \leq k $ and $ N $ large enough
	\[s_r \leq \frac{2 M \Gamma k }{\vert E\vert-\epsilon N - \Gamma M} s_{r-1}.\]
	This is immediate from~\eqref{eq:evmax} in case $ r = k $ (even without the factor $ 2 $). In case $ 1 \leq r < k $, the bound is proven recursively. 
	If the inequality holds for $ r+1 $, then~\eqref{eq:evmax} implies 
	\[ \vert E\vert s_r \leq \Gamma rM s_{r-1}  + \left( \frac{2 N \Gamma^2 k  }{\vert E\vert-\epsilon N - \Gamma M } + \epsilon N  + \Gamma M \right) s_r ,
	\]
	and hence the claimed inequality for all $ N $ large enough. 
	Since  $s_0  \leq 1$, this establishes~\eqref{eq:esteigen} by iteration. 
	
	The first claim in~\eqref{eq:esteigena} follows from~\eqref{eq:esteigen} using $ \vert G\vert\leq M $. Indeed, for some $C = C(\Gamma,k,M,\epsilon)$
	\[
	\sum_{\pmb{\sigma} \neq \pmb{\sigma}_0} \vert\psi(\pmb{\sigma})\vert^2 \leq C \sum_{r=1}^K N^{-2r} \left\vert \{ \pmb{\sigma} \, \vert \,\dist(\pmb{\sigma},G)  = r \} \right\vert \leq C M \,  \sum_{r=1}^K N^{-r}  =\frac{C M}{N-1} . 
	\]
	Since $\vert\partial C_k(G)\vert \leq M N^k$ the second inequality in~\eqref{eq:esteigena} follows similarly. 
	\item 
	We will repeatedly make use of a coupling principle which follows from~1., namely the fact that the eigenvalues of $H_{C_k(G)} $ and $ U_{C_k(G)} $ agree up to a uniform error of order $ \mathcal{O}_{\Gamma,k,M}(N^{1/2}) $.  
	Since $\vert \mathcal{L}_\varepsilon \cap G \vert \leq \vert G\vert \leq M$, this implies that $ \dim P_{3\epsilon/2}(G) \leq M$ for any component $G \in \mathcal{G}_{k,\epsilon}$ if $N$ is chosen large enough. By the pigeon-hole principle, for any component $G$ we find some $a = a(G) \in [3  \epsilon /2, 2 \epsilon] $ such that 
	\begin{equation}\label{eq:achoice} P_{a - \epsilon/(2M)}(G) - P_{a + \epsilon/(2M)}(G) = \mathbbm{1}_{(-(a+\epsilon/(2M))N,- (a-\epsilon/(2M))N}(H_{C_k(G)}) = 0. \end{equation} 
	Since $ Q_{3 \epsilon/2}(G) \leq Q_a(G)$, it is enough to prove the assertion with $  Q_{3 \epsilon/2}(G) $ replaced by $ Q_a(G) $. 
	
	To this end, we fix a component $G \in \mathcal{G}_{k,\epsilon}$ 
	and observe that the coupling principle and \eqref{eq:achoice} yield
	\be\label{eq:choicea} \vert \mathcal{L}_a \cap G\vert = \dim  P_a(G) \eqqcolon m_a  \ee
	with a natural number $m_a \leq M$. We denote by $\psi_1, \ldots \psi_{m_a}  $ the  normalized low energy eigenfunctions of $H_{C_k(G)}$ corresponding to $P_a(G)$, which form an orthonormal basis for this subspace.   The first inequality in~\eqref{eq:esteigena} bounds the contribution of each eigenfunction to $  C_k(G)\setminus G $.
	Moreover, the eigenvalue equation readily implies for
	$\pmb{\sigma} \in G \setminus \mathcal{L}_a$: 
	\[ \vert\psi_j(\pmb{\sigma}))\vert \leq\frac{\Gamma}{\vert E_j - U(\pmb{\sigma}) \vert} \sum_{\pmb{\sigma}^\prime \in S_1(\pmb{\sigma} )} \vert\psi_j(\pmb{\sigma}^\prime))\vert \leq \frac{2M \Gamma}{\epsilon N} \sum_{\pmb{\sigma}^\prime \in S_1(\pmb{\sigma} )} \vert\psi_j(\pmb{\sigma}^\prime))\vert \leq  \frac{2M \Gamma }{\epsilon \sqrt{N}},     \]
	with $E_j = \langle \psi_j, \, H_{C_k(G)}\psi_j \rangle \leq - a N $ the eigenvalue corresponding to $\psi_j$. The second inequality follows from \eqref{eq:achoice} and the last step is a consequence of the Cauchy-Schwarz inequality and $ \| \psi_j \| = 1 $. As $\vert G \setminus \mathcal{L}_a\vert \leq M$, we also conclude that 
	\[ \sum_{\pmb{\sigma} \in C_k(G)\setminus \mathcal{L}_a } \vert\psi_j(\pmb{\sigma})\vert^2 \leq \frac{C}{N} \]
	with a uniform $C = C(\Gamma,k,M,\epsilon) < \infty$. We thus learn that   $\sup_j \| \mathbbm{1}_{\mathcal{L}_a \cap G}\  \psi_j - \psi_j \|^2 \leq C/N$. Lemma~\ref{lem:proj} below shows that (with $P =  \mathbbm{1}_{\mathcal{L}_a \cap G}$ and $F = P_a(G))$
	\[ \begin{split} \sup_{\pmb{\sigma} \in \mathcal{L}_{a } \cap G} \langle \delta_{\pmb{\sigma}} \vv  Q_{a}(G) \delta_{\pmb{\sigma}} \rangle &\leq \| Q_{a}(G) \mathbbm{1}_{\mathcal{L}_a \cap G} \| 
	 \leq 4(\sqrt{m_a} + m_a)^2 \frac{C}{N} \leq 16 M^{2} \frac{C}{N}. \end{split}   \]
	Since $a \leq 2 \epsilon$, this proves the claim.
	
	\item 
	By the Rayleigh-Ritz variational principle, we have $E_{\pmb{\sigma}_0} \leq - 2 \epsilon N$, and hence the results of~2. apply to the  corresponding ground state wavefunction  $\psi \in \ell^2(C_k(G) ) $. By~\eqref{eq:esteigena} this ensures  $ \psi(\pmb{\sigma}_0) = 1 + O_{\Gamma,k,\epsilon}(N^{-1/2}) $.	Following the steps in analysis~\eqref{eq:eigen1a}--\eqref{eq:eigen1b} of the eigenfunction equation, in which we use \eqref{eq:esteigen} and the assumed bound on $ u $, we thus conclude that~\eqref{eq:evasym}  remains valid. This concludes the proof of~\eqref{eq:isol}.
\end{enumerate}
\end{proof}

In the proof of Lemma~\ref{lem:gsclus} we used the following result on finite-rank projections:
\begin{lemma}\label{lem:proj}
Suppose $\mathcal{H}$ is a finite-dimensional Hilbert space, $P$ an orthogonal projection of rank $m$ and $f_1,f_2, \ldots f_m$ a sequence of $m$ orthonormal vectors in  $\mathcal{H}$, which span the projection $ F $.  If for some $ c < \infty $
\begin{equation}\label{eq:Pfcond}
	\max_{j = 1, \ldots, m}	\|Pf_j - f_j \| \leq c , 
\end{equation} 	
then $
\| P - F \| \leq (m + 2\sqrt{m}) c $.
\end{lemma}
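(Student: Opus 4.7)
The plan is to bound $\|P - F\|$ via the operator identity
\[
P - F \;=\; P(I-F) \;-\; (I-P)F,
\]
(verified by direct expansion), and then estimate the two summands $\|PF^\perp\|$ and $\|(I-P)F\|$, where $F^\perp := I - F$, separately. These two pieces are asymmetric with respect to the given data: the hypothesis $\|Pf_j - f_j\| \leq c$ controls $(I-P)F$ directly through the explicit orthonormal basis $\{f_j\}$ of the range of $F$, while the complementary piece requires the rank constraint $\mathrm{rank}\,P = \mathrm{rank}\,F = m$.

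The first estimate is routine. Any $\psi$ in the range of $F$ expands as $\psi = \sum_j \alpha_j f_j$ with $\sum_j |\alpha_j|^2 = \|\psi\|^2$, and the triangle inequality together with the hypothesis and Cauchy--Schwarz in the index $j$ yields
\[
\|(I-P)\psi\| \;\leq\; \sum_j |\alpha_j|\,\|f_j - Pf_j\| \;\leq\; c\sqrt{m}\,\|\psi\|,
\]
so $\|(I-P)F\| \leq c\sqrt{m}$.

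The bound on $\|PF^\perp\|$ is the main obstacle, since a priori we have no orthonormal basis of the range of $P$ on which the hypothesis gives a direct handle. The plan is to pass to the positive self-adjoint operator $PF^\perp P = P - PFP$ and convert the rank equality into a trace bound. The crucial observation is that $h_j := Pf_j - f_j$ lies in $\ker P$, because $Ph_j = Pf_j - Pf_j = 0$; equivalently, $Pf_j$ is the orthogonal projection of $f_j$ onto the range of $P$, so Pythagoras gives $\|Pf_j\|^2 = 1 - \|h_j\|^2 \geq 1 - c^2$. Then by cyclicity of the trace and $P^2 = P$,
\[
\tr(PFP) \;=\; \tr(FP) \;=\; \sum_j \langle f_j, Pf_j\rangle \;=\; \sum_j \|Pf_j\|^2 \;\geq\; m - mc^2,
\]
so $\tr(PF^\perp P) = \tr(P) - \tr(PFP) \leq mc^2$. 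Since $PF^\perp P$ is positive semi-definite on a finite-dimensional space, its operator norm is bounded by its trace, hence
\[
\|PF^\perp\|^2 \;=\; \|PF^\perp P\| \;\leq\; \tr(PF^\perp P) \;\leq\; mc^2,
\]
giving $\|PF^\perp\| \leq c\sqrt{m}$. Combining the two pieces yields $\|P - F\| \leq 2c\sqrt{m}$, which is strictly stronger than the stated bound $(m + 2\sqrt{m})c$ for every $m \geq 1$. The rank equality enters precisely through $\tr P = m$, turning the trace-norm inequality for $PF^\perp P$ into the decisive factor $mc^2$.
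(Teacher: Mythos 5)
Your proof is correct and takes a genuinely different, cleaner route than the paper's, yielding the sharper bound $\| P - F \| \leq 2\sqrt{m}\,c$ in place of $(m + 2\sqrt{m})\,c$. The paper expands $P - F$ into the three telescoping pieces $PF - F$, $PFP - PF$, and $P - PFP$, bounds the first two by $\sqrt{m}\,c$ each via a Frobenius-type estimate, and controls $\| P - PFP \|$ by its trace norm: since $0 \leq PFP \leq P$, one has $\| P - PFP \| \leq \tr(P - PFP) = \sum_j \|(1-P)f_j\|^2 \leq m c^2 \leq m c$, producing the linear-in-$m$ contribution. Your two-term split $P - F = P(I - F) - (I - P)F$ is where the gain comes from: the paper's third term $P - PFP$ is not an independent object but the Gram matrix $P(I - F)P = \bigl(P(I - F)\bigr)\bigl(P(I - F)\bigr)^*$, so the very same trace estimate $\tr\bigl(P(I - F)P\bigr) \leq m c^2$ now controls the \emph{square} $\|P(I - F)\|^2$, and taking the square root replaces $m$ by $\sqrt{m}$. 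Both proofs use the rank equality $\tr P = m$ at exactly the same point, and your bound on $\|(I - P)F\|$ matches the paper's estimate on $\|PF - F\|$ in substance (Cauchy--Schwarz on the basis $\{f_j\}$ versus the Hilbert--Schmidt norm; both give $\sqrt{m}\,c$); the improvement lies entirely in recognizing that the trace-norm inequality should be applied to the squared operator. As a side remark, the paper's displayed chain states $\sum_j \langle f_j, (1-P)f_j\rangle \leq m c$, whereas each summand equals $\|(1-P)f_j\|^2 \leq c^2$, so the correct intermediate bound is $m c^2$; this does not invalidate the stated lemma (it uses $c \leq 1$, which holds in every application), but your argument is sharp in $c$ as well as in $m$.
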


\begin{proof}
We employ the triangle inequality 
$ \| P - F \| \leq \| PF - F \| +  \| PF - PFP \| + \| P - PFP  \| $ and
bound the three terms on the right-hand side individually. For the first term we invoke that $ PF - F$ vanishes on the orthogonal complement $\im F^{\perp}$ and, thus, a Frobenius norm estimate yields
\[ \| PF - F \| \leq \sqrt{ \sum_{j=1}^{m} \| (P-F) f_j \|^2  } = \sqrt{ \sum_{j=1}^{m} \|  Pf_j - f_j \|^2  } \leq \sqrt{m} c. \]
Our bound on the second term, relies on the norm estimate for the first term,
$ \| PF - PFP \| = \| P(F - FP) \| \leq \| F - FP \| = \| PF - F \| \leq  \sqrt{m} c $, 
where we used that $\|P \| = 1$ for the first bound and applied the elementary identity $\| A \| = \| A^{\ast} \|$. For the last term, we employ the operator inequality $0 \leq PFP \leq P$ and the fact that the operator norm is bounded by the trace norm $\| \cdot \|_1$: 
\begin{align*} \| P - PFP  \| & \leq  \| P - PFP  \|_1 = \tr P - \tr PFP = \sum_{j=1}^m \langle \psi_j , (1-P) \psi_j \rangle \leq m c .
\end{align*}
This completes the proof.
\end{proof}

\subsection{Proof of Theorem~\ref{thm:free}}

Before we dive into the details of the proof, we fix some notation. 
For $k \in \nn$ and $\epsilon >0$, we will use the restricted Hamiltonian corresponding to the collection of all clusters $C_k(G)$,
\[ H^{(c)} \coloneqq \bigoplus_{G \in \mathcal{G}_{k,\epsilon}} H_{C_k(G)} \]
acting on the complete Hilbert space $\ell^2(\mathcal{Q}_N)$.
We further denote by 
\[ P_{\epsilon}^{(c)} \coloneqq \mathbbm{1}_{(-\infty,-3 N\epsilon/2 )}( H^{(c)})  = \bigoplus_{G \in \mathcal{G}_{k,\epsilon}} P_{3\epsilon/2}(G), \quad Q_{\epsilon}^{(c)} \coloneqq \mathbbm{1} - P_{\epsilon}^{(c)} \]
the spectral projections of  $H^{(c)}$.
The factor $3/2$ is motivated by the third assertion of Lemma~\ref{lem:gsclus}. 
The subspace corresponding to $ P_{\delta}^{(c)} $ represents the "localized" part of the QREM and $ Q_{\delta}^{(c)}  $ corresponds  to the "delocalized" part. Corresponding to this block decomposition, we set the diagonal parts of $ H $ as well as their partition functions:
\begin{align}
&H^{(1)} \coloneqq P_{ \epsilon}^{(c)} H P_{ \epsilon}^{(c)}, \quad H^{(2)} \coloneqq Q_{ \epsilon}^{(c)} H Q_{ \epsilon}^{(c)}, \\
&Z_N^{(j)}(\beta,\Gamma) \coloneqq 2^{-N} \tr_j e^{-\beta 	H^{(j)}} , \quad j =1,2  .
\end{align}
Here the traces $ \tr_j(\cdot) $ run over the natural subspaces $P_{ \epsilon}^{(c)} \ell^{2}(\mathcal{Q}_N)$ in case $ j = 1 $, or  $Q_{ \epsilon}^{(c)} \ell^{2}(\mathcal{Q}_N)$ in case $ j = 2 $, on which $ H^{(j)} $ acts non-trivially.

The key observation is now that $P_{ \epsilon}^{(c)}$ commutes with the restriction of $H$ to the clusters $H^{(c)}$. If we denote by $A$ the  adjacency matrix between the inner and outer boundaries of the clusters $C_k(G)$, we see
\begin{equation}\label{eq:opA} P_{\epsilon}^{(c)} H Q_{ \epsilon}^{(c)} = \Gamma  P_{ \epsilon}^{(c)} A Q_{ \epsilon}^{(c)}.   \end{equation}
We recall that $d(C_k(G),C_k(G^\prime)) \geq 2 $ for two different components $G \neq G^\prime \in \mathcal{G}_{k,\epsilon}$, which implies that the adjacency matrix $A$ is a direct sum of operators $A_{C_k(G)}$ corresponding to each cluster $C_k(G) $. This in turn yields  
\be\label{eq:normA}
\| A \| \leq M c_k \sqrt{N}
\ee
by Proposition~\ref{lem:tknorm}on the event on which the assertions in Lemma~\ref{lem:clus} apply. 
We further observe that $A$ only acts nontrivially on the boundaries $\partial C_k(G)$. Exploiting the decay estimate~\eqref{eq:esteigena} from Lemma~\ref{lem:gsclus}, we arrive at 
\[ \|P_{ \epsilon}^{(c)} H Q_{ \epsilon}^{(c)}\| = \mathcal{O}_{\Gamma,k,M,\epsilon}(N^{-(k-1)/2}). \]
We conclude that for any $k \geq 2$:
\begin{equation}\label{eq:partsplit} 
Z_N(\beta,\Gamma) = e^{o_{\Gamma,k,M,\epsilon}(1)} (Z_N^{(1)} + Z_N^{(2)})  . 
\end{equation}
The proof of Theorem~\ref{thm:free} now reduces to an analysis of $Z_N^{(1)}$ and $Z_N^{(2)}$.

\begin{proof}[Proof of Theorem~\ref{thm:free}]

Since our claims in case $\beta = 0$ and $\Gamma = 0$ are trivial, we fix $\beta,\Gamma >0$ away from the phase transition, and pick  
\[ 0< \epsilon < \frac18 \min\{\beta, \beta_c, \Gamma \tanh \beta \Gamma, \min\{1, \beta^{-1} \} \ln \cosh \beta \Gamma \} . \]

In the following, we will only work on the event $  \Omega_{N,\beta_c}^\textup{REM} \cap \Omega_N^{u} \cap  \Omega_{N,M}(\epsilon,k) $, where the conditions of Lemma~\ref{lem:clus} are valid at $ k \geq 2 $ and some $ M $.
According to Lemma	~\ref{lem:gsclus} and~\eqref{eq:uest} as well as~\eqref{eq:Uinfty}, this event can be chosen to have a probability of at least $ 1 - e^{-c N } $.

We now proceed in four steps. We first analyze the localized part $Z_N^{(1)}$. As a  second and third step we derive an upper and lower bound for  $Z_N^{(2)}$. The last part then collects these estimates.\\

\noindent	\textit{Step 1 --  Analysis of $Z_N^{(1)}$:}~~Let us first remark that $H^{(1)} =  P_{ \epsilon}^{(c)} H P_{ \epsilon}^{(c)} = P_{ \epsilon}^{(c)} H^{(c)} P_{ \epsilon}^{(c)}$ and hence $H^{(1)} P_{ \epsilon}^{(c)} =  H^{(c)} P_{ \epsilon}^{(c)}$. It thus remains to consider the low energy spectrum of~$H^{(c)}$.
We abbreviate 
\[
u \coloneqq u(\beta) \coloneqq \lim_{N\to \infty} \langle  U \rangle_\beta^{\text{cl}}/N = \begin{cases} - \beta , & \beta \leq \beta_c , \\
	- \beta_c , & \beta > \beta_c ,
\end{cases}
\]  
by Proposition~\ref{lem:saddle}. Since $ 8 \epsilon < - u$, the dominant energy levels of $U$ are not effected by the projection $P_{ \epsilon}^{(c)}$. We now  split  $Z_N^{(1)}$ into the contribution arising from energy levels within $ J_\delta := [ (u-\delta)N , (u+\delta) N ] $ with 
$  0 < \delta < \min\{-u/2 -3\epsilon/4, \epsilon^2/(16\beta) \} $ 
arbitrarily small, and a remainder:
\[ \begin{split}
	Z_N^{(1)}(\beta,\Gamma) = 2^{-N} \left( \tr e^{-\beta 	H^{(1)}} \mathbbm{1}_{J_\delta}(H^{(1)}) + \tr e^{-\beta 	H^{(1)}} \mathbbm{1}_{(-\infty, -3 \epsilon N /2) \setminus J_\delta}(H^{(1)})  \right).
\end{split} \]
The second term is estimated using Lemma~\ref{lem:gsclus}  and subsequently Proposition~\ref{lem:saddle}, which yields some  $c = c(\beta,\delta) > 0$ such that for all sufficiently large $ N $: 
\be \tr e^{-\beta 	H^{(1)}} \mathbbm{1}_{\rr \setminus  J_\delta }(H^{(1)}) \leq  \ e^{\beta \mathcal{O}_{\Gamma,k,M}(\sqrt{N}) } \ \tr e^{- \beta U}  \mathbbm{1}_{\rr \setminus  J_{\delta/2} }(U) \leq e^{-c N  } \, 2^N Z_N(\beta,0)  .  \ee
The remaining term is decomposed further into the contribution of isolated and non-isolated clusters:
\begin{align}\label{eq:splitZ1} \tr e^{-\beta 	H^{(1)}} \mathbbm{1}_{J_\delta}(H^{(1)}) = & \sum_{G \in \mathcal{I}_{k,\epsilon}} \tr e^{-\beta H_{C_k(G)} }\mathbbm{1}_{J_\delta}(H_{C_k(G)}) \notag \\ & + 
	\sum_{G \in \mathcal{G}_{k,\epsilon} \setminus I_{k,\epsilon}^{c}} \tr e^{-\beta H_{C_k(G)} }\mathbbm{1}_{J_\delta}(H_{C_k(G)}) .
\end{align}
Since $ \sup_{G \in \mathcal{G}_{k,\epsilon}} \| H_{C_k(G)} - U_{C_k(G)} \| \leq \mathcal{O}_{\Gamma,k,M}(\sqrt{N}) $ by Lemma~\ref{lem:gsclus}, we bound the second term  for all sufficiently large $ N $ as follows:
\begin{multline}\label{eq:thincompl}   \sum_{G \in \mathcal{G}_{k,\epsilon} \setminus I_{k,\epsilon}^{c}} \tr e^{-\beta H_{C_k(G)} }\mathbbm{1}_{J_\delta}(H_{C_k(G)})  \leq e^{-\beta N (u-\delta)  } \sum_{G \in \mathcal{G}_{k,\epsilon} \setminus I_{k,\epsilon}^{c}}  \tr  \mathbbm{1}_{J_{2\delta}}(U_{C_k(G)})  \\
	\leq e^{-N \varepsilon^2/4} e^{3\beta N \delta}  \sum_{G \in \mathcal{G}_{k,\epsilon} }  \tr  e^{-\beta U_{C_k(G)} }\mathbbm{1}_{J_{2\delta}}(U_{C_k(G)}) \leq e^{-c N}  \, 2^N Z_N(\beta,0)   .
\end{multline}
At the expense of throwing out another event of exponentially small probability, we consult Lemma~\ref{lem:clus} 
and assume in case $ u > - \beta_c$ the validity of~\eqref{eq:isocomp}  with $ a = -u - 2 \delta $ and $ b = - u + 2 \delta $ and in case $ u = - \beta_c$ the validity of~\eqref{eq:isobc} with $ a = -u- 2 \delta$.  This  guarantees  that 
non-isolated clusters are exponentially rare. The last inequality is a consequence of the choice of $ \delta $ and of the fact  our definition of the partition function $ Z_N $ includes a normilisation by $2^{-N}$.	

The first term on the right side of~\eqref{eq:splitZ1} can be expressed using the energy correction formula \eqref{eq:isol} for isolated extremal sites. At the expense of excluding or including small subintervals at the boundary of $ J_\delta $, which are negligible in comparison to the main term by Proposition~\ref{lem:saddle},
this first term is of the form
\[ \sum_{\pmb{\sigma} \in  I_{k,\epsilon} \cap \mathcal{L}_{u-\delta, u+\delta}} e^{-\beta(U(\pmb{\sigma})	+ \frac{\Gamma^2 N}{	U(\pmb{\sigma})} + o(1))} =  S	- R , 
\]
where, similarly to~\eqref{eq:thincompl},  the remainder is again bounded using~\eqref{eq:isocomp}:
\[
R \coloneqq  \sum_{\pmb{\sigma} \in  I_{k,\epsilon}^c \cap \mathcal{L}_{u-\delta,u+\delta}} e^{-\beta(U(\pmb{\sigma})	+ \frac{\Gamma^2 N}{	U(\pmb{\sigma})} + o(1))} \leq  e^{-c N}  \, 2^N Z_N(\beta,0)  . 
\]
The main term is
\[ S \coloneqq \sum_{\pmb{\sigma} \in \mathcal{Q}_N} e^{-\beta U(\pmb{\sigma}) + \frac{\Gamma^2 N}{	U(\pmb{\sigma})} + o(1))}  \mathbbm{1}[  U(\pmb{\sigma}) \in J_\delta] .
\]
By definition of $J_\delta $ and since the REM's partition function concentrates around $ u $ by Proposition~\ref{lem:saddle}, $ S $ equals $ 2^{N} Z(\beta,0 ) $ plus an error which is bounded by $ e^{-c N} 2^{N} Z(\beta,0 ) $.

In summary, in this first step we have shown that for any $\delta >0$ small and $N$ large enough: 
\be\label{eq:REMshift} e^{-\frac{\beta \Gamma^2}{u-\delta} + o(1)} Z_N(\beta,0) 
\leq  Z_N^{(1)}(\beta, \Gamma)  \leq e^{-\frac{\beta \Gamma^2}{u+\delta} + o(1)} Z_N(\beta,0) . 
\ee

\medskip
\noindent	
\textit{Step 2 -- Upper bound on $Z_N^{(2)}$:} 
We write $U_\epsilon^< \coloneqq U  \mathbbm{1}_{U \geq - 2 \epsilon N} $ and 
$ U_\epsilon^> \coloneqq U   \mathbbm{1}_{U > - 2 \epsilon N}$ as well as $U_\epsilon \coloneqq  U  \mathbbm{1}_{\vert U\vert \leq  2 \epsilon N} $, and estimate using the Jensen-Peierls inequality~\cite{Ber72}: 
\[ \begin{split} 2^N Z_N^{(2)}(\beta, \Gamma) &= \tr_2 e^{-\beta Q_{ \epsilon}^{(c)}[\Gamma T +  U_\epsilon^<   + Q_{ \epsilon}^{(c)} U_\epsilon^>  Q_{ \epsilon}^{(c)}  ] Q_{ \epsilon}^{(c)}} \\
	&\leq  \tr Q_{ \epsilon}^{(c)} \ e^{-\beta [\Gamma T +  U_\epsilon^< + Q_{ \epsilon}^{(c)}  U_\epsilon^> Q_{ \epsilon}^{(c)} ]}  \leq  \tr e^{-\beta [\Gamma T +  U_\epsilon + Q_{ \epsilon}^{(c)} U_\epsilon^> Q_{ \epsilon}^{(c)} ]} .
\end{split} \]	
The last inequality follows from a trivial extension of the trace and the monotonicity of eigenvalues in the potential, $ U_\epsilon^<  \geq U_\epsilon $. 
From Lemma~\ref{lem:gsclus} we learn that $\max_{ \pmb{\sigma} \in \mathcal{L}_{2\epsilon}} \| Q_{ \epsilon}^{(c)} \delta_{\pmb{\sigma} } \|^2 \leq C N^{-1} $. Moreover, if $\pmb{\sigma} \in C_k(G)$ for some component $G$, the projection  $Q_{ \epsilon}^{(c)} \delta_{\pmb{\sigma}}$ has only support on $C_k(G)$. As any cluster $C_k(G)$ has at most $M$ configurations $\pmb{\sigma} \in \mathcal{L}_{2\epsilon}$, these observations result in the norm estimate
\[\|Q_{ \epsilon}^{(c)} U_\epsilon^> Q_{ \epsilon}^{(c)} \| \leq C M  \|U \|_{\infty} N^{-1}.\]
Since on the event considered we also have $ \| U \|_\infty \leq 2 \beta_c $ and the operator $Q_{ \epsilon}^{(c)} U_\epsilon^> Q_{ \epsilon}^{(c)}$ only acts 
non trivially  on the clusters $C_k(G)$, we thus conclude that for some $D\in (0,\infty)$:
\[ Q_{ \epsilon}^{(c)} U_\epsilon^> Q_{ \epsilon}^{(c)} \geq V \coloneqq -D \mathbbm{1}_{C} =  - D \sum_{G \in \mathcal{G}_{k,\epsilon}} \mathbbm{1}_{C_k(G)}, \quad \mathcal{C} \coloneqq \bigcup_{G \in \mathcal{G}_{k,\epsilon}}  C_k(G) \]
To summarize, we have thus shown that 
$ Z_N^{(2)}(\beta, \Gamma) \leq 2^{-N}  \tr e^{-\beta [\Gamma T +U_\epsilon + V ]} $.

From here, there are at least two possible ways to continue the proof. One could show that the potential $U_\epsilon + V$ meets the requirements of Theorem~\ref{lem:qpgs}. Then, one needs to control V, which is a little bit technical. Instead,  we will employ a convexity argument. To this end, we introduce for $\lambda \in \rr$  the family of pressures and corresponding Hamiltonians on $ \ell^2(\mathcal{Q}_N) $: 
\be\label{eq:defPhilamb} \Phi_N(\beta,\Gamma,\lambda) \coloneqq \ln  2^{-N}\tr  e^{-\beta   H(\lambda) } , \qquad   H(\lambda) \coloneqq \Gamma T + U_\epsilon+ \lambda V \ee
The pressure $\Phi_N(\beta,\Gamma,\lambda)$ is convex~\cite{Sim79} in $\lambda$, and $\lambda = 1$ is the case of interest.

Let us first discuss the case $\lambda = 0$ in which case Theorem~\ref{lem:qpgs} is applicable with $ W =U_\epsilon $. Since $\| U_\epsilon \|_\infty \leq 2 \epsilon N $ and $ \mathbb{E}\left[  U_\epsilon(\pmb{\sigma})^2\right] \leq N (1- e^{-2\epsilon^2 N} ) \leq N    $,    Theorem~\ref{lem:qpgs} guarantees that all eigenvalues of $ \Gamma T +  U_\epsilon $ below $E < -4 \epsilon N$, counted with multiplicity, are shifted with respect to the eigenvalues $ E $ of $\Gamma T$ to $E + \frac{N}{E} +o(1) $. Since $\langle \Gamma T \rangle^{\text{pm}}_\beta = - N \Gamma \tanh \beta \Gamma  \leq -8 \epsilon N $,  Proposition~\ref{lem:saddle} allows to spectrally focus the partition function onto an  interval around $ \langle \Gamma T \rangle^{\text{pm}}_\beta $ of arbitrarily small size $ 0 < \delta < \Gamma \tanh \beta \Gamma - 4 \epsilon $. A similar argument as in Step 1, then yields for all sufficiently large $ N $:
\[ \Phi_N(\beta,\Gamma,0) \leq N \ln \cosh \beta \Gamma  + \frac{\beta}{\Gamma \tanh \beta \Gamma -\delta} + o(1).  \]
Next, we consider general parameters $\lambda$. Recall that $ \mathbbm{1}_{\mathcal{C}} $ stands for the orthogonal projection onto the subspace of the union of all clusters, and that  $ \mathbbm{1}_{\mathcal{C}^c} $ is the orthogonal complement. In terms of the operator $A$ introduced in \eqref{eq:opA}, the norm estimate~\eqref{eq:normA} yields:
$ \| H(\lambda) - \mathbbm{1}_{\mathcal{C}} H(\lambda)\mathbbm{1}_{\mathcal{C}} - \mathbbm{1}_{\mathcal{C}^c} H(\lambda) \mathbbm{1}_{\mathcal{C}^c} \| \leq \| A \|  \leq C\sqrt{N} $ 
and hence
\begin{align*} \tr e^{-\beta H(\lambda)} & \leq e^{C \sqrt{N}} \tr e^{-\beta(\mathbbm{1}_{\mathcal{C}}H(\lambda)\mathbbm{1}_{\mathcal{C}} + \mathbbm{1}_{\mathcal{C}^c} H(\lambda) \mathbbm{1}_{\mathcal{C}^c})} 
	\\ & = e^{C \sqrt{N}} \left[ \tr \mathbbm{1}_{\mathcal{C}} e^{-\beta(\mathbbm{1}_{\mathcal{C}}H(\lambda)\mathbbm{1}_{\mathcal{C}})} +  \tr \mathbbm{1}_{\mathcal{C}^c} e^{-\beta(\mathbbm{1}_{\mathcal{C}^c} H(\lambda)\mathbbm{1}_{\mathcal{C}^c})}  \right]  
\end{align*}
at some $C< \infty $, which is independent of $N$ and $\lambda$. 
Each of the traces in the right side is now estimated separately:
\[ 2^{-N}  \tr \mathbbm{1}_{\mathcal{C}} e^{-\beta(\mathbbm{1}_{\mathcal{C}}H(\lambda)\mathbbm{1}_{\mathcal{C}})}  \leq e^{\beta \| \mathbbm{1}_{\mathcal{C}}H(\lambda)\mathbbm{1}_{\mathcal{C}} \|  } \leq \exp\left( \beta \left( C \Gamma \sqrt{N} + 2 \epsilon N + \lambda D \right) \right)
\]
where we used the triangle inequality for the operator norm as well as~\eqref{eq:normA} again.
Since $ H(\lambda) $ and $ H(0) $ agree on $\mathbbm{1}_{\mathcal{C}^c} \, \ell^2(\mathcal{Q}_N)$, we also have
\begin{align*} 2^{-N} \tr\mathbbm{1}_{\mathcal{C}^c} e^{-\beta \mathbbm{1}_{\mathcal{C}^c} H(\lambda)\mathbbm{1}_{\mathcal{C}^c}} & =  2^{-N} \tr \mathbbm{1}_{\mathcal{C}^c} e^{-\beta \mathbbm{1}_{\mathcal{C}^c} H(0)\mathbbm{1}_{\mathcal{C}^c}}  \\ & \leq 2^{-N} \tr \mathbbm{1}_{\mathcal{C}^c} e^{-\beta H(0) }
	\leq  e^{\Phi_N(\beta, \Gamma, 0)} . 
\end{align*}
The first inequality relied on the Jensen-Peierls estimate, which allows to pull down the projections~\cite{Ber72}.\\
Since $ \Phi_N(\beta, \Gamma, 0) > 4 \beta \epsilon N$, the correction to the pressure at  $\lambda_0 \coloneqq \frac{\epsilon N}{D} $,   is still of order $ \mathcal{O}(\sqrt{N}) $: 
\[ \Phi_N(\beta,\Gamma,\lambda_0 ) \leq \Phi_N(\beta,\Gamma,0 ) + C \sqrt{N}.  \]
We are now in the situation to exploit convexity:
\begin{align*} \Phi_N(\beta,\Gamma,   1 ) & \leq (1-\lambda_0^{-1})  \ \Phi_N(\beta,\Gamma,0) + \lambda_0^{-1} \Phi_N(\beta,\Gamma,\lambda_0 ) \\ & \leq N \ln \cosh \beta \Gamma  + \frac{\beta}{\Gamma \tanh \beta \Gamma -\delta} + o(1) .  
\end{align*}

\medskip
\noindent	
\textit{Step 3 -- Paramagnetic lower bound:}~~To show that the upper bound of Step 2 is also an asymptotic lower bound, it is more convenient to work with the full partition function $Z_N$, which by~\eqref{eq:partsplit}  is a lower bound on $Z_N^{(2)}$ up to a multiplicative error of $ e^{o(1)} $. 

For an estimate on $ Z_N $, we split the potential $ U = U_\epsilon + V_\epsilon $, where $  V_\epsilon \coloneqq U \mathbbm{1}[\vert U\vert > 2 \epsilon ] $. The pressure $ \Phi_N(\beta,\Gamma,0) $ of $ H(0) = \Gamma T + U_\epsilon $, which is defined in~\eqref{eq:defPhilamb}, was already analyzed in Step 2. 
Here, we now consider the following family of Hamiltonians $ H(\lambda) = \Gamma T + U_\epsilon +\lambda V_\epsilon $, which differs from the one in Step~2. By a slight abuse of notation, we nevertheless denote the corresponding pressure again by $ \Phi_N(\beta,\Gamma,\lambda) \coloneqq \ln \tr 2^{-N} \tr e^{-\beta H(\lambda)} $. The convexity of the pressure in $\lambda$ is again the basis for our argument.

Since on the event considered, we may assume $ \| U \| \leq 2 \beta_c N $, the potential $ W =  U_\epsilon +\lambda V_\epsilon  $ meets
the requirements of Theorem~\ref{lem:qpgs} with $ \| W \|_\infty \leq N \max\{ 2\epsilon, 2\lambda \beta_c\} $ and $ \mathbb{E}\left[  W(\pmb{\sigma})^2\right] \leq N (1- e^{-2\epsilon^2 N} ) \leq N    $.  Thus if  $\lambda < \epsilon/\beta_c$ the eigenvalues of $ H(\lambda) $  below $E < -4 \epsilon N $, counted with multiplicity, are shifted with respect to the eigenvalues $ E $ of $\Gamma T$ to $E + \frac{N}{E} +o(1) $ and we have $  \Phi_N(\beta,\Gamma,\lambda)  =  \Phi_N(\beta,\Gamma,0) +o(1) $.  
Fixing $\lambda_0 \coloneqq \frac{\epsilon}{2 \beta_c} < 1 $, convexity implies:
\begin{align*}  \Phi_N(\beta,\Gamma,1)& \geq 
	\Phi_N(\beta,\Gamma,0) + \frac{1}{\lambda_0} (\Phi_N(\beta,\Gamma,\lambda_0) - \Phi_N(\beta,\Gamma,0 )) \\
	& =  N \ln \cosh(\beta \Gamma)  +  \frac{\beta}{\Gamma \tanh \beta \Gamma  +\delta } + o(1) .
\end{align*}
The last step, which holds for all $ \delta > 0 $ sufficiently small, is the desired lower bound, again relied on an explicit estimate based on the concentration of the partition function of $ \Gamma T $  around energies near $ -\Gamma \tanh \beta \Gamma $, cf.~Proposition~\ref{lem:saddle}.\\[.5ex]

\noindent
\textit{Step 4 -- Completing the proof.}~~Away from the first-order phase transition at $ \Gamma = \Gamma_c(\beta) $ described in Proposition~\ref{thm:old} and on the event on which Step~1-3 are valid, the partition function~\eqref{eq:partsplit}  is either dominated by the REM-term $ Z_N^{(1)} $ in case $ \Gamma < \Gamma_c(\beta)  $, or by the paramagnetic term $ Z_N^{(2)} $ in case $ \Gamma > \Gamma_c(\beta)  $. 

More precisely, in case $ \Gamma < \Gamma_c(\beta)  $ and since the probability of the  event, which is excluded in Step~1--3,  is exponentially small in $ N $ and hence summable, we conclude for any $ \varepsilon > 0 $:
\be\label{eq:fastconv}
\sum_{N\geq 1 } \pp\left( \left| \Phi_N(\beta, \Gamma)  -  \Phi_N(\beta,0) + \frac{\beta \Gamma^2}{u(\beta)}  \right\vert > \varepsilon \right) < \infty .
\ee
The claimed almost-sure convergence then follows by a Borel-Cantelli argument. 
The analogous argument establishes the claim in case $ \Gamma > \Gamma_c(\beta)  $.
\end{proof}

\appendix
	\section{Proof of Proposition~\ref{prop:trans} and \ref{prop:gap}}\label{appendix}
	
	In this section, we sketch the modifications necessary for covering the critical case.
	For any $\epsilon > 0$ we introduce the symmetrized large-deviation set
	\[ \mathcal{S}_\epsilon \coloneqq \{ \pmb{\sigma} \in \mathcal{Q}_N \, \vert \, \vert U(\pmb{\sigma})\vert \geq \epsilon N  \} . \]
	For parameters $ \epsilon ,\alpha, \delta > 0$, we use the notion of a \textit{symmetrized global $(  \epsilon ,\alpha, \delta ) $-deep-hole scenario} if  $\mathcal{L}_{\beta_c-\delta}$ is replaced by $\mathcal{S}_{\beta_c-\delta}$ in Definition~\ref{def:deephole}. 
	As is evident from the proof of Lemma~\ref{lem:deeph}, 
	due to the symmetry of the Gaussian distribution, if $ \epsilon ,\alpha, \delta > 0$ satisfy \eqref{eq:cond}  then the event
	\[ \Omega_N^s(  \epsilon ,\alpha, \delta ) := \left\{ \text{$ U $ satisfies a symmetrized global $(  \epsilon ,\alpha, \delta ) $-deep-hole scenario} \right\} 
	\]
	still has a probability bounded by $ 1 - e^{-c(  \epsilon ,\alpha, \delta )  N} $ for all sufficiently large $ N $. 
	We will again pick $ \epsilon = \beta_c/2 $ and the values of  $\alpha, \delta$ will be given (implicitly) as we go along the proof. 
	Similarly as in the proof of Theorem~\ref{thm:sggs}, under this scenario, we work with the auxiliary Hamiltonian 
	\begin{equation}\label{eq:haux} H^\prime = \left( \bigoplus_{\pmb{\sigma} \in \mathcal{S}_{\beta_c-\delta} } H_{\alpha N}(\pmb{\sigma}) \right) \bigoplus H_r ,  \end{equation}
	which is well-defined, since the balls $B_{\alpha N}(\pmb{\sigma})$ are disjoint for two different configurations in $\mathcal{S}_{\beta_c-\delta}$. 
	That we work with $\mathcal{S}_{\beta_c-\delta}$ instead of $\mathcal{L}_{\beta_c-\delta}$ is due to purely technical reasons: we want to ensure that the restriction of $U$ to the complement of the balls $H_{\alpha N}(\pmb{\sigma})$  is still a symmetric random variable so that we can apply Theorem~\ref{lem:qpgs}.

	The first main step in the proof of Proposition~\ref{prop:trans} and~\ref{prop:gap} is to show that for $\Gamma \simeq \beta_c$ there are $c, r > 0$ (which may still depend on $ \alpha ,\delta $) such that
	\[ \| (H- H^\prime) \mathbbm{1}_{(-\infty,(\Gamma-r)N}( H^\prime) \| \leq e^{-cN} . \]
	However, the paramagnetic part causes a challenge, since the less energy levels we cut out (i.e. choosing a smaller parameter $\delta > 0$), the low energy states of $H_r$ become less delocalized. Indeed our bound from Proposition~\ref{prop:ext} becomes trivial for $ \Gamma = \beta_c $ and $ r \leq \delta $.  To overcome this problem, we use the improved delocalization estimate from Proposition~\ref{prop:deloc}, which deals with  $  H_{\delta} =  \Gamma T + U \mathbbm{1}_{\vert U \vert \leq (\beta_c - \delta)N} $ on $ \ell^2(\mathcal{Q}_N) $. We first show that for $\delta > 0$ small enough the low energy states of $H_r$ and $H_{\delta}$ agree up to an experientially small error. Here and in the following we always assume that the parameter $\delta >0$ coincides in the definition of $H_r$ and $H_{\delta}$. Then, we show that the low eigenvalues of $H$ and $H^\prime$ again coincide besides an exponential error. From there, we establish Proposition~\ref{prop:trans} and~\ref{prop:gap}.\\
	
	\noindent	
	\textit{Step 1: Comparison of the low energy spaces of $H_r$ and $H_{\delta}$: } 
	We abbreviate 
	\[ 
	B_{\cup}\coloneqq \bigcup_{\pmb{\sigma} \in \mathcal{S}_{\beta_c-\delta} } B_{\alpha N}(\pmb{\sigma}) , \qquad B_{\cup}^{+} \coloneqq \bigcup_{\pmb{\sigma} \in \mathcal{S}_{\beta_c-\delta} } B_{\alpha N+1}(\pmb{\sigma})  . 
	\]
	and recall that $H_r$ is the restriction of $H$ onto the complement of $ B_{\cup} $. In the following, $\hat{H}_r$ stands for its canonical extension to $\ell^2(\mathcal{Q}_N)$. 
	Then, the spectra of $ H_r $ and $ \hat H_r $ agree outside zero. 
	Proposiotion~\ref{prop:deloc} guarantees the existence of  some  $\eta = \eta(\delta)  > 0$ and some  
	$ c > 0 $, which is independent of $ \delta $, such that for all $ N $ large enough:
	\begin{enumerate}[leftmargin=*]
		\item the spectral projection $ P_{\delta}(\eta) \coloneqq  \mathbbm{1}_{(-\infty, -(\Gamma - \eta) N)}(H_{\delta}) $ is finite-dimensional, and the eigenvalue below $ -(\Gamma - \eta) N $ are $ (2n-N) \Gamma + \frac{N}{(2n-N)\Gamma} + \mathcal{O}_\Gamma(N^{-1/4}) $ with $ 2n \leq \eta N/\Gamma $. 
		\item $  \max_{\pmb{\sigma} \in \mathcal{Q}_N} \langle \delta_{\pmb{\sigma}} \,\vert \, P_{\delta}(\eta)\, \delta_{\pmb{\sigma}} \rangle \leq e^{-2cN} $.
	\end{enumerate}
	Since $ \hat{H}_r$ and $H_{\delta}$ only differ on $B_{\cup}^{+} $, the triangle inequality thus yields
	\begin{align*} \|   (H_{\delta} - \hat{H}_r)  P_{\delta}(\eta) \|   & \leq \sum_{\pmb{\sigma} \in B_{\cup}^{+} } \|   (H_{\delta} - \hat{H}_r) \delta_{\pmb{\sigma}} \|   \ \|   P_{\delta}(\eta) \delta_{\pmb{\sigma}} \| \\ &    \leq  (\beta_c + \Gamma) N \ \vert  B_{\cup}^{+}  \vert   \max_{\pmb{\sigma} \in \mathcal{Q}_N}  \sqrt{\langle \delta_{\pmb{\sigma}} \,\vert  \, P_{\delta}(\eta)\, \delta_{\pmb{\sigma}} \rangle } \\
		& \leq \vert  \mathcal{S}_{\beta_c-\delta} \vert  \ e^{N \gamma(\alpha) + o(N) } e^{-cN}  . 
	\end{align*}
	
	Consulting Lemma~\ref{lem:card},  since the set $A_3(\delta)$ agrees with $\mathcal{S}_{\beta_c-\delta}$, the cardinality is bounded $\vert \mathcal{S}_{\beta_c-\delta} \vert \leq 2 e^{\beta_c \delta N}$ except for an exponentially small event. We now use that $c$ is independent of $\delta$. We thus find some (possibly) smaller parameters $\alpha,\delta> 0$ such that $\beta_c \delta + \gamma(\alpha) < c/2$. It then follows that for some $\eta = \eta(\delta)> 0$ and all $ N $ large enough:
	\[ \| (H_{\delta} - \hat{H}_r) P_{\delta}(\eta)  \| \leq e^{-cN}. \]
	In particular, the 'off-diagonal' blocks $P_{\delta}(\eta) \hat{H}_r Q_{\delta}(\eta)$ and $Q_{\delta}(\eta) \hat{H}_r P_{\delta}(\eta)$ with $ Q_{\delta}(\eta)  = 1 -  P_{\delta}(\eta) $ are bounded in norm by $e^{-cN}$. The spectrum of $\hat H_r$ hence agrees up to the error $e^{-cN}$ with those of the block matrices $P_{\delta}(\eta) {H}_\delta $ and $Q_{\delta}(\eta) \hat{H}_r Q_{\delta}(\eta)$. In order to see that the low-energy spectrum is entirely determined by the first block,  we need to derive a lower
	bound on ground state energy of the second block. We will prove below that for sufficiently large $ N $:
	\be\label{eq:lowerFS} \inf\spec Q_{\delta}(\eta) \hat{H}_r Q_{\delta}(\eta) \geq  -(\Gamma - \eta/2 ) N .
	\ee
	Perturbation theory based on the Feshbach-Schur method and the fact that the spectrum of $ H_\delta $ is discrete with eigenvalue clusters which are separated from each other by gaps of at least $ 1 $, then yields that $ \|\mathbbm{1}_{(-\infty, -(\Gamma - \eta_0) N)}(\hat H_{r}) - P_{\delta}(\eta_0) \| \leq C e^{-c N} $ for some $ \eta_0 = \eta_0(\delta) > 0 $ and all sufficiently large $ N $,
	and hence
	\[
	\max_{\pmb{\sigma} \in (B_\cup)^c} \langle \delta_{\pmb{\sigma}}  \,\vert \, \mathbbm{1}_{(-\infty, -(\Gamma - \eta_0) N)}(H_{r}), \delta_{\pmb{\sigma}} \rangle \leq   \max_{\pmb{\sigma} \in \mathcal{Q}_N} \langle \delta_{\pmb{\sigma}} \,\vert \, P_{\delta}(\eta_0)\, \delta_{\pmb{\sigma}} \rangle + C e^{-c N} \leq (1+C) \,  e^{-c N} . 
	\]
	It thus remains to proof~\eqref{eq:lowerFS}. 
	We will employ a counting argument based on the min-max principle. It rests on the following observations. For any $ E < 0 $ the eigenvalues of $ \hat{H}_r $ and $ H_r $ below $ E $ agree. The same applies to $ \hat{H}_r $ and the energy form $ q_{r}^\infty: \ell^2(\mathcal{Q}_N) \to \rr\cup\infty $:
	\[   q_{r}^\infty(\psi) \coloneqq \begin{cases}
		\langle \psi, \, H_{r} \psi \rangle, &\text{ if } \psi \in \ell^2((B_{\cup})^{c} ), \\ + \infty &\text{ otherwise }.
	\end{cases} \]
	Denoting by $  \hat{H}_r^\infty $ the corresponding operator, which results in driving the potential values 
	on $ B_{\cup}) $ to infinity   (see e.g.~\cite{Sim15}), we thus have $ H_\delta \leq   \hat{H}_r^\infty  $ and hence
	$
	\tr \mathbbm{1}_{(-\infty, E)}(\hat H_{r}) = \tr \mathbbm{1}_{(-\infty, E)}(\hat H_{r}^\infty ) \leq  \tr \mathbbm{1}_{(-\infty, E)}(H_\delta ) $. 
	However, since we have shown above that $ \hat H_{r} $ is $ e^{-cN } $-norm close to $ P_{\delta}(\eta) {H}_\delta + Q_{\delta}(\eta) \hat{H}_r Q_{\delta}(\eta) $ and all eigenvalues of $  P_{\delta}(\eta) {H}_\delta  $ below $ -(\Gamma -\eta) N $ are already accounted for at this energy, we conclude the validity of \eqref{eq:lowerFS} (even with $ \eta/2 $ replaced by $ \eta - e^{-cN }  $). \\

		\noindent
		\textit{Step 2: Comparison of the low energy spectra of $H$ and $H^\prime$: } We want to show that the low energies of $H$ and $H^\prime$ coincide up to an exponentially small error.
		We abbreviate $ P_{\delta}^\prime (\eta) \coloneqq  \mathbbm{1}_{(-\infty, -(\Gamma - \eta) N)}(H^\prime) $ and $Q_{\delta}^\prime (\eta) = \mathbbm{1} -P_{\delta}^\prime (r)$. 
		As in the first step,  by the Feshbach-Schur method it suffices to show
		\[ \| (H^\prime - H) P_{\delta}^\prime (\eta) \| \leq e^{-cN}, \quad Q_{\delta}^\prime (\eta) H Q_{\delta}^\prime (\eta) \geq -(\Gamma - \eta/2) N Q_{\delta}^\prime (\eta) \] 
		for some $\eta , c >0$.  
		To this end, let $\eta < \eta_0$ with $\eta_0$ from the first step. Due to the direct-sum structure of $H^\prime$ we have
		\[ \begin{split} \| (H^\prime - H)   P_{\delta}^\prime (\eta) \| &\leq  \| (H^\prime - H)\mathbbm{1}_{(B_\cup)^c}   \mathbbm{1}_{(-\infty, -(\Gamma - \eta) N)}(H_r)  \| \\&+  \Big\| (H^\prime - H) \mathbbm{1}_{B_\cup}   \mathbbm{1}_{(-\infty, -(\Gamma - \eta) N)}\Big(\Big( \bigoplus_{\pmb{\sigma} \in \mathcal{L}_{\beta_c-\delta} } H_{\alpha N}(\pmb{\sigma}) \Big)\Big)  \Big\| \end{split}  \]
		Note that we have replaced $\mathcal{S}_{\beta_c-\delta}$ by $\mathcal{L}_{\beta_c-\delta}$ in the above direct sum, as only those balls contain low energy states.
		That the first summand is exponentially small for $t$ small enough follows from the same argument we used to compare the low energy projections of $H_r$ and $H_{\delta}$ in the first step. The second summand is exponentially small, too. We simply recall the exponential decay estimate of the localized wavefunctions from Theorem~\ref{thm:sggs}.
		Moreover, we have 
		\begin{align*} Q_{\delta}^\prime (\eta) H Q_{\delta}^\prime (\eta) & \geq -(\Gamma - t) N Q_{\delta}^\prime (\eta) + Q_{\delta}^\prime (\eta) (H-H^\prime) Q_{\delta}^\prime(\eta) \\ & \geq - (\Gamma - \eta + 2\sqrt{\alpha(1-\alpha)} + o(N)) Q_{\delta}^\prime (\eta),  
		\end{align*}
		where we used that $H - H^\prime$ is the direct sum of the hopping operators on the boundaries $\partial B_{\alpha N}(\pmb{\sigma})$ and then used the norm estimate from Lemma~\ref{lem:tknorm}. If we choose $\alpha$ small enough, we arrive at $Q_{\delta}^\prime (\eta) H Q_{\delta}^\prime (\eta) \geq -(\Gamma - t/2) N Q_{\delta}^\prime (\eta)$. 
		We conclude that the energy levels $E < -(\Gamma - \eta/4) N$ of $H$ and $H^\prime$ agree up to an at most exponentially small shift $e^{-cN}$ for $t$ small enough.
		
		\medskip
		\noindent
		\textit{Step 3: Proof of Proposition~\ref{prop:trans}:}
		To conclude the proof of Proposition~\ref{prop:trans}, it remains to evaluate for $\Gamma = \beta_c$ the probability
		$ \pp( E_{\pmb{\sigma}_0} < \inf\spec H_r) $,
		where $E_{\pmb{\sigma}_0}$ denotes the minimal ball energy. As we are only interested in the order of this probability, we simplify the calculations and write $ x \simeq y$ if $x/y = \mathcal{O}(1)$. Then, 
		\[ \begin{split}
			\pp( E_{\pmb{\sigma}_0} <  \inf\spec H_r ) & \simeq \pp(\min U  < - \beta_c N ) \simeq \pp(\min U< s_N(\ln(N)/2) ) \\
			&= 1 - (1-2^{-N}e^{-\ln(N)/2})^{2^N} \simeq 1 - e^{-1/\sqrt{N}} \simeq \frac1{\sqrt{N}}.
		\end{split}  \]
		For the first line we recall that $E_{\pmb{\sigma}_0} = \min U  + \mathcal{O}(1)$ and $\inf\spec H_r = - \Gamma N + \mathcal{O}(1)$. Moreover, we recall the definition of  $s_N$ from \eqref{eq:rescaling}  and its connection to the extreme value process of $U$ via \eqref{eq:Uinfty}. The remaining expressions can derived by elementary calculus. This is enough to conclude the assertions on the ground state energy.
		
		It remains to show that aside from an exponentially small event, the ground state wave function is localized if $E_{\pmb{\sigma}_0} <  \inf\spec H_r$ and otherwise delocalized. To this end, we observe  case $ \inf\spec H_r$ and $E_{\pmb{\sigma}_0}$ are independent random variables. Thus, the spectral averaging method from Lemma~\ref{lem:specav} shows  that for any $c > 0$ there exists some $b(c) > 0$ such that
		\[ \vert \pp( \min_{\pmb{\sigma}} E_{\pmb{\sigma}} -  \inf\spec H_r)\vert \leq e^{-cN} ) \leq e^{-b(c)N} \]
		for $N$ large enough. The Feshbach-Schur formula then yields that except for an exponentially small event the ground state function $\psi$ of $H$ satisfies
		$ \|\psi - \psi_{\pmb{\sigma}_0} \|   \leq e^{-cN}$ if $E_{\pmb{\sigma}_0} <  \inf\spec H_r$, and otherwise 	$ \|\psi - \psi_{r} \|   \leq e^{-cN}$
		with the ground state wave functions $\psi_{\pmb{\sigma}_0}$ of $H_{\alpha N}(\pmb{\sigma}_0)$ and $\psi_r$ of $H_r$. From the previous steps, it follows that  $\psi_{\pmb{\sigma}_0}$ satisfies the assertions of Theorem~\ref{thm:sggs} and that $\psi_r$ is exponentially delocalized and satisfies the first assertion of Theorem~\ref{thm:sggs}.
		
		\medskip
		\noindent
		\textit{ Step 4: 	Proof of Proposition~\ref{prop:gap}}. Since 
		\begin{align*} \min_{\pmb{\sigma} \in \mathcal{L}_{\beta_c - \delta}} E_{\pmb{\sigma}}(\Gamma = 0) &<   \inf\spec H_r( \Gamma = 0), \\ \min_{\pmb{\sigma} \in \mathcal{L}_{\beta_c - \delta}} E_{\pmb{\sigma}}(\Gamma = 2 \beta_c ) &>   \inf\spec H_r(\Gamma = 2 \beta_c ),   
		\end{align*}
		by continuity of the eigenvalues of the involved matrices there exists some $\Gamma_N$ such that
		\[  \min_{\pmb{\sigma} \in \mathcal{L}_{\beta_c - \delta}} E_{\pmb{\sigma}}(\Gamma_N) =   \inf\spec H_r(\Gamma_N). \]
		For this $\Gamma_N$ the functions $\min_{\pmb{\sigma} \in \mathcal{L}_{\beta_c - \delta}} E_{\pmb{\sigma}}(\Gamma_N)$ and $ \inf\spec H_r( \Gamma_N)$ in particular agree up to order one, i.e.,~\eqref{eq:gammagap} holds.
		Since $H$ and $H^\prime$ agree at these low energies up to an exponentially small shift, we conclude that $ \Delta_N(\Gamma_N) $ is exponentially small. 
		The minimum is attained at a possibly different $\Gamma_N^{\star}$. However, since still 
		$ \min_{\pmb{\sigma} \in \mathcal{L}_{\beta_c - \delta}} E_{\pmb{\sigma}}(\Gamma_N^\star) - \inf\spec H_r( \Gamma_N^{\star}) = o(1) $, 
		we have $ \Gamma_N^{\star} - \Gamma_N = o(N^{-1}) $ as claimed. \qed

	\minisec{Acknowledgments}
	This work was supported by the DFG under EXC-2111 -- 390814868.

	\bigskip
	\bigskip
	\begin{minipage}{0.5\linewidth}
		\noindent Chokri Manai and Simone Warzel\\
		MCQST and Department of Mathematics\\
		Technische Universit\"{a}t M\"{u}nchen
	\end{minipage}

\end{document}